\let\macrosloaded\relax\fi
\newtheorem{thm}{Theorem}[section]
\newtheorem{prop}{Proposition}[section]
\newtheorem{cor}{Corollary}[section]
\newtheorem{defn}{Definition}[section]
\newtheorem{remark}{Remark}[section]
\newcommand{\eqinsec}{\relax\@addtoreset{equation}{section}}
\renewcommand{\theequation}{\ifx\showlabels\iftrue\the\id\else\thesection.\arabic{equation}\fi}
\newcounter{supeq}
\newenvironment{subeq}
\def\theequation{\ifx\showlabels\iftrue\the\id\else\thesection.\arabic{equation}\fi}
\newtoks\id
\newcommand{\eqlabel}[1]{\label{#1}\global\id={(#1)}}
\newcommand{\medn}{\medskip\noindent}
\newcommand{\tr}{\mbox{tr}}
\newcommand{\be}{\begin{equation}}
\newcommand{\eeq}{\end{equation}}
\newcommand{\bea}{\begin{eqnarray}}
\newcommand{\eea}{\end{eqnarray}}
\newcommand{\beaa}{\begin{eqnarray*}}
\newcommand{\eeaa}{\end{eqnarray*}}
\newcommand{\bseq}{\begin{subeq}}
\newcommand{\eseq}{\end{subeq}}
\newcommand{\ba}{\begin{array}}
\newcommand{\ea}{\end{array}}
\newcommand{\eql}{\eqlabel}
\def \rectangle#1#2{\hbox{\vrule\vbox to #2
{\hrule\hbox to
#1{\hfil}\vfil\hrule}\vrule}}
\newcommand{\edd}{\end{document}}
\renewcommand{\c}{\cdot}
\newcommand{\NI}{\noindent}
\newcommand{\Si}{\Sigma}
\newcommand{\ga}{\gamma}
\newcommand{\Ga}{\Gamma}
\newcommand{\EEb}{\mbox{${\cal E}_1$}}
\newcommand{\EEbb}{\mbox{${\cal E}_2$}}
\newcommand{\rr}{\mbox{${\bf R}$}}
\newcommand{\dd}{\mbox{${\bf D}$}}
\newcommand{\nab}{\mbox{$\nabla$}}
\newcommand{\nabb}{\mbox{$\nabla \mkern-13mu /$\,}}
\newcommand{\ddb}{\mbox{$\dd \mkern-13mu /$\,}}
\newcommand{\pr}{\partial}
\newcommand{\hot}{\widehat{\otimes}}
\newcommand{\Lie}{\mbox{$\cal L$}}
\newcommand{\lie}{\hat{\Lie}}
\newcommand{\lieb}{\mbox{$\lie\mkern-11mu /$\,}}
\newcommand{\nn}{\nonumber}
\newcommand{\chib}{\underline{\chi}}
\newcommand{\de}{\delta}
\newcommand{\De}{\Delta}
\newcommand{\e}{\epsilon}
\newcommand{\chih}{\hat{\chi}}
\newcommand{\chibh}{\underline{\hat{\chi}}}
\newcommand{\Hb}{\und{H}}
\newcommand{\ab}{\und{a}}
\newcommand{\und}[1]{\underline{#1}}
\newcommand{\s}{\mbox{$\mkern 7mu$}}
\newcommand{\Cb}{\und{C}}
\newcommand{\ub}{{\und{u}}}
\renewcommand{\c}{\cdot}
\newcommand{\D}{{\cal D}}
\newcommand{\M}{{\cal M}}
\renewcommand{\aa}{\underline{\alpha}}
\newcommand{\bb}{\underline{\beta}}
\renewcommand{\a}{\alpha}
\renewcommand{\b}{\beta}
\newcommand{\dual}{\mbox{}^{\star}\!}
\newcommand{\rdual}{^{\star}}
\newcommand{\si}{\sigma}
\newcommand{\ro}{\rho}
\newcommand{\z}{\zeta}
\newcommand{\divv}{\mbox{div}\mkern-19mu /\,\,\,\,}
\newcommand{\pp}{\mbox{p}\mkern-8mu /}
\newcommand{\pih}{\hat{\pi}}
\newcommand{\om}{\omega}
\newcommand{\omb}{\underline{\omega}}
\newcommand{\etab}{\underline{\eta}}
\newcommand{\la}{\lambda}
\newcommand{\ep}{\epsilon}
\newcommand{\ddd}{\nabb}
\newcommand{\dddd}{{\bf D} \mkern-13mu /\,}
\newcommand{\QQ}{{\tilde{\cal Q}}}
\newcommand{\QQb}{\underline{\tilde{{\cal Q}}}}
\newcommand{\varep}{\varepsilon_0}
\newcommand{\ttau}{\tau_{-}}
\newcommand{\tttau}{\tau_{+}}
\renewcommand {\div}{\mbox{div$\mkern 1mu$}}
\newcommand{\Div}{\mbox{{\bf Div}}}
\newcommand{\acc}{\bar{K}}
\newcommand{\Th}{\Theta}
\newcommand{\ML}{\!\!\!\!\!\!\!\!\!}
\newcommand{\un}{\underline}
\newcommand{\p}{\partial}
\newcommand{\tW}{\tilde{W}}
\begin{document}
\author{Giulio Caciotta , Tiziana Raparelli }
\title{\LARGE{Asymptotic Behaviour of Zero Mass spin 2 Fields propagating in the external region of Kerr spacetime}}
\date{}
\maketitle
\begin{abstract} 
\NI{After describing the inhomogeneous equation suitable to describe $W_{\nu\mu\ro \si}$, a solution of the linearized version of the full quasilinear equation for the conformal part of the Riemann tensor connected to theperturbations of the Kerr spacetime far from the origin, we find the right decays we have to impose to the source term to obtain the peeling decays for this linearized  solution. We basically follow the ideas of the Christodoulou Klainerman approach, \cite{Ch-Kl:book} and \cite{Kl-Ni1}.
This result requires some new detailed estimates which could be considered a useful result by themselves.} 
\end{abstract}
\section{Introduction and results}\label{S1}
The problem we are studying in this paper concerns the asymptotic behavior of the solutions of a
particular class of equations, the spin 2 zero-rest mass  inhomogeneous equations, propagating in the Kerr spacetime.

\NI Let us state here a first, rough, version of the main theorem we want to prove.A precise version of it will be given at the end of the introduction:
\begin{thm}[Peeling Theorem]\label{peeling1}
Let $W$ be a Weyl field solution of the inhomogeneous massless spin 2 equations propagating in the Kerr spacetime \footnote{$D_{(0)}$ the covariant derivative associated to the Kerr metric } ,
\bea
&&D_{(0)}^{\nu}W_{\nu\mu\ro \si}=h_{\mu\ro\si}\ .
\eea
 Let us assume that $W=R_{Kerr}+\delta R$, with $R_{Kerr}$ the conformal part of the Riemann tensor of the Kerr spacetime and $\delta R$  small with respect to $R_{Kerr}$.\footnote{In suitable Sobolev norms.}  

\NI Let us assume that the inhomogeneous term $h_{\nu\ro\si}$ satisfies appropriate decay conditions in $r$,\footnote{ $r$ the radius in Boyer-Lyndquist coordinates}. 
\NI Let the null components of  $W$ on  the spacelike initial data surface $\Sigma_0$ \footnote{See section 2 for the precise definition of $\{\a,\aa,\b,\bb,\ro,\si$\} the null components of $W$.} satisfy suitable decays and smallness conditions.\footnote{ See theorem 1.3 and the definition of  the smallness condition $\cal{J}$ in 3.408 of  \cite{Ca-Ni}} Then the $W$ null components  satisfy the following decays in accordance with the peeling theorem:
\bea
&&\lim_{u\rightarrow\infty}r|u|^{(4+\ep)}\aa=C_0\nn\\
&&\lim_{u\rightarrow\infty}r^2|u|^{(3+\ep)}\bb=C_0\nn\\
&&\lim_{u\rightarrow\infty}r^3\ro=C_0\nn\\
&&\lim_{u\rightarrow\infty}r^3\si=C_0\eql{peel1}\\
&&\lim_{u\rightarrow\infty}r^4|u|^{(1+\ep)}\b=C_0\nn\\
&&\lim_{u\rightarrow\infty}r^{5}|u|^{\ep}|\a|\leq C_0\ .\nn
\eea
with $\e>0$ , $C_0$  a constant depending on the initial data and $u$ the null coordinate associated to the incoming cones of the foliation.
Moreover the decays we have to impose on the inomogeneous term $h_{\nu\ro\si}$ are compatible with the decays we expect for it in the case $h$ represents 
the perturbation term in the linearization of the Kerr spacetime.
\end{thm}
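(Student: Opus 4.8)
The plan is to adapt the Christodoulou--Klainerman vector-field and Bel--Robinson energy method of \cite{Ch-Kl:book,Kl-Ni1}, in the form developed in \cite{Ca-Ni} for the Kerr exterior, treating the source $h$ as a controlled perturbation. First I would fix a double null foliation of the external region by outgoing cones $C(\la)$ and incoming cones $\Cb(\nu)$, obtained by solving an eikonal problem with prescribed data on the initial slice $\Si_0$, with associated null coordinates ($u$ labelling the incoming cones, $\ub$ the outgoing ones, the $r$-weight comparable to the Boyer--Lindquist radius) and an adapted null frame $(e_3,e_4,e_A)$. Decomposing $W$ into its null components $\{\a,\b,\ro,\si,\bb,\aa\}$ and $h$ into the analogous ones, the inhomogeneous Bianchi system $D_{(0)}^\nu W_{\nu\mu\ro\si}=h_{\mu\ro\si}$ becomes a coupled family of null transport equations along $e_3$ and $e_4$ together with Hodge systems on the $2$-spheres $S_{u,\ub}$, whose right-hand sides carry the Kerr connection coefficients ($\tch,\tchb,\zeta,\omega,\dots$), their Schwarzschild/Minkowski-subtracted perturbations, and the null components of $h$.

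The core of the argument is an energy--flux estimate for $W$ and for its modified Lie derivatives. Since Kerr does not possess the exact conformal Killing fields available in Minkowski, I would build approximate multipliers --- the Morawetz/conformal field $K_0\simeq\ub^2 e_4+u^2 e_3$, the time translation $T_0$, the scaling field $S$, and approximate rotations $O$ --- whose deformation tensors $\opi{(K_0)},\opi{(T_0)},\dots$ do not vanish but decay in powers of $r$ with $a$-dependent coefficients. Contracting $\mbox{Div}\,Q(W)$ with products of these fields and integrating over a spacetime slab bounded by $C(\la)$, $\Cb(\nu)$ and a portion of $\Si_0$ produces an identity whose error terms split into (i) deformation-tensor terms, controlled by the new sharp estimates on the geometry of the Kerr exterior, and (ii) source terms of schematic form $\int(\mbox{flux weight})\cdot h\cdot\nabla^{\le k}W$, absorbed provided $h$ decays at the hypothesized rates. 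Running this for $W$, $\lie_{T_0}W$, $\lie_S W$, $\lie_O W$ (and where needed a second layer of derivatives), and closing a bootstrap in which the connection coefficients are estimated from the transport/Hodge system and fed back into the Bianchi errors, gives uniform bounds on the weighted norms $\QQ$.

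From the bounded norms I would extract pointwise decay: Sobolev embedding on $S_{u,\ub}$ combined with the evolution equations yields the stated $r$- and $|u|$-weighted bounds, the factors $|u|^{-(1+\ep)}$, $|u|^{-(3+\ep)}$, $|u|^{-(4+\ep)}$ being precisely the losses needed to make the $u$-integrals of the error terms converge. To upgrade boundedness to peeling --- the limits $\lim_{u\to\infty}r^3\ro=C_0$, $\lim_{u\to\infty}r^2|u|^{3+\ep}\bb=C_0$, and so on --- I would integrate the transport equations for the rescaled components along the outgoing generators and show that the rescaled quantities form a Cauchy family as $u\to\infty$, the remainders being integrable exactly because of the $\ep$-improved decay; for $\a$ one only controls a bound, not a limit, since its estimate rests on the $\Cb$-flux, which does not quite close at the sharpest weight. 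Finally, I would verify the compatibility assertion by expanding $W=R_{Kerr}+\de R$ in the full quasilinear Bianchi system, identifying the linearized source term, and checking term by term that its decay in $r$ is no worse than what was imposed on $h$.

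The step I expect to be the main obstacle is the control of the deformation-tensor error terms (i): in Kerr the would-be conformal Killing fields have deformation tensors that decay only polynomially in $r$, with angular-momentum corrections, and several of the resulting spacetime integrals are borderline both near the trapped set and at large $r$. Arranging the hierarchy of weighted norms so that the extra $\ep$ of decay coming from the hypotheses on $h$ and on the data on $\Si_0$ is just enough to close each such term, without losing it elsewhere, is the delicate point, and is presumably the source of the \emph{new detailed estimates} announced in the abstract.
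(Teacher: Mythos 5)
Your overall framework---Bel--Robinson energies contracted against approximate (conformal) Killing multipliers, weighted $\QQ$-norms in the Kl--Ni2 style with a $\ttau^{5+\e}$ factor, Sobolev on the spheres $S(u,\ub)$ to extract pointwise decay, and finally transport along $\Cb(\ub)$ to trade $u$-decay for $r$-decay on $\a$ and $\b$---is indeed the skeleton the paper follows. But two points in your proposal diverge in a way that matters, and the first one is a genuine gap.

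You close your energy hierarchy with ``a bootstrap in which the connection coefficients are estimated from the transport/Hodge system and fed back into the Bianchi errors.'' In the linearized problem the paper is solving, there is no such bootstrap and in fact there cannot be one: $W$ is taken as an \emph{independent} Weyl field propagating on the fixed Kerr exterior, not as the Riemann tensor of a metric being simultaneously constructed, so there is no structure-equation system tying the connection coefficients to $W$. The paper states this explicitly (``we do not have to make any bootstrap assumption on the connection coefficients but we have to calculate them in Kerr''). All the connection coefficients, the deformation tensors of $T,S,\bar K,{}^{(i)}O$, and the currents ${}^{(X)}p,\,{}^{(X)}q$ are computed directly, to explicit order, in the Israel--Pretorius null frame (Propositions on $\chi,\zeta,\eta,\om,\dots$, and on ${}^{(O)}\pi,\,{}^{(T)}\pi,\,{}^{(S)}\pi,\,{}^{(\bar K)}\pi$). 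Your bootstrap would also leave undetermined what ``the perturbed connection coefficients'' are, since in this model they appear only via the assigned decays of $h$, not as dynamical quantities.

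The second point is the role of the source $h$. You say the source terms of the form $\int(\mbox{flux weight})\cdot h\cdot\nabla^{\le k}W$ are ``absorbed provided $h$ decays at the hypothesized rates,'' which is directionally right but skips the mechanism. The paper's key observation is structural: writing $h=\tilde p^\mu W_{\mu\cdot\cdot\cdot}$, the new divergence contribution $J^0=\lie_X\mbox{Div}(W)=\lie_X h$ has exactly the algebraic form of the $J^2$ current $p\cdot W$ already present in the Kl--Ni error, and the hypothesized decays on $h$ (equivalently on $\tilde p^\mu\sim r^{-1}u^{-1},\,u\, r^{-3},\,r^{-2}$) are tuned to match the decays of ${}^{(O)}p$ and ${}^{(T)}p$. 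It is this identification $J^0\sim J^2$---not a generic absorption---that lets the whole Kl--Ni error machinery be reused verbatim. Without it one would have to redo the divergence estimates from scratch, and the prescription for what decay of $h$ is ``appropriate'' would not be determined. Adding these two ingredients (explicit Kerr computations in the Israel--Pretorius frame; $J^0\sim J^2$) to your outline would bring it in line with the paper's actual argument.
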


\NI Let us better explain this yet vague statement.

\NI This result is connected to the problem of the global stability of Kerr spacetime and to the, presumable, asymptotically simplicity of suitable nonlinear perturbations of it. 
To see this connection let us briefly recall what are the known facts and the main open problems.
\smallskip

\NI To prove the global stability for the Kerr spacetime is a very difficult and open problem. The more difficult issue is, of course, that of proving the existence of solutions of the vacuum Einstein equations with initial data ``near to Kerr" in the region 
near the event horizont up to the ergosphere, which are also  unknowns of the problem.\footnote{For small perturbations the horizon is expected to stay ``near" to the Kerr horizon for $r=m+\sqrt{m^2-a^2}$.}

\NI What is known, up to now, relatively to this region are some relevant uniform boundedness results for solutions to the wave equation in the Kerr spacetime as a background spacetime, see the paper of Dafermos-Rodnianski, \cite{Daf-Rod:Kerr} and references therein.\footnote{See also for the Schwartchild case, \cite{Blue} and references therein.}
\smallskip

\NI If we consider the existence in a region sufficiently far from the Kerr horizon this result is included (for angular momentum sufficiently small) in the version of Minkowski stability result proved by S.Klainerman and F.Nicol\`o with initial data near the flat ones (see \cite{Kl-Ni1}). Nevertheless the global solution proved in \cite{Kl-Ni1} does not satisfy the decay of the Riemann components suggested by the ``Peeling theorem", \cite{Wald}.
 
 \NI In fact in that result, the null asymptotic behaviour of some of the null components of the Riemann tensor, specifically the $\a$ and the $\b$ components, is different from the one expected from the ``Peeling Theorem" as their proved decay is slower. 
 
 \NI In a subsequent paper, \cite{Kl-Ni2}, S.Klainerman and F.Nicol\`o proved for these components an asymptotic behaviour consistent with the ``Peeling Theorem" under stronger asymptotic conditions for the initial data. Unfortunately these conditions exclude initial data ``near to Kerr", the main difficulty being connected to the fact that, as the angular moment $J$ of the Kerr spacetime is different from zero, the Kerr initial data have metric components decaying as $r^{-2}$, a decay not sufficient to guarantee the peeling in  \cite{Kl-Ni2}.

\NI Recently one of the present authors, G.C., together with F.Nicol\`o has improved a previous result of F.Nicol\`o, see \cite{NI:Kerrpeeling}proving that also with initial data  $W_0$ ``near" to Kerr, independently from the angular momentum, the global solution in the "external region" \footnote{the external region is defined as $\frac{M}{R^0}<\ep$ with $\ep$ sufficiently small} spacetime satisfies the decay expected from the ``Peeling Theorem" provided the perturbation is given in an suitable way, see \cite{Ca-Ni}, namely are allowed all the perturbations such that $\Lie_{\tilde{T}_0}W_0$ decay sufficiently fast, where $\tilde{T}_0$ is the quasi-Killing vector field corresponding to $T_0$ in Kerr. 



\smallskip

\NI To prove this result one has first to prove the global existence of the external region. The technical machinery used in these works is not  powerful enough to allow us to treat the region near the event horizont, nevertheless this result allows to disentagle the size of the external region from the value of $J$ or in other words we can prove the peeling in the ``external region" perturbing around the (external) Kerr spacetime, no matter which is the value of $J\leq M^2$. 

\NI This result is difficult to obtain for different reasons, all based on the fact that is more complicated to perturb around a curved non spherical symmetric spacetime than to with respect to the Minkowski one. The two main difficulties are: 
\medskip

\NI a) As the Kerr spacetime is only axially symmetric we cannot consider anymore all the generators of the rotation group as approximated Killing vector fields, a fact which does not allow to use all the  $\cal Q$ "energy"  norms defined in \cite{Kl-Ni1}, which are crucial in this approach.
\smallskip

\NI b) To perturb non linearly around a given spacetime one should look first to a linearization of the problem. In this case, as we are not expanding around the flat spacetime to define the ``natural" linearization is not  at all obvious.
\smallskip
 
\NI The first problem is overcame considering perturbations far from the origin, in such a way all the Poincare group of Killing vectorfields of Minkowski can be recostructed exploiting the corresponding "quasi" 
Killing vectorfieds. For what concern the second  problem, in \cite{Ca-Ni}, we have overcame it considering the $\mathcal{\tilde{Q}}$ norms associated  to ${\tilde R}\equiv \Lie_{T_0}R$ with $T_0$ the corresponding of the timelike Killing vector in Kerr. This trick allows
us to eliminate the Kerr contribution. This approach restrict our class of perturbations to initial data such that the $\mathcal{\tilde{Q}}$ are small and consequently give a bound on suitable Sobolev norms associated to $\Lie_{T_0}R$. This is not completely satisfactory as we would obtain estimates directly for $ \de R$, the perturbations of Kerr spacetime and not for the $\Lie_{T_0}$ derivatives of them.\footnote{see \cite{Ca-Ni}, for the definition of the corresponding $\mathcal{\tilde{Q}}$ norms.}

\medskip

\NI Anyway, if we want to follow the \cite{Ch-Kl:book} and  \cite{Kl-Ni1} approach, a required step\footnote{See, nevertheless the different approach to the global existence around the Minkowski spacetime due to Linblad and Rodnianski, \cite{Li-Ro}.} is to decouple the problem in two parts: First we consider the structure equation for the connection coefficients where the null Riemann components are considered assigned, second we consider the Bianchi equation for the null Riemann components where the connection coefficients are considered as an external source, see the next section for more details. 

\NI Hence, to possibly enlarge the class of perturbations  a first step is to consider only the ``decoupled"  part of the problem associated to the Bianchi equations. This would require to consider $W$, the conformal part of $R$ satisfying the Bianchi equations:
\bea
D_{g}^{\nu}W_{\nu\mu\ro\si}=0 \label{Bianchi1}
\eea
with $D_g$ the covariant derivative with respect to $g=g_{kerr}+\delta g$ , $\delta g$ a small perturbation with respect to $g$ \footnote{$\de g$ corresponding to the $\de W$ perturbation. Clearly the connection between these two quantities is very involved}.

\NI In this paper we consider a linearized version of this second part of the problem, which is still a delicate part to deal with. 
\NI More precisely we consider $W$ as an independent Weyl field and consider the covariant derivative $D_g$ relative to the metric of the spacetime we want to perturb. In this case to perturb around Kerr we proceed as follows:

\NI Let us denote with $R$ the Riemann tensor for a vacuum spacetime near to Kerr, we can write:
\bea
R_{\mu\nu\ro\si}\equiv {R_{(Kerr)}}_{\mu\nu\ro\si}+{\de R}_{\mu\nu\ro\si}\eql{Riempert}
\eea
We assume that ${\de R}_{\mu\nu\ro\si}$ is small, let us say of order $\ep$, with respect to ${R}_{(Kerr)}$. Let the associated metric be $g=g_{(Kerr)}\!+\!\de g$,  $\D_g$ the covariant derivative associated to $g$, $\D_{(0)}$ the covariant derivative associated to $g_{(Kerr)}$ and $\de g$ of order $\ep$ with respect of $g_{(Kerr)}$.

\NI Substituting it in \ref{Bianchi1} and keeping only the terms of first order in $\ep$ we do not obtain the homogeneous spin 2 equation in the background spacetime
\bea 
D_{(0)}^{\mu}\de R_{\mu\nu\ro\si}=0\ .\eql{Bianchi2}\nn
\eea
This is due to the fact that the Riemann tensor of the curved background spacetime does not vanish and give rise to an equation of the following kind, written in a very schematic way:
\bea
D_{(0)}^\mu {\de R}_{\mu\nu\ro\si}=\big({(\Ga_0-\Ga)R_{(Kerr)}}\big)_{\nu\ro\si}+O(\ep^2)\ ,\eql{Bianchi3}
\eea
where $(\Ga_0-\Ga)$ denotes the difference between the Christoffel symbols $\Ga$ of the metric $g=g_{(Kerr)}\!+\!\de g$ and those, $\Ga_0$, associated to the Kerr metric,  $g_{(Kerr)}$. The term  $(\Ga_0-\Ga)R_{(Kerr)}$ is a term of order $\ep$ and cannot be neglected. The difference with a linearization ``around zero", that is around the Minkowski spacetime, is that in that case the analogous of $R_{(Kerr)}$ is identically zero and, therefore, equation \ref{Bianchi2} would be the correct linearization. Moreover if we insist to study equation \ref{Bianchi2} a problem immediately arises as, due to the Buchdal constraint the ssolutions of \ref{Bianchi2} are static and cannot be interpreted as a linear perturbation of our problem, see for instance the detailed discussion in \cite{Blue}. 
\smallskip

\NI Our present goal is, therefore, to look for the solution of a linear inhomogeneous equation for a Weyl field $W$ which we want to interpret as mimicking equation \ref{Bianchi1}.  Therefore, as the previous argument suggests, we  have to study an equation of the following kind:
\bea
D_{(0)}^\mu W_{\mu\nu\ro\si}=h_{\nu\ro\si}(W)\ ,\eql{Bianchi4}
\eea
with $h$ corresponding to the term  $\big({(\Ga_0-\Ga)R_{(Kerr)}}\big)$ in order to mimick equation \ref{Bianchi3}. Nevertheless let us observe that  equation \ref{Bianchi3}, apart from the $O(\ep^2)$ terms, is a linearization in the metric $g$ and the dependance of $(\Ga_0-\Ga)$ on $\de R$ is a very indirect one. Therefore to choose appropriately  the tensor $h_{\nu\ro\si}$ we have to see in more detail how the Bianchi equations have been written in \cite{Kl-Ni1} and \cite{Ch-Kl:book} to obtain a global solution of \ref{Bianchi4} in the external region. 
\smallskip

\subsection{The general strategy}

\NI The proof of the global existence (for the external region) provided in \cite{Kl-Ni1} was based on the construction of a coordinate set $(u, \ub, \theta, \phi) $ adapted to a (double) foliation made by null hypersurfaces denoted by $\{C(u)\}$ and $\{\Cb(\ub)\}$ corresponding, basically to the outgoing and incoming cones of the Minkowski spacetime.  On these null hypersurfaces an adapted null frame $\{e_{\mu}\}$ ,$\mu=1,...4$ and the corresponding connection coefficients were defined, see \cite{Kl-Ni1}, Chapter 3, for all the details. Moreover we define \[S(u,v)=C(u)\cap\Cb(v)\ .\]
The connection coefficients we denote by $\cal{O}=\{M,\underline{M},H, \underline{H}\}$ satisfy the structure equations with respect to $\{e_{\mu}\}$ which are of  two types:

a) transport equations  on the incoming and outgoing cones,
\bea
&&\frac{\partial M}{\partial \ub}+\tr\chi{M}= H\c M+(1+M)\c R+[error]\label{eq:transportM}\\
&&\frac{\partial \underline{M}}{\partial u}+\tr\chib\underline{M}= \Hb\c \underline{M}+(1+ \underline{M})\c R+[error]\nn
\eea
b) elliptic Hodge systems on the two dimensional surfaces $S(u,v)$
\bea
{\ddb} H&=&R+M+[error]\label{eq:HodgeH}\\
{\ddb}\underline{H}&=& R+\underline{M}+[error]\nn
\eea
where ${\ddb}$ is the covariant derivative associated to the metric induced on $S(u,v)$ and $[error]$ are terms which, under some smallness assumptions, can be neglected. This picture is, obviously very schematic, all the details are given in \cite{Kl-Ni1}, Chapter 4.
What is important to point out here is that the strategy to obtain norm estimates for the connection coefficients from these equations is to supplement them with the Bianchi equations for $W$ the conformal part of  the Riemann tensor $R$.
\bea
D^{\mu}W_{\mu\nu\ro\si}=0 \eql{bianchi again}
\eea

\smallskip

\NI In \cite{Ch-Kl:book} and in  \cite{Kl-Ni1} the existence problem were faced in the following way: first one proves a local existence result, then considers $\cal {K}$ the largest possible (in principle finite) region where the norms of the connection coefficients and of the Riemann tensor satisfy some smallness conditions. Then one proves that this region can be extended which, to avoid a contradictionthis implies that the region is unbounded and the existence result is, therefore, obtained. To achieve this result the authors consider the Riemann tensor $R$ in the structure equations \ref{eq:transportM}, \ref{eq:HodgeH}  as an external source satisfying suitable conditions of smallness and decays  while in the Bianchi equations, see \ref{nullBian},  the connection coefficients are considered assigned and satisfying the appropriate bounds.

\NI This sort of decoupling between the connection coefficients and the Riemann components reminds of a linearization, even if, strictly speaking, it is not.\footnote{The Bianchi equations are from this point of view linear equations (it will be not the case for the ones obtained linearizing in the metric)  while the structure equations still have non linear terms.}
\medskip

\NI Nevertheless, as said in the previous section, this suggests that to control the solutions of a linearized version of the Bianchi equations should be relevant to treat the global problem.

\NI To  better define this problem, namely, to make some assumptions on the tensor $h(W)$ which are consistent with the interpretation of equations \ref{Bianchi4} as the linearized version of the Bianchi equations in Kerr spacetime  we have to show how the Bianchi equation have been written as transport equation for the null Riemann components in \cite{Ch-Kl:book} and in  \cite{Kl-Ni1}. Considered a null frame $\{e_3,e_4,e_{\theta},e_{\phi}\}$ they have the following form:

\bea
&&\dddd_4\aa+\frac{1}{2}\tr\chi\aa=-\nabb\hot\bb
+\left[4\om\aa-3(\hat{\chib}\ro-\dual\hat{\chib}\si)+(\zeta-4\etab)\hot\bb\right]\nn\\
&&\dddd_3\bb+2\tr\chib\s\bb=-\divv\aa-\left[2\omb \bb+(-2\zeta+ \eta)\c\aa\right]\nn\\
&&\dddd_4\bb+\tr\chi\bb=-\nabb \ro+\left[2\om\bb+2\hat{\chib}\c\b+\dual\nabb\si
-3(\etab\ro-\dual\etab\si)\right]\nn\\
&&\dd_3\ro+\frac{3}{2}\tr \chib\ro=-\divv\bb-\left[\frac{1}{2}\hat{\chi}\c\aa
-\zeta\c \bb+2\eta\c\bb\right]\nn\\
&&\dd_4\ro+\frac{3}{2}\tr\chi\ro=\divv\b-\left[\frac{1}{2}\hat{\chib}\c\a
 -\zeta\c\b-2\etab\c\b\right]\eql{nullBian}\\
&&\dd_3\si+\frac{3}{2}\tr \chib\si=-\divv\dual\bb+\left[\frac{1}{2}\hat{\chi}\c\dual\aa
-\zeta\c\dual\bb-2\eta\c\dual\bb\right]\nn\\
&&\dd_4\si+\frac{3}{2}\tr\chi\si=-\divv\dual\b+\left[\frac{1}{2}\hat{\chib}\c\dual\a
-\zeta\c\dual\b-2\etab\c\dual\b\right]\nn\\
&&\dddd_3\b+\tr\chib\b=\nabb\ro+\left[2\omb\b+\dual\nabb\si+2\hat{\chi}\c\bb+
3(\eta\ro+\dual\eta\si)\right]\nn\\
&&\dddd_4\b+2\tr\chi\b=\divv\a-\left[2\om \b-(2\zeta+\etab)\a\right]\nn\\
&&\dddd_3\a+\frac{1}{2}\tr\chib\a=\nabb\hot\b+\left[4\omb\a-3(\hat{\chi}\ro+
\dual\hat{\chi}\si)+(\zeta+4\eta)\hot\b\right]\nn
\eea
where $\hot$ denotes the ``traceless tensor product, $\dual$  denotes the Hodge dual and, with $X,Y$ vector fields tangent to $S(u,v)$,
\bea
&&\a(R)(X,Y)={R}(X,e_4,Y,e_4)\ ,\ \b(R)(X)=\frac{1}{2}{R}(X,e_4,e_3,e_4)\nn\\
&&\ro(R)=\frac{1}{4}{R}(e_3,e_4,e_3,e_4)\eql{3.1.19za}\ \ \ \ \ ,\ \ \ \ \si(R)
=\frac{1}{4}{^\star R}(e_3,e_4,e_3,e_4)\\
&&\bb(R)(X)=\frac{1}{2}{R}(X,e_3,e_3,e_4)\ ,\ \ \aa(R)(X,Y)=R(X,e_3,Y,e_3)\nn
\eea 
These equations are similar to the ones in the Minkowski spacetime. The main difference is given by the terms in square brackets, absent in the flat case, which are products between the $R$ 
null components and the connection coefficients.

\NI We can now calculate the Bianchi equations for 
\beaa
R_{\mu\nu\ro\si}\equiv {R_{(Kerr)}}_{\mu\nu\ro\si}+{\de R}_{\mu\nu\ro\si}
\eeaa
\NI with ${\de R}_{\mu\nu\ro\si}=O(\ep)$

\NI Denoting with $\{\de\a,\de\b,\de\ro,...\}$   the null components of the tensor $\de R$, $\de\a\equiv\a(\de R)$, and  with the index ${^{(0)}}$ the connection coefficients related to the Kerr spacetime
we obtain at order $\ep$, neglecting higher order terms:
\bea
&&\ML\ML\ML\dddd^{(0)}_4\de\aa+\frac{1}{2}\tr\chi\de\aa=-\nabb^{(0)}\hot\de\bb+\left[4\om^{(0)}\de\aa-3(\hat{\chib}^{(0)}\de\ro-\dual\hat{\chib}^{(0)}\de\si)+(\zeta^{(0)}-4\etab^{(0)})\hot\de\bb\right]+h(R)(e_a,e_3,e_b)\nn\\
&&\ML\ML\ML\dddd^{(0)}_3\de\bb+2\tr\chib\s\de\bb=-\divv^{(0)}\de\aa-\left[2\omb^{(0)}\de\bb+(-2\zeta^{(0)}+ \eta^{(0)})\c\de\aa\right]
+h(R)(e_3,e_3,e_b)\nn\\
&&\ML\ML\ML\dddd^{(0)}_4\de\bb+\tr\chi\de\bb=-\nabb^{(0)}\de\ro+\left[2\om^{(0)}\de\bb+2\hat{\chib}^{(0)}\c\de\b+\dual\nabb^{(0)}\de\si-3(\etab^{(0)}\de\ro-\dual\etab^{(0)}\de\si)\right]+h(R)(e_4,e_3,e_b)\nn\\
&&\ML\ML\ML\dd^{(0)}_3\de\ro+\frac{3}{2}\tr\chib\de\ro=-\divv^{(0)}\de\bb-\left[{2^{-1}}\hat{\chi}^{(0)}\c\de\aa-\zeta^{(0)}\c \de\bb+2\eta^{(0)}\c\de\bb\right]
+h(R)(e_3,e_4,e_3)\nn\\
&&\ML\ML\ML\dd^{(0)}_4\de\ro+\frac{3}{2}\tr\chi\de\ro=\divv^{(0)}\de\b-\left[{2^{-1}}\hat{\chib}^{(0)}\c\de\a-\zeta^{(0)}\c\de\b-2\etab^{(0)}\c\de\b\right]
+h(R)(e_4,e_3,e_4)\eql{nullBian1}\\
&&\ML\ML\ML\dd^{(0)}_3\de\si+\frac{3}{2}\tr\chib\de\si=-\divv^{(0)}\de\dual\bb+\left[{2^{-1}}\hat{\chi}^{(0)}\c\dual\de\aa
-\zeta^{(0)}\c\dual\de\bb-2\eta^{(0)}\c\dual\de\bb\right]+{^\star h(R)}(e_3,e_4,e_3)\nn\\
&&\ML\ML\ML\dd^{(0)}_4\de\si+\frac{3}{2}\tr\chi\de\si=-\divv^{(0)}\de\dual\b
+\left[{2^{-1}}\hat{\chib}^{(0)}\c\dual\de\a-\zeta^{(0)}\c\dual\de\b-2\etab^{(0)}\c\dual\de\b\right]+{^\star h(R)}(e_4,e_3,e_4)\nn\\
&&\ML\ML\ML\dddd^{(0)}_3\de\b+\tr\chib\de\b=\nabb^{(0)}\de\ro+\left[2\omb^{(0)}\de\b+\dual\nabb^{(0)}\de\si+2\hat{\chi}^{(0)}\c\de\bb
+3(\eta^{(0)}\de\ro+\dual\eta^{(0)}\de\si)\right]+h(R)(e_3,e_4,e_a)\nn\\
&&\ML\ML\ML\dddd^{(0)}_4\de\b+2\tr\chi\de\b=\divv^{(0)}\de\a-\left[2\om^{(0)}\de\b-(2\zeta^{(0)}+\etab^{(0)})\de\a\right]+h(R)(e_4,e_a,e_4)\nn\\
&&\ML\ML\ML\dddd^{(0)}_3\de\a+\frac{1}{2}\tr\chib\de\a=\nabb^{(0)}\hot\de\b+\left[4\omb^{(0)}\de\a-3(\hat{\chi}^{(0)}\de\ro
+\dual\hat{\chi}^{(0)}\de\si)+(\zeta^{(0)}+4\eta^{(0)})\hot\de\b\right]+h(R)(e_a,e_4,e_b)\nn
\eea

\NI Clearly we can think to these equations as the Bianchi equations for the Kerr spacetime where a source term $h$ has been added. If we consider $h$ depending on  $R^0=\{\a^0, \b^0...\}$ the null Riemann components in Kerr and on $\de O=\{\de\om,\de\omb,\de\hat{\chi},...\}$, the connection coefficients related to $\de R$, then $h(R^0,\de O)$ assume the form:
\bea\label{hdef1}
h(R)(e_a,e_3,e_b)\!&=&\!\!\left[4\de\om\aa^{(0)}-3(\de\hat{\chib}\ro^{(0)}-\dual\de\hat{\chib}\si^{(0)})+(\de\zeta-4\de\etab)\hot\bb^{(0)}\right]\nn\\
h(R)(e_3,e_3,e_b)\!&=&\!-\left[2\de\omb\bb^{(0)}+(-2\de\zeta+ \de\eta)\c\aa^{(0)}\right]\nn\\
h(R)(e_4,e_3,e_b)\!&=&\!\left[2\de\om\bb^{(0)}+2\de\hat{\chib}\c\b^{(0)}+(\dual\nabb-\dual\nabb^{(0)})\si^{(0)}-3(\de\etab\ro^{(0)}-\de\dual\etab\si^{(0)})\right]\nn\\
h(R)(e_3,e_4,e_3)\!&=&\!-\left[{2^{-1}}\de\hat{\chi}\c\aa^{(0)}-\de\zeta\c\bb^{(0)}+2\de\eta\c\bb^{(0)}\right]\nn\\
h(R)(e_4,e_3,e_4)\!&=&\!-\left[{2^{-1}}\de\hat{\chib}\c\a^{(0)}-\de\zeta\c\b^{(0)}-2\de\etab\c\b^{(0)}\right]\nn\\
{^\star h(R)}(e_3,e_4,e_3)\!&=&\!\left[{2^{-1}}\de\hat{\chi}\c\dual\aa^{(0)}-\de\zeta\c\dual\bb^{(0)}-2\de\eta\c\dual\bb^{(0)}\right]\eql{Hdef1}\\
{^\star h(R)}(e_4,e_3,e_4)\!&=&\!\left[{2^{-1}}\de\hat{\chib}\c\dual\a^{(0)}-\de\zeta\c\dual\b^{(0)}-2\de\etab\c\dual\b^{(0)}\right]\nn\\
h(R)(e_3,e_4,e_a)\!&=&\!\left[2\de\omb\b^{(0)}+(\dual\nabb-\dual\nabb^{(0)})\si^{(0)}+2\de\hat{\chi}\c\bb^{(0)}+3(\de\eta\ro^{(0)}+\de\dual\eta\si^{(0)})\right]\nn\\
h(R)(e_4,e_a,e_4)\!&=&\!-\left[2\de\om\b^{(0)}-(2\de\zeta+\de\etab)\a^{(0)}\right]\nn\\
h(R)(e_a,e_4,e_b)\!&=&\!\left[4\de\omb\a^{(0)}-3(\de\hat{\chi}\ro^{(0)}+\de\dual\hat{\chi}\si^{(0)})+(\de\zeta+4\de\eta)\hot\b^{(0)}\right]\nn
\eea
\smallskip

\NI Where, as said before the connection coefficients related to $\de R$ which are of order $O(\ep)$.

\NI Equations \ref{hdef1} show, as anticipated before, that there is not a linear equation in $\de R$ which can be thought as the natural linearization of equations. \ref{nullBian}. This is clear observing that $\{\de\om,\de\hat{\chib},\de\zeta,..\}$ depend on $\de R$ in a very indirect way, namely through the solutions of equations \ref{eq:transportM}, \ref{eq:HodgeH} 

\NI Therefore  we have not a natural choice for the tensor $h$ in \ref{Bianchi4}.

\medskip

\NI At this point we have two possibilities:

\medskip

\NI The first one is to consider $h_{\nu\ro\si}$ as a defined tensor field depending on $W$ assuming for it decays  we expect to be compatible with the expected decay for the connection coefficients related to a perturbation of Kerr spacetime. This implies we have a good candidate for  $\{ \de\hat{\chib}, \de\omega,\de\zeta,..\}$ and moreover that we are able to exploit the \cite{Kl-Ni1} approach in this linear case, adding the source term $h$.

\NI The second possibility is to impose $h_{\nu\ro\si}$ in such a way the norms $\cal{Q}$ we have to exploit to obtain the peeling decay mimyking the \cite{Kl-Ni1} method in the linear case, are bounded.
\smallskip

\NI Both approaches require a preliminary discussion on how these norms have been bounded in the \cite{Kl-Ni1} approach and how to adapt this result to our linear case in Kerr background. 





\smallskip

\subsection{The boundedness of the energy norms in the [Kl-Ni] results.}
The main ingredient to prove the global existence in the external region used in \cite{Ch-Kl:book} and \cite{Kl-Ni1} is the control of suitable energy type norms made with the Bel-Robinson tensor $W$. Once the boundedness of the norms for $W$ has been proved, from them it is possible to obtain the decay along the null directions of the various null components of $W$. As remarked before in \cite{Kl-Ni1} the decay of some of the Riemann components was slower than the one suggested by the ``Peeling Theorem". As the decay of the various null components is connected to the weights of the energy type norms, in \cite{Kl-Ni2} a better result was obtained showing that one can define a set of energy type norms for $\lie_{T_0}W$ with ``stronger" weights and from them with some work to obtain a decay in agreement with peeling. In the next subsection we define  the Bel-Robinson norms used in  \cite{Kl-Ni1} and the modified version for the Peeling in \cite{Kl-Ni1}.

\subsubsection{Notations, definitions and results}

\NI As already said, we can define an external region endowed with a double null canonical foliation for the details see \cite{Kl-Ni1}, Chapter 3) made by null hypersurfaces similar to the null outgoing and incoming cones of the Minlowski spacetime. They can be expressed at least locally as the level hypersurfaces of the solutions of the eikonal equation,
\[g^{\mu\nu}\pr_{\mu}w\pr_{\nu}w=0\ ,\]
 $u=u(p)$, $ \ub=\ub(p)$,  with initial data given on the external region of a spacelike hypersurface $\Si_0$ \footnote{for the sake of simplicity we do not discuss here the problem of the canonical foliation of the  ``final" incoming cones , see for example \cite{Ch-Kl:book} for details} .  In the non linear case the determination of the foliation is part of the problem one is solving,\footnote{and its existence is proved from the control of the connection coefficients obtained solving equations \ref{eq:transportM} and \ref{eq:HodgeH}.} in the linearized version we are considering in this paper the double null foliation is an appropriate foliation  of the Kerr spacetime, we will give explycitely later on. Once we have the null coordinates $u$ and $\ub$ we complete them with angular coordinates $\theta$, $\phi$ adapted to $S(u,\ub)$, see [Kl-Ni], cap 3. Then we introduce a null frame $\{e_{\mu}\}=\{ e_4, e_3, e_2,e_1\}$ with $e_4$ null tangent to the outgoing null cones and $e_3$ null tangent to the incoming null one, $g(e_3,e_4)=-2$, $e_{(1,2)}$ tangent to $S(u,\underline{u})$ and such that $g(e_{(1,2)},e_{(3,4)})=0$. 
 
 \NI Then we consider the connection coefficients $\cal{O}=\{\chi,
 \chib, \omega, \omb, \zeta\}$ and the components of $W$,and  $\cal{R}=\{\alpha, \underline{\alpha}, \beta, \underline{\beta}, \ro, \si\}$ the conformal part of the Riemann tensor, with respect to $\{e_{\mu}\}$, we call them null  Riemann components see [Kl-Ni], cap 3 for the definition.


\subsubsection{Energy type norms}
Let $W$ be the conformal part of the Riemann tensor of the vacuum spacetime or an arbitrary Weyl field, the Bel-Robinson tensor associated to it is the tensor field
\bea
Q_{\a\b\ga\de}[W]=
W_{\a\ro\ga\si}{{{W_\b}^\ro}_\de}^\si+{\rdual{W}_{\a\ro\ga\si}}\rdual{W}{{{_\b}^\ro}_\de}^\si
\eea
The  \cite{Kl-Ni1} energy type norms, denoted by $\cal Q, \underline{\cal Q}$ have the following expression: 
\bea\label{norme1}
&& {\cal Q}(u,\ub) = {\cal Q}_1(u,\ub)+{\cal Q}_2(u,\ub)\nn\\
&& \underline{\cal Q}(u,\ub) = \underline{\cal Q}_1(u,\ub)+\underline{\cal Q}_2(u,\ub)\	,
\eea
where, denoting
$V(u,\ub)=J^-(S(u,\ub))$
\bea\label{norms1}
\ML{\cal Q}_1(u,\ub) &\equiv & \int_{C(u)\cap V(u,\ub)}Q(\lie_T{W})(\bar{K},\bar{K},\bar{K},e_4)\nn\\
&+&\int_{C(u)\cap V(u,\ub)}Q(\lie_O W)(\bar{K},\bar{K},T,e_4)\nn\\
\ML{\cal Q}_2(u,\ub) &\equiv &\int_{C(u)\cap V(u,\ub)}Q(\lie_O\lie_T W)(\bar{K},\bar{K},\bar{K},e_4)\nn\\
&+&\int_{C(u)\cap V(u,\ub)}Q(\lie^2_O W)(\bar{K},\bar{K},T,e_4)\label{Q_2}\nn\\
&+&\int_{C(u)\cap V(u,\ub)}Q(\lie_S\lie_T W)(\bar{K},\bar{K},\bar{K},e_4)\nn
\eea
\bea\label{norms2}
\ML\underline{\cal Q}_1(u,\ub)&\equiv&
\int_{\un{C}(\ub)\cap V(u,\ub)}Q(\lie_T W)(\bar{K},\bar{K},\bar{K},e_3)\nn\\
&+&\int_{\un{C}(\ub)\cap V(u,\ub)}Q(\lie_O W)(\bar{K},\bar{K},T,e_3).\nonumber
\eea
\bea
\underline{\cal Q}_2(u,\ub) &\equiv&\int_{\un{C}(\ub)\cap V(u,\ub)}
Q(\lie_O\lie_T W)(\bar{K},\bar{K},\bar{K},e_3)\nn\\
&+&\int_{\un{C}(u)\cap V(u,\ub)}Q(\lie^2_O W)(\bar{K},\bar{K},T,e_3)\nn\\
&+&\int_{\un{C}(\ub)\cap V(u,\ub)}Q(\lie_S\lie_T W)(\bar{K},\bar{K},\bar{K},e_3)\label{Qb_2}
\eea
The norms associated to the initial data hypersurface are:
\bea
\ML\underline{\cal Q}_{1_{\Si_0\cap V(u,\ub)}}\!\!&\equiv&\!\int_{\Si_0\cap V(u,\ub)}Q(\lie_T W)(\bar{K},\bar{K},\bar{K},T)
+\int_{\Si_0\cap V(u,\ub)}Q(\lie_O W)(\bar{K},\bar{K},T,T)\nn\\
\ML\underline{\cal Q}_{2_{\Si_0\cap V(u,\ub)}}\!\!&\equiv&\!\int_{\Si_0\cap V(u,\ub)}Q(\lie_O\lie_T W)(\bar{K},\bar{K},\bar{K},T)
+\int_{\Si_0\cap V(u,\ub)}Q(\lie_O^2 W)(\bar{K},\bar{K},T,T)\nonumber\\
\!&+&\!\int_{\Si_0\cap V(u,\ub)}Q(\lie_S\lie_T W)(\bar{K},\bar{K},\bar{K},T).
\eea
With $\ ^{(j)}O$ the rotation vectorfields, \footnote{The precise definition of the $O$ vectorfields for a perturbed spacetime is given in [Ch-Kl], chapter 16}
\begin{eqnarray}\label{K_0}
& & T_0= \frac{\p}{\p_t}\nn\\
& &  S=\frac 1 2 (ue_e+\ub e_4)\\
& & K_0 = \frac 1 2 (u^2 e_3+\ub^2 e_4)\nn\\
& &\bar{K}=\frac 1 2 (\ttau^2e_3+\tttau^2e_4)\nn\\
& & T=\frac 1 2  (e_3+e_4)\nn\	.
\end{eqnarray}
and
$\lie$ the modified Lie derivative, see \cite{Kl-Ni1}, defined as \footnote{$\lie$ is such that if $W$ is a Weyl field $\lie_X W$ is also a Weyl field}
\bea
\hat{\mathcal{L}}_X W =\Lie_X W-\frac 1 2 {^{(X)}}[W]+\frac 3 8(\tr{^{(X)}}\pi)W,\eql{Liemod}
\eea
where
\bea
{^{(X)}[W]}_{\a\b\ga\de}={^{(X)}\pi}^{\la}_{\a}W_{\la\b\ga\de}+
{^{(X)}\pi}^{\la}_{\b}W_{\a\la\ga\de}+{^{(X)}\pi}^{\la}_{\ga}W_{\a\b\la\de}
+{^{(X)}\pi}^{\la}_{\de}W_{\a\b\ga\la}\nn
\eea
Where with $\Lie_OW$ we mean the Lie derivative with respect to the rotation group generator, see \cite{Kl-Ni1} for details.
\smallskip

\NI Moreover given $\cal{K}$ an open region of the Kerr spacetime we define the following quantities
\begin{equation}
{\cal Q}_{\cal{K}}\equiv
\sup_{\{u,\ub|S(u,\ub)\subseteq\cal{K}\}}\{{\cal Q}(u,\ub)+\underline{\cal Q}(u,\ub)\}
\end{equation}
and, on the initial spacelike hypersurface $\Si_0$ 
\begin{equation}
{\cal Q}_{\Si_0\cap\cal{K}}=\sup_{\{u,\ub|S(u,\ub)\subseteq\cal{K}\}}
\{{\cal Q}_{1_{\Si_0\cap V(u,\ub)}}+\underline{\cal Q}_{2_{\Si_0\cap V(u,\ub)}}\}.
\end{equation}
In \cite{Kl-Ni1} $\cal{K}$ is a finite spacetime region where, with a bootstrap mechanism, one proves that these norms have good estimates which allow to ``extend" it globally. In the linear case we are considering now,  $\cal{K}$ will be from the beginning an unbounded region describing the so called ``external region" of the Kerr spacetime, where we want to prove that the null components of the vector field $W$ decay in agreement with the peeling. 

\NI In the non linear case one of the main steps to obtain global existence is to prove that ${\cal Q}_{\cal{K}}$ can be bounded by
${\cal Q}_{\Si_0\cap\cal{K}}$ if the initial data are small\footnote{see [Kl-Ni], cap 2 for smallness condition}. In the linearized case an analogous estimate is the main technical result. Nevertheless, as already said, the norms \ref{norms1}, \ref{norms2}, which were used in \cite{Kl-Ni1} do not provide the correct asymptotic behaviour. This was cured in \cite{Kl-Ni2} defining a different set of $\cal Q$ norms with weights modified by a factor $|u|^{\ga}$ with $\ga>0$ appropriately chosen, see \cite{Kl-Ni2}, equations (2.6), (2.7), (2.8), (2.9), section 2.2. We recall the first few of them,
\bea
\widetilde{Q}_1(u,\ub)&\equiv&\int_{C(u)\cap
V(u,\ub)}|u|^{2\ga}Q(\lie_{T}R)(\acc,\acc,\acc,e_4)\nn\\
&&+\int_{C(\la)\cap V(u,\ub)}|\la|^{2\ga}Q(\lie_{O}R)(\acc,\acc,T,e_4)\nn\\
\widetilde{Q}_2(u,\ub)&\equiv&\int_{C(u)\cap V(u,\ub)}|u|^{2\ga}Q(\lie_{O}\lie_{T}\rr)(\acc,\acc,\acc,e_4)\nn\\
&&+\int_{C(u)\cap V(u,\ub)}|u|^{2\ga}Q(\lie^2_{O}R)(\acc,\acc,T,e_4)\eql{1.1xx}\\
&&+\int_{C(u)\cap V(u,\ub)}|u|^{2\ga}Q(\lie_{S}\lie_{T}R)(\acc,\acc,\acc,e_4)\nn
\eql{QQ12}
\eea
\bea
\widetilde{\underline Q}_1(u,\ub)&\equiv&\sup_{V(u,\ub)\cap{\Si}_0}|r^3{\overline\ro}|^2+
\int_{\Cb(\ub)\cap V(u,\ub)}|u|^{2\ga}Q(\lie_{T}R)(\acc,\acc,\acc,e_3)\nn\\
&&+\int_{\Cb(\ub)\cap V(u,\ub)}|u|^{2\ga}Q(\lie_{O}R)(\acc,\acc,T,e_3)\nn\\
\widetilde{\underline Q}_2(u,\ub)&\equiv&\int_{\Cb(\ub)\cap
V(u,\ub)}|u|^{2\ga}Q(\lie_{O}\lie_{T}R)(\acc,\acc,\acc,e_3)\nn\\
&&+\int_{\Cb(\ub)\cap V(\la,\nu)}|u|^{2\ga}Q(\lie^2_{O}R)(\acc,\acc,T,e_3)\eql{QQb12}\\
&&+\int_{\Cb(\ub)\cap V(u,\ub)}|u|^{2\ga}Q(\lie_{S}\lie_{T}R)(\acc,\acc,\acc,e_3)\ .\nn
\eea
 Clearly we can define the corresponding  norms \[\ML\underline{\cal Q}_{1,2_{\Si_0\cap V(u,\ub)}}\!\! \ ,\ \  \underline{\cal Q}_{\cal{K}}\mbox{ and } \underline{\cal Q}_{\Si_0\cap V}\]

\NI In \cite{Kl-Ni2} it was proved that also these norms can be globally bounded in terms of initial data with suitable decays, and that their extra weight $|u|^{\ga}$ improved the asymptotic behaviour adding to the various components of $W$ a decay factor in the $u$ variable. More precisely
This was, nevertheless, only an intermediate step as, moving on a null hypersurface toward the null infinity, the $u$ variable is constant so that $\a$ and $\b$ still do not have the expected decay.
At this point we find the following decays:
\bea\label{intes}
&&\lim_{u\rightarrow\infty}r^{\frac{7}{2}}|u|^{(\frac{5}{2}+\ep')}\a=C_0\nn\\
&&\lim_{u\rightarrow\infty}r^{\frac{7}{2}}|u|^{(\frac{5}{2}+\ep')}\b=C_0\nn\\
&&\lim_{u\rightarrow\infty}r^{3}|u|^{(3+\ep')}|\ro-\overline{\ro}|=C_0\nn\\
&&\lim_{u\rightarrow\infty}r^3|u|^{(3+\ep')}|\sigma-\overline{\sigma}|=C_0\eql{peel1}\\
&&\lim_{u\rightarrow\infty}r^{2}|u|^{(4+\ep')}|\bb|=C_0\nn\\
&&\lim_{u\rightarrow\infty}r |u|^{(5+\ep')}|\aa|\leq C_0\ .\nn
\eea
\NI The second technical point which allowed to obtain the result was to look at those  Bianchi equations which can be expressed as transport equations along the incoming cones $\Cb(\ub)$. The extra decay factor in the $u$ variable allowed to get better decays in $r$ for the various components of the Weyl field $W$ in terms of their values on $\Cb(\ub)\cap\Si_0$.This implied a decay in this variable which was in agreement with the ``Peeling theorem" provided  that the decay on $\Si_0$ was sufficiently fast.\footnote{We do not give more details here as the same argument will be used and discussed in detail during the proof of the present result.}
Let  us recall a shortened version of the main result in \cite{Kl-Ni2}:
\begin{thm}\label{TT1.2}
Let assume that on $\Si_0/B$, contained in the external region, the metric and the second fundamental form have the following asymptotic behaviour
\footnote{Here $f=O_q(r^{-a})$ means that $f$ asymptotically
behaves as $O(r^{-a})$ and its partial derivatives $\partial^kf$, up to order $q$ behave as $O(r^{-a-k})$. Here with $g_{ij}$ we mean the components written in Cartesian coordinates. }
\bea
&&g_{ij}={g_S}_{ij}+O_{q+1}(r^{-(\frac{3}{2}+\gamma)})\nn\\
&&{k}_{ij}=O_{q}(r^{-(\frac{5}{2}+\ga)})\eql{1.1b}
\eea
where ${g_S}$ denotes the restriction of the
Schwarzschild metric on the initial hypersurface:
\[g_S=(1-\frac{2M}{r})^{-1}dr^2+r^2(d\theta^2+\sin\theta^2d\phi^2)\ .\] 
and $\ga=\frac{3}{2}+\ep$ and $\ep>0$, then along the outgoing null hypersurfaces $C(u)$ the following limits hold, with $\ep'<\ep$:
\bea
&&\lim_{C(u);v\rightarrow\infty}r(1+|u|)^{(4+\ep')}\aa=C_0\nn\\
&&\lim_{C(u);v\rightarrow\infty}r^2(1+|u|)^{(3+\ep')}\bb=C_0\nn\\
&&\lim_{C(u);v\rightarrow\infty}r^3\ro=C_0\nn\\
&&\lim_{C(u);v\rightarrow\infty}r^3\si=C_0\eql{peel1}\\
&&\lim_{C(u);v\rightarrow\infty}r^4(1+|u|)^{(1+\ep')}\b=C_0\nn\\
&&\sup_{(u,v)\in \M}r^{5}(1+|u|)^{\ep'}|\a|\leq C_0\ .\nn
\eea
\end{thm}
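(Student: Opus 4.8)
\NI The statement is a condensed form of the main theorem of \cite{Kl-Ni2}, so the plan is only to recall the structure of that proof, which is the template we shall adapt to the Kerr background below. The first step is to show that the $|u|^{\ga}$--weighted Bel--Robinson norms of \cite{Kl-Ni2} (cf.\ (\ref{1.1xx}), (\ref{QQb12})) are controlled, on the whole external region, by their value on $\Si_0$. This is an energy argument: for each of the Weyl fields $\lie_T W$, $\lie_O W$, $\lie_O\lie_T W$, $\lie^2_O W$, $\lie_S\lie_T W$ one integrates $\Div Q$ over $V(u,\ub)$, applies the divergence theorem, and estimates the bulk term by the deformation tensors of $\acc,T,O,S$ together with the contribution obtained by differentiating the weight $|u|^{2\ga}$ along the $e_4$--generators of $C(u)$; the latter carries a favourable sign, while the former are small because in the external region all of $\acc,T,O,S$ are approximate Killing or conformal fields of the Schwarzschild part. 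The asymptotics (\ref{1.1b}) are precisely what makes the corresponding norm on $\Si_0$ finite.

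\NI Next I would convert this uniform bound into pointwise decay: the Sobolev inequality on the $2$--surfaces $S(u,\ub)$, combined with the Hodge systems and the transport equations satisfied by the null components (among them (\ref{nullBian})), yields the intermediate estimates (\ref{intes}), in which the energy weight $|u|^{2\ga}$ has been turned into the $|u|$--powers attached there to $\a,\b,\ro-\overline{\ro},\si-\overline{\si},\bb,\aa$. At this point $\aa,\bb,\ro,\si$ already decay in $r$ as required by peeling; only $\a$ and $\b$ are still short of it in $r$.

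\NI The decisive step is then to recover those powers of $r$ by integrating, along each incoming cone $\Cb(\ub)$ from $\Cb(\ub)\cap\Si_0$, the $e_3$--transport equations of (\ref{nullBian}), in particular
\bea
\dddd_3\a+\half\trchb\a&=&\nabb\hot\b+\big[4\omb\a-3(\chih\ro+\dual\chih\si)+(\zeta+4\eta)\hot\b\big],\nn\\
\dddd_3\b+\trchb\b&=&\nabb\ro+\big[2\omb\b+\dual\nabb\si+2\chih\c\bb+3(\eta\ro+\dual\eta\si)\big].\nn
\eea
Since $|u|$ grows like $r$ along $\Cb(\ub)$, the additional $|u|$--decay of $\b$ and of $\ro$ provided by the previous step, together with the decay of $\a$ and $\b$ on $\Si_0$ granted by (\ref{1.1b}), can be traded through these integrations for extra powers of $r$, producing the limit $r^4(1+|u|)^{1+\ep'}\b=C_0$ and the bound $r^{5}(1+|u|)^{\ep'}|\a|\le C_0$ of the statement. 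Finally, once the decays are known, the existence of the limits for $\aa,\bb,\ro,\si,\b$ is obtained by verifying that the corresponding $\dddd_4$--derivatives are integrable in $\ub$ along $C(u)$.

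\NI The last step is the one I expect to be the main obstacle: one must check that every quadratic error term above (a product of a connection coefficient and a null Riemann component) decays in $r$ strictly faster than the quantity it perturbs, uniformly as $\ub\to\infty$; for $\a$ the worst such term is only borderline, which accounts for the loss --- $\a$ ends up with merely the weight $|u|^{\ep'}$ (and a $\sup$--bound rather than a limit), whereas $\b$ keeps the limit with weight $|u|^{1+\ep'}$. Making this mechanism compatible with the slow $r^{-2}$ decay of the genuine Kerr metric components --- which (\ref{1.1b}) does \emph{not} allow --- is precisely the difficulty addressed in the remainder of this paper.
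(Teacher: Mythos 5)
Your proposal correctly recognizes that Theorem \ref{TT1.2} is not proved in this paper but is a quoted summary of the main result of \cite{Kl-Ni2}, and the three-stage strategy you outline --- boundedness of the $|u|^{\gamma}$-weighted Bel--Robinson norms by their value on $\Si_0$, the intermediate pointwise decays (\ref{intes}), and the trade of $u$-decay for $r$-decay via integration of the $\dddd_3$-transport equations along the incoming cones $\Cb(\ub)$ --- is precisely the sketch the paper itself gives around the statement. Your additional remarks on the borderline behaviour of $\a$ (yielding a $\sup$ bound with weight $|u|^{\ep'}$ rather than a limit) are consistent with \cite{Kl-Ni2} and with the form of the stated conclusion.
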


\smallskip

\NI Looking at the assumptions of the theorem a problem appears immediately  in trying to extend it to spacetimes near to Kerr. In fact the Kerr metric in the Boyer-Linquist coordinates $\{t,r,\theta,\phi\}$ is:
\begin{eqnarray}\label{BL}
ds^2 =-\frac{\De-a^2\sin^2\theta}{\Si}dt^2+\frac{\Sigma}{\Delta}dr^2+\Sigma
d\theta^2-\frac{4Mar\sin^2\theta}{\Sigma}d\phi dt+R^2\sin^2\theta d\phi^2 \nn
\end{eqnarray}
and its restriction to $\Si_0$:
\bea
ds^2\!&=&\!\frac{\Sigma}{\Delta}dr^2+\Si d\theta^2+R^2\sin^2\theta d\phi^2\\
\!&=&\!g_S+\left(\frac{a^2\sin^2\theta}{r^2}+O\left(\frac{a^2m}{r^3}\right)+O\left(\frac{a^4}{r^4}\right)\right)\!dr^2+\left(\frac{a^2\cos^2\theta}{r^2}\right)\!r^2d\theta^2\nn\\
\!&+&\!\left(\frac{a^2}{r^2}+O\left(\frac{a^2m}{r^3}\right)+O\left(\frac{a^4}{r^4}\right)\right)\!r^2\sin^2\theta d\phi^2\ .\nn
\eea
It follows that the components of the correction to the $g_S$ metric have terms  of order $O({a^2}/{r^2})$ which do not satisfy the assumptions of Theorem \ref{TT1.2}.


\NI The idea to overcome this difficulty is based like in [Kl-Ni1], on the fact that the Kerr spacetime is static and, therefore, denoting with $W$ the Riemann tensor for a vacuum spacetime near Kerr, $W=W_{(Kerr)}+\de W$, if we consider $\Lie_TW$, basically we subtract the Kerr part and $\Lie_TW=\Lie_T\de W$. This suggests that we can try to obtain the correct asymptotic behaviour for $\Lie_TW$ whose initial data can be chosen to decay arbitrarily fast. Once we have a control of the asymptotic behaviour of $\Lie_TW$ we can recover the one of $W$ by an integration along the null outgoing directions.
\smallskip

\NI This has been done also in [Ca-Ni] for Kerr spacetime, where we obtained the right decays condition on the initial data in terms of the metric and its first outgoing derivatives on the external region, let us state the final result of that work,see [Ca-Ni], section 4:
 
\begin{thm}\label{final version}
Assume that initial data  are given on $\Si_0$ such that, outside of a ball centered in the origin of radius $R_0$, they are different from the ``Kerr initial data of a Kerr spacetime with mass $M$ satisfying 
\[\frac{M}{R_0}<<1\ \ ,\ \ J\leq M^2 \mbox{\ \ \ \ \ (external region)}\]
for some metric corrections decaying faster than $r^{-3}$ toward spacelike infinity together with its derivatives up to an order $q\geq 4$, namely
\footnote{The components of the metric tensor written in dimensional coordinates.}
\bea
g_{ij}=g^{(Kerr)}_{ij}+o_{q+1}(r^{-(3+\frac{\ga}{2})})\ \ ,\ \ k_{ij}=k^{(Kerr)}_{ij}+o_{q}(r^{-(4+\frac{\ga}{2})})
\eea
where $\ga>0$. Let us assume that the metric correction $\de g_{ij}$, the second fundamental form correction $\de k_{ij}$ are sufficiently small, namely the function ${\cal J}$ equation \ref{3.421aa} made by $L^2$ norms on $\Si_0$ of these quantities is small, \footnote{This will also imply a slightly stronger condition on the decay of the metric and second fundamental form components, basically that $\int_{R_0}^{\infty}dr r^{5+\ga}|\de g_{ij}|^2<\infty$,  $\int_{R_0}^{\infty}dr r^{7+\ga}|\de k_{ij}|^2<\infty$.}
\bea
{\cal J}(\Si_0,R_0; \de{^{(3)}\!}{\bf g}, \de{^{(3)}\!}{\bf k})\leq \varepsilon\ ,
\eea
then this initial data set has a unique development, ${\widetilde{\M}}$, defined outside the domain of influence of $B_{R_0}$ with the following properties:

\NI
i)\,\,\, ${\widetilde{\M}}={\widetilde{\M}}^{+}\cup{\widetilde{\M}}^{-}$ where ${\widetilde{\M}}^{+}$ consists of the part of
${\widetilde{\M}}$ which is in the future of $\Si/B_{R_0}$, ${\widetilde{\M}}^{-}$ the one to the past. 

\medn
\NI
ii) \,\,\,$({\widetilde{\M}}^{+},g)$ can be foliated by a canonical double null foliation
$\{C(u),\Cb(\ub)\}$  whose outgoing leaves $C(u)$ are complete
\footnote{By this we mean that the null geodesics generating $C(u)$ can be indefinitely
extended toward the future.}
for all $|\la|\geq |u_0|=R_0$. The boundary of $B_{R_0}$ can be chosen to be the intersection of $C(u_0)$ with $\Si_0$.
\medn

\NI
iii) \,\,\, The various null components of the Riemann tensor relative to the null frame associated to the double null canonical foliation, decay along the outgoing ``cones" in agreement with the ``Peeling Theorem".
\end{thm}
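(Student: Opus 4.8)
\medskip
\noindent\textbf{Sketch of proof.}\ The plan is to produce the development $\widetilde{\M}$ by a local existence theorem applied outside the domain of influence of $B_{R_0}$ (this gives part i) and the splitting $\widetilde{\M}=\widetilde{\M}^{+}\cup\widetilde{\M}^{-}$) and then to work entirely in $\widetilde{\M}^{+}$. There one first constructs the canonical double null foliation $\{C(u),\Cb(\ub)\}$ as in \cite{Kl-Ni1}, Chapter 3: the cones are the level sets of solutions of the eikonal equation with data on $\Si_0/B_{R_0}$, and their existence, regularity and the completeness of the outgoing leaves $C(u)$ follow from a bootstrap controlling the connection coefficients ${\cal O}$ through the transport equations \ref{eq:transportM} and the Hodge systems \ref{eq:HodgeH}, the smallness of ${\cal J}$ supplying the starting bound. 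This yields part ii).

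\NI For part iii) the key observation is that $g_{(Kerr)}$ is stationary, so if $\widetilde T_0$ denotes the quasi-Killing extension of the Kerr field $T_0$ its deformation tensor is $O(\ep)$ and $\Lie_{\widetilde T_0}W=\Lie_{\widetilde T_0}\de W$; thus the slowly decaying $a^2/r^2$ pieces of the Kerr metric are invisible to $\Lie_{\widetilde T_0}W$, whose data on $\Si_0$ inherits the fast decay assumed for $\de g,\de k$. I would then introduce the $|u|^{2\ga}$-weighted Bel-Robinson energies $\widetilde Q_{1,2}$, $\widetilde{\underline Q}_{1,2}$ of \cite{Kl-Ni2} (see \ref{QQ12}--\ref{QQb12}), built from $\widetilde R\equiv\Lie_{\widetilde T_0}R$ and its modified Lie derivatives along $\widetilde T_0$, the rotation fields $^{(j)}O$ and $S$, and prove the global estimate $\widetilde{\cal Q}_{\cal K}\lesssim\widetilde{\cal Q}_{\Si_0\cap{\cal K}}$ on the entire external region by a bootstrap. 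This rests on the divergence identity for the Bel-Robinson tensor together with the smallness, in the external region, of the deformation tensors of $^{(j)}O$, $S$ and $\bar K$ — the quasi-Killing fields of \cite{Ch-Kl:book}, Chapter 16, which are genuinely close to the Minkowski generators precisely because one is far from the origin — and on the control of the inhomogeneous Bianchi terms, which by \ref{Hdef1} are products of $\ep$-size connection coefficients with the Kerr curvature. Once $\widetilde{\cal Q}_{\cal K}$ is bounded, Sobolev and trace inequalities on the $2$-surfaces $S(u,\ub)$ give the intermediate decays \ref{intes} for the null components of $\Lie_{\widetilde T_0}\de W$.

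\NI The remaining step is the upgrade to peeling and the passage back to $W$. One singles out the subset of the Bianchi system \ref{nullBian} consisting of transport equations along the incoming cones $\Cb(\ub)$ (the $\dddd_3$ and $\dd_3$ equations) and integrates them from $\Cb(\ub)\cap\Si_0$ outward; since along a fixed $\Cb(\ub)$ the coordinate $|u|$ tends to infinity, the extra weight $|u|^{\ga}$ carried by the energies converts into additional powers of $r$, producing the peeling rates for the components of $\Lie_{\widetilde T_0}\de W$ provided the data on $\Si_0$ decays like $o_{q+1}(r^{-(3+\ga/2)})$, $o_{q}(r^{-(4+\ga/2)})$ — which is exactly the hypothesis, and also explains the stated integrability conditions on $|\de g|^2,|\de k|^2$. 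Finally, writing $W=W_{(Kerr)}+\de W$, the Kerr Weyl tensor already peels, and $\de W$ is recovered from $\Lie_{\widetilde T_0}\de W$ by integrating along the orbits of $\widetilde T_0$ from $\Si_0$ (where $\de W$ has the required decay); the surplus $|u|^{\ep}$ in the estimates for $\Lie_{\widetilde T_0}\de W$ is what makes this integration converge, and one obtains part iii).

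\NI I expect the main obstacle to be the weighted energy estimate of the third paragraph, and specifically the control of the error integrals in the Bel-Robinson identity. In contrast with the near-Minkowski situation, these errors couple $\ep$-size quantities to the \emph{Kerr} Weyl curvature, which decays only like $r^{-3}$, and to the non-vanishing deformation tensors of the rotation, scaling and $\bar K$ fields; balancing all the $r$- and $|u|$-weights so that every error integral converges and the $|u|^{2\ga}$ weight is preserved is the delicate point, and it is exactly what forces the decay rates in the statement and the smallness threshold ${\cal J}\leq\varepsilon$. A secondary difficulty is to verify that the quasi-Killing vector fields keep small deformation tensors uniformly out to null infinity, which is what makes the restriction $M/R_0\ll 1$ (the external region) unavoidable.
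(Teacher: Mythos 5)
Your sketch captures, in broad strokes, the strategy that the paper attributes to \cite{Ca-Ni} and \cite{Kl-Ni2}: subtract the slowly decaying Kerr contribution by passing to $\lie_{\tilde T_0}W$, run the $|u|^{2\ga}$-weighted Bel--Robinson energy estimates, convert the $|u|$-decay into $r$-decay through the $\dddd_3$-transport half of the Bianchi system, and then recover $W$ from $\lie_{T_0}W$ using the fast decay of the data on $\Si_0$. One should be aware, however, that this paper does not itself prove Theorem~\ref{final version}: it is quoted as the main result of \cite{Ca-Ni}, and what the present paper supplies (Section~3) is the analogous energy estimate for the \emph{linear} Bianchi system with a prescribed Kerr background, which is a different statement (Theorem~\ref{maintheorem}).

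Within your sketch there is one concrete slip that would derail the transport-equation step as written. You claim that ``along a fixed $\Cb(\ub)$ the coordinate $|u|$ tends to infinity,'' and that this is what converts the $|u|^{\ga}$ weight into extra powers of $r$. This is false: with $u=t-\rho$, $\ub=t+\rho$, on the cone $\Cb(\ub)$ the optical function $u$ ranges from $u_0(\ub)\approx-\ub$ (at $\Cb(\ub)\cap\Si_0$) up to at most $\ub$ at the vertex, so $|u|$ is \emph{bounded} by $\ub$ there and in fact \emph{decreases} as one moves into the bulk. The actual mechanism is different: on the intersection $\Cb(\ub)\cap\Si_0$ one has $|u|\simeq r\simeq\ub$, so the boundary term carried to $\Cb(\ub)\cap\Si_0$ by the $\dddd_3$-transport equation is controlled by the fast decay of the initial data, while the integrand picks up the $|u|^{\ga}$-improved interior estimates; it is the identification $|u|\simeq r$ on $\Si_0$, not a blow-up of $|u|$ along $\Cb(\ub)$, that trades the extra $u$-weight for the missing power of $r$. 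Relatedly, your final recovery of $\de W$ from $\lie_{\tilde T_0}\de W$ ``along $T_0$-orbits'' is not how the paper (nor \cite{Kl-Ni2}) phrases the step — they integrate along the null cones of the foliation, writing $e_4$ in terms of $T$ and $e_3$ and using the $\dddd_3$-equations already in hand — and the $T_0$-orbit route raises the separate issue that those orbits leave the external domain of dependence after a finite time. The two routes are in the same spirit, but the cone-based integration is the one that meshes with the double-null foliation on which the whole argument lives.
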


\begin{remark}

\NI The condition on the metric $g$ and on the outgoing first derivative $k$ assure us that every null component of $\lie_{T_0} W$ on $\Sigma_0$ 
 \footnote{$T_0=\frac {\p}{\p t}$ and $\lie_{T_0}$ is the modified Lie derivative introduced in \cite{Ch-Kl:book}).} decay like $r^{-(6+\e)}$. More precisely, the $\lie_{T_0} W$ null components satisfy the following decays on $\Si_0$,  with $\ep'<\ep$:
\begin{eqnarray}
&&\sup_{\mathcal{K}}r^{5}|u|^{(1+\e')}|\a(\lie_{T_0}W)| \leq C_0,\quad\sup_{\mathcal{K}}r^{4}|u|^{2+\e'} |\b(\lie_{T_0}W)|\leq C_0\nn\\
&&\sup_{\mathcal{K}}r^3|u|^{3+\e}|\ro(\lie_{T_0}W)| \leq C_0,\quad \sup r^3|u|^{3+\e'}|\si(\lie_{T_0}W)|\leq C_0\ \ \eql{LTWdec}\\
&& \sup_{\mathcal{K}}r^2|u|^{4+\e'}\bb(\lie_{T_0}W)\leq C_0,\quad\sup_{\mathcal{K}}r|u|^{5+\e' }|\aa(\lie_{T_0}W)|\leq C_0\nn\ \ .
\end{eqnarray}

\smallskip

\NI In the linear case this result can be obtained in a easier way without introducing the $\lie_{T_0}$ derivative of the Weyl tensor as in this case the subtraction of a ``$W_{(Kerr)}$" part with a slow asymptotic decay can be done without any problem. The strategy we use is, nevertheless, appropriate to treat in a very similar way  also the non linear perturbations of the  Kerr spacetime and  could be also adapted to study the analogous of equation \ref{Bianchi4} when its right hand side is a non linear term satisfying appropriate conditions.
\end{remark}

\NI Now that we have defined the $\cal{Q}$ norms and their modified version, we can discuss how they can by bounded in the nonlinear case and how to obtain the same result  for equations 1.8 imposing for $h_{\mu\nu\ro}$ the right decays.

\subsection{The estimates of the $\cal{Q}$ norms in [Kl-Ni]}

\NI The core of the proof of the boundedness of the $\cal{Q}_{\cal{K}}$ norms is the estimate of the error term $\mathcal{E}=\mathcal{E}_1+\mathcal{E}_2\leq c\ep_0\cal{Q}_{\cal{K}}$, with $c$ constant and $\ep_0$ sufficiently small and with $\varep1$ , $\varep2$ defined as, see [kl-Ni], eq, 6.0.6:

\bea
&&\zeta_1(u,\ub)=\int_{V(u,\ub)}Div Q(\lie_T W)_{\b,\ga,\de}(\bar{K}^{\b}\bar{K}^{\ga}\bar{K}^{\de})\ \nn\\
&+&\int_{V(u,\ub)}Div Q(\lie_O W)_{\b,\ga,\de}(\bar{K}^{\b}\bar{K}^{\ga}T^{\de})\ \nn\\
&+&\frac 3 2 \int_{V(u,\ub)} Q(\lie_T W)_{\a,\b,\ga,\de}(\ ^{(\bar{K})}\pi^{\a,\b}{\b}\bar{K}^{\ga}\bar{K}^{\de})\ \nn\\
&+& \int_{V(u,\ub)} Q(\lie_O W)_{\a,\b,\ga,\de}(\ ^{(\bar{K}})\pi^{\a,\b}{\b}\bar{K}^{\ga}T^{\de})\ \nn\\
&+&\frac 1 2 \int_{V(u,\ub)} Q(\lie_O W)_{\a,\b,\ga,\de}(\ ^{(T)}\pi^{\a,\b}{\b}\bar{K}^{\ga}\bar{K}^{\de})\ \nn
\eea
\bea
&&\zeta_2(u,\ub)=\int_{V(u,\ub)}Div Q(\lie^2_O W)_{\b,\ga,\de}(\bar{K}^{\b}\bar{K}^{\ga}T^{\de})\ \nn\\
&+&\int_{V(u,\ub)}Div Q(\lie_O  \lie_T W)_{\b,\ga,\de}(\bar{K}^{\b}\bar{K}^{\ga}\bar{K}^{\de})\ \nn\\
&+&\int_{V(u,\ub)}Div Q(\lie_S \lie_T W)_{\b,\ga,\de}(\bar{K}^{\b}\bar{K}^{\ga}\bar{K}^{\de})\ \nn\\
&+& \int_{V(u,\ub)} Q(\lie^2_O W)_{\a,\b,\ga,\de}(\ ^{(\bar{K})}\pi^{\a,\b}{\b}\bar{K}^{\ga}T^{\de})\ \nn\\
&+&\frac 1 2 \int_{V(u,\ub)} Q(\lie^2_O W)_{\a,\b,\ga,\de}(\ ^{(T)}\pi^{\a,\b}{\b}\bar{K}^{\ga}\bar{K}^{\de})\ \nn\\
&+&\frac 3 2 \int_{V(u,\ub)} Q(\lie_O \lie_T W)_{\a,\b,\ga,\de}(\ ^{(\bar{K})}\pi^{\a,\b}{\b}\bar{K}^{\ga}\bar{K}^{\de})\ \nn\\
&+&\frac 3 2 \int_{V(u,\ub)} Q(\lie_S \lie_T W)_{\a,\b,\ga,\de}(\ ^{(\bar{K})}\pi^{\a,\b}{\b}\bar{K}^{\ga}\bar{K}^{\de})\ \nn
\eea
Let us restrict our attemption tu $\mathcal{E}_1$. We can divide it in two parts, the first one made of the two terms containing $Div\cal{Q}$ and the second one containing only $\cal{Q}$.
The second part is conpceptually easier to treat, in fact using the bootstrap assumptions on the connection coefficients we can estimate the deformation tensor $\ ^{(X)}\pi^{\a,\b}$, for $X= K,T$ and consequently,
using the fact that the $\cal{Q}$ norms can be written in terms of the null Riemann coefficients, see \cite{Kl-Ni1}, section 3.5.1, all the terms with a sum of integrals of 
the form:

\bea
\int_{V(u,\ub)}\ttau^{\ga}\cal{O}\cal{R}^2\nn
\eea

\NI With $\ga$ an integer to be specified, $\cal{O}$  a combination of connection coefficients depending on $\ ^{(X)}\pi$ and $\cal{R}$ a null component of the conformal part of the Riemann tensor depending on the particular term we are considering, see [Kl-Ni] cap 6. By the Schwartz inequality and the Poincare inequalities, see [Kl-Ni], cap 5, we obtain the right estimate.

\NI To estimate the first part we express $Div Q(\lie_X W)_{\b,\ga,\de}(\bar{K}^{\b}K^{\ga}T^{\de})$  as:
\bea
&& Div Q (\lie_X W)_{\b,\ga,\de}(\bar{K}^{\b}\bar{K}^{\ga}X^{\de})=\lie_X Div Q(W)_{\b,\ga,\de}(\bar{K}^{\b}\bar{K}^{\ga}X^{\de})+\nn\\
&& [\lie_X,Div] Q(W)_{\b,\ga,\de}(\bar{K}^{\b}\bar{K}^{\ga}X^{\de})
\eea
We pose 

\bea
&& \lie_X Div Q(W)_{\b,\ga,\de}(\bar{K}^{\b}\bar{K}^{\ga}X^{\de})=J^0\ \  and\ \  [\lie_X,Div] Q(W)_{\b,\ga,\de}(\bar{K}^{\b}\bar{K}^{\ga}X^{\de})=J^i\nn\\
&&  i=1,2,3\nn
\eea

\NI Clearly in the case of $\mathcal{E}_1$ we have $J^0=0$ due to the Bianchi equations for $W$.

\NI The explicit expression of $J^i$ is given in [Kl-Ni] eq. 6.1.6. We write them in symbolical way:

\bea
&& J^1=\ ^{(x)}\pi \cdot D_{\mu}W^{\mu}_{\nu\si\ro}\nn\\
&&J^2=p\cdot W\nn\\
&&J^3=q\cdot W \nn
\eea

\NI Where  $p$ depends on $\ ^{(x)}\pi$ and $q$ depends on $D \ ^{(x)}\pi $. With the same method we can estimate the second part  we can prove the boundedness for these terms.
For what concern  the error term $\mathcal{E}_2$ the situation is analogous with the only difference that for the terms like $ \int_{V(u,\ub)}Div Q(\lie_S \lie_T W)_{\a,\b,\ga,\de}(\bar{K}^{\b}\bar{K}^{\ga}\bar{K}^{\de})$ the $J^0$ term is not zero due to the fact that $T$ is not exactly Killing. Nevertheless by the bootstrap assumption we can write $J^0=\int_{V(u,\ub)}\ttau^{\ga}\cal{O}\dot\cal{R}^2$ with the $\cal{O} $ terms
decaying sufficiently fast to obtain the right estimates and treat the other $J$ terms in the same way of $\mathcal{E}_1$. 

\subsection{The \cite{Kl-Ni1} approach in the linear case}

\NI Trying to  transport this technique to the linear case we have some differences.

\medskip

\NI First we do not have to make any bootstrap assumption on the connection coefficients but we have to calculate them in Kerr.

\medskip

\NI Second, as remarked before we have to choose the right null frame associated to the null cones in Kerr spacetime, namely the one introduced in \cite{Is-Pr} and calculate all the connection coefficients and the null Riemann components related to this frame. 

\NI Let us anticipate the result of that section and write the decays we have obtained for the null Riemann components .

\NI Denoted as $\{e_{\mu}\}$ the null frame associated to the null cones of \cite{Is-Pr} in  Kerr spacetime, the following decays for the null Riemann components hold:

\bea \label{Riemcomp}
&&\a_{(Kerr)},\underline{\a}_{(Kerr)}\simeq\frac {M^3}{ r^5}\nn\\
&&\b_{Kerr},\underline{\b}_{(Kerr)}\simeq\frac {M^2}{ r^4}\\
&&\ro_{(Kerr)}\simeq\frac {M }{r^3},  \si_{(Kerr)}\simeq\frac {M }{r^4}\nn
\eea

\medskip

\NI The third, crucial, point is that we have to choose the right conditions for the hynomogeneous term $h$. This correspond, considering equation 1.14 and \ref{Riemcomp},  to assign the decays for the perturbed connection coefficients $\de O$. 

\NI There is an arbytrariety in this choice as the only requirement is that they satisfy the peeling assumptions.

\NI Nevertheless, as we want $h$ to be compatible with perturbed Kerr spacetime, a good choice can be the one obtained for the connection coefficients far small perturbation of Kerr spacetime in the "very external region", see [Ni]. They  are:\footnote{We can obtain a better decay for $\de \hat{\chi}$, namely $\de \hat{\chi}=o(r^{-3}u^{-1})$}
\begin {prop}
We assume the following decays for the connection coefficients associated to  $\de R$:  
\bea\label{q}
&&\de \hat{\chi}\simeq r^{-2}u^{-2}, \de \hat{\chib}\simeq r^{-1}u^{-3}\nn\\
&&\de \hat{\eta}\simeq r^{-2}u^{-2}, \de \hat{\etab}\simeq r^{-2}u^{-2}\\
&&\de \hat{\om}\simeq r^{-2}u^{-2}, \de \hat{\omb}\simeq r^{-1}u^{-3}\nn
\eea
\end{prop}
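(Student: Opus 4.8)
\noindent \emph{Proof strategy.} The plan is to regard this choice of decays as the outcome of the following argument --- essentially the one of \cite{NI:Kerrpeeling}, adapted to the present linear setting: one feeds the fast-decaying initial data of Theorem \ref{final version} into the decoupled null structure equations and reads off the asymptotics of $\de O=\{\de\hat\chi,\de\hat\chib,\de\hat\eta,\de\hat\etab,\de\hat\om,\de\hat\omb\}$. Linearizing the transport equations \ref{eq:transportM} and the Hodge systems \ref{eq:HodgeH} around the Kerr background and keeping only the $O(\ep)$ terms, one obtains, schematically, $\dddd_4\de O+\trch\,\de O=(\mbox{null component of }\de R)+\de O\cdot R_{(Kerr)}+\dots$ for the ``outgoing'' coefficients, the analogous $\dddd_3$-equations for the ``incoming'' ones, together with the Codazzi and Gauss elliptic systems on $S(u,\ub)$ for the traceless parts. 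The source decays at our disposal are \ref{Riemcomp} for $R_{(Kerr)}$ and the intermediate bounds \ref{intes} (equivalently, the $\cal Q$-norm estimates) for the null components of $\de R$.

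\noindent First I would convert the hypotheses of Theorem \ref{final version} into initial decay on $\Si_0$: each null connection coefficient is first order in the induced metric and zeroth order in the second fundamental form, so $\de g_{ij}=o_{q+1}(r^{-(3+\ga/2)})$ and $\de k_{ij}=o_q(r^{-(4+\ga/2)})$ give $\de O|_{\Si_0}=o(r^{-(4+\ga/2)})$, while $r\simeq|u|$ on $\Si_0$. Next I would integrate the $\dddd_4$-equations for $\de\hat\chi,\de\hat\eta,\de\hat\om$ along the outgoing cones $C(u)$: the integrating factor is a positive power of $r$, so $r^2\de\hat\chi$ (and likewise $r^2\de\hat\eta$, $r^2\de\hat\om$) is controlled by its value where $C(u)$ meets $\Si_0$, which is $o(|u|^{-(2+\ga/2)})$, plus the cone integral of $r^2$ times the source, which is $r$-integrable since the slowest Kerr component is $\ro_{(Kerr)}\simeq M r^{-3}$ and since $\de R$ already obeys \ref{intes}. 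This yields $\de\hat\chi,\de\hat\eta,\de\hat\om\simeq r^{-2}|u|^{-2}$, and inserting the precise form of $\de\a$ on the right-hand side of the $\de\hat\chi$-equation gives the footnote improvement $\de\hat\chi=o(r^{-3}|u|^{-1})$. For $\de\hat\etab$ I would not integrate directly but use the frame identity $\etab=\eta-2\zeta$ with $\zeta\simeq r^{-2}|u|^{-2}$ (from the torsion Codazzi relation), which forces $\de\hat\etab\simeq r^{-2}|u|^{-2}$, in line with the statement.

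\noindent For $\de\hat\chib$ and $\de\hat\omb$ I would integrate the $\dddd_3$-equations along the incoming cones $\Cb(\ub)$: there the integrating factor produces only one power of $r$, which accounts for the weaker $r^{-1}$ weight, while the extra $|u|^{-3}$ comes from the fast decay of the $\Si_0$ data carried inward and from the smallness function ${\cal J}\le\varepsilon$ of Theorem \ref{final version} --- precisely the mechanism behind the $\lie_{T_0}$-subtraction. Finally, the genuinely traceless parts $\de\hat\chi,\de\hat\chib,\de\hat\eta,\de\hat\etab$ are recovered from the Hodge systems \ref{eq:HodgeH} by the $L^2$ elliptic estimates on $S(u,\ub)$ of \cite{Kl-Ni1}, Chapter 5, which turn the bounds on their divergence and curl into pointwise bounds with the same weights. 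The step I expect to be the hardest is the weight bookkeeping for the incoming quantities: producing the $|u|^{-3}$ factor for $\de\hat\chib$ and $\de\hat\omb$ despite the weaker $r$-gain, while at the same time checking that the coupling terms $(\de\hat\chib)\ro_{(Kerr)}$, $(\de\hat\omb)\b_{(Kerr)}$ and their analogues feeding the source $h$ in \ref{Hdef1} are consistent with --- rather than obstructions to --- the asserted decays. Since the equation is linear in $\de R$, no bootstrap on $\de O$ is needed and the estimates close by a direct continuity argument in $u$.
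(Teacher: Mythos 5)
The proposition is not something the paper proves --- it is an \emph{ansatz}. The text around it reads ``a good choice can be the one obtained for the connection coefficients \dots see [Ni]'' and then states ``We \emph{assume} the following decays for the connection coefficients associated to $\de R$.'' No derivation is offered; the decays are imposed by hand, motivated by the (nonlinear, very-external-region) result of \cite{NI:Kerrpeeling}. Indeed the Conclusions section explicitly lists the task of computing $\cal O$ from the structure equations, given $\cal R$, as ``the next step,'' i.e.\ beyond the scope of this paper, since the whole point of the linear decoupling here is to treat only the Bianchi half of the system.

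Your proof proposal therefore tries to establish something the paper never establishes, and more importantly the route you sketch is circular within the logic of this paper. You propose to integrate the transport and Hodge systems for $\de O$ using as source ``the intermediate bounds \ref{intes} \dots for the null components of $\de R$.'' But \ref{intes} are precisely the decays of $W$ obtained \emph{after} fixing the decays of the inhomogeneity $h$, and $h$ is built (via \ref{Hdef1} and the definition of $\tilde p$) directly from the $\de O$-decays of this very proposition. Using \ref{intes} to derive the $\de O$-decays would thus presuppose the conclusion. The alternative input you invoke, Theorem \ref{final version}, is the nonlinear existence result of \cite{Ca-Ni}: it applies to a full vacuum perturbation of Kerr, not to a free Weyl field $W$ propagating on the fixed Kerr background, so one cannot simply ``feed its initial data'' into the decoupled structure equations for the present problem. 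The conflation of the two settings, combined with the circular use of \ref{intes}, is a genuine gap. To match the paper you should simply observe that these decays are postulated (compatible with \cite{NI:Kerrpeeling}) rather than derived, and that the paper's content is to show that, granted this ansatz for $h$, the $\QQ$-norm machinery closes and yields peeling for $W$.
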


\NI Once we have assigned these decays, to construct a good candidate for we make the following ansatz:  

\NI Let us suppose $h_{\nu\ro\si}=D^{\mu} W_{\mu\nu\ro\si}$ can be written as $\tilde{p}^{\mu} W_{\mu\nu\ro\si}$ analogously to what happen for the $J^2$ current. Then by equation 1.14 we obtain:

\bea\label{p}
\tilde{p}^{\mu} W_{\mu}( e_a,e_3,e_b)\!&=&\!\!\left[4\de\om\aa^{(0)}-3(\de\hat{\chib}\ro^{(0)}-\dual\de\hat{\chib}\si^{(0)})+(\de\zeta-4\de\etab)\hot\bb^{(0)}\right]\nn\\
\tilde{p}^{\mu} W_{\mu}( e_3,e_3,e_b)\!&=&\!-\left[2\de\omb\bb^{(0)}+(-2\de\zeta+ \de\eta)\c\aa^{(0)}\right]\nn\\
\tilde{p}^{\mu} W_{\mu }(e_4,e_3,e_b)\!&=&\!\left[2\de\om\bb^{(0)}+2\de\hat{\chib}\c\b^{(0)}+(\dual\nabb-\dual\nabb^{(0)})\si^{(0)}-3(\de\etab\ro^{(0)}-\de\dual\etab\si^{(0)})\right]\nn\\
\tilde{p}^{\mu} W_{\mu }(e_3,e_4,e_3)\!&=&\!-\left[{2^{-1}}\de\hat{\chi}\c\aa^{(0)}-\de\zeta\c\bb^{(0)}+2\de\eta\c\bb^{(0)}\right]\nn\\
\tilde{p}^{\mu} W_{\mu }(e_4,e_3,e_4)\!&=&\!-\left[{2^{-1}}\de\hat{\chib}\c\a^{(0)}-\de\zeta\c\b^{(0)}-2\de\etab\c\b^{(0)}\right]\nn\\
\tilde{p}^{\mu} W_{\mu }(e_3,e_4,e_3)\!&=&\!\left[{2^{-1}}\de\hat{\chi}\c\dual\aa^{(0)}-\de\zeta\c\dual\bb^{(0)}-2\de\eta\c\dual\bb^{(0)}\right]\eql{Hdef1}\\
\tilde{p}^{\mu} W_{\mu }(e_4,e_3,e_4)\!&=&\!\left[{2^{-1}}\de\hat{\chib}\c\dual\a^{(0)}-\de\zeta\c\dual\b^{(0)}-2\de\etab\c\dual\b^{(0)}\right]\nn\\
\tilde{p}^{\mu} W_{\mu }(e_3,e_4,e_a)\!&=&\!\left[2\de\omb\b^{(0)}+(\dual\nabb-\dual\nabb^{(0)})\si^{(0)}+2\de\hat{\chi}\c\bb^{(0)}+3(\de\eta\ro^{(0)}+\de\dual\eta\si^{(0)})\right]\nn\\
\tilde{p}^{\mu} W_{\mu }(e_4,e_a,e_4)\!&=&\!-\left[2\de\om\b^{(0)}-(2\de\zeta+\de\etab)\a^{(0)}\right]\nn\\
\tilde{p}^{\mu} W_{\mu }(e_a,e_4,e_b)\!&=&\!\left[4\de\omb\a^{(0)}-3(\de\hat{\chi}\ro^{(0)}+\de\dual\hat{\chi}\si^{(0)})+(\de\zeta+4\de\eta)\hot\b^{(0)}\right]\nn
\eea

\NI We can now calculate   the left hand side of \ref{p} , $\tilde{p}^{\mu} W_{\mu\nu\ro\si}=h( e_{\nu},e_{\ro},e_{\si})$.  The worst decays are those involving $\ro^{(0)}$ and $\si^{(0)}$, the other one give better decays for $\tilde{p}^{\mu}$:

\begin{prop}
Assumed the decays of proposition 1.1 for the null Riemann components in Kerr, and the decays of prop 1.2, by equation 1.14, the following decays hold for $h$

\bea\label{hdecay}
&&h_{a3b}\simeq 3\de\hat{\chib}\ro^{(0)}\simeq \frac{1}{r^4}\frac{1}{u^3}\nn\\
&&h_{33b}\simeq\frac{1}{r^5}\frac{1}{u^3}\nn\\
&&h_{43b}\simeq  (\ ^*\nabb-\ ^*\nabb_0)\si^{(0)}-3(\de\etab\ro^{(0)}-\de^*\etab\si^{(0)})\simeq \frac{1}{r^5}\frac{1}{u^2}\nn\\
&&h_{343}\ , \ h_{434}\simeq\frac{1}{r^6}\frac{1}{u^2}\nn\\
&&\ ^*h_{343}\ ,\ \ ^*h_{434} \ , \ \ ^*h_{343}\simeq\frac{1}{r^6}\frac{1}{u^2}\\
&&h_{34a}\simeq  (\ ^*\nabb-\ ^*\nabb_0)\si^{(0)}-3(\de\eta\ro^{(0)}-\de^*\eta\si^{(0)})\simeq \frac{1}{r^5}\frac{1}{u^2}\nn\\
&&h_{4a4}\simeq\frac{1}{r^5}\frac{1}{u^3}\nn\\
&&h_{a4b}\simeq 3 \delta \hat{\chi} \ro^{(0))}\simeq \frac{1}{r^5}\frac{1}{u^2}\nn
\eea
\end {prop}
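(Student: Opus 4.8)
The plan is to read off (\ref{hdecay}) by substitution. By the ansatz (\ref{p}), each scalar or $S$-tangent component of $h$ is a fixed finite linear combination of two kinds of terms: (a) products $\de O\cdot R^{(0)}$ of a prescribed perturbed connection coefficient $\de O\in\{\de\hat{\chi},\de\hat{\chib},\de\eta,\de\etab,\de\om,\de\omb,\de\zeta\}$ with a Kerr null curvature component $R^{(0)}\in\{\a^{(0)},\aa^{(0)},\b^{(0)},\bb^{(0)},\ro^{(0)},\si^{(0)}\}$, and (b) the two exceptional terms of the form $(\dual\nabb-\dual\nabb^{(0)})\si^{(0)}$. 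For the type (a) terms one only needs the elementary rule $O(r^{-p}|u|^{-q})\cdot O(r^{-p'}|u|^{-q'})=O(r^{-(p+p')}|u|^{-(q+q')})$, together with the Kerr curvature rates (\ref{Riemcomp}) and the prescribed connection rates (\ref{q}). Because $\ro^{(0)}\simeq r^{-3}$ and $\si^{(0)}\simeq r^{-4}$ are the two slowest-decaying Kerr curvature components while all the others decay like $r^{-4}$ or $r^{-5}$, in every line the term of slowest overall decay is the one carrying $\ro^{(0)}$ (or, failing that, $\si^{(0)}$, or the exceptional term). This is precisely why $h_{a3b}$ is governed by $\de\hat{\chib}\ro^{(0)}$, $h_{33b}$ by $\de\omb\bb^{(0)}$, $h_{a4b}$ by $\de\hat{\chi}\ro^{(0)}$, and so on, producing the rates displayed on the right of (\ref{hdecay}).

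\noindent\textbf{The computation.} Concretely one runs through the ten lines of (\ref{p}) in order. For the first, $4\de\om\,\aa^{(0)}\simeq r^{-2}|u|^{-2}\cdot r^{-5}$, $(\de\zeta-4\de\etab)\hot\bb^{(0)}\simeq r^{-2}|u|^{-2}\cdot r^{-4}$ and $\dual\de\hat{\chib}\,\si^{(0)}\simeq r^{-1}|u|^{-3}\cdot r^{-4}$ are each strictly faster in $r$ than $\de\hat{\chib}\,\ro^{(0)}\simeq r^{-1}|u|^{-3}\cdot r^{-3}=r^{-4}|u|^{-3}$, which therefore dominates; the nine remaining lines are treated identically. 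Two remarks keep the bookkeeping honest. First, angular derivatives do not spoil the $r$-decay of Kerr objects: since $r$ and $\theta$ are independent Boyer--Lindquist coordinates and $\ro^{(0)},\si^{(0)}$ are smooth in $\theta$, one has $|\nabb^{(0)}\si^{(0)}|\simeq r^{-1}|\partial_\theta\si^{(0)}|\simeq r^{-4}$ up to the angular normalisation of the frame of \cite{Is-Pr}. Second, these products never improve the $|u|$-weight, so the $|u|$-exponent of each component of $h$ is inherited from the $\de O$ factor in its dominant term; this is what distinguishes, e.g., $h_{a3b}\simeq r^{-4}|u|^{-3}$ (dominant factor $\de\hat{\chib}$) from $h_{a4b}\simeq r^{-5}|u|^{-2}$ (dominant factor $\de\hat{\chi}$).

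\noindent\textbf{The main obstacle.} The genuinely delicate terms are the type (b) ones, $(\dual\nabb-\dual\nabb^{(0)})\si^{(0)}$, which are not products of named quantities: $\nabb$ denotes the intrinsic angular connection (and Hodge dual) of $S(u,\ub)$ built from the full metric $g=g_{(Kerr)}+\de g$, and $\nabb^{(0)}$ the one built from $g_{(Kerr)}$, so that $(\dual\nabb-\dual\nabb^{(0)})\si^{(0)}$ is governed by the difference of the two induced metrics on $S$ --- equivalently by the difference of their Christoffel symbols and area forms. The needed input is that this difference, which is controlled by the prescribed $\de O$'s transported along the cones through the structure equations (\ref{eq:transportM})--(\ref{eq:HodgeH}), is small enough that $|(\dual\nabb-\dual\nabb^{(0)})\si^{(0)}|\lesssim r^{-5}|u|^{-2}$, so that it does not degrade the entries for $h_{43b}$ and $h_{34a}$. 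I expect this to be the crux: it requires (i) converting the rates (\ref{q}) into a decay rate for $\de g|_{S}$, (ii) accounting for the extra factor coming from the difference of Hodge duals acting on $\nabb\si^{(0)}$, and (iii) commuting the perturbed angular connection past the adapted null frame --- and it is presumably the reason the statement is phrased with $\simeq$ rather than with sharp constants. Once this bound is secured, collecting the dominant rate from each of the ten lines of (\ref{p}) yields exactly the table (\ref{hdecay}).
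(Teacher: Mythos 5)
Your proposal takes the same approach the paper takes: the proposition is stated as an immediate consequence of substituting the assumed decay rates (\ref{Riemcomp}) and (\ref{q}) into the definition (\ref{Hdef1})/(\ref{p}) of $h$ and extracting the slowest-decaying term on each line, and the paper offers no further proof. Your direct line-by-line bookkeeping reproduces that reading, and your identification of $\de\hat{\chib}\ro^{(0)}$, $\de\omb\bb^{(0)}$, $\de\hat{\chi}\ro^{(0)}$, $\de\eta\ro^{(0)}$, $\de\etab\ro^{(0)}$ as the respective dominant products matches the rates the paper displays. Your flag about the two exceptional terms $(\dual\nabb-\dual\nabb^{(0)})\si^{(0)}$ is astute --- they are indeed not simple products of the two tabulated families --- but the paper treats them on the same footing as the products, implicitly assigning them the decay $r^{-5}|u|^{-2}$ without carrying out the reduction to $\de g|_S$ that you sketch; so the concern is legitimate but is not part of the paper's argument, and your plan should not present it as an obstacle that the stated proof overcomes. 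Two small bookkeeping cautions: (i) $\de\zeta$ is not listed in Proposition 1.2, so you are right to say one must assume a decay for it (the paper does so tacitly, and $\simeq r^{-2}|u|^{-2}$ is consistent with all the entries where $\de\zeta$ appears); (ii) for the line $h_{4a4}=-[2\de\om\b^{(0)}-(2\de\zeta+\de\etab)\a^{(0)}]$, a literal substitution with $\de\om\simeq r^{-2}|u|^{-2}$, $\b^{(0)}\simeq r^{-4}$ gives $r^{-6}|u|^{-2}$, not the $r^{-5}|u|^{-3}$ the table asserts; since your plan claims that collecting dominant rates ``yields exactly the table,'' you should either note this discrepancy or not overstate the match.
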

\NI Imposing these decays in \ref{p} we can easily calculate the decays of $\tilde{p}^{\mu}$, we only consider the terms corresponding to the worst decays, namely $h_{a3b},\ h_{33b}, \ h_{34a} \ h_{a4b}$.
For example the first term implies:

\bea
&&h_{a3b}\simeq \frac{1}{r^4}\frac{1}{u^3}=\tilde{p}^{\mu} W_{\mu e_a e_3 e_b}=\tilde{p}^{3} W_{4a4b}+\tilde{p}^{4} W_{3a4b}+\tilde{p}^{c} W_{ca4b}\nn\\
\eea 

\NI which at its turn give for $\tilde{p}^{\mu}$:

\bea\label{fra}
&&\tilde{p}^3\simeq\frac 1 r \frac 1 u \ , \ \tilde{p}^4\simeq\frac {1}{ r^3} \cdot u \ , \ \tilde{p}^a\simeq\frac {1}{r^2} 
\eea

\NI It is easy to see that the other potentially harmfull terms give the same decays for $\tilde{p}^{\mu}$.

\NI Is this result satisfying? In other words, are the decays of $\tilde{p}^{\mu}$ sufficient to prove the boundedness of the $\cal{Q}$ norms?
In order to answer this question let us notice that the only difference in the estimates for $\cal{Q}$ in this case is that a new term appear, we will call it the $J^0$ current,  
for the $Div \cal{Q}$ terms in $\mathcal{E}_1$. If we want to calculate for example $Div {\cal{Q}}(\lie_O \de W (\ \ \ ))$, the $J^0$ current associate is:

\bea
&&\lie_O Div(\de W)_{\cdot\cdot\cdot}=\lie_O h\simeq h\simeq \tilde{p}^{\nu}W_{\nu\cdot\cdot\cdot}\nn
\eea

with the $\tilde{p}$  coefficients associated to this term decaying as in \ref{fra}:

\bea
&&\tilde{p}\simeq \frac 1 r \frac 1 u \ ,\ \tilde{p}\simeq u\cdot \frac{1}{r^3}\ ,\ \tilde{p}\simeq \frac{1}{r^2}\nn
\eea

\NI The key observation is that these decays are in perfect accordance with another current, namely, the decays of the $p^{\mu}$ terms of the $J^2$ current associated to the $\ ^{O}\pi$ deformation tensor necessary to estimate
$\lie_O Div ( W)$ , see [Kl-Ni] eq. 6.1.55, hence we can estimate this term exactly as in the nonlinear case, as the added $J^0$ current behave like this $J^2$ current.
The same happen for the other term $\lie_T Div( W)$ in $\mathcal{E}_1$ with the only difference that we gain a decay in $u$ due to the $\lie_T$ derivative. Also in this case we are in prefect accordance 
with the $J^2$ current associated to $\ ^T\pi$, see [Kl-Ni], 6.1.46.

\medskip

\NI For what concern $\mathcal{E}_2$ let us notice that also in this case we have to add to the divergence terms, see \cite{Kl-Ni1} eq. (6.0.6) a term like 

\bea
&&\lie_X\lie_O Div(\de W)_{\cdot\cdot\cdot}=\lie_X\lie_O h\nn
\eea

\NI or

\bea
&&\lie_X\lie_T Div(\de W)_{\cdot\cdot\cdot}=\lie_X\lie_T h\nn
\eea

\NI with $X=\{O,S\}$

\NI It is easy to see that all these terms behaves as well as or better than the corresponding terms analyzed for  $\mathcal{E}_1$.

\medskip

\NI By this crucial observation we can assure that the source term $h$ can be inserted in the estimate of the error and it will generate new terms which can be easily treated as the other ones already estimated in \cite{Kl-Ni1}.

\smallskip
\begin{remark}

\NI The other possible approach would be to directly assign $\tilde{p}$ in such a way the error term associated to the $\cal{Q}$ norms can be bounded, this approach would require a deep investigations of all the terms involved 
in the $J$ currents involved in the deformation tensors $\pi$. It is not clear in principle if the decays which can be obtained in this case are in accordance with the ones already obtained in equation \ref{hdecay}.
We will not investigate this method.

\end{remark}

\smallskip

\NI In the next  section we introduce the right smallness conditions  for the initial data then we can state the main theorem.
In the third section we prove the result emphasizing the steps we need to perform in order to mimick the [Kl-Ni] result.

\newpage

\subsection{Smallness conditions on the initial data}
\NI In this subsection we obtain the right smallness conditions we have to require on $\Si_0$ to bound the $\QQ_{\Si_0\cap V(u,\ub)}$ norms by a suitable constant $\ep<<1$.
To do this we adapt the conditions already obtained in [Ch-Kl] and modified in [Kl-Ni]  and [Ca-Ni].

\NI Let us recall the smallness condition obtained in [Kl-Ni], chap. 2:

\NI Given an initial data hypersurface $\Si_0$ and a compact set $\cal{B}$ on it such that $\Si_0/ \cal{B}$ is diffeomorphic to the complement of the unit ball in ${\cal{R}}^3$ , and $(g,k)$ initial data on $\Si_0/ {\cal{B}}$ , we define 
${\cal{J}}_B$ as follows:

\NI i) Let us denote with $\cal{G}$ the set of all smooth extensions $(\tilde{g},\tilde{k})$ to the whole spacetime of $\Si_0$ of $(g,k)$, with $\tilde{g}$ Riemannian and $\tilde{k}$ a symmetric 2-tensor;

\NI ii) Let us denote with $\tilde{d}_0$ the geodesic distance from a fixed point $O$ in $\cal{B}$ relative to the metric $\tilde{g}$;

\NI iii) We denote

\bea\label{3.421aa}
{\cal J}_B({\Si_0},{g} ,k)= \inf_{\cal{G}} {\cal J}_0(\Si_0, \tilde{g} ,\tilde{k})\nn\\
\eea

\NI with 

\bea\label{3.412a}
&&{\cal J}_0(\tilde{g},{\tilde{k}})=\sup_{\Si_0}[(d^0+1)^2|Ric|^2]\\ 
&&+\int_{\Si_0}\sum_{l=0}^3(1+d_0^2)^{(1+l)+\frac{3}{2}+\de}|\nabb^l k|^2\\
&&+\int_{\Si_0}\sum_{l=0}^1(1+d_0^2)^{(3+l)+\frac{3}{2}+\de}|\nab lB| 2^{\frac{1}{2}}\ .\nn
\eea

\NI With $\nabb$ the covariant derivative related to $\tilde{g}$ $B=\ep^{ab}_j\nabb(R_{ib}-\frac{1}{4}\tilde{g}_{ib}R)$ the Bauch tensor, see the introduction of [Ch-Kl], and $Ric$ the Ricci tensor relative to the metric $\tilde{g}$.

\NI Before passing to the linear case we remark that, roughly speaking the definition of $J_0$ allows us to bound the${\cal{Q}}_{\Si_0\cap\cal{K}}$ for any $\cal{K}$, i.e. globally on the initial data, it can be shown
that if we want to bound $\cal {Q}$ norms with different weights we have to add these weights in the integral parts of $J_0$, this is exactly what has be done in [Kl-Ni1] and [Ca-Ni] where factors respectively $\frac{5}{2}$
and $\frac {3}{2}$ in $d_0^2$ have been added. Notice that, in fact, the integrals in $k$ and $B$ and their derivatives as well as the supremum for $Ric$ all are needed to estimate the null Riemann components on $\Si_0$
in a suitable way to bound the ${\cal{Q}}_{\Si_0\cap\cal{K}}$ norms.

\NI In our linear case the main difference is that in principle $W$ have no relations with $g_{Kerr}$ and so  $J_0$ as defined above cannot be used to bound the null components of  $W$. Hence we have to define a smallness condition directly related to the smallness of the ${\cal{\tilde{Q}}}_{\Si_0\cap\cal{K}}$ norms. It can be easily shown from the relation between the $\cal{Q}$ norms and the null $W$ components see [Kl-Ni] equations 3.5.1, 3.5.2, that the requirement ${\cal{Q}}_{\Si_0\cap\cal{K}}\leq C\ep_0$ amounts to ask that $r^{5+\ep} {\cal{R}}\leq C_1\ep_0$, with $C$ and $C_1$ constant.
We can now state our smallness condition for the linear case 

\NI Given an initial data hypersurface $\Si_0$ and a compact set $\cal{B}$ on it such that $\Si_0/ \cal{B}$ is diffeomorphic to the complement of the unit ball in $R^3$ , and $W$ a Weil field on $\Si_0/ \cal{B]}$ , denoting as $\{\a, \ab, \b,...\}$ its null components, we define 
${\cal{J}}_{lin}:$

\bea\label{defj}
{\cal{J}}_{lin}(W):=r^5(|\a|+|\ab|+|\b|+|\bb|+|\ro|+|\si|)
\eea

\NI With this definition we can finally state the main theorem.

\subsection {the main theorem}

\begin{thm}\label{maintheorem}

\NI Given the Kerr spacetime  let us  consider $\Si_0$ the hypersurface corresponding to $t=0$ in Boyer-Lindquist coordinates and ${\cal{ K}}$ a compact set such that
$\Si_0/ { \cal{ K}}$ is contained in the external region. Moreover let us consider the  null coordinates $\{ u, \ub, \theta, \phi \}$ related to the \cite{Is-Pr} null cones foliations, and the associated null frame $\{e_{\mu}\}$.
Let us assign on $\Si_0/ { \cal {K}}$  a Weyl field $W$ in such a way that  ${\cal{J}}_{lin}(W)\leq C\ep_0$ with $C$ and $\ep_0$ constant, $\ep_0<<1$ and  $J_{Lin}$ defined in \ref{defj}, moreover let us assume $W$  satisfies the solution of the massless spin 2 equation:
\bea
&&D_{(0)}^{\nu}W_{\nu\mu\ro \si}=h_{\mu\ro\si}
\eea

\NI Let us assume that the inhomogeneous term $h_{\nu\ro\si}$ decay, with respect to the null frame in the following way:

\bea
&&h_{a3b}\simeq \frac{1}{r^4}\frac{1}{u^3}\ ,\ h_{33b}\simeq\frac{1}{r^5}\frac{1}{u^3}\ ,\  h_{43b}\simeq \frac{1}{r^5}\frac{1}{u^2}\nn\\
&&h_{343}\simeq\frac{1}{r^6}\frac{1}{u^2}\ ,\  h_{434}\simeq\frac{1}{r^6}\frac{1}{u^2}\ ,\ \ ^*h_{343}\simeq\frac{1}{r^6}\frac{1}{u^2}\nn\\
&&\ ^*h_{434}\simeq\frac{1}{r^6}\frac{1}{u^2}\ ,\ \ ^*h_{343}\simeq\frac{1}{r^6}\frac{1}{u^2}\, \ h_{34a}\simeq \frac{1}{r^5}\frac{1}{u^2}\nn\\
&&h_{4a4}\simeq\frac{1}{r^5}\frac{1}{u^3}\ , \ h_{a4b}\simeq \frac{1}{r^5}\frac{1}{u^2}\nn
\eea

\NI With $r$ the radial coordinate in Boyer-Lindquist coordinates.

\NI Then the null components of $W$ satisfies the peeling theorem decays, more precisely they decay in the following way:

\bea
&& \a= O(\frac 1{ r^5} \frac {1}{ u^{\ep}})\ , \  \ab= O(\frac 1 r \frac {1}{ u^{4+\ep}})\nn\\
&& \b= O(\frac 1 {r^4} \frac {1 }{u^{1+\ep}})\ , \  \bb= O(\frac 1 {r^2} \frac{ }{1 u^{3+\ep}})\nn\\
&& \ro= O(\frac 1 {r^3} \frac{ 1 }{u^{2+\ep}})\ , \  \si= O(\frac 1 {r^3} \frac {1}{ u^{2+\ep}})\nn
\eea

\end{thm}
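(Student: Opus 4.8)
The proof follows the Christodoulou--Klainerman and Klainerman--Nicol\`o strategy, adapted to the linear inhomogeneous problem on the fixed Kerr background. First I would fix the geometric setup: on the external region $\Si_0/{\cal K}$ equipped with the double null foliation of \cite{Is-Pr}, I compute the connection coefficients and the null components of the Riemann tensor of Kerr relative to the associated null frame $\{e_\mu\}$, obtaining the decays \ref{Riemcomp}. Projecting the equation $D_{(0)}^{\nu}W_{\nu\mu\ro\si}=h_{\mu\ro\si}$ onto this frame yields a system of transport equations along the cones $C(u)$ and $\Cb(\ub)$ together with Hodge systems on the surfaces $S(u,\ub)$, with exactly the structure of \ref{nullBian} but now carrying the components of $h$ (and, after commuting with $\lie_T,\lie_O,\lie_S$, of $\lie_X h$) as sources.

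The analytic core is the weighted energy estimate. Using the Bel--Robinson tensor $Q[W]$ one sets up the norms $\QQ,\QQb$ with the extra weight $|u|^{2\ga}$ as in \ref{QQ12}--\ref{QQb12}, and the goal is the a priori inequality $\QQ_{\cal K}\le C\big(\QQ_{\Si_0\cap{\cal K}}+(\mbox{source terms})\big)$. The only structural difference with \cite{Kl-Ni2} is that the current $J^{0}=\lie_X\,\mbox{Div}\,Q(W)=\lie_X h$ no longer vanishes. Writing $h_{\nu\ro\si}=\tilde p^{\mu}W_{\mu\nu\ro\si}$ as in \ref{p}, the decays of $\tilde p^{\mu}$ in \ref{fra} coincide with those of the $p^{\mu}$ coefficients of the $J^{2}$ current attached to the deformation tensors $\pii{O}$ and $\pii{T}$; hence the new error integrals are of the form $\int_{V(u,\ub)}\ttau^{\ga}\,\mathcal{O}\,\mathcal{R}^{2}$ already controlled in \cite{Kl-Ni1}, \cite{Kl-Ni2} by the Schwarz and Poincar\'e inequalities of Chapter 5 of \cite{Kl-Ni1}. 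The same holds for the second order error $\mathcal{E}_{2}$, where the iterated commutators $\lie_X\lie_O h$ and $\lie_X\lie_T h$ with $X\in\{O,S\}$ decay at least as fast as the corresponding $\mathcal{E}_{1}$ terms, while the additional contributions coming from the fact that the rotations $O$ and the vector field $T$ are only quasi-Killing on Kerr — so that $\pii{O}$ and $\pii{T}$ do not vanish — are absorbed exactly as the analogous terms in \cite{Kl-Ni1}, \cite{Ca-Ni}. Together with the smallness condition ${\cal J}_{lin}(W)\le C\ep_{0}$, which through the relations of Section 3.5 of \cite{Kl-Ni1} bounds $\QQ_{\Si_0\cap{\cal K}}$, this closes the estimate and gives $\QQ_{\cal K}\le C\ep_{0}$.

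From the boundedness of $\QQ_{\cal K}$ one extracts pointwise decay of the null components of $W$ along the outgoing cones by the standard evolution-plus-Sobolev argument on $S(u,\ub)$, producing first the intermediate decays of the type \ref{intes} carrying the extra $|u|^{\ga}$ factor. To upgrade the $\a$ and $\b$ rates in $r$ (not yet optimal because $u$ is constant along $C(u)$) I would integrate the transport equations for $\a$ and $\b$ along the incoming cones $\Cb(\ub)$, starting from $\Cb(\ub)\cap\Si_0$: the $|u|$ weight already gained, combined with the fast decay of the initial data forced by ${\cal J}_{lin}$ and with the assumed decay of the source components $h_{a4b},h_{4a4},\dots$, yields the sharp $r^{-5}$ and $r^{-4}$ rates; keeping track of the $u$-powers through this last integration then produces exactly the decays claimed.

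I expect the principal obstacle to be the verification that the source-generated currents — $J^{0}$ and, in $\mathcal{E}_{2}$, the iterated commutators — genuinely fall into the class of error terms already estimated in \cite{Kl-Ni1}--\cite{Ca-Ni}; this is precisely where the matching of the $\tilde p^{\mu}$ decays \ref{fra} with the $J^{2}$ currents attached to $\pii{O}$ and $\pii{T}$ is essential, and where the purely axial symmetry of Kerr — which forbids using the full rotation group and forces one to work in the far region, so that the Poincar\'e generators can be reconstructed from quasi-Killing fields — makes the bookkeeping delicate. The detailed estimates for the incoming-cone transport equations needed in the last step are the "new detailed estimates" advertised in the abstract.
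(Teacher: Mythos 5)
Your proposal matches the paper's own strategy in all essentials: compute the Kerr connection coefficients, Riemann null components, and deformation tensors in the Israel--Pretorius frame; set up the $\tau_-$-weighted $\tilde{\cal Q}$ norms; estimate the error terms $\mathcal{E}_1,\mathcal{E}_2$ by matching the source-generated current $J^0=\hat{\cal L}_X h$ (via the ansatz $h=\tilde p^\mu W_{\mu\cdot\cdot\cdot}$ and the decays in \ref{fra}) to the $J^2$-type currents coming from the deformation tensors; close the energy inequality $\tilde{\cal Q}_{\cal K}\leq(1-c_0/r_0)^{-1}\tilde{\cal Q}_{\Si_0\cap{\cal K}}$ for $r_0$ large; and finally trade $u$-decay for $r$-decay on $\alpha,\beta$ by integrating the incoming-cone transport equations as in \cite{Kl-Ni2}. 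The only small inaccuracy is the attribution of the "new detailed estimates" advertised in the abstract to the incoming-cone integration step, whereas the paper means the explicit Kerr computations (Propositions \ref{24}, \ref{omega}, \ref{t}, \ref{S}, etc.) whose $r$-decays improve on the naive dimensional bounds; this is inessential to the logic of the proof.
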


\begin {remark}
Clearly in the main theorem $W$  is only the $\de R$ part of $R+\de R$ the peeling theorem for The full perturbation of Kerr follows by the linerarity of $D_0$ and the decays of the null Riemann components of Kerr
in the \cite{Is-Pr} foliation, which satisfy the peeling.
\end{remark}

\subsection{Proof of the results}

\NI The central technical part of this work is to show that the right $\cal Q$ norms  we will introduce for for $ W$ are bounded. This is very long to prove, Substantially it is a repetition of what has been done in Chapter 6 of \cite{Kl-Ni1} with two main differences.

\medskip

\NI first, as said before, we have to prove that the source term we have imposed a priori $h$ allow us to estimate the error terms. This point has been yet discussed in section 1.4

\medskip

\NI Second we have to estimate the error term. This will require a long calculation to show the connection coefficients related to the \cite{Is-Pr} foliation in Kerr have the right decays to obtain the same estimate of \cite{Kl-Ni1}.
 We remark as this calculation gives better decays with respect to the ones which can be obtained by dimentional argument, see for example \cite{Ca-Ni}, and can be considered as a good result itself potentially useful for other applications.

\NI Once these computations and the boundedness of the error term has been achieved, we obtain the following decays for the null Riemann components associated to $W$: 

\begin{eqnarray}
&&\sup_{\mathcal{K}}r^{\frac{7}{2}}|u|^{(\frac{5}{2}+\e')}|\a(W)| \leq C_0,
\quad\sup_{\mathcal{K}}r^{\frac{7}{2}}|u|^{\frac{5}{2}+\e'} |\b(W)|\leq C_0\nn\\
&&\sup_{\mathcal{K}}r^3|u|^{3+\e}|\ro W)| \leq C_0,\quad \sup r^3|u|^{3+\e'}|\sigma(W)|\leq C_0\eql{LTWdec1}\\
&& \sup_{\mathcal{K}}r^2|u|^{4+\e'}|\bb(W)|\leq C_0,\quad\sup_{\mathcal{K}}r|u|^{5+\e' }|\aa(W)|\leq C_0\ ."\nn
\end{eqnarray}

\NI  The decays obtained in \ref{LTWdec1} are not yet satisfying for $\a(\lie_{T_0}W)$ and $\b(\lie_{T_0}W)$, compare with \ref{LTWdec}. We have to bargain the $u$-decay factor with an $r$-decay factor. This is done, following the \cite{Kl-Ni2} approach, using the transport equations for the Weyl field along the null incoming hypersurfaces. This will allows to go from the inequalities  \ref{LTWdec1} to inequalities \ref{LTWdec}. We will not 
show this step as it is identical to  \cite{Kl-Ni2}.
\medskip

\newpage

\section{Preliminary definitions and properties}

\NI We want to describe in more detail the techniques we have to exploit to bound the generalized energy norms
and to give a complete picture of the analytic tools necessary to prove our result, paying attention to
explaining the logic which we tackle the problem with.\\
As we shall see, the crucial difference treating the linearized version of the problem is that we already know the
background spacetime, specifically the  Kerr spacetime, someone of them have been already introduced, we repeat them here for the sake of completness.
Let us first introduce some concepts and definitions:

\NI Let us consider the Kerr spacetime $(M,g)$ in the Boyer Lindquist coordinates, we consider
on it :

\medskip

\NI 1) The initial data hypersurface $\Si_0+\{p\in M | t(p)+0\}$

\medskip

\NI 2) The double null foliation, consisting in the double family of null hypersurfaces
     $C(u)=\{p\in M | u(p)=u\}$ and $\underline{C}(\underline{u})=\{p \in M | \underline(u)(p)=\underline{u}\}$ where $u$ and $\underline{u}$ are the solution of the eikonal equation
     $g^{\a\b}\partial\a u\partial\b u=0$      ($g^{\a\b}\partial\a\underline{u}\partial\b\underline{u}=0$) and passing trough the sphere $\{\Si_0\cap r=u\}$ (  $\{\Si_0\cap r=\underline{u}\} )$ \footnote {these hypersurfaces are the analogous of the null incoming and outgoing null cones in Minkowski}

\medskip

\NI 3) The sphere foliation $S(u, \underline{u})=C(u)\cap C(\underline{u})$

\medskip

\NI 4) The sphere foliation $\tilde{S}(t,r)=\{ p | t(p)=t r(p)=r \}$

\medskip

\NI Let us notice that we can foliate the initial data hypersurface $\Si_0$ with both the foliations in $S$ and
$\tilde{S}$.  In the case of Kerr spacetime every $S$ is also an $\tilde{S}$

\medskip

5) A null hortonormal frame $\{ e_4 , e_3 ,  e_2 , e_1 \} $  adapted to the double null foliation such that:
    
 $<e_4,e_4>=0$ , $<e_3,e_3>=0$  , $<e_3,e_4>=-2$ , $ <e_a,e_3>=<e_a,e_4>=0$, $<e_a,e_b>=\delta_{ab}$  with $a,b>0$ and $e_a \in TS(u,\underline{u}$

\medskip

\NI In order to prove the expected results about the asymptotic behavior of spin 2 zero-rest mass fields, we also introduce the concept and the main properties of the Weyl fields: 
\begin{defn}
\NI Given a spacetime $(\mathcal{M},g)$, a Weyl field is a tensor
field $W$ which satisfies the following properties
\begin{eqnarray}
& &W_{\a\b\ga\de}=
W_{\ga\de\a\b}=-W_{\b\a\ga\de}=-W_{\a\b\de\ga}\nonumber\\
& &W_{\a\b\ga\de}+W_{\a\ga\de\b}+W_{\a\de\b\ga}=0\\
& &g^{\a\ga}W_{\a\b\ga\de} = 0.\nonumber
\end{eqnarray}
\end{defn}
\begin{defn}
\NI A Weyl tensor field $W$ is a solution of the 2-spin and zero-rest
mass field equations (or Bianchi equations) in
$(\mathcal{M},g)$ if, relative to the Levi-Civita connection of
$g$, it satisfies
$$
\dd^\mu W_{\mu\nu\ro\si}=0.
$$
\end{defn}

\begin{defn}
\NI Let $X$ be a vector field, then the deformation tensor of $X$ is
defined in the following way:
$$
{^{(X)}\pi}_{\mu\nu}=\mathcal{L}_X g_{\mu\nu}=D_\mu X_\nu+D_\nu
X_\mu
$$
\NI and its traceless part is
$$
{^{(X)}\hat{\pi}}_{\mu\nu}={^{(X)}\pi}_{\mu\nu}-\frac 1 4
g_{\mu\nu}\tr{^{(X)}}\pi.
$$
\end{defn}
\NI Then, if $X$ is a Killing vector field, it follows
$$
{^{(X)}\pi}=0.
$$

\NI Given a null frame $\{e_3,e_4,e_a\}_{a=1,2}$, let's decompose
the $X$ deformation tensor  with respect to it:
\begin{eqnarray}\label{pih}
{^{(X)}}\pi_{ab} &=& g(D_{e_a}X,e_b)+g(D_{e_b}X,e_a)\nn\\
{^{(X)}}\pi_{a4} &=& g(D_{e_a}X,e_4)+g(D_{e_4}X,e_a)\nn\\
{^{(X)}}\pi_{a3} &=& g(D_{e_a}X,e_3)+g(D_{e_3}X,e_a)\nn\\
{^{(X)}}\pi_{34} &=& g(D_{e_3}X,e_4)+g(D_{e_4}X,e_3)\\
{^{(X)}}\pi_{44} &=& 2g(D_{e_4}X,e_4)\nn\\
{^{(X)}}\pi_{33} &=& 2g(D_{e_3}X,e_3)\nn\ 	.
\end{eqnarray}
\NI Now let's introduce the following notation for their tracelees part :
\begin{eqnarray}
&&{^{(X)}}i_{ab}  = {^{(X)}}\pih_{ab} = {^{(X)}}\pi_{ab}-\frac 1 4
\de_{ab}\tr{^{(X)}}\pi\nn\\
&&{^{(X)}}m_a= {^{(X)}}\pih_{a4} = {^{(X)}}\pi_{a4}\nn\\
&&{^{(X)}}\un{m}_a = {^{(X)}}\pih_{a3} = {^{(X)}}\pi_{a3}\\
&&{^{(X)}}j  = {^{(X)}}\pih_{34} = {^{(X)}}\pi_{34}+\frac 1 2 \tr{^{(X)}}\pi\nn\\
&&{^{(X)}}n = {^{(X)}}\pih_{44} = {^{(X)}}\pi_{44\nn}\\
&&{^{(X)}}\un{n} = {^{(X)}}\pih_{33} = {^{(X)}}\pi_{33}\nn\	.
\end{eqnarray}
\NI We call them the null components of the ${^{(X)}}\pi$.
\begin{defn}
\NI Given a Weyl tensor field $W$ and a vector field $X$, we define
the modified Lie derivative relative to $X$ by
\begin{equation}
\hat{\mathcal{L}}_X W = L_X W-\frac 1 2 {^{(X)}}[W]+\frac 3 8
\tr{^{(X)}}\pi W,
\end{equation}
\NI where
$$
{^{(X)}}[W]_{\a\b\gamma\delta}={^{(X)}}\pi_\a^\mu
W_{\mu\b\gamma\delta}+{^{(X)}}\pi_\b^\mu
W_{\a\mu\gamma\delta}+{^{(X)}}\pi_\gamma^\mu
W_{\a\b\mu\delta}+{^{(X)}}\pi_\delta^\mu W_{\a\b\gamma\mu}.
$$
\end{defn}
\begin{remark}
\NI The modified Lie derivative of a Weyl field is a Weyl field too but in general the modified derivative of a Weyl field does not satisfies the Bianchi equations anymore.Nevertheless if $X$ is Killing or conformal Killing  then it  also satisfies the Bianchi equations.
\end{remark}
\NI Next we introduce the null decomposition of a Weyl tensor, i.e. we
express $W$ in terms of the null frame $\{e_4,e_3,e_1,e_2\}$ in the
following way:
\begin{defn}
\NI For every point $p\in\cal{M}$ we define the following tensors on the tangent
space to the sphere $S(u,\un{u})$ passing through $p$ (null Riemann components):
\begin{eqnarray}\label{nulldec}
& &\a(W)(X,Y) = W(X,e_4,Y,e_4),\qquad \un{\a}(W)(X,Y) =
W(X,e_3,Y,e_3)\nonumber\\
& &\b (W)(X) = \frac 1 2 W(X,e_4,e_3,e_4),\qquad  \un{\b}(W)(X) =
\frac 1 2
W(X,e_3,e_3,e_4)\nn\\
& &\ro (W) = \frac 1 4 W(e_3,e_4,e_3,e_4), \qquad \si(W)=\frac 1 4
\rdual {W}(e_3,e_4,e_3,e_4)\	,
\end{eqnarray}
\NI where $\rdual{W}_{\a\b\ga\de}$ is the left Hodge dual of $W$,
defined in the following way:
$$
\rdual{W}_{\a\b\ga\de}= \frac 1 2
\e_{\a\b\phi\psi}{W^{\phi\psi}}_{\ga\de}.
$$
\end{defn}
\begin{prop}[Bianchi Equations]
\NI Expressed relatively to an adapted null frame, the Bianchi
equations take the following form
\begin{eqnarray}\label{bianchi}
\un{\a}_4 &\equiv & \dddd_4\un{\a}+\frac 1 2
\tr\chi\un{\a}=-\nabb\hat{\otimes}\un{\b}+4
\omega\un{\a}-3(\un{\hat{\chi}}\rho-{^\star}\hat{\un{\chi}}\si)+(\zeta-4\un{\eta}\hat{\otimes})\un{\b}\nonumber\\
\un{\b}_3 &\equiv & \dddd_3\un{\b}+2\tr\un{\chi}\un{\b} =
-\divv\un{\a}-2\un{\omega}\un{\b}+(2\zeta-\eta)\cdot\un{\a}\nonumber\\
\un{\b}_4 &\equiv & \dddd_4\un{\b}+\tr\chi\un{\b} =
-\nabb\rho+2\omega\un{\b}+2\hat{\un{\chi}}\cdot\b+{^\star}\nabb\si-3(\un\eta\rho-{^\star}\un{\eta}\si)\nonumber\\
\rho_3 &\equiv & \dd_3\rho+\frac 3 2 \tr\un{\chi}\rho
=-\divv\un{\b}-\frac 1 2
\hat{\chi}\cdot\un\a+\zeta\cdot\un{\b}-2\eta\cdot\un{\b}\nonumber\\
\rho_4 &\equiv & \dd_4\rho+\frac 3 2 \tr\chi\rho =\divv\b-\frac 1
2 \hat{\un{\chi}}\cdot\a+\zeta\cdot\b+2\un{\eta}\cdot\b\\
\si_3 &\equiv & \dd_3\si+\frac 3 2 \tr\un{\chi}\si
=-\divv\rdual{\bb}+\frac 1 2
\hat{\chi}\c\rdual{\aa}-(\zeta+2\eta)\c\rdual{\bb}\nonumber\\
\si_4 &\equiv & \dd_4\si+\frac 3 2 \tr\chi\si
=-\divv\rdual{\b}+\frac 1 2
\hat{\un{\chi}}\c\rdual{\a}-(\zeta+2\un{\eta})\c\rdual{\b}\nonumber\\
\b_3 &\equiv &\dddd_3\b+\tr\un{\chi}\b =
\ddd\ro+\rdual{\ddd}\si+2\un{\omega}\b+2\hat{\chi}\c\bb+3(\eta\ro+\rdual{\eta}\si)\nonumber\\
\b_4 &\equiv &\dddd_4\b+2\tr\chi\b =
\divv\a-2\omega\b+(2\zeta+\un{\eta})\a\nonumber\\
\a_3 &\equiv& \dddd_3\a+\frac 1 2 \tr\un{\chi}\a =
\ddd\hat{\otimes}\b+4\un{\omega}\a-3(\hat{\chi}\ro+\rdual\hat{\chi}\si)+(\z+4\eta)\hat{\otimes}\b,\nonumber
\end{eqnarray}
\NI where, here, $\dddd_4$ and $\dddd_3$ are the projections on the
tangent space to $S(u,\un{u})$ of the covariant derivatives along
$e_3,e_4$, $\divv$ and $\ddd$ are the projections on the tangent
space to $S(u,\un{u})$ of the divergence and the covariant
derivative relative to $\Si_t$, and $\hat{\otimes}$ denotes twice
the traceless part of the symmetric tensor product. The Hodge
operator $\rdual$ indicates the dual of the tensor fields relative
to the tangent space of $S(u,\un{u})$, in particular
\begin{defn}\label{rdual}
\NI Given the 1-form $\psi$ defined on $S(u,\un{u})$, we define its
Hodge dual:
$$
\rdual{\psi}_a=\e_{ab}\psi_b,
$$
\NI where $\e_{ab}$ are the components of the area element of
$S(u,\un{u})$ relative to an orthonormal frame $(e_a)_{a=1,2}$.\\
If $\psi$ is a symmetric traceless 2-tensor, we define the
following left, $\rdual{\psi}$, and right,$\psi^\star$, Hodge
duals:
$$
\rdual{\psi}_{ab}=\e_{ac}{\psi^c}_b,
\psi^\star_{ab}={\psi_a}^c\e_{cb}.
$$
\end{defn}
\NI see [12] , prop 3.2.4 pag. 77
\end{prop}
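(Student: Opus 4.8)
\NI This proposition is the translation of the single tensorial equation $\dd^\mu W_{\mu\nu\ro\si}=0$ into its ten components relative to the adapted null frame, so the plan is essentially computational. First I would use the null-frame form of the inverse metric, $g^{\mu\nu}=-\tfrac12\big(e_3^\mu e_4^\nu+e_4^\mu e_3^\nu\big)+\delta^{ab}e_a^\mu e_b^\nu$, which follows from $g(e_3,e_4)=-2$ and $g(e_a,e_b)=\de_{ab}$, to rewrite the divergence as $\dd^\mu W_{\mu\nu\ro\si}=-\tfrac12\dd_4 W_{3\nu\ro\si}-\tfrac12\dd_3 W_{4\nu\ro\si}+\delta^{ab}\dd_a W_{b\nu\ro\si}$, where $\dd_\mu\equiv D_{e_\mu}$ and the remaining frame indices are fixed. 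Then, for each admissible choice of the three free slots $(\nu,\ro,\si)$ among $\{e_3,e_4,X\}$ with $X$ tangent to $S(u,\ub)$ — using the Weyl symmetries and the trace-free condition to cut down to the independent cases — the vanishing of this scalar (or $S$-tangent) identity yields one of the equations in \ref{bianchi}.

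\NI The algebraic (non-derivative) terms in the square brackets are produced by the interplay between covariant differentiation and the frame. Whenever $D_{e_\mu}$ acts on a component $W(e_\a,e_\b,e_\ga,e_\de)$ one has $e_\mu\big(W(e_\a,e_\b,e_\ga,e_\de)\big)=(D_{e_\mu}W)(e_\a,e_\b,e_\ga,e_\de)+W(D_{e_\mu}e_\a,e_\b,e_\ga,e_\de)+\cdots$, and each $D_{e_\mu}e_\a$ is re-expanded in the frame through the connection coefficients $\chi,\chib,\z,\eta,\etab,\om,\omb$ of the double null foliation as defined in Chapter~3 of \cite{Kl-Ni1}. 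In the same way, replacing $D_3,D_4$ by their projections $\dddd_3,\dddd_4$ onto $TS(u,\ub)$ (trivial on the scalars $\ro,\si$) and $\delta^{ab}\dd_a$ by the induced operators $\nabb$ and $\divv$ on $S(u,\ub)$ costs further terms linear in these coefficients. Collecting all such contributions with their weights converts the contracted Bianchi equation into, e.g., $\dddd_4\aa+\tfrac12\tr\chi\aa=-\nabb\hot\bb+4\om\aa-3(\chibh\ro-\dual\chibh\si)+(\z-4\etab)\hot\bb$, and similarly for the other nine; one then checks that the derivative operators ($\nabb\hot$, $\divv$, $\nabb$, $\dd_3$, $\dd_4$, $\dddd_3$, $\dddd_4$) and the zeroth-order terms ($\tr\chi\,\aa$, $\om\aa$, $\chibh\ro$, $\hat{\chi}\c\bb$, $(\z+4\eta)\hot\b$, \dots) appear exactly as stated.

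\NI The one point that needs a little extra care is the appearance of the Hodge-dual terms ($\dual\nabb\si$, $\dual\chibh\si$, $\dual\hat{\chi}\si$, and the $\si_3,\si_4$ equations themselves). These are handled by noting that the left dual ${}^\star W$ is again a Weyl field, so $\dd^\mu({}^\star W)_{\mu\nu\ro\si}=0$, and applying the same contraction procedure to it, using $\si=\tfrac14{}^\star W(e_3,e_4,e_3,e_4)$; alternatively one uses directly the algebraic relations on $S(u,\ub)$ between a symmetric traceless $2$-tensor and its left and right duals recorded in Definition~\ref{rdual}. I expect the only genuine obstacle to be the bookkeeping: tracking every connection-coefficient correction and fixing all signs and numerical factors ($\tfrac12,\tfrac32,2,3,4$) consistently under the normalization $g(e_3,e_4)=-2$ and the convention that $\hot$ denotes twice the traceless symmetric product. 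Since this computation is carried out in full in \cite{Kl-Ni1} (Proposition~3.2.4), I would present the derivation schematically and refer there for the verification of the coefficients.
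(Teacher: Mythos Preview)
Your proposal is correct and matches the paper's treatment exactly: the paper does not give a proof at all but simply refers the reader to \cite{Kl-Ni1}, Proposition~3.2.4, which is precisely the reference you invoke at the end. Your schematic outline of the computation (null-frame inverse metric, expansion of $D_{e_\mu}e_\a$ via the connection coefficients, passage to the projected operators $\dddd_3,\dddd_4,\nabb,\divv$, and use of ${}^\star W$ for the $\si$-equations) is the standard route and is more than the paper itself provides.
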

\NI Once we have introduced a Weyl field which satisfies the Bianchi
equations, we are able to define the Bel-Robinson tensor
associated to it, in the following way:
\begin{defn}
\NI The Bel-Robinson  tensor field associated to the Weyl tensor $W$
is the 4-covariant tensor field
\begin{eqnarray*}
Q_{\a\b\ga\de}[W]&=&
W_{\a\ro\ga\si}{{{W_\b}^\ro}_\de}^\si+{\rdual{W}_{\a\ro\ga\si}}\rdual{W}{{{_\b}^\ro}_\de}^\si\\
&=&
W_{\a\ro\ga\si}{{{W_\b}^\ro}_\de}_\si+W_{\a\ro\de\si}{{{W_\b}^\ro}_\ga}_\si-\frac
1 8 g_{\a\b}g_{\ga\de}W_{\ro\si\mu\nu}W^{\ro\si\mu\nu}.
\end{eqnarray*}
\end{defn}
\NI The Bel-Robinson tensor satisfies the following important
\begin{prop}\label{BelRob}\ \ \ \ \ \ \ \ \ \ \ \ \ \ \ \ \ \ \ \ \ \ \ \ \ \ \ \ \ \ \ \ \ \ \ \ \ \ \ \ \ \ \ \ \ \ \ \ \ \ \ \ \ \ \ \ \ \ \ \ \ \ \ \ \ \ \
\smallskip

\NI i) $Q$ is symmetric and traceless relative to all pairs of
indices.\\

\NI ii) $Q$ satisfies the following positivity condition: given any
timelike vector fields $X_\mu$, for $\mu=1,...,4$
$$
Q(X_1,X_2,X_3,X_4)> 0
$$
\NI unlike $W=0$.\\

\NI iii) If $W$ is a solution of the Bianchi equations, it follows
$$
D^\a Q_{\a\b\ga\de}=0.
$$
\end{prop}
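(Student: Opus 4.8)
\NI\textbf{Proof strategy.} All three assertions are tensorial identities for a Weyl field: (i) is algebraic, (ii) reduces to a frame computation plus multilinearity, and (iii) is a Leibniz-rule computation fed by the Bianchi equation. The single structural fact underlying everything is that for a Weyl field the left and right Hodge duals coincide,
\[
\rdual W_{\a\b\ga\de}=W^\star_{\a\b\ga\de}\ ,
\]
which follows from the first (cyclic) Bianchi identity $W_{\a\b\ga\de}+W_{\a\ga\de\b}+W_{\a\de\b\ga}=0$ together with the pair symmetry and the tracelessness $g^{\a\ga}W_{\a\b\ga\de}=0$. Two consequences will be used repeatedly: $\rdual W$ is again a Weyl field, and the two expressions for $Q$ in the definition coincide, i.e.
\[
\rdual W_{\a\ro\ga\si}\rdual W_\b{}^\ro{}_\de{}^\si=W_{\a\ro\de\si}W_\b{}^\ro{}_\ga{}^\si-\frac 1 8 g_{\a\b}g_{\ga\de}W_{\ro\si\mu\nu}W^{\ro\si\mu\nu}\ .
\]

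\NI\textbf{Proof of (i).} Starting from the second form of $Q$, symmetry under $\a\leftrightarrow\b$ is immediate (relabel the contracted index $\ro$), symmetry under $\ga\leftrightarrow\de$ is built into that form, and symmetry under the exchange of the pair $(\a\b)$ with $(\ga\de)$ follows from $W_{\a\ro\ga\si}=W_{\ga\si\a\ro}$ applied to each quadratic block. The remaining interchange of a single index across the two pairs, $Q_{\a\b\ga\de}=Q_{\ga\b\a\de}$, is the only step that uses the cyclic identity: expand one $W$ factor by the cyclic relation in its last three slots and absorb the cross terms into the dual block using $\rdual W=W^\star$; the unwanted terms cancel, so $Q$ is in fact totally symmetric. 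Tracelessness then reduces to a single contraction, and $g^{\a\b}Q_{\a\b\ga\de}=0$ follows by contracting the first form of $Q$ and invoking the quadratic Weyl identity displayed above, whereby the contraction of the two blocks is exactly cancelled by the $g_{\a\b}$ term.

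\NI\textbf{Proof of (ii).} By multilinearity it suffices to prove $Q(\ell_1,\ell_2,\ell_3,\ell_4)\ge0$ for future null $\ell_i$, and to obtain the strict inequality it suffices to show $Q(X,X,X,X)>0$ for one future timelike $X$ unless $W=0$. For the latter, pass to an orthonormal frame with $e_0=X/|X|$ and decompose $W$ into its electric and magnetic parts $E_{ij},H_{ij}$, which are symmetric and traceless; a direct computation gives $Q(e_0,e_0,e_0,e_0)=\frac 1 2(E_{ij}E^{ij}+H_{ij}H^{ij})>0$ unless $E=H=0$, i.e. unless $W=0$. For the null case one argues as in Christodoulou--Klainerman: fixing three future causal arguments, the covector obtained by contracting $Q$ with them is future causal (a dominant-energy-type inequality that descends to the base case of three equal null vectors, where in the adapted null frame $\{e_4,e_3,e_1,e_2\}$ the components $Q(e_4,e_4,e_4,e_4)=|\a(W)|^2$, $Q(e_4,e_4,e_4,e_3)=2|\b(W)|^2$ and $Q(e_4,e_4,e_4,e_a)$, bounded by the former two, make causality manifest); contracting this future causal covector with the fourth future causal vector yields $Q\ge0$.

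\NI\textbf{Proof of (iii), and the main obstacle.} Assume $D^\mu W_{\mu\nu\ro\si}=0$. Using $\rdual W=W^\star$ one gets for free that $D^\mu\rdual W_{\mu\nu\ro\si}=\frac 1 2 \e_{\a\b\ro\si}D^\mu W_{\mu\nu}{}^{\a\b}=0$, which is equivalent to the ``second'' Bianchi identity $D_{[\a}W_{\b\ga]\de\ep}=0$. Now differentiate the first form of $Q$: by the Leibniz rule, $D^\a Q_{\a\b\ga\de}=(D^\a W_{\a\ro\ga\si})W_\b{}^\ro{}_\de{}^\si+W_{\a\ro\ga\si}D^\a W_\b{}^\ro{}_\de{}^\si+(\text{dual terms})$, and the terms where the derivative lands on the first index of a $W$ or $\rdual W$ factor vanish by the two divergence relations just obtained. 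The surviving terms $W_{\a\ro\ga\si}D^\a W_\b{}^\ro{}_\de{}^\si+\rdual W_{\a\ro\ga\si}D^\a\rdual W_\b{}^\ro{}_\de{}^\si$ are rewritten, using the total symmetry of $Q$ from (i) together with the contracted second Bianchi identity, as contractions proportional to $D^\mu W_{\mu\cdots}$, hence vanish; therefore $D^\a Q_{\a\b\ga\de}=0$. The genuinely delicate points — where a complete write-up spends its effort — are the quadratic Weyl identity and the total symmetry of $Q$ in (i), on which both (i) and (iii) rest, and, in (ii), the passage from a single timelike vector to four distinct causal arguments, which the dominant-energy descent reduces to the explicit frame computations above.
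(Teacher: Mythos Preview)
The paper does not actually prove this proposition: immediately after the statement it writes ``For the proof, see \cite{Ch-Kl2}.'' So there is nothing in the paper to compare your argument against beyond that citation.

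Your sketch is the standard Christodoulou--Klainerman argument and is essentially correct in structure: the identity $\rdual W=W^\star$ for a Weyl field is indeed the workhorse for (i) and (iii), the electric/magnetic decomposition gives the strict positivity at a single timelike vector, and the dominant-energy descent to null tetrads handles the general causal case in (ii). Two minor cautions. First, the numerical constants you quote (e.g.\ $Q(e_0,e_0,e_0,e_0)=\tfrac12(|E|^2+|H|^2)$ and $Q(e_4,e_4,e_4,e_4)=|\a(W)|^2$) depend on the normalisation of $Q$ and of the null frame; in the convention used here they are off by factors of $2$, which is harmless for the positivity statement but worth flagging if you expand the proof. Second, in (iii) the sentence ``the surviving terms \ldots\ are rewritten \ldots\ as contractions proportional to $D^\mu W_{\mu\cdots}$'' hides the only real computation: one needs the differential Bianchi identity $D_{[\a}W_{\b\ga]\de\ep}=0$ (equivalently $D^\mu\,\rdual W_{\mu\cdot\cdot\cdot}=0$) to convert $W_{\a\ro\ga\si}D^\a W_\b{}^\ro{}_\de{}^\si$ into divergence terms, and the cancellation between the $W$-block and the $\rdual W$-block is not automatic from symmetry alone. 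If you intend this as more than a sketch, that step should be written out; otherwise your outline matches the argument one finds in the cited reference.
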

\NI For the proof, see \cite{Ch-Kl2}.
\begin{prop}
\NI Let $Q(W)$ be the Bel-Robinson tensor of a Weyl field $W$ and
$X,Y,Z$ a triplet of vector fields in$\mathcal{M}$. We define the
1-form $P$ associated at the triplet as
\begin{equation}\label{P1}
P_\a=Q_{\a\b\ga\de}X^\b Y^\ga Z^\de.
\end{equation}
\NI Using all the symmetry properties of $Q$, we have:
\begin{eqnarray}\label{P}
\Div P &=& \Div Q_{\b\ga\de}X^\b Y^\ga Z^\de\\
&+& \frac 1 2 Q_{\a\b\ga\de}\bigl({^{(X)}}\pi^{\a\b}Y^\ga
Z^\de+{^{(Y)}}\pi^{\a\ga}X^\b Z^\de+{^{(Z)}}\pi^{\a\de}X^\b Y^\ga
\bigr).\nonumber
\end{eqnarray}
\end{prop}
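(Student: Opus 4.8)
\NI Since the statement is a pointwise identity, the proof is short and the plan below is essentially the full argument: a direct computation with the Levi--Civita connection $D$, whose only structural ingredient beyond the Leibniz rule is the total permutation symmetry of the Bel--Robinson tensor recorded in Proposition~\ref{BelRob}(i). First I would expand $\Div P = D^\a P_\a$ for the one-form $P_\a = Q_{\a\b\ga\de}X^\b Y^\ga Z^\de$ of (\ref{P1}) by the Leibniz rule, using that $D$ is metric compatible and hence commutes with the raising of indices. This produces four summands: the derivative falling on $Q$, which by definition of $\Div Q$ is $\Div Q_{\b\ga\de}X^\b Y^\ga Z^\de$, plus the three terms $Q_{\a\b\ga\de}(D^\a X^\b)Y^\ga Z^\de$, $Q_{\a\b\ga\de}X^\b(D^\a Y^\ga)Z^\de$ and $Q_{\a\b\ga\de}X^\b Y^\ga(D^\a Z^\de)$.

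\NI Next I would symmetrise each of these three terms. In each of them the contracted covariant derivative meets, through an index of $Q$, the index of one of the vector fields; swapping the two dummy names and invoking the symmetry of $Q$ gives, for instance for the $X$-term, $Q_{\a\b\ga\de}D^\a X^\b = Q_{\a\b\ga\de}D^\b X^\a$, so
\[
Q_{\a\b\ga\de}(D^\a X^\b)Y^\ga Z^\de = \frac 1 2 Q_{\a\b\ga\de}(D^\a X^\b+D^\b X^\a)Y^\ga Z^\de = \frac 1 2 Q_{\a\b\ga\de}\,{^{(X)}}\pi^{\a\b}Y^\ga Z^\de,
\]
using the definition ${^{(X)}}\pi_{\mu\nu}=D_\mu X_\nu+D_\nu X_\mu$. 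The same manoeuvre applied to the $Y$- and $Z$-terms --- transposing the contracted index $\a$ with $\ga$, respectively with $\de$ --- yields $\frac 1 2 Q_{\a\b\ga\de}\,{^{(Y)}}\pi^{\a\ga}X^\b Z^\de$ and $\frac 1 2 Q_{\a\b\ga\de}\,{^{(Z)}}\pi^{\a\de}X^\b Y^\ga$. Adding the three symmetrised contributions to $\Div Q_{\b\ga\de}X^\b Y^\ga Z^\de$ gives precisely (\ref{P}).

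\NI I do not expect any real obstacle here --- the argument is pure index bookkeeping --- but one point deserves a line of care. For the middle ($Y$) term the mere ``pair'' symmetries of $Q$ (symmetry within each index pair and under exchange of the two pairs) are not by themselves enough to transpose a first-slot index with a third-slot one; one genuinely uses that $Q$ is fully symmetric, which is precisely Proposition~\ref{BelRob}(i). Conceptually the identity records nothing more than this: since $Q$ is symmetric it contracts with each gradient $D_\mu X_\nu$ only through its symmetric part $\frac 1 2(D_\mu X_\nu + D_\nu X_\mu) = \frac 1 2\,{^{(X)}}\pi_{\mu\nu}$ (and, being trace free, it would equally see only the traceless part $\frac 1 2\,{^{(X)}}\hat{\pi}_{\mu\nu}$), which is why the deformation tensors, rather than the full gradients of $X, Y, Z$, are what appears on the right-hand side.
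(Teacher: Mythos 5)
Your proof is correct, and since the paper states this proposition without supplying a proof, yours is essentially the canonical argument: expand $D^\a P_\a$ by the Leibniz rule, identify the first summand with $\Div Q_{\b\ga\de}X^\b Y^\ga Z^\de$, and symmetrise each of the three gradient terms using the full permutation symmetry of $Q$ from Proposition~\ref{BelRob}(i) to replace $D^\a X^\b$ by $\frac12\,{^{(X)}}\pi^{\a\b}$, and likewise for $Y$ and $Z$. You are also right to flag that the mere pair symmetries of a Weyl-type tensor would not suffice to transpose, say, the first and third slots in the $Y$-term; the total symmetry of the Bel--Robinson tensor is genuinely used, and spelling that out is a small but worthwhile clarification.
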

\begin{remark}
\NI When $X,Y,Z$ are Killing or conformal Killing vector fields and
$W$ satisfies Bianchi equations, it follows
$$
\Div P=0
$$
\NI i.e. $P$ is a conserved quantity.
\end{remark}

\subsection{$\tilde{\mathcal{Q}}$ integral norms}
\NI  Now we  show which are the suitable integral norms to introduce on
$C(u),\un{C}(\ub)$ and on the initial hypersurface $\Si_0$  in order to find the peeling decays to estimate 
these $L_2$ norms. 

\NI From now on, let  us indicate by  $\tW$ a Weyl tensor field that satisfies Bianchi equations; we shall construct the energy norms starting from the Bel-Robinson tensor associated to it. 
We denote these norms with $\QQ[\tW]$. Using the vector fields $\bar{K},S,T$ and ${^{(i)}O}$ and denoting
$$
V(u,\ub)=J^-(S(u,\ub)),
$$
\NI we define the following energy-type norms:

\begin{eqnarray}\label{norme1}
&& \QQ(u,\ub) = \QQ_1(u,\ub)+\QQ_2(u,\ub)\nn\\
&& \QQb(u,\ub) = \QQb_1(u,\ub)+\QQb_2(u,\ub)\	,
\end{eqnarray}
\NI where
\begin{eqnarray}\label{norms1}
\QQ_1(u,\ub) &\equiv & \int_{C(u)\cap V(u,\ub)}\ttau^{5+\e}Q(\lie_T
{\tW})(\bar{K},\bar{K},\bar{K},e_4)\nonumber\\
& &+\int_{C(u)\cap V(u,\ub)}\ttau^{5+\e}Q(\lie_O
\tW)(\bar{K},\bar{K},T,e_4)\nonumber\\
\QQ_2(u,\ub) &\equiv &\int_{C(u)\cap V(u,\ub)}\ttau^{5+\e}Q(\lie_O\lie_T
\tW)(\bar{K},\bar{K},\bar{K},e_4)\nonumber\\
& &+\int_{C(u)\cap V(u,\ub)}\ttau^{5+\e}Q(\lie^2_O
\tW)(\bar{K},\bar{K},T,e_4)\label{Q_2}\\
& & +\int_{C(u)\cap V(u,\ub)}\ttau^{5+\e}Q(\lie_S\lie_T
\tW)(\bar{K},\bar{K},\bar{K},e_4)\nonumber
\end{eqnarray}
\begin{eqnarray}\label{norms2}
\QQb_1(u,\ub)& \equiv &
\int_{\un{C}(\ub)\cap
V(u,\ub)}\ttau^{5+\e}Q(\lie_T \tW)(\bar{K},\bar{K},\bar{K},e_3)\nonumber\\
& &+\int_{\un{C}(\ub)\cap V(u,\ub)}\ttau^{5+\e}Q(\lie_O
\tW)(\bar{K},\bar{K},T,e_3).\nonumber
\end{eqnarray}
\begin{eqnarray}
\QQb_2(u,\ub) &\equiv &\int_{\un{C}(\ub)\cap
V(u,\ub)}\ttau^{5+\e}Q(\lie_O\lie_T
\tW)(\bar{K},\bar{K},\bar{K},e_3)\nonumber\\
& &+\int_{\un{C}(u)\cap V(u,\ub)}\ttau^{5+\e}Q(\lie^2_O
\tW)(\bar{K},\bar{K},T,e_3)\label{Qb_2}\\
& & +\int_{\un{C}(\ub)\cap V(u,\ub)}\ttau^{5+\e}Q(\lie_S\lie_T
\tW)(\bar{K},\bar{K},\bar{K},e_3)\nonumber
\end{eqnarray}
\NI and
\begin{eqnarray}
\QQ_{1_{\Si_0\cap V(u,\ub)}} &\equiv & \int_{\Si_0\cap
V(u,\ub)}\ttau^{5+\e}Q(\lie_T \tW)(\bar{K},\bar{K},\bar{K},T)\nonumber\\
& & +\int_{\Si_0\cap V(u,\ub)}\ttau^{5+\e}Q(\lie_O
\tW)(\bar{K},\bar{K},T,T)\\
\QQ_{2_{\Si_0\cap V(u,\ub)}} &\equiv & \int_{\Si_0\cap
V(u,\ub)}\ttau^{5+\e}Q(\lie_O\lie_T \tW)(\bar{K},\bar{K},\bar{K},T)\nonumber\\
& & +\int_{\Si_0\cap V(u,\ub)}\ttau^{5+\e}Q(\lie_O^2
\tW)(\bar{K},\bar{K},T,T)\nonumber\\
& &+ \int_{\Si_0\cap V(u,\ub)}\ttau^{5+\e}Q(\lie_S\lie_T
\tW)(\bar{K},\bar{K},\bar{K},T).
\end{eqnarray}
\NI We introduce also the following quantity
\begin{equation}
\QQ_{\cal{K}}\equiv
\sup_{\{u,\ub|S(u,\ub)\subseteq\cal{K}\}}\{\QQ(u,\ub)+\QQb(u,\ub)\}.
\end{equation}
Moreover, on the initial spacelike hypersurface $\Si_0$ we define
\begin{equation}
\QQ_{\Si_0\cap\cal{K}}=\sup_{\{u,\ub|S(u,\ub)\subseteq\cal{K}\}}
\{\QQ_{1_{\Si_0\cap V(u,\ub)}}+\QQ_{2_{\Si_0\cap V(u,\ub)}}\}.
\end{equation}

\newpage

\subsection{Kerr Spacetime}

\NI From now on we focus our attention on the Kerr spacetime. 

\NI Kerr metric in the Boyer-Lindquist coordinates $\{t,r,\theta,\phi\}$ has the following form:
\begin{eqnarray}\label{BL}
ds^2 &=&
-\frac{\De-a^2\sin^2\theta}{\Si}dt^2+\frac{\Sigma}{\Delta}dr^2+\Sigma
d\theta^2
-\frac{4Mar\sin^2\theta}{\Sigma}d\phi dt\nonumber\\
&+& R^2\sin^2\theta d\phi^2\	,
\end{eqnarray}
\NI where:
\begin{eqnarray*}
\Delta &=& r^2+a^2-2Mr\\
\Sigma &=& r^2+a^2\cos^2\theta\\
R^2 &=& \frac{1}{\Sigma}\bigl((r^2+a^2)^2-\Delta
a^2\sin^2\theta\bigr),
\end{eqnarray*}
Let us note the useful identities:
\begin{eqnarray*}
\Sigma R^2 = {(r^2+a^2)}^2 &-& \Delta a^2\sin^2\theta,\\
g_{\phi\phi}g_{tt}-g_{\phi t}^2= &-&\Delta\sin^2\theta.
\end{eqnarray*}
\NI We remember that his metric is stationary, and axisymmetric, and 
asymptotically flat. Now we define the null foliation in the Kerr case: the level
hypersurfaces of the optical functions $u,\un{u}$:
\begin{eqnarray}\label{eikonal}
C(u) &=& \{p\in \mathcal{M}|u(p)=u\}\nonumber\\
\un{C}(\un{u})&=&\{p\in\mathcal{M}|\un{u}(p)=\un{u}\}.
\end{eqnarray}
\NI As it is shown in \cite{Is-Pr}, $u,\ub$ have the following form:
\begin{eqnarray*}
u &=& t-\ro\\ 
\ub &=& t+\ro
\end{eqnarray*}
\NI where $\ro=\ro(r,\theta)$ is the radial parameter of Kerr metric defined in \cite{Is-Pr}, equation (15).\\
Now let's define the null frame associated to this double null foliation:
\begin{eqnarray}\label{e_a}
e_4 &=&
\frac{\sqrt\Delta}{R}\{\frac{1}{\Delta\sin^2\theta}[g_{\phi\phi}\p_t
-g_{t\phi}\p_\phi]+
\frac 1 \Sigma[Q\p_r+P\p_\theta]\}\nn\\
e_3&=&
\frac{\sqrt\Delta}{R}\{\frac{1}{\Delta\sin^2\theta}[g_{\phi\phi}\p_t
-g_{t\phi}\p_\phi]- \frac 1 \Sigma[Q\p_r+P\p_\theta]\}\nn\\
e_\theta &=& \frac{1}{\Sigma R}(Q\p_\theta-\Delta P \p_r)
\end{eqnarray}
\NI where:
\begin{eqnarray*}
P^2(\theta, \lambda) &=& a^2(\lambda-\sin^2\theta)\\
Q^2(r,\lambda,M) &=& {(r^2+a^2)}^2-a^2\lambda\Delta\\
K^2(r) &=& r^2+a^2,
\end{eqnarray*}
\NI and $\la$ is a function of $\theta ,r$ defined implicitly as a function that at
spatial infinity is $\sin^2\theta$ (see \cite{Is-Pr}, sections 2 and 5).

\subsection{decay of $P$ with respect to $r$ and $\theta$}

\NI First let us estimate The decay of $P$ as $r$ tends to infinity.

\NI Let us make the following ansatz $\theta_*-\theta=O(\frac{1}{r^{\a}})$. We want to find the value of $\a$. By the classical decomposition:
\[sin^2\theta_*-sin^2\theta=4cos(\frac{\theta_*+\theta}{2})sin(\frac{\theta_*-\theta}{2})cos(\frac{\theta_*-\theta}{2})sin(\frac{\theta_*-\theta}{2})\]
The quantity $sin(\frac{\theta_*-\theta}{2})$ is the only one which tend to $0$ for every $\theta\neq 0$. Since at the first order $sin(\theta_*-\theta)=\theta_*-\theta$
it follows 
\[\sqrt{sin^2\theta_*-sin^2\theta}=O\sqrt{\sin(\frac{\theta_*-\theta}{2})}=O(\frac{1}{r^{\frac{\a}{2}}})\]
\NI Then $P\rightarrow0$ as $r^{-\frac{\a}{2}}$. Then by the definition of $F$ in \cite{Is-Pr}   $F=0$ along a null geodesic therefore

\[\int_r^{\infty}\frac{dr'}{Q(r',\la)}\simeq\int_{\theta}^{\theta_*}\frac{d\theta'}{P(\theta',\la)}\]

\NI As for large $r$ 

\[\int_r^{\infty}\frac{dr'}{Q(r',\la)}=O(\frac 1 r)\]

\NI It have to be  $\a=2$, hence $\lim_{r\rightarrow 0}P=O(\frac 1 r)$ for every fixed $\theta$.

\medskip













\NI For what concern the dependence from $\theta$, by the definition of $\la=\sin^2\theta^*$ and equation 22 of \cite{Is-Pr}, for $\theta<1$:

\bea
\tan\theta^*\leq\frac{\sqrt{a^2+r^2}}{r}\tan\chi+C\sqrt{2M}\sin\theta^*r^{\frac 5 2}\leq\frac{\sqrt{a^2+r^2}}{r}\tan\chi+ C'\sin\theta^*r^{\frac 5 2}\nn
\eea

\NI Hence for $\theta$ tending to zero and fixed $r$, recalling that $\lim_{\theta\rightarrow 0}\chi=O(\theta)$  we have:

\bea
&&\theta^*(1+C'{r^{\frac 5 2}})\leq\theta(1+\frac{a}{ r} )\nn\\
&&\theta^*\leq(1+{C}{r^{\frac{ 5}{ 2}}})^{-1}(1+\frac{a }{r})\theta\leq C''\theta
\eea

\NI and from this $\lim_{\theta\rightarrow 0}P=C'''\theta\ \ \ \ $ for any fixed $r$.

\subsection{The connection coefficients of Kerr spacetime} \label{simbchrist}
 
\NI In this section we calculate how the connection coefficients depend on the null frame defined above 
in relation with their decays in $r$, we report their
expression:
\begin{prop}\label{24}
\NI The connection coefficients related to the null frame associated to the \cite{Is-Pr} null cone foliation have the following explicit form: 
\begin{eqnarray}
\chi_{\theta\theta} &=& \frac 1 r-\frac{M}{r^2}+O(r^3) \nonumber\\
\chi_{\theta\phi} &=& 0\nonumber\\
\chi_{\phi\phi} &=& \frac 1 r -\frac{M}{r^2}-\frac{P}{r^2}\cot\theta\nonumber+O(r^3)\\
\hat\chi_{\theta\theta} &=& P\cot\theta\frac 1 {r^2}+O(r^3)\nonumber\\
\hat\chi_{\phi\phi} &=& -\hat\chi_{\theta\theta}\nonumber\\
\hat\chi_{\theta\phi} &=& 0\nonumber\\
\tr\chi &=& \frac 2 r-\frac{(2M+P\cot\theta)}{r^2}+O(r^3),
\end{eqnarray}
\NI where $\hat{\chi},\hat{\chib}$ are respectively
the traceless parts of $\chi,\chib$.
\NI Then:
\begin{eqnarray}
\zeta_{\theta} &=& -\frac {1} {2r^3}(a^2\sin\theta\cos\theta+MP+P\p_{\theta}P)+O(r^4)\nonumber\\
\zeta_{\phi} &=&\frac{3Ma\sin\theta}{r^3}+O(r^4)\nonumber\\
\eta_{\theta} &=& -\frac{1}{2r^3}(a^2\sin\theta\cos\theta+MP)+O(r^4)\nonumber\\
\eta_{\phi} &=& \frac{3Ma\sin\theta}{r^3}+O(r^4)\nonumber\\
\omega &=& -\frac {M}{2r^2}+O(r^3)\nonumber\\
\un{\omega} &=& \frac {M}{2r^2}+O(r^3).
\end{eqnarray}
\NI Moreover the following relations hold:
\begin{eqnarray}
\un{\chi}_{\theta\theta}&=&-\chi_{\theta\theta}\nonumber\\
\un{\chi}_{\theta\phi}&=& \chi_{\theta\phi}=0\nonumber\\
\un{\chi}_{\phi\phi}&=& -\chi_{\phi\phi}\nonumber\\
\hat{\un{\chi}}_{\theta\theta} &=& -\chi_{\theta\theta}\nonumber\\
\hat{\un{\chi}}_{\phi\phi} &=& -\chi_{\phi\phi}\\
\un{\eta}_{\theta} &=& -\eta_{\theta}\nonumber\\
\un{\eta}_{\phi} &=& -\eta_{\phi}\nonumber\\
\om &=& -\un{\om}\nonumber
\end{eqnarray}
\end{prop}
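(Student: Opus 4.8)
The proof of Proposition~\ref{24} is a direct computation in the Kerr spacetime using the explicit null frame \ref{e_a}, so the plan is purely constructive. First I would write all four frame legs as explicit vector fields in the Boyer--Lindquist coordinate basis $\{\p_t,\p_r,\p_\theta,\p_\phi\}$: $e_4$, $e_3$ and $e_\theta$ are given in \ref{e_a}, and the fourth leg is $e_\phi=(R\sin\theta)^{-1}\p_\phi$, the unit vector along the axial Killing direction, which one checks is orthogonal to $e_3,e_4,e_\theta$ because the Kerr metric block-diagonalises into the $(t,\phi)$ and $(r,\theta)$ sectors; that $e_\theta$ has unit norm uses the identity $Q^2+\De P^2=\Si R^2$. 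Next I would recall the standard definitions of the connection coefficients as contractions of covariant derivatives of these legs (conventions of \cite{Kl-Ni1}, Chapter~3): $\chi_{ab}=g(D_{e_a}e_4,e_b)$, $\chib_{ab}=g(D_{e_a}e_3,e_b)$, $\zeta_a=\tfrac12 g(D_{e_a}e_4,e_3)$, $\eta_a=-\tfrac12 g(D_{e_3}e_a,e_4)$, $\etab_a=-\tfrac12 g(D_{e_4}e_a,e_3)$, $\om=-\tfrac14 g(D_{e_4}e_4,e_3)$, $\omb=-\tfrac14 g(D_{e_3}e_3,e_4)$. Each of these expands as $e_a^\mu\big(\p_\mu e_\nu^\lambda+\Ga^\lambda_{\mu\kappa}e_\nu^\kappa\big)$ contracted with a frame covector, so the whole problem reduces to two ingredients: the Kerr Christoffel symbols $\Ga^\lambda_{\mu\kappa}$ in Boyer--Lindquist coordinates, and the $r$- and $\theta$-derivatives of the frame components, which in turn involve the derivatives of the implicitly defined functions $\rho(r,\theta)$ and $\lambda(r,\theta)$.

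Second, I would assemble the asymptotic input. From \ref{BL} one has $\De=r^2+a^2-2Mr$, $\Si=r^2+a^2\cos^2\theta$, $R^2=r^2+a^2+O(r^{-2})$, $K^2=r^2+a^2$, and $Q^2=(r^2+a^2)^2-a^2\lambda\De=r^4+O(r^2)$; from the preceding subsection, $P=O(r^{-1})$ for fixed $\theta$ (with $P=C\theta+\dots$ as $\theta\to0$) and $\lambda-\sin^2\theta=O(r^{-2})$, while \cite{Is-Pr} gives $\rho(r,\theta)=r+2M\log r+\dots$ together with the relations --- coming from the eikonal equation $g^{\mu\nu}\p_\mu u\,\p_\nu u=0$ and from $g(e_3,e_3)=g(e_4,e_4)=0$, $g(e_3,e_4)=-2$ --- that express $\p_r\rho,\p_\theta\rho$ and $\p_r\lambda,\p_\theta\lambda$ algebraically in terms of $P,Q,\De,\Si,R$. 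These identities are precisely what must be used to eliminate the implicitly defined quantities, and they are also the source of the gain over naive dimensional counting: because the Is--Pr frame is adapted to the outgoing null cones, the would-be leading terms in $\zeta,\eta,\om$ cancel, which is why the proposition records a remainder of order $O(r^{-3})$ (resp.\ $O(r^{-4})$) --- to be read in place of the printed ``$O(r^3)$'' --- rather than the slower decay a dimensional count would give.

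Third, I would perform the contractions in an order that exploits the reflection symmetry of the frame. Since $e_4$ and $e_3$ differ only in the sign of the ``spatial'' part $\tfrac1\Si[Q\p_r+P\p_\theta]$ while sharing the ``stationary'' part $\tfrac{1}{\De\sin^2\theta}[g_{\phi\phi}\p_t-g_{t\phi}\p_\phi]$, the substitution $e_3\leftrightarrow e_4$ (with $e_\theta,e_\phi$ fixed) acts on the computation by definite signs; carrying it along yields the relations $\chib_{\theta\theta}=-\chi_{\theta\theta}$, $\chib_{\phi\phi}=-\chi_{\phi\phi}$, $\hat{\chib}_{\theta\theta}=-\hat\chi_{\theta\theta}$, $\etab=-\eta$, $\om=-\omb$ immediately and halves the work. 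I would then compute, first, $\chi_{\theta\theta},\chi_{\theta\phi},\chi_{\phi\phi}$ to two orders in $1/r$ and read off $\tr\chi$ and the traceless part $\hat\chi$; second, $\om$ and $\omb$ from $g(D_{e_4}e_4,e_3)$ and $g(D_{e_3}e_3,e_4)$; third, the torsion-type coefficients $\zeta_a,\eta_a,\etab_a$ to their leading order $O(r^{-3})$, isolating the combinations $a^2\sin\theta\cos\theta$, $MP$, $P\,\p_\theta P$ and $Ma\sin\theta$ that appear in the stated formulas. At each stage I would cross-check against the Kerr structure and Bianchi equations recalled above --- the Codazzi relations between $\nabb\tr\chi,\hat\chi$ and $\b$, the identity $\zeta_a-\eta_a=-\nabb_a\log\Omega$ for the null lapse $\Omega$ (which explains why the $P\,\p_\theta P$ term appears in $\zeta_\theta$ but not in $\eta_\theta$), and the Kerr null curvature components \ref{Riemcomp} --- to catch arithmetic slips.

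The main obstacle is not conceptual but the bookkeeping. The Kerr Christoffel symbols in Boyer--Lindquist coordinates are long; the frame genuinely mixes $\p_r$ and $\p_\theta$, so cross-terms survive everywhere; and, most delicately, $\rho$ and $\lambda$ are defined only implicitly, so one must differentiate them through the Is--Pr relations and push every expansion to the order at which the claimed remainder is uniform in $\theta$ away from the symmetry axis --- the $\theta\to0$ behaviour $P=C\theta+\dots$ from the preceding subsection being exactly what keeps the coefficients $P\cot\theta$ and $P\,\p_\theta P$ bounded there. Once these expansions are in place the proposition follows by collecting the surviving terms.
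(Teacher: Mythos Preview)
Your proposal is correct and follows essentially the same approach as the paper: the paper's own proof is just ``long direct computation, one example ($\chi_{\theta\theta}$) worked out in the appendix,'' and that example proceeds exactly as you describe --- expand $g(D_{e_\theta}e_4,e_\theta)$ using the Boyer--Lindquist Christoffel symbols and the frame components \ref{e_a}, then insert the $1/r$ power-series expansions of $\sqrt{\De}/R$, $Q/\Si$, $\De/\Si$, $Q/R^2$ and collect terms. Your organisational device of exploiting the $e_3\leftrightarrow e_4$ reflection to obtain the relations $\chib=-\chi$, $\etab=-\eta$, $\om=-\omb$ in one stroke, and your remark that the printed ``$O(r^3)$'' should be read as $O(r^{-3})$, are sensible additions that the paper does not spell out.
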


\medskip

\NI Proof: 

\NI As already said in the introduction the calculations for these quantities are very long,
We report as example the calculus just for one of these coefficients, the
remaining ones are computed in a similar way; specifically we
compute $\chi_{\theta\theta}$ in the appendix, see section 5.1. 
\begin{remark}
\NI Notice that the term $P\cot\theta$ do not blow up for $\theta\rightarrow 0$ due to the fact that $lim_{\theta\rightarrow 0} P(\theta)=O(\theta)$
\end{remark}

\NI As direct consequence of this calculations we have the following:

\begin{prop}

\NI Denoted as $\{e_{\mu}\}$ the null frame associated to the null cones of \cite{Is-Pr} in  Kerr spacetime, the following decays for the null Riemann components hold:

\bea 
&&\a\ \ ,\ \ \underline{\a}\ \ =O(\frac {M^3}{ r^5})\nn\\
&&\b\ \ ,\ \ \underline{\b}\ \ \ =O(\frac {M^2}{ r^4})\\
&&\ro\ \ =O(\frac {M }{r^3})\nn\\ 
&&\si\ \ =O(\frac {M }{r^4})\nn
\eea

\end{prop}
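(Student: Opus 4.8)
\NI The plan rests on two structural facts about the background: the Kerr metric is Ricci--flat, so $R_{(Kerr)}$ coincides with its Weyl tensor and the null Riemann components are the null Weyl components; and Kerr is of Petrov type~D. I would read the null Riemann components off the connection coefficients of the preceding proposition by means of the null structure equations, and cross--check the answer against the type--D form of $R_{(Kerr)}$.

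\NI \emph{From the structure equations.} Each null component of $R_{(Kerr)}$ is expressed, algebraically or through a first--order transport equation, in terms of $\{\trch,\trchb,\chih,\chibh,\z,\eta,\etab,\om,\omb\}$ and the intrinsic geometry of the leaves $S(u,\ub)$: $\ro$ from the Gauss equation $\ro=\frac12\chih\c\chibh-\frac14\trch\trchb-K$, with $K$ the Gauss curvature of $S(u,\ub)$; $\si$ from the ``curl of torsion'' equation, schematically $\si=\curll\z+(\chih\wedge\chibh)$; $\b$ and $\bb$ from the Codazzi equations $\divv\chih=\frac12\nabb\trch+(\z\c\chih)-\b$ and its conjugate; and $\a,\aa$ from the Riccati equations $\nabb_4\chih+\trch\chih+2\om\chih=-\a$, $\nabb_3\chibh+\trchb\chibh+2\omb\chibh=-\aa$. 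First I would compute the induced geometry of the leaves, which in Kerr are the spheres $\widetilde S(t,r)$ with metric $\Si\,d\theta^2+R^2\sin^2\theta\,d\phi^2$; for large $r$ these are nearly round, $K=r^{-2}+O(a^2r^{-4})$. Inserting then the expansions of the preceding proposition, together with the decay $P=O(r^{-1})$ and the Kerr identities $\Si R^2=(r^2+a^2)^2-\De a^2\sin^2\theta$, $g_{\phi\phi}g_{tt}-g_{t\phi}^2=-\De\sin^2\theta$, one checks that in the Gauss equation the $O(r^{-2})$ parts of $-\frac14\trch\trchb$ and $-K$ cancel while the $M$--correction to $\trch$ survives, giving $\ro=O(Mr^{-3})$; likewise $\si=O(Mr^{-4})$, and in the Codazzi and Riccati equations the leading parts of their left--hand sides cancel down to the orders $O(M^2r^{-4})$ for $\b,\bb$ and $O(M^3r^{-5})$ for $\a,\aa$.

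\NI \emph{Type--D cross--check.} In the Kinnersley (principal null) tetrad of Kerr the only nonvanishing Weyl scalar is $\Psi_2=-M(r-ia\cos\theta)^{-3}$, all others being zero. The \cite{Is-Pr} null directions $e_4,e_3$ are not the principal null directions --- their $\p_\theta$ components are proportional to $P\neq0$ --- but they converge to them as $r\to\infty$, so the Lorentz transformation relating the two frames has bounded boost and null--rotation parameters of size $O(Mr^{-1})$ (more precisely $O(ar^{-1})$). Transforming $\Psi_2$ then reproduces the claimed hierarchy: $\ro=\mathrm{Re}\,\Psi_2+(\text{corr})=O(Mr^{-3})$, $\si=\mathrm{Im}\,\Psi_2+(\text{corr})=O(Mr^{-4})$, and each lowering of the spin weight costs one rotation parameter, so $\b,\bb=O(M^2r^{-4})$ and $\a,\aa=O(M^3r^{-5})$. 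The corresponding statement for the full Riemann tensor of a perturbed Kerr spacetime then follows from the linearity of $D_{(0)}$, as noted after the main theorem.

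\NI \emph{Main difficulty.} The obstacle is not conceptual but is the computational book--keeping. In the first route one must carry the connection coefficients one or two orders beyond their leading term, compute the $\nabb_4,\nabb_3$ derivatives with all the frame corrections included, and verify the delicate cancellations that bring $\a,\aa$ down to $O(M^3r^{-5})$ and $\b,\bb$ to $O(M^2r^{-4})$; in the second route one needs the explicit radial parameter $\rho(r,\theta)$ and the angular function $\lambda(r,\theta)$ of \cite{Is-Pr} to fix the size of the rotation parameters. Throughout one has to keep track of the $\cot\theta$ factors appearing in $\chih,\z,\trch$ and verify, using $P=O(\theta)$ as $\theta\to0$ established above, that all null Riemann components remain regular on the axis. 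This is exactly the long calculation that the authors advertise as being of independent interest.
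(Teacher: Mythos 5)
Your structure--equation route is exactly the paper's proof: the authors state that the proposition ``follows in a long but straightforward way by the structure equations, see \cite{Ch-Kl:book} eq.\ (3.1.46), the decays in $r$ of the connection coefficients calculated in proposition 2.4, and observing that $\nabb_4$ of a connection coefficient gains a decay factor $r$ while $\nabb_3$ gains a factor $u$,'' which is precisely your Gauss/Codazzi/Riccati computation including the cancellations you flag. The type--D cross--check via the Kinnersley tetrad is a valuable complement absent from the paper: your observation that the Israel--Pretorius null frame is rotated off the principal null directions by parameters of size $O(a/r)$ correctly reproduces the spin--weight hierarchy $\ro=O(M/r^3)$, $\b=O(Ma/r^4)$, $\a=O(Ma^2/r^5)$, which bounds the stated decays once $a\leq M$ is used, and would serve to catch bookkeeping errors in the long direct computation.
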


\NI The proof of this proposition follows in a  long but straightforward way by the the structure equations, see \cite{Ch-Kl:book} eq. (3.1.46), the decays in $r$ of the connection coefficients calculated in proposition \ref{24} and observing that $\dddd_4$  of a connection coefficient gain a decay factor $r$ while $\dddd_3$ gain a factor $u$.

\subsection{Asymptotic behavior of the deformation tensors}

\NI Once we have computed the decays of the deformation tensors, we can provide the asymptotic behavior of the  null
components and the first derivatives of the deformation tensors
relative to $O$, the rotation vector fields and to $S, \bar{K}, T$
since they will be very useful in the error estimate.

\NI Let us define$\{^{X} i_{\mu\nu},^{X} j,^{X} m_{\mu},^{X} \underline{m}_{\mu},^{X}n,^{X} \underline{n}\}$ the null components of the deformation tensor, see [Kl-Ni], chapter 4,

\begin{prop}\label{omega}
\NI As ${^{^{(3)}(O)}}$ is a Killing vector field, the deformation
tensor of ${^{(3)}}O$ is null . The null components of the two other rotation deformation tensors:
\begin{eqnarray}
&{^{^{(2)}(O)}}i_{\theta\theta} &= O\biggl(-\frac{a}{r^2}(2a-3M)\sin\theta\cos\theta\cos\phi\biggr)\nonumber\\
&{^{^{(2)}(O)}}i_{\phi\phi} &= O\biggl(-\frac{3Ma}{r^2}\sin\theta\cos\theta\cos\phi\biggr)\nonumber\\
&{^{^{(2)}(O)}}i_{\theta\phi} &=O\biggl(-\frac{a^2}{r^2}\sin\theta\cos\theta\sin\phi\biggr)\nonumber\\
&{^{^{(2)}(O)}}j &=O\biggl(\frac{a}{r^2}(a-3M){\sin\theta\cos\theta\cos\phi}\biggr)\nonumber\\
&{^{^{(2)}(O)}}m_{\theta} &= \frac{2Ma\sin\phi(1-\sin^4\theta)}{r^2}\\
&{^{^{(2)}(O)}}m_{\phi} &= 0\nonumber\\
&{^{^{(2)}(O)}}\un{m}_{\theta} &= O\biggl(\frac{2Ma\sin\phi(1-\sin^4\theta)}{r^2}\biggr)\nonumber\\
&{^{^{(2)}(O)}}\un{m}_{\phi} &=0\nonumber\\
&{^{^{(2)}(O)}}n &= O\biggl(\frac{2a\sin\theta\cos\theta}{r^2}[a\cos\phi-M\sin\phi]\biggr) \nonumber\\
&{^{^{(2)}(O)}}\un{n} &=
O\biggl(\frac{2a\sin\theta\cos\theta}{r^2}[a\cos\phi+M\sin\phi]\biggr)\ .\nonumber
\end{eqnarray}
\NI The behavior of ${{^{(1)}(O)}}\pi_{\mu\nu}$ is very similar,
\begin{eqnarray}
&{^{^{(1)}(O)}}i_{\theta\theta} &=O\biggl(-\frac{a}{r^2}(2a-3M)\sin\theta\cos\theta\sin\phi\biggr)\nonumber\\
&{^{^{(1)}(O)}}i_{\phi\phi} &=O\biggl(-\frac{3Ma}{r^2}\sin\theta\cos\theta\sin\phi\biggr)\nonumber\\
&{^{^{(1)}(O)}}i_{\theta\phi} &=O\biggl(\frac{a^2+M^2}{r^2}\sin\theta\cos\theta\cos\phi\biggr)\nonumber\\
&{^{^{(1)}(O)}}j
&=O\biggl(\frac{a}{r^2}(a-3M){\sin\theta\cos\theta\sin\phi}\biggr)\nn
\end{eqnarray}
\begin{eqnarray}
&{^{^{(1)}(O)}}m_{\theta} &= O\biggl(-\frac{2Ma\cos\phi(1-\sin^4\theta)}{r^2}\biggr)\nonumber\\
&{^{^{(1)}(O)}}m_{\phi}& =0\nonumber\\
&{^{^{(1)}(O)}}\un{m}_{\theta} &= O\biggr(-\frac{2Ma\cos\phi(1-\sin^4\theta)}{r^2}\biggr)\nonumber\\
&{^{^{(1)}(O)}}\un{m}_{\phi} &= 0\\
&{^{^{(1)}(O)}}n &= O\biggl(\frac{2a\sin\theta\cos\theta}{r^2}[a\sin\phi+M\cos\phi]\biggr) \nonumber\\
&{^{^{(1)}(O)}}\un{n} &=
O\biggl(\frac{2a\sin\theta\cos\theta}{r^2}[a\sin\phi-M\cos\phi]\biggr)\nn.
\end{eqnarray}
\end{prop}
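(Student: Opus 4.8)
\NI The strategy is to evaluate the Lie derivative $\mathcal{L}_{{}^{(i)}O}\,g$ of the Kerr metric directly in Boyer--Lindquist coordinates and then project the result onto the null frame $\{e_4,e_3,e_\theta,e_\phi\}$ of \cite{Is-Pr} by means of the decomposition formulas (\ref{pih}). First I would recall the construction of the rotation vector fields ${}^{(i)}O$, $i=1,2,3$, from \cite{Ch-Kl:book}, chapter 16: they are the $S(u,\ub)$--tangent fields obtained by propagating the Euclidean rotation generators along the double null foliation. Since the Israel--Pretorius foliation is axisymmetric (the optical functions $u,\ub$, the radial parameter $\rho$ and $\la$ depend on $r,\theta$ only), the axial field $\partial_\phi$ is tangent to every leaf $S(u,\ub)$ and is a genuine Killing field of $g$; hence ${}^{(3)}O=\partial_\phi$ and ${}^{(3)}\pi=\mathcal{L}_{\partial_\phi}g=0$, which disposes of the first assertion.

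\NI For ${}^{(1)}O$ and ${}^{(2)}O$ I would expand around Schwarzschild. Write $g=g_S+\delta g$, with $\delta g$ the Kerr--minus--Schwarzschild part, of size $a^2/r^2$ and $Ma/r^2$ in the orthonormal frame, and ${}^{(i)}O={}^{(i)}O_S+\delta{}^{(i)}O$ with ${}^{(i)}O_S$ the $S$--tangent rotation field of the spherically symmetric metric $g_S$. Since $g_S$ is spherically symmetric, ${}^{(i)}O_S$ is an exact Killing field of it and the Schwarzschild spheres coincide with the coordinate spheres, so its deformation tensor relative to $g_S$ vanishes; hence, up to higher order, ${}^{(i)}\pi=\mathcal{L}_{{}^{(i)}O_S}\,\delta g+\mathcal{L}_{\delta{}^{(i)}O}\,g_S$, and both terms are proportional to $a$. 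The first inherits the angular factors $\sin\theta\cos\theta\cos\phi$ (for $i=2$) and $\sin\theta\cos\theta\sin\phi$ (for $i=1$) from letting ${}^{(i)}O_S$ act on the $a$--dependent pieces $\frac{a^2\sin^2\theta}{r^2}$, $\frac{a^2\cos^2\theta}{r^2}$, $\frac{a^2}{r^2}$ of $\Sigma/\Delta$, $\Sigma$, $R^2$ and on the cross term $-\frac{4Mar\sin^2\theta}{\Sigma}d\phi\,dt$; the second, governed by the tilt of the leaves $S(u,\ub)$ with respect to the coordinate spheres, which is controlled via the bounds $P=O(r^{-1})$ and $\lim_{\theta\to 0}P=O(\theta)$ established above, contributes a term of the same type. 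Because $\partial_\phi$ annihilates every metric coefficient, only the two non--axial rotations give a nonzero answer, and the precise $a$-- and $M$--dependent numerical coefficients appearing in the statement are the bookkeeping output of these two Lie derivatives.

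\NI Next I would carry out the null decomposition. Substituting the explicit $e_4,e_3,e_\theta$ of (\ref{e_a}) and the connection coefficients of Proposition \ref{24} into the formulas (\ref{pih}) for ${}^{(X)}\pi_{ab},{}^{(X)}\pi_{a4},{}^{(X)}\pi_{a3},{}^{(X)}\pi_{34},{}^{(X)}\pi_{44},{}^{(X)}\pi_{33}$, each null component of ${}^{(i)}\pi$ reduces to a finite sum of three kinds of terms: an ${}^{(i)}O$--derivative of a frame coefficient, a product of ${}^{(i)}O$ with a connection coefficient, and a product of $\delta{}^{(i)}O$ with $\tr\chi\sim 2/r$. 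The last two are $O(r^{-2})$ directly from Proposition \ref{24}; the first is $O(r^{-2})$ as well, because the $\partial_\theta$-- and $\partial_\phi$--components of ${}^{(i)}O$ are $O(1)$ in these coordinates while the leading frame coefficients depend on $r$ only through the overall $1/r$ or $1/r^{2}$ factors. Collecting the leading contributions produces the listed expressions for ${}^{(2)}i_{\theta\theta},{}^{(2)}i_{\phi\phi},{}^{(2)}i_{\theta\phi},{}^{(2)}j,{}^{(2)}m_\theta,{}^{(2)}\un m_\theta,{}^{(2)}n,{}^{(2)}\un n$ and the corresponding ${}^{(1)}O$ quantities; the vanishing ${}^{(i)}m_\phi={}^{(i)}\un m_\phi=0$ falls out of the same computation, since ${}^{(i)}O$ carries no $\partial_t$--component and the mixing into the $e_\phi$--direction cancels.

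\NI The main obstacle is organizational rather than conceptual: the Kerr rotation fields are defined only implicitly through the Israel--Pretorius foliation, so making $\delta{}^{(i)}O$ explicit and controlling its $(\theta,\phi)$--dependence near $\theta=0$, where one must again invoke $\lim_{\theta\to 0}P=O(\theta)$, takes some care, as does keeping exact track of the numerical and angular coefficients through the projection onto the null frame. As with Proposition \ref{24}, I would present the computation in full for one representative component, say ${}^{(2)}i_{\theta\theta}$, and indicate that the remaining components follow the identical scheme, all error terms beyond the stated leading order being controlled uniformly in the external region by the decay estimates of Proposition \ref{24} together with the bound on $P$.
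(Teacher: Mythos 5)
Your proposal is correct but proceeds along a genuinely different route from the paper's. The paper's proof is a direct computation in the Kerr geometry: starting from the definitions (\ref{pih}) one writes, e.g., $^{^{(2)}(O)}n = 2g(D_{e_4}{}^{(2)}O,e_4)$, expands the covariant derivative through the Israel--Pretorius frame vectors (\ref{e_a}) and the Kerr Christoffel symbols, and reads off the leading $O(a/r^2)$ behaviour; the appendix carries this out in full for $^{^{(2)}(O)}n$, identifying $\Gamma^r_{t\phi}$ and $\Gamma^r_{\theta r}$ as the dominant sources, and the remaining components are asserted to follow by the same mechanical expansion. You instead linearize around Schwarzschild, writing $g = g_S + \delta g$, ${}^{(i)}O = {}^{(i)}O_S + \delta{}^{(i)}O$, and using that ${}^{(i)}O_S$ is exactly Killing for $g_S$ to reduce the deformation tensor at leading order to $\mathcal{L}_{{}^{(i)}O_S}\delta g + \mathcal{L}_{\delta{}^{(i)}O}g_S$, which is then projected onto the null frame. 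This buys conceptual transparency: it makes the overall $a$-proportionality, the $\sin\theta\cos\theta\cos\phi$ (resp.\ $\sin\phi$) angular structure, and the vanishing of ${}^{(3)}\pi$ manifest from the outset, rather than having them emerge from a long Christoffel expansion. The price is that the tilt $\delta{}^{(i)}O$ of the rotation fields relative to the coordinate rotations must be exhibited explicitly before the precise numerical coefficients such as $2a-3M$ and $a-3M$ can be recovered; the paper's direct route folds that information into the explicit Israel--Pretorius frame coefficients and hence never has to isolate $\delta{}^{(i)}O$ as a separate object. You correctly flag this as the organizational burden of your approach. Both the paper and your proposal treat only one representative component in detail and invoke analogy for the rest, so the level of rigor is comparable.
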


\NI The proof is a straightforward computation, starting from the
definition of the null components of ${^{(O^{(i)})}}\pi$, and
using the results obtained for the connection coefficients of Kerr
spacetime. We
report only one of them in the appendix, see section 5.2.
\begin{cor}\label{omega'}
\NI As far as the $O$ components are concerned, the following inequalities hold in $\mathcal{K}$, with $c$ a suitable constant:
\begin{eqnarray}
|r^3\nabb({^{(O)}}i,{^{(O)}}j,{^{(O)}}m,{^{(O)}}\un{m},{^{(O)}}n,{^{(O)}}\un{n})|&\leq&
c\nn\\
|r^3\dddd_4({^{(O)}}i,{^{(O)}}j,{^{(O)}}m,{^{(O)}}\un{m},{^{(O)}}n,{^{(O)}}\un{n})|&\leq&
c\\
|r^3\dddd_3({^{(O)}}i,{^{(O)}}j,{^{(O)}}m,{^{(O)}}\un{m},{^{(O)}}n,{^{(O)}}\un{n})|&\leq&
c\nn.
\end{eqnarray}
\end{cor}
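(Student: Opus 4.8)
The plan is to differentiate the explicit expressions of Proposition~\ref{omega} and to keep track of the power of $r$ generated by each derivative. Since $^{(3)}O$ is Killing its deformation tensor vanishes identically, so only $^{(1)(O)}\pi$ and $^{(2)(O)}\pi$ need to be treated, and by Proposition~\ref{omega} each of their null components $^{(O)}i$, $^{(O)}j$, $^{(O)}m$, $^{(O)}\un{m}$, $^{(O)}n$, $^{(O)}\un{n}$ has the schematic form $r^{-2}F$, with $F=F(\theta,\phi)$ a smooth expression built from $\sin\theta,\cos\theta,\sin\phi,\cos\phi$ and from $P(\theta,\la)$. The estimates on $P$ established above — in particular $\lim_{\theta\to 0}P=O(\theta)$, which renders the $P\cot\theta$--type factors bounded — show that $F$, together with all of its angular derivatives, is bounded uniformly on the external region $\mathcal K$. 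So the whole statement reduces to checking that each of $\nabb$, $\dddd_4$, $\dddd_3$, applied to a quantity of the form $r^{-2}F$, produces exactly one extra power of $r^{-1}$ without spoiling the boundedness of the angular factor.

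For the angular derivative I would work in the orthonormal frame $e_\theta,e_\phi$ adapted to $S(u,\ub)$. Because the metric induced on $S(u,\ub)$ is $r^2$ times an angular metric up to corrections of relative size $a^2/r^2$, one has $e_\theta,e_\phi=O(r^{-1})\,\partial_{\theta,\phi}$ and, crucially, the connection coefficients of the induced metric in this frame are themselves $O(r^{-1})$ (with only $\cot\theta$--type angular singularities, harmless against the powers of $\sin\theta$ carried by the components of Proposition~\ref{omega}). Writing $\nabb_a\xi_b=e_a(\xi_b)-{}^{(S)}\Gamma^c_{ab}\,\xi_c$ with $\xi=O(r^{-2})$ then gives $\nabb\,{}^{(O)}\pi=O(r^{-3})$, which is the first inequality.

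For $\dddd_4$ and $\dddd_3$ I would use the explicit frame~(\ref{e_a}): in the external region $e_4=(1+O(r^{-1}))\partial_t+(1+O(r^{-1}))\partial_r+O(r^{-2})\partial_\theta+O(r^{-3})\partial_\phi$, and $e_3$ has the same structure with $+\partial_r$ replaced by $-\partial_r$. Since the components of Proposition~\ref{omega} depend on $(r,\theta,\phi)$ only, the $\partial_t$--part annihilates them, the angular parts are of lower order, and $\partial_r(r^{-2}F)=-2r^{-3}F$; hence $e_4$ and $e_3$ already send such a component to $O(r^{-3})$. The projected derivatives $\dddd_4,\dddd_3$ differ from $e_4,e_3$ acting on components only by terms linear in the connection coefficients $\chi,\chib,\zeta,\eta,\omega,\omb$ contracted against the components, and by Proposition~\ref{24} those connection coefficients are $O(r^{-1})$ or better, so those terms are $O(r^{-3})$ as well. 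Collecting the contributions yields the remaining two inequalities.

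The one genuinely delicate point — the place where care is needed rather than cleverness — is the bookkeeping of the angular functions near $\theta=0$: one has to verify that no derivative converts a factor carrying a power of $\sin\theta$ into a $\cot\theta$--type singularity with no compensating zero. This is controlled precisely by the estimates on $P$ and on $\la$ recalled above; beyond it the argument is the routine substitution sketched here, and its uniformity over $\mathcal K$ is automatic because $\mathcal K$ lies in the regime $M/r\ll 1$ where the asymptotic expansions of~(\ref{e_a}), Proposition~\ref{24} and Proposition~\ref{omega} all hold.
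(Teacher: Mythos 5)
Your proof is correct and follows essentially the same line as the paper, which (in the proof of the $L_p$ proposition immediately following the corollary) makes exactly the observations that $\dddd_4$, $\dddd_3$ and $\nabb$ each supply an extra factor $r^{-1}$ when applied to static, angularly smooth quantities of size $r^{-2}$, with the $\theta\to 0$ behaviour controlled by the earlier estimate $P=O(\theta)$. You spell out the frame computation and the Christoffel-symbol contributions in somewhat more detail, but the argument is the same.
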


\NI Moreover we have the following proposition\begin{prop}
The first derivatives of the components of ${^{(O)}}\pi_{\mu\nu}$ satisfy the following $L_p$ estimates on any leave $S(u,\ub)\subset\mathcal{K}$, with $p\in[2,4]$:
\begin{eqnarray}
||r^{3-\frac{2}{p}}\nabb({^{(O)}}i,{^{(O)}}j,{^{(O)}}m,{^{(O)}}\un{m},{^{(O)}}n,{^{(O)}}\un{n})||_{p,S}\!&\leq&\!
c\nn\\
||r^{3-\frac{2}{p}}\dddd_4({^{(O)}}i,{^{(O)}}j,{^{(O)}}m,{^{(O)}}\un{m},{^{(O)}}n,{^{(O)}}\un{n})||_{p,S}\!&\leq&\!
c\\
||r^{3-\frac{2}{p}}\dddd_3({^{(O)}}i,{^{(O)}}j,{^{(O)}}m,{^{(O)}}\un{m},{^{(O)}}n,{^{(O)}}\un{n})||_{p,S}\!&\leq&\!
c\nn \ .
\end{eqnarray}
\end{prop}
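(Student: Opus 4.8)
\NI The plan is to deduce these weighted $L_p$ bounds on the leaves directly from the pointwise (sup) estimates already obtained in Corollary \ref{omega'}. Two elementary facts about the Kerr geometry do the work. First, in the double null foliation of \cite{Is-Pr} every leaf $S(u,\ub)$ coincides with a coordinate sphere $\tilde S(t,r)$ --- this was recorded in Section 2 --- so that the Boyer--Lindquist radius $r$ is constant on $S(u,\ub)$. Second, the induced area satisfies $|S(u,\ub)| = \int_{S(u,\ub)}\sqrt{\Si R^2}\,\sin\theta\,d\theta\,d\phi \le C\,r^2$ with $C$ independent of $(u,\ub)$, which one reads off from the explicit induced metric $\Si\,d\theta^2 + R^2\sin^2\theta\,d\phi^2$ in the external region, where $r$ is large.

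\NI Granting this, let $q$ denote any one of the first derivatives $\nabb(\cdot)$, $\dddd_4(\cdot)$, $\dddd_3(\cdot)$ of the components ${^{(O)}}i,{^{(O)}}j,{^{(O)}}m,{^{(O)}}\un{m},{^{(O)}}n,{^{(O)}}\un{n}$ that appear in the statement. By Corollary \ref{omega'} we have $|q|\le c\,r^{-3}$ throughout $\mathcal{K}$, and in particular pointwise on $S(u,\ub)$. Since $r$ is constant on $S(u,\ub)$, I would simply estimate, for every $p\in[2,4]$,
\[
\|r^{3-\frac2p}\,q\|_{p,S} \;=\; r^{3-\frac2p}\,\|q\|_{p,S} \;\le\; r^{3-\frac2p}\,\Big(\sup_{S(u,\ub)}|q|\Big)\,|S(u,\ub)|^{1/p} \;\le\; r^{3-\frac2p}\cdot c\,r^{-3}\cdot C^{1/p}\,r^{2/p} \;=\; c\,C^{1/p},
\]
which is bounded uniformly for $p\in[2,4]$; this gives the three asserted inequalities with a single new constant.

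\NI The only step that deserves care --- and hence the ``main obstacle'' hidden in this otherwise routine statement --- is the uniformity of the pointwise bound near the rotation axis $\theta=0$, since several of the components in Proposition \ref{omega} and their derivatives carry factors such as $P\cot\theta$. As noted in the remark following Proposition \ref{24}, $P(\theta)=O(\theta)$ as $\theta\to0$, so these factors remain bounded and the integrals above are genuinely dominated by the area $|S(u,\ub)|$. Thus all the analytic content sits in Proposition \ref{24} and Corollary \ref{omega'}; passing from those sup bounds to the weighted $L_p$ bounds on the two-dimensional leaves is only H\"older's inequality together with the area estimate, and that is what I would carry out here. These $L_p$ bounds are exactly the shape in which the deformation-tensor estimates feed into the Poincar\'e and Sobolev inequalities used later in the error estimates, which is why they are stated separately.
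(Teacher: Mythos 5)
Your proposal is correct and follows essentially the same route as the paper. The paper's own proof focuses on explaining how $\dddd_4$, $\dddd_3$ and $\nabb$ each improve the asymptotic decay by a factor of $r^{-1}$ (i.e.\ it re-derives the pointwise bounds already recorded in Corollary~\ref{omega'}) and leaves the passage from those sup bounds to the weighted $L_p$ bounds on the leaves implicit; you simply take Corollary~\ref{omega'} as given and spell out the missing step -- that $r$ is (essentially) constant on each leaf, that $|S(u,\ub)|\sim r^2$, and hence $\|r^{3-2/p}q\|_{p,S}\le r^{3-2/p}\sup_S|q|\,|S|^{1/p}\le cC^{1/p}$ -- together with the observation, made in the remark after Proposition~\ref{24}, that $P=O(\theta)$ keeps the $P\cot\theta$ terms bounded near the axis. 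This matches the paper's argument in content and just makes the $L_\infty\!\to\!L_p$ integration explicit.
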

\begin{proof}
\NI This result and those relative to the derivatives of the null components of the other deformation tensors are obtained
observing that:\\
\NI i) When $\dddd_4$ acts on a function $f$, it improves its asymptotic behavior by a $r^{-1}$ factor.\\
\NI ii) In general, when $\dddd_3$ acts on $f$, it operates substantially
as $\frac{\p}{\p u}$ and it brings a factor  $u^{-1}$, but if $f$
doesn't depend on $t$, but only on $r$, then the derivative with
respect to $e_3$ produces again a factor of the form
$r^{-1}$.\\
\NI iii) The tangential derivative $\nabb$ on $f$,  if $f$ depends on
$\theta$ or $\phi$, gives a factor $r^{-1}$; if  $f$  doesn't
depend on the angular variables, then one gets  a factor
$O(r^{-2}) $. (This follows easily looking at the explicit
expression of the vector fields $e_\theta,e_\phi$ (see
(\ref{e_a}))).
\end{proof}
\NI Recalling the explicit expressions of the null components of
${^{(X)}}\pih_{\mu\nu}$, when $X=T,S,K_0$ given in \cite{Ch-Kl1}, (pg.172-176), we obtain the following estimates for their asymptotic decays.
\begin{prop}\label{t}
\NI Recalling the decays of the connection coefficients, we obtain the following asymptotic behavior for the components of the
deformation tensor of the vector field $T=\frac 1 2 (e_3+e_4)$:
\begin{eqnarray}
&{^{(T)}}i_{ab} &=0 \nonumber\\
&{^{(T)}}j &= 0\nonumber\\
&{^{(T)}}m_{\theta} &= O\biggl(\frac
{1}{r^3}(a^2\sin\theta\cos\theta+MP+\frac P 2 \p_{\theta}P)\biggr)\\
&{^{(T)}}m_{\phi} & =O\biggl(-\frac{6Ma\sin\theta}{r^3}\biggr)\nonumber\\
&{^{(T)}}\un{m}_{\theta} &= -{^{(T)}}m_{\theta}\nonumber\\
&{^{(T)}}\un{m}_\phi &= -{^{(T)}}m_{\phi}\nonumber\\
&{^{(T)}}n &= O\biggl(\frac{2M}{r^2}\biggr)\nonumber\\
&{^{(T)}}\un{n} &=O\biggl(-\frac{2M}{r^2}\biggr) \nonumber.
\end{eqnarray}
\NI Their derivatives satisfy the following bounds:
\begin{eqnarray}
&&||r^{4-\frac 2 p}\nabb{^{(T)}}m ||_{p,S}\leq c\nonumber\\
&&||r^{4-\frac 2 p}\nabb{^{(T)}}\un{m} ||_{p,S}\leq c\nonumber\\
&&||r^{4-\frac 2 p}\nabb{^{(T)}}n ||_{p,S}\leq c\\
&&||r^{4-\frac 2 p}\nabb{^{(T)}}\un{n} ||_{p,S}\leq c\nonumber.
\end{eqnarray}
\begin{eqnarray}
&&||r^{4-\frac 2 p}\dddd_4{^{(T)}}m ||_{p,S}\leq c\nonumber\\
&&||r^{4-\frac 2 p}\dddd_4{^{(T)}}\un{m} ||_{p,S}\leq c\nonumber\\
&&||r^{3-\frac 2 p}\dddd_4{^{(T)}}n ||_{p,S}\leq c\\
&&||r^{3-\frac 2 p}\dddd_4{^{(T)}}\un{n} ||_{p,S}\leq c\nonumber.
\end{eqnarray}
\begin{eqnarray}
&&||r^{4-\frac 2 p}\dddd_3{^{(T)}}m ||_{p,S}\leq c\nonumber\\
&&||r^{4-\frac 2 p}\dddd_3{^{(T)}}\un{m} ||_{p,S}\leq c\nonumber\\
&&||r^{3-\frac 2 p}\dddd_3{^{(T)}}n ||_{p,S}\leq c\\
&&||r^{3-\frac 2 p}\dddd_3{^{(T)}}\un{n} ||_{p,S}\leq c\nonumber,
\end{eqnarray}
\NI with $p\in[2,4]$ for any $S\subset\mathcal{K}$.
\end{prop}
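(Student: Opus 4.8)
The plan is to express each null component of ${}^{(T)}\pi$ through the connection coefficients of the \cite{Is-Pr} null frame and then to substitute the expansions of Proposition~\ref{24}. Writing $T=\frac12(e_3+e_4)$ and using the null decomposition \eqref{pih}, the quantities ${}^{(T)}\pi_{ab},{}^{(T)}\pi_{a3},{}^{(T)}\pi_{a4},{}^{(T)}\pi_{34},{}^{(T)}\pi_{33},{}^{(T)}\pi_{44}$ are fixed universal linear combinations of $\chi,\chib$ (their traces and traceless parts), of $\zeta,\eta,\etab$, and of $\om,\omb$; these are precisely the formulae tabulated for $X=T$ in \cite{Ch-Kl1}, pp.~172--176, which I would quote rather than rederive. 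The first step is then to feed in the Kerr relations recorded in Proposition~\ref{24}: $\chib_{ab}=-\chi_{ab}$, $\etab_a=-\eta_a$ and $\om=-\omb$. The combination controlling ${}^{(T)}\pi_{ab}$ is $\chi_{ab}+\chib_{ab}$, and the ones entering ${}^{(T)}j={}^{(T)}\pih_{34}$ are $\om+\omb$ and $\tr\chi+\tr\chib$; all of these vanish identically in Kerr, so ${}^{(T)}i_{ab}={}^{(T)}\pih_{ab}=0$ and ${}^{(T)}j=0$, which are the first two lines of the proposition.

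For the non-vanishing components I would read off the leading behaviour directly from Proposition~\ref{24}: ${}^{(T)}n$ and ${}^{(T)}\un{n}$ are proportional to $\om$ and $\omb$ respectively, hence $O(M/r^2)$ with opposite signs; ${}^{(T)}m_a$ and ${}^{(T)}\un{m}_a$ are built from $\zeta_a$ and $\eta_a$, so the relation $\etab=-\eta$ forces ${}^{(T)}\un{m}=-{}^{(T)}m$, and inserting $\zeta_\theta,\eta_\theta$ one recovers exactly the combination $\frac1{r^3}(a^2\sin\theta\cos\theta+MP+\frac P2\p_\theta P)$ once the $P\p_\theta P$ piece separating $\zeta_\theta$ from $\eta_\theta$ is retained, while the $\phi$ slot is governed by $\zeta_\phi,\eta_\phi=3Ma\sin\theta/r^3$, giving $O(-6Ma\sin\theta/r^3)$. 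For the derivative estimates I would argue exactly as in the proof of the preceding proposition on the $O$-deformation tensors, using the three elementary facts recorded there: $\dddd_4$ improves the $r$-decay by one power; $\dddd_3$ acts essentially as $\p/\p u$ and normally yields a factor $u^{-1}$, but on a $t$-independent quantity whose $r$-behaviour is a pure power it yields instead another factor $r^{-1}$; and $\nabb$ yields $r^{-1}$ on a quantity depending on the angular variables and $r^{-2}$ otherwise. Since every listed component of ${}^{(T)}\pi$ is $t$-independent with pure-power $r$-behaviour ($r^{-3}$ for the $m$'s, $r^{-2}$ for the $n$'s), all three derivative operators gain one power of $r^{-1}$, so $\nabb,\dddd_3,\dddd_4$ of ${}^{(T)}m,{}^{(T)}\un{m}$ are $O(r^{-4})$ and of ${}^{(T)}n,{}^{(T)}\un{n}$ are $O(r^{-3})$; passing to $L^p$ on a leaf $S(u,\ub)$ costs a factor $r^{2/p}$ from the area element, which reproduces the weights $r^{4-\frac2p}$ for the $m$-bounds and $r^{3-\frac2p}$ for the $n$-bounds.

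I expect the real work, as the text itself flags, to be purely organizational: matching the Christodoulou--Klainerman formulae for ${}^{(T)}\pi$ to the particular Kerr null frame in use, and verifying that the identities $\chib=-\chi$, $\etab=-\eta$, $\om=-\omb$ are indeed what kills ${}^{(T)}i_{ab}$ and ${}^{(T)}j$ and what produces the sign relations ${}^{(T)}\un{m}=-{}^{(T)}m$, ${}^{(T)}\un{n}=-{}^{(T)}n$. The one genuinely delicate point is the sub-leading term $\frac P2\p_\theta P$ in ${}^{(T)}m_\theta$: obtaining it requires carrying the connection-coefficient expansions of Proposition~\ref{24} to the order at which $\zeta_\theta$ and $\eta_\theta$ first differ, so I would be careful not to truncate those expansions prematurely and to keep track of the $\theta$-dependence of $P$, which by the earlier subsection behaves as $O(r^{-1})$ in $r$ and vanishes like $\theta$ on the axis, so that no term in the final formula actually blows up there.
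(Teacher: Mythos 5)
Your approach is the same as the paper's: the paper offers no detailed proof of this proposition beyond the sentence ``Recalling the explicit expressions of the null components of ${}^{(X)}\pih_{\mu\nu}$, when $X=T,S,K_0$ given in \cite{Ch-Kl1}, (pg.~172--176), we obtain the following estimates,'' leaving the substitution of Proposition~\ref{24} and the derivative rules from the preceding proposition implicit. Your identification of the Kerr cancellations $\chib=-\chi$, $\etab=-\eta$, $\om=-\omb$ as the source of ${}^{(T)}i_{ab}=0$, ${}^{(T)}j=0$, and of the sign relations ${}^{(T)}\un{m}=-{}^{(T)}m$, ${}^{(T)}\un{n}=-{}^{(T)}n$, together with the warning not to truncate $\zeta_\theta-\eta_\theta$ before the $P\p_\theta P$ term, captures the substance.

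One slip in the derivative estimates should be fixed. Having correctly recalled rule (iii)---that $\nabb$ gains a factor $r^{-1}$ on an angularly dependent quantity but $r^{-2}$ on one independent of $\theta,\phi$---you then write ``all three derivative operators gain one power of $r^{-1}$'' and conclude that the $n$-bounds are all $r^{3-\frac2p}$. But the proposition's second display assigns the weight $r^{4-\frac2p}$ to $\nabb{}^{(T)}n$ and $\nabb{}^{(T)}\un{n}$, the same as for the $m$'s, not the weaker $r^{3-\frac2p}$; only the $\dddd_3$ and $\dddd_4$ bounds on the $n$-components have weight $r^{3-\frac2p}$. Since ${}^{(T)}n=O(M/r^2)$ carries no angular dependence at leading order, your own rule (iii) actually yields $\nabb{}^{(T)}n=O(r^{-4})$, which is precisely what is needed; you had the right tool but contradicted it in the summary. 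The rest of the bookkeeping (area factor $r^{2/p}$ from $L^p(S)$, $\dddd_4$ always gaining $r^{-1}$, $\dddd_3$ gaining $r^{-1}$ rather than $u^{-1}$ because the ${}^{(T)}\pi$ components are $t$-independent powers of $r$) is right.
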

\begin{prop}\label{S}
\NI The null components of the
deformation tensor of the vector field  ${^{(S)}}\hat{\pi}_{\mu\nu}$ decay asymptotically in the following way:
\begin{eqnarray}
&{^{(S)}} i_{\theta\theta} &= 2M\frac{\log r}{r}+\frac 5 2 \frac{\p_{\theta}P-P\cot\theta-M}{r}\nonumber\\
&{^{(S)}} i_{\phi\phi} &=\frac {2M\log
r}{r}+\frac{3(P\cot\theta-\p_\theta P)-7M}
{2r}\\
&{^{(S)}} j &= 4M\frac{\log r}{r}-\frac{5M}{r}\nonumber\\
&{^{(S)}} m_{\theta} &= \frac u 2 \frac{2(a^2\sin\theta\cos\theta+MP)+P\p_\theta P}{r^3}\nonumber\\
&{^{(S)}} m_{\phi} &= -6u\frac{Ma\sin\theta}{r^3}\nonumber
\end{eqnarray}
\begin{eqnarray}
&{^{(S)}} \un{m}_{\theta} &= -\frac {\un{u}} {2} \frac{2(a^2\sin\theta\cos\theta+MP)+P\p_\theta P}{r^3}\nonumber\\
&=& -\frac{2(a^2\sin\theta\cos\theta+MP)+P\p_\theta P}{r^2}\nonumber\\
&{^{(S)}} \un{m}_{\phi} &= -\frac {\un{u}} {2} \frac{2(a^2\sin\theta\cos\theta+MP)+P\p_\theta P}{r^3}\\
&=& -\frac{2(a^2\sin\theta\cos\theta+MP)+P\p_\theta P}{r^2}\nonumber\\
&{^{(S)}} n &= 2\frac{M u}{r^2}\nonumber\\
&{^{(S)}} \un{n} &= -2\frac{M\un{u}}{r^2}=-4\frac M r\nonumber,
\end{eqnarray}
\NI where in the last equality we used $\ \frac{\tau_+}{r^2} =O(r^{-1})$ .
Moreover, for their first derivatives, the following $L_p$
estimates hold for any $p\in[2,4]$ and for any
$S\subset\mathcal{K}$:
\begin{eqnarray}
&&||r^{3-\frac 2 p}\nabb{^{(S)}}i ||_{p,S}\leq c\nonumber\\
&&||r^{3-\frac 2 p}\nabb{^{(S)}}j ||_{p,S}\leq c\nonumber\\
&&||\frac{r^{4-\frac 2 p}}{\tau_-}\nabla{^{(S)}}m ||_{p,S}\leq c\nonumber\\
&&||r^{3-\frac 2 p}\nabb{^{(S)}}\un{m} ||_{p,S}\leq c\nonumber\\
&&||r^{4-\frac 2 p}\frac{1}{\tau_-}\nabb{^{(S)}}n ||_{p,S}\leq c\\
&&||r^{3-\frac 2 p}\nabb{^{(S)}}\un{n} ||_{p,S}\leq c\nonumber.
\end{eqnarray}
\begin{eqnarray}
&&||r^{2-\frac 2 p}\dddd_4{^{(S)}}i ||_{p,S}\leq c\nonumber\\
&&||r^{2-\frac 2 p}\dddd_4{^{(S)}}j ||_{p,S}\leq c\nonumber\\
&&||r^{4-\frac 2 p}\frac{1}{\tau_-}\dddd_4{^{(S)}}m ||_{p,S}\leq c\nonumber\\
&&||r^{3-\frac 2 p}\dddd_4{^{(S)}}\un{m} ||_{p,S}\leq c\nonumber\\
&&||r^{3-\frac 2 p}\frac{1}{\tau_-}\dddd_4{^{(S)}}n ||_{p,S}\leq c\\
&&||r^{2-\frac 2 p}\dddd_4{^{(S)}}\un{n} ||_{p,S}\leq c\nonumber.
\end{eqnarray}
\begin{eqnarray}
&&||r^{2-\frac 2 p}\dddd_3{^{(S)}}i ||_{p,S}\leq c\nonumber\\
&&||r^{2-\frac 2 p}\dddd_3{^{(S)}}j ||_{p,S}\leq c\nonumber\\
&&||r^{3-\frac 2 p}\dddd_3{^{(S)}}m ||_{p,S}\leq c\nonumber\\
&&||r^{3-\frac 2 p}\dddd_3{^{(S)}}\un{m} ||_{p,S}\leq c\nonumber\\
&&||r^{2-\frac 2 p}\dddd_3{^{(S)}}n ||_{p,S}\leq c\\
&&||r^{2-\frac 2 p}\dddd_3{^{(S)}}\un{n} ||_{p,S}\leq c\nonumber.
\end{eqnarray}
\end{prop}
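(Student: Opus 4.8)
The plan is to derive these decays by specializing to the Kerr data the universal expressions for the null components of ${}^{(S)}\hat\pi$ recorded in \cite{Ch-Kl1}, pp.~172--176. Writing $S=\frac12(u\,e_3+\ub\,e_4)$, those expressions give each of ${}^{(S)}i$, ${}^{(S)}j$, ${}^{(S)}m$, ${}^{(S)}\un{m}$, ${}^{(S)}n$, ${}^{(S)}\un{n}$ as a linear combination of the connection coefficients $\chi$, $\chib$, $\om$, $\omb$, $\zeta$, $\eta$, $\etab$ of the foliation, weighted by the coordinate functions $u$, $\ub$ and by the deviations of $\tr\chi$, $\tr\chib$ from their Minkowskian values $\pm 2/r$. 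These formulas hold for \emph{any} double null foliation; the only input specific to our situation is the substitution of the Kerr connection coefficients for the Minkowskian ones, so all the algebra is literally that of \cite{Ch-Kl1}.

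First I would insert into these formulas the decays of Proposition~\ref{24}, together with the two facts about $P$ established above, namely $P=O(r^{-1})$ for fixed $\theta$ and $P=C'''\theta$ as $\theta\to 0$ for fixed $r$, and the elementary relations $u=t-\ro$, $\ub=t+\ro$, $\ro=\ro(r,\theta)\sim r$, so that on $\mathcal{K}$ one has $\ub\, r^{-2}=O(r^{-1})$ and likewise $\tttau\, r^{-2}=O(r^{-1})$. Carrying the $O(r^{-2})$ corrections to $\tr\chi$, $\om$, $\omb$ through the $\ub$-weighted terms produces the logarithmic factors in ${}^{(S)}i$ and ${}^{(S)}j$: these arise precisely because $S$ carries a weight that grows like $r$ while the relevant corrections decay like $r^{-2}$. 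The leading $u$- and $\ub$-weighted terms in ${}^{(S)}m$, ${}^{(S)}\un{m}$, ${}^{(S)}n$, ${}^{(S)}\un{n}$ are then read off directly from $\zeta,\eta,\etab\sim r^{-3}$ and $\om,\omb\sim r^{-2}$, using $\ub\sim 2r$ to turn an $\ub r^{-3}$ term into the stated $r^{-2}$ behaviour.

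For the derivative bounds I would argue exactly as in the proofs of the corresponding $L_p$ estimates for ${}^{(O)}\pi$ (Corollary~\ref{omega'} and the proposition following it) and for ${}^{(T)}\pi$ (Proposition~\ref{t}), using the three rules: (i) $\dddd_4$ improves the $r$-decay of a function by a factor $r^{-1}$; (ii) $\dddd_3$ behaves as $\p/\p u$ and so improves the $u$-decay by $u^{-1}$, except that on a function of $r$ alone it again gains $r^{-1}$; (iii) $\nabb$ gains $r^{-1}$ on a function of $\theta$ or $\phi$ and $r^{-2}$ on a function of $r$ alone. Applying these to the leading terms obtained above, and invoking the Kerr derivative estimates underlying Proposition~\ref{24}, one gets the asserted bounds for $p\in[2,4]$; the apparent loss of one power of $r$ in the estimates for $\dddd_4{}^{(S)}i$, $\dddd_4{}^{(S)}j$, $\dddd_3{}^{(S)}i$, $\dddd_3{}^{(S)}j$ relative to the $\nabb$-estimates is exactly the $\log r$ in the leading behaviour of those two components.

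The genuinely delicate points, and the only places requiring care beyond bookkeeping, are: (a) the action of $e_3$ and $e_4$ on the coordinate weights — since $e_4 u=0$ and $e_3\ub=0$ in this foliation while $e_3 u$ and $e_4\ub$ are nonzero of size $O(1)$, the heuristic ``$\dddd_3$ gains $u^{-1}$'' must be applied only to the intrinsically $u$- and $r$-dependent pieces of each component, the $\ub$-weight being treated as a quantity of size $\sim r$; (b) uniformity as $\theta\to 0$, where one must check that not only $P\cot\theta$ but also $\p_\theta P$ and $P\p_\theta P$ remain bounded, which follows from $P=C'''\theta$; and (c) the logarithmic factors, which never spoil the stated polynomial weights because in every subsequent error estimate a $\log r$ is dominated by an arbitrarily small positive power of $r$. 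I expect (a) to be the main potential source of sign and weight errors, and hence the step that deserves the most attention.
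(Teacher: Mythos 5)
Your overall strategy — specialize the general expressions for the null components of ${}^{(S)}\hat\pi$ from \cite{Ch-Kl1}, substitute the Kerr connection coefficients of Proposition~\ref{24}, and invoke the three differentiation rules for the $L_p$ bounds — is exactly the paper's (implicit) route, and points (b) and (c) that you flag are both genuine issues to track. However, your explanation of where the logarithmic factors come from is wrong, and the way you propose to set up the computation would in fact miss them.

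You write that the $\log r$ terms in ${}^{(S)}i$ and ${}^{(S)}j$ ``arise precisely because $S$ carries a weight that grows like $r$ while the relevant corrections decay like $r^{-2}$.'' But that mechanism gives $\ub\cdot O(r^{-2})=O(r^{-1})$, with no logarithm; together with your explicit simplifications $\ro\sim r$ and $\ub\sim 2r$ this cannot produce the stated $2M\tfrac{\log r}{r}$ and $4M\tfrac{\log r}{r}$ leading terms at all — they would simply be absent from your output. The actual source of the logarithms is the Israel–Pretorius radial parameter $\ro(r,\theta)$ itself: it is a tortoise-type coordinate and differs from the Boyer–Lindquist $r$ by $\ro-r\sim 2M\log r+O(1)$. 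Since $u=t-\ro$, $\ub=t+\ro$, one has $\ub-u=2\ro$, and in the combinations that enter ${}^{(S)}i$, ${}^{(S)}j$ (schematically $2\ro\,\tr\chi-4$, using $\tr\chib=-\tr\chi$ from Proposition~\ref{24}) the $O(1)$ pieces cancel and the residual carries the $\tfrac{M\log r}{r}$ contribution precisely because of the $\log$-correction hidden inside $\ro$. Your plan has to retain $\ro-r$, not discard it; the replacements $\ro\sim r$, $\ub\sim 2r$ are too coarse at exactly the point where the proposition is most delicate. Relatedly, your explanation of the one-power-of-$r$ loss in $\dddd_{3}{}^{(S)}i,\,\dddd_{4}{}^{(S)}i$ relative to the $\nabb$ estimate as ``exactly the $\log r$'' cannot be right either, since a $\log r$ costs only an arbitrarily small power; the loss comes from your own point (a), namely that $e_{4}\ub$ and $e_{3}u$ are $O(1)$, so the null derivatives do not improve the $u$- and $\ub$-weights carried by $S$, whereas $\nabb$ leaves them untouched because $u,\ub$ are constant on each $S(u,\ub)$.
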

\NI Finally we  compute the components of the $\bar{K}$ deformation tensor:
\begin{prop}
For any $S\subset\mathcal{K}$, the following estimates hold
\begin{eqnarray}
&{^{(\bar{K})}}i_{\theta\theta}&= \frac{4Mt\log r}{r}\nonumber\\
&{^{(\bar{K})}}i_{\phi\phi} &=\frac{4Mt\log r}{r}\nonumber\\
&{^{(\bar{K})}}j &=8Mt\frac{\log r}{r}\nonumber\\
&{^{(\bar{K})}}m_{\theta} &=\tau_-^2\frac 1 2 \frac{2(a^2\sin\theta\cos\theta+MP)+P\p_\theta P}{r^3}\nonumber\\
&{^{(\bar{K})}}m_{\phi} &=\tau_-^2(-6\frac{Ma\sin\theta}{r^3})\nonumber\\
&{^{(\bar{K})}}\un{m}_{\theta} &=\tau_+^2(-\frac 1 2 \frac{2(a^2\sin\theta\cos\theta+MP)+P\p_\theta P}{r^3})\nonumber\\
&=& -2\frac{2(a^2\sin\theta\cos\theta+MP)+P\p_\theta P}{r}\\
&{^{(\bar{K})}}\un{m}_{\phi} &=\tau_+^2(-\frac 1 2 \frac{2(a^2\sin\theta\cos\theta+MP)+P\p_\theta P}{r^3})\nonumber\\
&=& -2\frac{2(a^2\sin\theta\cos\theta+MP)+P\p\theta P}{r}\nonumber\\
&{^{(\bar{K})}}n &= 2\frac{M\tau_-^2}{r^2}\nonumber\\
&{^{(\bar{K})}}\un{n} &= -2\frac{M\tau_+^2}{r^2} = -8M\nonumber.
\end{eqnarray}
\NI Moreover, for every $p\in[2,4]$ the following inequalities hold:
\begin{eqnarray}
&&||\frac{r^{3-\frac 2 p}}{t}\nabb{^{(\bar{K})}}i ||_{p,S}\leq c\nonumber\\
&&||\frac{r^{3-\frac 2 p}}{t}\nabb{^{(\bar{K})}}j ||_{p,S}\leq c\nonumber\\
&&||r^{4-\frac 2 p}\frac{1}{\tau_-^2}\nabb{^{(\bar{K})}}m ||_{p,S}\leq c\nonumber\\
&&||r^{2-\frac 2 p}\nabb{^{(\bar{K})}}\un{m} ||_{p,S}\leq c\nonumber\\
&&||r^{4-\frac 2 p}\frac{1}{\tau_-^2}\nabb{^{(\bar{K})}}n ||_{p,S}\leq c\\
&&||r^{4-\frac 2 p}\nabb{^{(\bar{K})}}\un{n} ||_{p,S}\leq
c\nonumber.
\end{eqnarray}
\end{prop}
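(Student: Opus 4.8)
\NI \emph{Sketch of proof.} The plan is to reduce everything to quantities that are already computed. First, observe the algebraic identity $\bar{K}=T+K_{0}$, which is immediate from the definitions (\ref{K_0}) together with $\ttau^{2}=1+u^{2}$, $\tttau^{2}=1+\ub^{2}$; hence ${^{(\bar{K})}}\pi_{\mu\nu}={^{(T)}}\pi_{\mu\nu}+{^{(K_{0})}}\pi_{\mu\nu}$, and the same splitting holds for each null component of the traceless deformation tensor. Since the ${^{(T)}}\pi$ part is exactly Proposition~\ref{t}, the whole problem reduces to the null components of ${^{(K_{0})}}\pi$ in the \cite{Is-Pr} frame. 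One may equivalently compute ${^{(\bar{K})}}\pi_{\mu\nu}=D_{\mu}\bar{K}_{\nu}+D_{\nu}\bar{K}_{\mu}$ directly; I would keep this as the fall-back should the $T+K_{0}$ bookkeeping become unwieldy.

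\NI To evaluate ${^{(K_{0})}}\pi_{\mu\nu}=D_{\mu}(K_{0})_{\nu}+D_{\nu}(K_{0})_{\mu}$ with $K_{0}=\frac{1}{2}(u^{2}e_{3}+\ub^{2}e_{4})$, I would expand $D_{e_{a}}K_{0}$, $D_{e_{3}}K_{0}$, $D_{e_{4}}K_{0}$ using: (i) the Kerr connection coefficients of the \cite{Is-Pr} frame given in Proposition~\ref{24}; (ii) $e_{a}u=e_{a}\ub=0$, as $e_{a}$ is tangent to $S(u,\ub)$; and (iii) the scalars $e_{3}u$, $e_{4}u$, $e_{3}\ub$, $e_{4}\ub$, obtained from (\ref{e_a}) and $u=t-\ro$, $\ub=t+\ro$ (two of them vanishing, as $C(u)$ and $\Cb(\ub)$ are null). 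This expresses each null component of ${^{(K_{0})}}\pi$ as an explicit combination of $u^{2}$, $\ub^{2}$, the connection coefficients of Proposition~\ref{24}, and $e_{3,4}(u^{2})$, $e_{3,4}(\ub^{2})$. The Kerr-specific relations collected in Proposition~\ref{24} --- $\chib=-\chi$, $\etab=-\eta$, $\omb=-\om$, which encode the staticity and the reflection symmetry of Kerr --- then produce the cancellations that collapse these combinations into the compact expressions of the statement: the borderline components ${^{(\bar{K})}}\un{m}=O(r^{-1})$ and ${^{(\bar{K})}}\un{n}=O(1)$ come out of $\tttau^{2}/r^{2}=O(1)$, and the anomalous $\frac{\log r}{r}$ terms in ${^{(\bar{K})}}i$, ${^{(\bar{K})}}j$ are inherited from the $\frac{\log r}{r}$ corrections of the lapse and of $\ro(r,\theta)$, just as the $\frac{2M\log r}{r}$ already appearing in ${^{(S)}}i$, ${^{(S)}}j$ in Proposition~\ref{S}.

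\NI For the $L^{p}$ bounds on the first derivatives I would use the differentiation rules already employed for the other deformation tensors: acting on a scalar, $\dddd_4$ improves the $r$-decay by a factor $r^{-1}$; $\dddd_3$ behaves essentially as $\p/\p u$ and contributes a factor $u^{-1}$, except on quantities depending on $r$ alone, on which it again contributes $r^{-1}$; and $\nabb$ contributes $r^{-1}$ on functions of $(\theta,\phi)$ and $r^{-2}$ on functions of $r$ alone. Combining these with the elementary frame-derivatives of the scalar weights $u^{2}$, $\ub^{2}$, $t$, $\log r$ gives the pointwise decay of $\nabb({^{(\bar{K})}}\pi)$, $\dddd_4({^{(\bar{K})}}\pi)$, $\dddd_3({^{(\bar{K})}}\pi)$; integrating the $p$-th power over $S(u,\ub)$, whose area is $\simeq r^{2}$, then supplies the $r^{2/p}$ factor that compensates the $r^{-2/p}$ weight in the stated norms. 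Throughout, the bound $\lim_{\theta\to0}P(\theta)=O(\theta)$ established above guarantees that the $P\cot\theta$ terms stay bounded on the axis.

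\NI The only genuine difficulty is bookkeeping. Because $u^{2}$, $\ub^{2}$, $t$ are unbounded, in the components ${^{(\bar{K})}}\un{m}$, ${^{(\bar{K})}}\un{n}$, ${^{(\bar{K})}}i$, ${^{(\bar{K})}}j$ the growth in $t$ or $\tttau^{2}$ very nearly cancels the $r$-decay, so the correct rate --- and in particular whether a $\log r$ survives --- only becomes visible once the Kerr cancellations have been carried through exactly; pinning down the powers of $\ttau$ and $\tttau$ attached to $m$, $\un{m}$, $n$, $\un{n}$ and keeping them consistent with the $L^{p}$-weights is the delicate point, precisely as in Chapter~4 of \cite{Kl-Ni1}. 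As for the analogous propositions, only one representative component need be written out in detail (in the appendix), the others being identical in structure.
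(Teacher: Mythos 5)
Your proposal is correct and follows essentially the same route as the paper. The paper's (implicit) proof is precisely to recall the explicit formulas for the null components of ${^{(X)}}\pih$ with $X=T,S,K_0$ from \cite{Ch-Kl1} (pp.\ 172--176), substitute the Kerr connection coefficients of Proposition~\ref{24}, and invoke the $T+K_0$ splitting that you make explicit; the $L^p$ bounds are then obtained by the same three differentiation rules ($\dddd_4$, $\dddd_3$, $\nabb$) that the paper states just before Proposition~\ref{t} and that you reproduce.
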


\section {Estimate of the Error Term}
\NI This section is crucial in the proof of the theorem (\ref{final version}), as from the boundedness of the modified energy norms defined in (\ref{norms1}),(\ref{norms2}), we can know the asymptotic behavior of a prescribed Weyl tensor $\tW$. The calculations used to prove these norms are bounded are quite complicared, since many quantities are involved. Hereafter we  report the boundedness proof of some of them into details and we state the propositions about the remaining terms .
  
\NI In order to prove the boundedness of the $\cal{Q}$ norms, we need to control a quantity  we call the error term $\mathcal{E}$ , defined in the following
way:
\begin{eqnarray}
\mathcal{E}(u,\un{u})&\equiv & (\QQ+\QQb)(u,\ub)-\QQ_{\Si_0\cap
V(u,\ub)}.
\end{eqnarray}
\begin{remark}
\NI From now on the Weyl tensor field $\tW$ from which we have defined the $\QQ$ norms will be the tensor field $\lie_{T_0}W$. As $T_0$ is a Killing vector field, then $\tW$ is a Weyl field and it satisfies the massless spin 2 equations.
\end{remark}
\NI First we define the 1-form
$$
\tilde{P}_\mu =\ttau^{5+\e}P_\mu\	.
$$
\NI Recalling the definition of the 1-form $P$ related to the
Bel-Robinson tensor of a Weyl tensor $W$ (see (\ref{P})) and by Stokes
theorem it is easy to prove that:
\begin{eqnarray}
Div\tilde{P} &=& Div\tilde {Q}_{\b\ga\de}X^\b Y^\ga Z^\de+Div(\tau_-^{5+\epsilon})\tilde{Q}_{\b\ga\de}X^\b Y^\ga Z^\de\\
&+& \frac 1 2 \tilde{Q}_{\a\b\ga\de}\bigl({^{(X)}}\pi^{\a\b}Y^\ga Z^\de+{^{(Y)}}\pi^{\a\ga}X^\b Z^\de
+{^{(Z)}}\pi^{\a\de}X^\b Y^\ga
\bigr).\nonumber
\end{eqnarray}
\NI The term $Div(\tau_-^{5+\epsilon}\tilde{Q})_{\b\ga\de}X^\b Y^\ga Z^\de$ is not a
problem for the estimate of the error because it is negative,
then by Stokes  theorem, it follows that:

\begin{eqnarray}
&&\int_{\un{C}(\ub)\cap V(u,\ub)}\ttau^{5+\e}\tilde{Q}(W)(X,Y,Z,e_3)+\int_{C(u)\cap V(u,\ub)}
\ttau^{5+\e}\tilde{Q}(W)(X,Y,Z,e_4)\nonumber\\
&&-\int_{\Si_0\cap V(u,\ub)}\ttau^{5+\e}\tilde{Q}(W)(X,Y,Z,T)\nonumber\\
&&=\int_{V(u,\ub)}\ttau^{5+\e}[Div \tilde{Q}(W)_{\b\ga\de}X^\b Y^\ga Z^\de +\frac 1 2
\tilde{Q}^{\a\b\ga\de}(W)({^{(X)}}\pi_{\a\b}Y_\ga Z_\de\nonumber\\
&&\ \ +{^{(Y)}}\pi_{\a\b}Y_\ga Z_\de+{^{(Z)}}\pi_{\a\b}X_\ga Y_\de)]\nn\\
&&\ \ -(5+\e)\int_{V(u,\ub)}(\frac R {\sqrt{\De}})|\ttau|^{4+\e}\tilde{Q}(W)(X, Y,Z,e_4) .
\end{eqnarray}
\NI The last term, being negative, can be ignored and we have only to consider $\ttau^{5+\e}Div P$.\\

\NI Therefore we can decompose the error term into two parts, one of
it related to  the $\QQ_1$ norms and the other one associated to the
$\QQ_2$ norms:
\begin{eqnarray*}
\mathcal{E}(u,\un{u}) &\equiv &\EEb(u,\ub)+\EEbb(u,\ub)
\end{eqnarray*}
\NI where
\begin{eqnarray}\label{EEb}
\EEb(u,\ub)&=&\int_{V(u,\ub)} \ttau^{5+\e}Div \tilde{Q}(\lie_O
\tW)_{\b\ga\de}(\bar{K}^\b\bar{K}^\ga T^\de)\nonumber\\
& &+\frac 3 2 \int_{V(u,\ub)} \ttau^{5+\e}\tilde{Q}(\lie_T
\tW)_{\a\b\ga\de}({^{(\bar{K})}}\pi^{\a\b}\bar{K}^\ga\bar{K}^\de)\\
& &+ \int_{V(u,\ub)} \ttau^{5+\e}\tilde{Q}(\lie_O
\tW)_{\a\b\ga\de}({^{(\bar{K})}}\pi^{\a\b}\bar{K}^\ga T^\de)\nn,
\end{eqnarray}
\begin{eqnarray}
\EEbb(u,\ub)&=&\int_{V(u,\ub)}  \ttau^{5+\e}Div \tilde{Q}(\lie_O\lie_T
\tW)_{\b\ga\de}(\bar{K}^\b\bar{K}^\ga\bar{K}^\de)\nonumber\\
& &+\int_{V(u,\ub)}  \ttau^{5+\e}Div \tilde{Q}(\lie_O^2
\tW)_{\b\ga\de}(\bar{K}^\b\bar{K}^\ga T^\de)\nonumber\\
& &+\int_{V(u,\ub)}  \ttau^{5+\e}Div \tilde{Q}(\lie_S\lie_T
\tW)_{\b\ga\de}(\bar{K}^\b\bar{K}^\ga\bar{K}^\de)\nonumber\\
& &+\frac 3 2 \int_{V(u,\ub)} \ttau^{5+\e}\tilde{Q}(\lie_O\lie_T
\tW)_{\a\b\ga\de}({^{(\bar{K})}}\pi^{\a\b}\bar{K}^\ga\bar{K}^\de)\nonumber\\
& &+\frac 3 2 \int_{V(u,\ub)} \ttau^{5+\e}\tilde{Q}(\lie_S\lie_T
\tW)_{\a\b\ga\de}({^{(\bar{K})}}\pi^{\a\b}\bar{K}^\ga\bar{K}^\de)\\
& &+ \int_{V(u,\ub)} \ttau^{5+\e}\tilde{Q}(\lie_O^2
\tW)_{\a\b\ga\de}({^{(\bar{K})}}\pi^{\a\b}\bar{K}^\ga T^\de)\nonumber.
\end{eqnarray}
\NI This decomposition separates  the terms depending only on the first
derivatives of $\tW$, which appear in $\EEb$ from the terms that
involve second derivatives, included in $\EEbb$. We shall prove
there exists a constant $c_0$ such that
\begin{equation}\label{error}
\mathcal{E}(u,\ub)\leq \frac {c_0} {r_0} \QQ_{\cal{K}},
\end{equation}
\NI which implies
\begin{equation}\label{smallness}
\QQ_{\cal{K}}\leq \frac 1 {1-c_0/r_0}\QQ_{\Si_0\cap\cal{K}}\ .
\end{equation}
\NI Therefore, for $r\geq r_0$ sufficiently great, the proof
of the theorem follows. This last observation means that we have to
consider only the outer region with respect to the domain of dependance of $B(0,r)$, the ball of radius $r$, for a suitable $r$, contained in $\Si_0$.

\subsection{Preliminary estimates for the error }
\NI From now on, we shall follow the procedure used in chapter 6 of \cite{Kl-Ni1}\\
Let us consider:
$$
J(X,W)_{\b\ga\de} \equiv D^\a(\lie_X W)_{\a\b\ga\de}\	,
$$

\NI we define the null components of the Weyl current in the following way
\begin{eqnarray}
\Lambda(J) &=& \frac 1 4 J_{434},\qquad \un{\Lambda}(J) = \frac 1
4 J_{343},\qquad \Xi(J)_a =\frac 1 2 J_{44a}\nn\\
\un{\Xi}(J)_a &=& \frac 1 2 J_{33a},\qquad I(J)_a = \frac 1 2
J_{34a},\qquad \un{I}(J)_a = \frac 1 2 J_{43a}\nn\\
K(J) &=& \frac 1 4  \e^{ab}J_{4ab},\qquad \un{K}(J) = \frac 1 4
\e^{ab}J_{3ab}\\
\Th(J)_{ab} &=& J_{a4b}+J_{b4a}-(\de^{cd}J_{c4d})\de_{ab},\quad
\un{\Th}(J)_{ab} = J_{a3b}+J_{b3a}-(\de^{cd}J_{c3d})\de_{ab}\		.\nn
\end{eqnarray}
\NI Now, let us decompose $\div
Q(\lie_X \tW)$ along the null frame:
\begin{eqnarray}\label{D(T,W)1}
D(X,\tW)(\bar{K},\bar{K},\bar{K}) &=& \frac 1 8 \tau_+^6
D(X,\tW)_{444}+\frac 3 8 \tau_+^4\tau_-^2 D(X,\tW)_{344}\nn\\
& &\frac 3 8 \tau_+^2\tau_-^4 D(X,\tW)_{334}+\frac 1 8 \tau_-^6
D(X,\tW)_{333},\ \ \ \ \ 
\end{eqnarray}
\NI where
\begin{eqnarray}\label{D(T,W)2}
D(X,\tW)_{444} &=& 4\a(\lie_X \tW)\c\Th(T,\tW)-8\b(\lie_X
\tW)\c\Xi(X,\tW)\nn\\
D(X,\tW)_{443} &=& 8\ro(\lie_X \tW)\c\Lambda(X,\tW)+8\si(\lie_X
\tW)K(\lie_X \tW)\nn\\
& &+8\b(\lie_X \tW)\c I(T,\tW)\nn\\
D(X,\tW)_{334} &=& 8\ro(\lie_X \tW)\un{\Lambda}(X,\tW)-8\si(\lie_X
W)K(X,\tW)\\
& &-8\bb(\lie_X \tW)\c\un{I}(X,\tW)\nn\\
D(X,W)_{333} &=& 4\aa(\lie_X \tW)\c\un{\Theta}(X,\tW)+8\bb(\lie_X
\tW)\c\un{\Xi}(X,\tW)\nn
\end{eqnarray}
\NI Now,  we
decompose the null current $J(X,\tW)$ into three parts, see [12] , (6.1.6.) pag. 245 and sgg. for more details about this decomposition.
$$
J(X,\tW)=J^0(X,\tW)+J^1(X,\tW)+J^2(X,\tW)+J^3(X,\tW)
$$
\NI where $J^1(X,\tW)=h$ has been discussed in the introduction and
\begin{eqnarray}
J^1(X,\tW)_{\b\ga\de} &=& \frac 1 2 {^{(X)}}\pih^{\mu\nu}D_\nu
\tW_{\mu\b\ga\de}
\end{eqnarray}
\NI is such that it depends on ${^{(X)}}\pih$ and on the derivatives
of null components of Weyl tensor field up to the first order.
Then,
\begin{eqnarray}
J^2(X,\tW)_{\b\ga\de} &=& \frac 1 2
{^{(X)}}p_\la{\tW^\la}_{\b\ga\de}\nn\\
J^3(X,\tW)_{\b\ga\de} &=& \frac 1 2
({^{(X)}}q_{\a\b\la}{\tW^{\a\la}}_{\ga\de}+{^{(X)}}q_{\a\ga\la}{{\tW^\a\la}_{\b\ga}})
\end{eqnarray}
\NI where
\begin{eqnarray*}
&{^{(X)}p_\lambda}& =D^\alpha{^{(X)}}\hat{\pi}_{\alpha\lambda}\\
&{^{(X)}}q_{\alpha\beta\gamma}&=
D^\beta{^{(X)}\hat{\pi}}_{\alpha\gamma}-D^\gamma{^{(X)}\hat{\pi}}_{\alpha\beta}-\frac
 1 3({^{(X)}p_\gamma}g_{\alpha\beta}-{^{(X)}p_\beta}g_{\alpha\gamma}).
\end{eqnarray*}
\NI Let's write the explicit expressions for ${^{(X)}}p_\mu$;
they are the following
\begin{eqnarray}
{^{(X)}}p_3 &=& \divv{^{(X)}}\un{m}-\frac 1
2(\dddd_4{^{(X)}}\un{n}+\dddd_3{^{(X)}}j)+(2\un{\eta}+\eta-\zeta)\cdot
{^{(X)}}\un{m}\\
&-&\hat{\chi}\cdot{^{(X)}}i-\frac 1 2 \tr\chi(
\tr{^{(X)}}i+{^{(X)}}j)-\frac 1 2 \tr\un{\chi}{^{(X)}}n-(\dd_3\log
\Omega){^{(X)}}n\nonumber,
\end{eqnarray}
\begin{eqnarray}
{^{(X)}}p_4 &=& \divv{^{(X)}}m-\frac 1
2(\dddd_3{^{(X)}}n+\dddd_4{^{(X)}}j)+(\un{\eta}+2\eta+\zeta)\cdot
{^{(X)}}m\\
&-&\hat{\un{\chi}}\cdot{^{(X)}}i-\frac 1 2 \tr\un{\chi}(
\tr{^{(X)}}i+{^{(X)}}j)-\frac 1 2 \tr\chi{^{(X)}}\un{n}-(\dd_4\log
\Omega){^{(X)}}\un{n}\nonumber,
\end{eqnarray}
\begin{eqnarray}
{^{(X)}}\pp &=& \nabb_c{^{(X)}}i-\frac 1
2(\dddd_4{^{(X)}}\un{m}+\dddd_3{^{(X)}}m)+\frac 1 2
(\eta+\un{\eta}){^{(X)}}j\nonumber\\
&+&(\eta+\un{\eta})\cdot{^{(X)}}i-\frac 1 2
\hat{\chi}\cdot{^{(X)}}m-\frac 1 2
\hat{\un{\chi}}\cdot{^{(X)}}m-\frac 3 4
\tr\chi{^{(X)}}\un{m}-\frac
3 4 \tr\un{\chi}{^{(X)}}m\nonumber\\
&-&\frac 1 2 (\dd_4\log \Omega){^{(X)}}\un{m}
-\frac{1}{2}(\dd_3\log\Omega){^{(X)}}m.
\end{eqnarray}
\NI Once introduced this decomposition, we can decompose all null components of Weyl current in three parts. For their explicit expression, see \cite{Kl-Ni2} pg. 246-249.\\
In order to estimate the error term $\EEb$, we need to know their decays when
$X=T,O$.
\NI Since we know the asymptotic
behavior of connection coefficients of Kerr spacetime and that one
of ${^{(i)}} O $ deformation tensor components, we are
able to show the following asymptotic behaviors hold true:
\begin{prop}\label{normesuS}
\NI Based on proposition (\ref{omega}) and on corollary
(\ref{omega'}), the following estimates hold for any $S\subset
\mathcal{K}$ with $p\in [2,4]$:
\begin{equation}
||r^{3-\frac 2 p}({^{(O)}}p_3,{^{(O)}}p_4,{^{(O)}}\pp_a)||_{p,S}
\leq c.
\end{equation}
\end{prop}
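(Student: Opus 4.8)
The plan is to substitute into the explicit formulas for ${^{(X)}}p_3$, ${^{(X)}}p_4$ and ${^{(X)}}\pp$ (written out just above) the value $X={^{(i)}}O$, and to read off the decay of each summand from the two ingredients already available: the decay of the Kerr connection coefficients (Proposition~\ref{24}) and the decay of the null components of the rotation deformation tensors together with their first derivatives (Proposition~\ref{omega} and Corollary~\ref{omega'}). The observation that turns such pointwise decay into the claimed $L^p$ bound is elementary: on any leaf $S(u,\ub)\subset\mathcal{K}$ one has $|S(u,\ub)|\leq c\,r^2$ and $r$ is comparable to a fixed constant (to $\rho\sim\frac12(\ub-u)$) on that leaf, so a bound $|f|\leq c\,r^{-3}$ gives $\|r^{3-\frac2p}f\|_{p,S}\leq r^{3-\frac2p}\,(\sup_S|f|)\,|S|^{\frac1p}\leq c\,r^{3-\frac2p}\cdot r^{-3}\cdot r^{\frac2p}=c$. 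Thus it suffices to show that each of ${^{(O)}}p_3$, ${^{(O)}}p_4$, ${^{(O)}}\pp_a$ is $O(r^{-3})$ on $\mathcal{K}$.

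First I would separate, in each of the three expressions, the terms carrying a derivative of ${^{(O)}}\pi$ — those of the form $\divv{^{(O)}}\un m$, $\dddd_4{^{(O)}}\un n$, $\dddd_3{^{(O)}}j$, $\nabb_c{^{(O)}}i$, $\dddd_4{^{(O)}}\un m$, $\dddd_3{^{(O)}}m$, and so on — from the remaining terms, which are pointwise products of a connection coefficient with one of the null components of ${^{(O)}}\pi$. For the derivative terms, Corollary~\ref{omega'} gives exactly that $\nabb$, $\dddd_4$ and $\dddd_3$ of $({^{(O)}}i,{^{(O)}}j,{^{(O)}}m,{^{(O)}}\un m,{^{(O)}}n,{^{(O)}}\un n)$ are all $O(r^{-3})$; the structural point behind the $\dddd_3$ bound is that in the \cite{Is-Pr} foliation these components do not depend on $t$, so $\dddd_3$ acts essentially like a radial/angular derivative and gains a genuine $r^{-1}$ instead of producing a bounded-but-non-decaying $|u|^{-1}$ (recall $|u|\geq R_0$ on $\mathcal{K}$). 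For the product terms, Proposition~\ref{omega} gives $|{^{(O)}}i|,\ldots,|{^{(O)}}\un n|=O(r^{-2})$, while Proposition~\ref{24} gives $|\tr\chi|,|\tr\un\chi|=O(r^{-1})$, $|\hat\chi|,|\hat{\un\chi}|,|\om|,|\un\om|=O(r^{-2})$, $|\eta|,|\un\eta|,|\zeta|=O(r^{-3})$, and $|\dd_3\log\Omega|,|\dd_4\log\Omega|=O(r^{-2})$. Multiplying, the slowest-decaying products are $\tr\chi\,(\tr{^{(O)}}i+{^{(O)}}j)$, $\tr\un\chi\,{^{(O)}}n$ and $\tr\chi\,{^{(O)}}\un m$ in $p_3$ (and their counterparts in $p_4$ and $\pp_a$), each of order exactly $r^{-3}$; every other product, built from $\hat\chi$, $\hat{\un\chi}$, $\eta$, $\un\eta$, $\zeta$ or $\dd_3\log\Omega$, $\dd_4\log\Omega$, is $O(r^{-4})$ or better.

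Putting the two groups together, each of ${^{(O)}}p_3$, ${^{(O)}}p_4$, ${^{(O)}}\pp_a$ is $O(r^{-3})$ on $\mathcal{K}$, and the estimate follows from the area bound as explained. If one prefers to avoid the spurious coordinate degeneracy of $\nabb$ near $\theta=0$ — harmless, since the ${^{(O)}}\pi$ are smooth tensor fields and $P\cot\theta$ stays bounded by the analysis of Section~\ref{S1} — one can instead invoke directly the $L^p$ bounds $\|r^{3-\frac2p}\nabb({^{(O)}}i,\ldots)\|_{p,S}\leq c$, $\|r^{3-\frac2p}\dddd_4({^{(O)}}i,\ldots)\|_{p,S}\leq c$, $\|r^{3-\frac2p}\dddd_3({^{(O)}}i,\ldots)\|_{p,S}\leq c$ for the derivative terms, and bound the product terms by H\"older's inequality on $S(u,\ub)$ against the pointwise connection-coefficient decays. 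Either way, there is no genuine obstacle here beyond the bookkeeping of weights: the only thing one must verify is that no term of ${^{(O)}}p$ loses more than three powers of $r$, which — as seen above — is saturated precisely by the $\tr\chi$- and $\tr\un\chi$-products and is strict for all the others.
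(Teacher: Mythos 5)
Your proof is correct and follows the approach that the paper takes implicitly — the paper itself supplies no written proof for this proposition, merely pointing to Proposition~\ref{omega} and Corollary~\ref{omega'}, and your argument is precisely the bookkeeping needed to turn those pointwise decays (together with the Kerr connection-coefficient decays of Proposition~\ref{24}) into the stated $L^p$ bound. The decomposition into derivative terms (bounded directly by Corollary~\ref{omega'}, or, as you note, by the $L^p$ version of it in the unnamed proposition that follows it) and product terms, the observation that the $\tr\chi$- and $\tr\un\chi$-products saturate the $O(r^{-3})$ rate while all others decay faster, and the conversion from a pointwise $O(r^{-3})$ bound to the weighted $L^p$ bound via $|S(u,\ub)|\leq cr^2$ are all exactly the steps the paper intends.
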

\begin{prop}\label{Tp}
\NI Based on proposition (\ref{t}), the following estimates relative to ${^{(T)}}p_\lambda$ and relative to its derivatives
for any $S \subset \mathcal{K}$ with $p\in[2,4]$:
\begin{eqnarray*}
&||r^{3-\frac 2 p}{^{(T)}}p_3||_{p,S}&\leq c\\
&||r^{3-\frac 2 p}{^{(T)}}p_4||_{p,S}&\leq c\\
&||r^{3-\frac 2 p}{^{(T)}}\pp_a||_{p,S}&\leq c,
\end{eqnarray*}
\begin{eqnarray*}
&||r^{4-\frac 2 p}\nabb{^{(T)}}p_3||_{p,S}&\leq c\\
&||r^{4-\frac 2 p}\nabb{^{(T)}}p_4||_{p,S}&\leq c\\
&||r^{4-\frac 2 p}{^{(T)}}\pp_a||_{p,S}&\leq c.
\end{eqnarray*}
\end{prop}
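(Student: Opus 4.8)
The plan is to read off the estimates directly from the explicit null decomposition of ${}^{(X)}p_\lambda$ displayed just above the statement, specialised to $X=T$. Recall ${}^{(T)}p_\lambda=D^\alpha{}^{(T)}\hat\pi_{\alpha\lambda}$, and that ${}^{(T)}p_3$, ${}^{(T)}p_4$, ${}^{(T)}\pp_a$ are given by universal expressions in the null components ${}^{(T)}i$, ${}^{(T)}j$, ${}^{(T)}m$, ${}^{(T)}\un m$, ${}^{(T)}n$, ${}^{(T)}\un n$, their first derivatives, and the connection coefficients. The key structural fact is that, by Proposition \ref{t}, ${}^{(T)}i_{ab}=0$ and ${}^{(T)}j=0$ exactly; this kills every term of the schematic form $\tr\chi\,{}^{(T)}i$, $\tr\chi\,{}^{(T)}j$, $\hat\chi\cdot{}^{(T)}i$, $\hat{\un\chi}\cdot{}^{(T)}i$, $\dddd_3{}^{(T)}j$, $\dddd_4{}^{(T)}j$, $(\eta+\un\eta){}^{(T)}j$, i.e. precisely the ones that could be dangerous.

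Next I would estimate the surviving terms using the decays of the connection coefficients from Proposition \ref{24} ($\tr\chi,\tr\chib=O(r^{-1})$, $\hat\chi,\hat{\chib}=O(r^{-2})$, $\zeta,\eta,\un\eta=O(r^{-3})$, $\omega,\omb=O(r^{-2})$) together with ${}^{(T)}m,{}^{(T)}\un m=O(r^{-3})$ and ${}^{(T)}n,{}^{(T)}\un n=O(r^{-2})$ from Proposition \ref{t}, and the elementary rules that $\dddd_4$ always gains a power $r^{-1}$, that $\nabb$ gains $r^{-1}$ on an angle–dependent quantity, and that $\dddd_3$ gains $r^{-1}$ on a $t$–independent quantity (Kerr being stationary). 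In ${}^{(T)}p_3$ the only $O(r^{-3})$ contributions are $-\tfrac12\dddd_4{}^{(T)}\un n$ and $-\tfrac12\tr\chib\,{}^{(T)}n$; the terms $\divv{}^{(T)}\un m$, $(2\un\eta+\eta-\zeta)\cdot{}^{(T)}\un m$ and $(\dd_3\log\Omega)\,{}^{(T)}n$ are $O(r^{-4})$ or better. The same count applies to ${}^{(T)}p_4$ (using ${}^{(T)}m=-{}^{(T)}\un m$, ${}^{(T)}\un n=-{}^{(T)}n$, and that ${}^{(T)}n$ depends on $r$ only so $\dddd_3{}^{(T)}n=O(r^{-3})$), and one finds that every term in ${}^{(T)}\pp_a$ is already $O(r^{-4})$. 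Hence ${}^{(T)}p_3,{}^{(T)}p_4=O(r^{-3})$ and ${}^{(T)}\pp_a=O(r^{-4})$ pointwise; since $|S(u,\ub)|\simeq r^2$ one has $\|f\|_{p,S}\simeq r^{2/p}\sup_S|f|$, and multiplying by $r^{3-\frac2p}$ (resp. $r^{4-\frac2p}$) yields the first block of estimates.

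For the derivatives I would differentiate the same expressions term by term. The derivatives landing on ${}^{(T)}m,{}^{(T)}\un m,{}^{(T)}n,{}^{(T)}\un n$ are controlled by the $L^p$ bounds for $\nabb{}^{(T)}m$, $\dddd_4{}^{(T)}m$, $\dddd_3{}^{(T)}m$, $\nabb{}^{(T)}n$, $\dddd_4{}^{(T)}n$, $\dddd_3{}^{(T)}n$ (and the underlined analogues) already supplied by Proposition \ref{t}; the derivatives landing on connection coefficients cost at most one further $r^{-1}$ by Proposition \ref{24}; and commuting $\nabb$ with $\dddd_3$ or $\dddd_4$ produces only commutator terms carrying an extra connection coefficient, hence strictly lower order. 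Collecting, $\nabb{}^{(T)}p_3,\nabb{}^{(T)}p_4=O(r^{-4})$ and $\nabb{}^{(T)}\pp_a=O(r^{-5})$, and weighting by $r^{4-\frac2p}$ and integrating over $S$ gives the second block (indeed with a small power of $r$ to spare). The only genuine difficulty is bookkeeping: each of the three formulas for ${}^{(T)}p_\lambda$ has on the order of a dozen terms, and one must carefully distinguish the $\dddd_3$–derivatives hitting a genuinely $t$–dependent quantity (no extra $r^{-1}$) from those hitting a $t$–independent one. Once Propositions \ref{24} and \ref{t} are granted there is no cancellation to exploit and no individual term is delicate, so the argument is a finite, if tedious, check.
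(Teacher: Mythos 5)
Your proposal is correct and follows exactly the route the paper has in mind: the paper does not display a proof of Proposition~\ref{Tp} at all — it is stated as a consequence of Proposition~\ref{t} and left as a "straightforward calculation" in the style of the surrounding propositions — and your argument (substitute $X=T$ into the explicit formulas for ${}^{(X)}p_3$, ${}^{(X)}p_4$, ${}^{(X)}\pp_a$, exploit ${}^{(T)}i={}^{(T)}j=0$ to kill the dangerous terms, then read off the remaining decays from Propositions~\ref{24} and~\ref{t} together with the $\dddd_4/\dddd_3/\nabb$ gain rules and the $|S|\simeq r^2$ area factor) is precisely that intended computation. Your term-by-term count (in particular that the surviving $O(r^{-3})$ contributions to ${}^{(T)}p_3$ are $\dddd_4{}^{(T)}\un n$ and $\tr\chib\,{}^{(T)}n$, and dually for ${}^{(T)}p_4$, while every term in ${}^{(T)}\pp_a$ is already $O(r^{-4})$) is accurate and in fact gives slightly more than the stated weight on $\pp_a$, which is consistent with the proposition; the apparently missing $\nabb$ in the last displayed line is a typo in the source that your treatment absorbs anyway.
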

\begin{prop}a straightforward  calculation imply that the following estimates hold, with $c$ a suitable constant:
\begin{eqnarray*}
&||r^{\frac 3 2 }{^{(T)}}p_3||_{L_2(C\cap \mathcal{K})}&\leq c\\
&||r^{\frac 5 2 }\nabb{^{(T)}}p_3||_{L_2(C\cap \mathcal{K})}&\leq c\\
&||r^{\frac 3 2 }\mathcal{L}_S{^{(T)}}\pp_a||_{L_2(C\cap
\mathcal{K})}&\leq c.
\end{eqnarray*}
\end{prop}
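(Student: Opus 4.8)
\noindent\emph{Proof plan.} The three bounds will be deduced from the fibrewise $L^p$ estimates of Proposition~\ref{Tp} on the spheres $S(u,\ub)$, after first improving the pointwise decay of the quantities involved by one power of $r$. First I would record that on a single outgoing cone $C(u)$ the induced volume element is comparable to $r^{2}\,dr\,d\omega$: along $C(u)$ one has $\ro=\tfrac12(\ub-u)$, so $d\ub\simeq 2\,dr$, while the area of $S(u,\ub)$ is $\simeq r^{2}$. Hence for any $S$-tangent tensor field $f$,
\[
\|r^{s}f\|_{L_2(C(u)\cap\mathcal{K})}^{2}\;\simeq\;\int_{r_0}^{\infty}r^{2s}\,\|f\|_{L_2(S(u,\ub))}^{2}\,dr ,
\]
so it suffices to show that ${^{(T)}}p_3$ and $\mathcal{L}_S{^{(T)}}\pp_a$ decay like $r^{-4}$, and $\nabb{^{(T)}}p_3$ like $r^{-5}$, on leaves $S\subset\mathcal{K}$: each integrand above then decays like $r^{-3}$ and the $r$-integral converges at infinity (finiteness near $r_0$ being trivial by smoothness).

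The decisive step is the precise leading behaviour of ${^{(T)}}p_3$. Putting ${^{(T)}}i={^{(T)}}j=0$ (Proposition~\ref{t}) into the formula for ${^{(X)}}p_3$ leaves
\[
{^{(T)}}p_3=\divv{^{(T)}}\un{m}-\tfrac12\dddd_4{^{(T)}}\un{n}+(2\un{\eta}+\eta-\zeta)\c{^{(T)}}\un{m}-\tfrac12\tr\chib\,{^{(T)}}n-(\dd_3\log\Omega)\,{^{(T)}}n .
\]
By Propositions~\ref{24} and~\ref{t} every term on the right is $O(r^{-4})$ except the two pieces $-\tfrac12\dddd_4{^{(T)}}\un n$ and $-\tfrac12\tr\chib\,{^{(T)}}n$, each of which is only $O(M/r^{3})$. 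Since ${^{(T)}}n=-{^{(T)}}\un n=2M/r^{2}+O(M^{2}/r^{3})$, $\dddd_4 r=1+O(M/r)$ and $\tr\chib=-2/r+O(M/r^{2})$, one gets $-\tfrac12\dddd_4{^{(T)}}\un n=-2M/r^{3}+O(M^{2}/r^{4})$ and $-\tfrac12\tr\chib\,{^{(T)}}n=+2M/r^{3}+O(M^{2}/r^{4})$, so these leading contributions cancel and ${^{(T)}}p_3=O(M^{2}/r^{4})$, with an $r^{-4}$ part that genuinely depends on $\theta$ (through $P$, $\cot\theta$, $\sin\theta$ in $\divv{^{(T)}}\un m$). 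This yields $\|{^{(T)}}p_3\|_{L_2(S)}=O(r^{-3})$, hence $\|r^{3/2}{^{(T)}}p_3\|_{L_2(C\cap\mathcal{K})}^{2}\lesssim\int_{r_0}^{\infty}r^{-3}\,dr<\infty$, which is the first inequality; and since $\nabb$ applied to the $\theta$-dependent $O(r^{-4})$ part gains a factor $r^{-1}$ (the frame vector $e_\theta$ carries $r^{-1}$ on the angular variables), $\nabb{^{(T)}}p_3=O(r^{-5})$ and $\|r^{5/2}\nabb{^{(T)}}p_3\|_{L_2(C\cap\mathcal{K})}^{2}\lesssim\int_{r_0}^{\infty}r^{-3}\,dr<\infty$.

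For the last estimate I would use that on $S$-tangent tensors $\mathcal{L}_S=\tfrac12 u\,\dddd_3+\tfrac12\ub\,\dddd_4$ up to zeroth-order terms built from $\chi,\chib,\zeta$, apply it to the explicit expression for ${^{(T)}}\pp_a$, and estimate with Propositions~\ref{24} and~\ref{t} together with the bounds for ${^{(S)}}\pi$ of Proposition~\ref{S}. Because ${^{(T)}}i={^{(T)}}j=0$, one checks ${^{(T)}}\pp_a=O(r^{-4})$; a further $\dddd_3$ or $\dddd_4$ improves this to $O(r^{-5})$, and the weights $\ub\simeq 2r$, $|u|\le r$ (valid throughout the external region, since $r\ge|u|$ on each $C(u)$ entering the analysis) leave both the $\dddd$-terms and the commutator terms $O(r^{-4})$. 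Hence $\|\mathcal{L}_S{^{(T)}}\pp_a\|_{L_2(S)}=O(r^{-3})$ and the corresponding cone integral converges. The one genuinely delicate point of the whole computation is the cancellation in ${^{(T)}}p_3$: taken naively, the $L^p$-on-sphere weight of Proposition~\ref{Tp} only gives $r^{-3}$ decay, which would make the cone integrals logarithmically divergent; it is exactly the tracking of the leading coefficients ($\tr\chib\simeq -2/r$ and $\dddd_4 r\simeq 1$) that makes the ``straightforward calculation'' close.
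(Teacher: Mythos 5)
Your argument is correct, and it supplies precisely the content that the paper's phrase ``a straightforward calculation'' conceals. The paper gives no proof of this proposition, so there is nothing to compare line by line; but there is a genuine subtlety that your write-up correctly isolates. The preceding Proposition~\ref{Tp} only asserts $\|r^{2}\,{^{(T)}}p_3\|_{2,S}\le c$, a bound consistent with ${^{(T)}}p_3=O(r^{-3})$, and feeding that alone into the coarea decomposition
\[
\|r^{3/2}\,{^{(T)}}p_3\|_{L_2(C\cap\mathcal{K})}^{2}
\simeq\int_{r_0}^{\infty}r^{3}\,\|{^{(T)}}p_3\|_{2,S}^{2}\,dr
\lesssim\int_{r_0}^{\infty}\frac{dr}{r}
\]
gives a logarithmically divergent integral, since $\mathcal{K}$ is unbounded. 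Your cancellation of the two $O(M/r^{3})$ pieces is exactly what is needed to close the estimate. I checked the algebra: with ${^{(T)}}i={^{(T)}}j=0$ the only $O(r^{-3})$ contributions to ${^{(T)}}p_3$ are $-\tfrac12\dddd_4{^{(T)}}\un n$ and $-\tfrac12\tr\chib\,{^{(T)}}n$, and using ${^{(T)}}n=-4\omega$, ${^{(T)}}\un n=-4\un\omega$, $\omega=-\un\omega$ in Kerr (Proposition~\ref{24}), $\tr\chib=-2/r+O(r^{-2})$ and $\dddd_4 r=1+O(M/r)$, these give $-2M/r^{3}$ and $+2M/r^{3}$ respectively, so indeed ${^{(T)}}p_3=O(r^{-4})$. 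The remaining two bounds follow by the same coarea computation once $\nabb{^{(T)}}p_3=O(r^{-5})$ and $\lie_S{^{(T)}}\pp_a=O(r^{-4})$ are in hand, and your bookkeeping of how $\dddd_3,\dddd_4,\nabb$ and the weights $u,\ub$ in $S$ trade powers of $r$ is consistent with Propositions~\ref{24}, \ref{t}, \ref{S}. One point worth stating explicitly in the paper's style: your proof effectively upgrades the first line of Proposition~\ref{Tp} to $\|r^{4-2/p}{^{(T)}}p_3\|_{p,S}\le c$; as written, Proposition~\ref{Tp} does not by itself imply the cone bounds under review, and the cancellation you exhibit is indispensable.
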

\NI Given a vector field $X$, we use the expressions for
any null components of the currents of $W$ relative to $X$ introduced in \cite{Kl-Ni2} (chapter 6) 
In order to estimate the error term, it
will be necessary to estimate their asymptotic behavior
when $X=T,O$. \\
\NI Let us state now the following propositions which prescribes their asymptotic decays
\begin{prop}\label{(O)q}
\NI Given the Weyl field $\tW$ propagating in the Kerr spacetime, the null
components of the part of the current $J^3(O,W)$ satisfy the
following estimates for any $S\subset \cal{K}$, with $p\in[2,4]$:
\begin{eqnarray*}
|r^{3-\frac 2 p}\Xi(O,\tW)|_{p,S}&\leq & c\\
|r^{3-\frac 2 p}\Theta(O,\tW)|_{p,S} & \leq & c\\
|r^{3-\frac 2 p}\Lambda(O,\tW)|_{p,S} &\leq & c\\
|r^{3-\frac 2 p}K(O,\tW)|_{p,S} &\leq & c\\
|r^{3-\frac 2 p}I(O,\tW)|_{p,S} &\leq & c,
\end{eqnarray*}
\NI and
\begin{eqnarray*}
|r^{3-\frac 2 p}\un{\Xi}(O,\tW)|_{p,S}&\leq & c\\
|r^{3-\frac 2 p}\un{\Theta}(O,\tW)|_{p,S} & \leq & c\\
|r^{3-\frac 2 p}\un{\Lambda}(O,\tW)|_{p,S} &\leq & c\\
|r^{3-\frac 2 p}\un{K}(O,\tW)|_{p,S} &\leq & c\\
|r^{3-\frac 2 p}\un{I}(O,\tW)|_{p,S} &\leq & c.
\end{eqnarray*}
\end{prop}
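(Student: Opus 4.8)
The plan is to reduce the ten estimates to the decay bounds already obtained for the $O$--deformation tensor and its first derivatives together with the bootstrap decays of the null components of $\tW=\lie_{T_0}W$. First I would recall from \cite{Kl-Ni2}, Chapter~6, the explicit null decomposition of the current $J^3(O,\tW)$: each of its null components $\Xi,\un{\Xi},I,\un{I},K,\un{K},\Lambda,\un{\Lambda},\Th,\un{\Th}$ is a finite sum of terms, every one of which is a product of a null component of ${^{(O)}}q$ with a null component of $\tW$ (raising indices with $g^{34}=-\tfrac12$, $g^{ab}=\de^{ab}$ only introduces numerical factors).

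\noindent The second step is to prove, component by component, that $\|r^{3-\frac2p}\,{^{(O)}}q\|_{p,S}\leq c$ for $p\in[2,4]$ and all $S\subset\mathcal K$. Writing the null components of
$$
{^{(X)}}q_{\a\b\ga}=D^\b{^{(X)}\hat{\pi}}_{\a\ga}-D^\ga{^{(X)}\hat{\pi}}_{\a\b}-\frac13\big({^{(X)}p_\ga}\,g_{\a\b}-{^{(X)}p_\b}\,g_{\a\ga}\big)
$$
in terms of $\nabb,\dddd_3,\dddd_4$ of the null components of ${^{(O)}\hat{\pi}}$, of ${^{(O)}}p_\la$, and of the lower--order terms quadratic in the connection coefficients and ${^{(O)}\hat{\pi}}$: the derivative part is controlled by the $L_p$ estimates for $\nabb,\dddd_3,\dddd_4$ of $({^{(O)}}i,{^{(O)}}j,{^{(O)}}m,{^{(O)}}\un{m},{^{(O)}}n,{^{(O)}}\un{n})$ proved in the proposition following Corollary~\ref{omega'}, the ${^{(O)}}p_\la$ part by Proposition~\ref{normesuS}, and the lower--order terms are strictly better because ${^{(O)}\hat{\pi}}=O(r^{-2})$ (Proposition~\ref{omega}) while the slowest connection coefficient, $\tr\chi$, is $O(r^{-1})$ (Proposition~\ref{24}). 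Contrary to the $S$ and $\bar K$ deformation tensors, the $O$ components carry no $\log r$ or $\ttau$ weights and are $t$--independent, so ${^{(O)}}q$ is uniformly $O(r^{-3})$ in this $L_p$ sense.

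\noindent Finally I would insert the bootstrap, peeling--consistent decays of the null components of $\tW$ (in the form \ref{LTWdec1}, converted to $L^\infty(S)$ via the Sobolev estimates on the leaves of \cite{Kl-Ni1}, Chapter~5); the only property used is that $\a,\b,\ro,\si,\bb,\aa$ of $\tW$ are all uniformly bounded on $\mathcal K$, the slowest being $\aa(\tW)=O(r^{-1})$. Applying H\"older on the two--sphere $S$ to each product appearing in the null decomposition of $J^3(O,\tW)$ --- the ${^{(O)}}q$ factor estimated in $L_p(S)$ by $c\,r^{-(3-2/p)}$, the $\tW$ factor in $L^\infty(S)$ --- yields $\|r^{3-\frac2p}\,(\cdot)(O,\tW)\|_{p,S}\leq c$ for all ten null components. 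I expect the main obstacle to be bookkeeping rather than analysis: one must verify, using the formulas of \cite{Kl-Ni2}, that in each of the ten components \emph{every} term pairs a (possibly specific) null component of ${^{(O)}}q$ with an admissible null component of $\tW$, so that no term is slower than the target weight $r^{-(3-2/p)}$; since each null component of $\tW$ decays at least like $r^{-1}$, the products are actually a full power of $r$ better than required, which provides ample slack.
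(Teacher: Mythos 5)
Your key step---proving $\|r^{3-2/p}\,{^{(O)}}q\|_{p,S}\leq c$ from the explicit formula for ${^{(X)}}q$ together with the $L^p$ bounds on $\nabb,\dddd_3,\dddd_4$ of $({^{(O)}}i,\dots,{^{(O)}}\un{n})$ and on ${^{(O)}}p_\la$ established just before this proposition---is the entire content of the statement, and your reduction to those earlier propositions is the right one. Two remarks on the rest of the argument, though.

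Despite the phrasing ``the null components of the part of the current $J^3(O,W)$'', the quantities $\Xi(O,\tW),\dots,\un{I}(O,\tW)$ must be read as the null components of the tensor ${^{(O)}}q$ alone, with the $\tW$ factor stripped off. This is visible from the $T$-analogue (Proposition \ref{(T)q}), where the exponent is $4-2/p$ for $\Xi,\Theta,K$ but only $3-2/p$ for $\Lambda,I$: that pattern tracks the individual decays of the null components of ${^{(T)}}q$ and cannot be reconciled with the full quadratic $J^3$, in whose null decomposition each component of ${^{(T)}}q$ is paired with a different component of $\tW$. It is also how the estimate is actually used in Subsection \ref{section1}, where $|I({^{(T)}}q)|\sim r^{-3}$ is inserted pointwise while the $\tW$ factor is retained inside an $L^2$ integral and identified as a piece of $\QQ_{\mathcal{K}}$. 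Under the correct reading, your first and fourth steps (decomposing $J^3$ and recombining by H\"older with an $L^\infty(S)$ bound on $\tW$) are not needed.

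More seriously, your third step introduces a circularity: the decays \ref{LTWdec1} you invoke for the null components of $\tW$ are a \emph{consequence} of the boundedness of $\QQ_{\mathcal{K}}$, which is proved from the error estimate $\mathcal{E}\leq(c_0/r_0)\QQ_{\mathcal{K}}$, and the present proposition is a lemma toward establishing exactly that error estimate; one cannot assume \ref{LTWdec1} at this stage. In the actual argument the $\tW$ factors are never bounded by an absolute constant---they are kept as $L^2$ integrals over $C(u')$ by the Schwartz inequality and absorbed into $\QQ_{\mathcal{K}}^{1/2}$. Since the proposition, read correctly, contains no $\tW$ factor, this circularity does not damage the final statement, but a proof that genuinely relied on \ref{LTWdec1} here would not be admissible.
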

\NI The proof is also in this case a straightforward calculation.
\begin{prop}\label{(T)q}
Let $T=\frac 1 2 (e_3+e_4)$, then the null components of $J^3(T,\tW)$ satisfy the
following estimates in Kerr spacetime, for any $S\subset \cal{K}$, with $p\in[2,4]$, as $\tW$ is a Weyl field:
\begin{eqnarray*}
|r^{4-\frac 2 p}\Xi(T,\tW)|_{p,S}&\leq & c\\
|r^{4-\frac 2 p}\Theta(T,\tW)|_{p,S} & \leq & c\\
|r^{3-\frac 2 p}\Lambda(T,\tW)|_{p,S} &\leq & c\\
|r^{4-\frac 2 p}K(T,\tW)|_{p,S} &\leq & c\\
|r^{3-\frac 2 p}I(T,\tW)|_{p,S} &\leq & c,
\end{eqnarray*}
\NI and
\begin{eqnarray*}
|r^{4-\frac 2 p}\un{\Xi}(T,\tW)|_{p,S}&\leq & c\\
|r^{4-\frac 2 p}\un{\Theta}(T,\tW)|_{p,S} & \leq & c\\
|r^{3-\frac 2 p}\un{\Lambda}(T,\tW)|_{p,S} &\leq & c\\
|r^{4-\frac 2 p}\un{K}(T,\tW)|_{p,S} &\leq & c\\
|r^{3-\frac 2 p}\un{I}(T,\tW)|_{p,S} &\leq & c.
\end{eqnarray*}
\end{prop}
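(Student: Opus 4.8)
The plan is to establish Proposition \ref{(T)q} (and, along the way, to recall why the companion statement, Proposition \ref{(O)q}, is simpler) by the same bookkeeping of $r$-weights that underlies the corresponding non-linear estimate in \cite{Kl-Ni2}, Chapter 6. Recall that
\[
J^3(X,\tW)_{\b\ga\de}=\frac 1 2\bigl({^{(X)}}q_{\a\b\la}{\tW^{\a\la}}_{\ga\de}+{^{(X)}}q_{\a\ga\la}{\tW^{\a\la}}_{\b\de}\bigr),
\]
so that, by the explicit formulas collected in \cite{Kl-Ni2}, Chapter 6, each null component of $J^3(X,\tW)$ — namely $\Xi(X,\tW),\Theta(X,\tW),\Lambda(X,\tW),K(X,\tW),I(X,\tW)$ and their underlined counterparts — is a fixed finite linear combination of products of a null component of ${^{(X)}}q$ with a null component of $\tW$. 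It therefore suffices to bound each such product in $L^p(S)$ and to keep the slowest $r$-decay.

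First I would record the decays of the null components of ${^{(X)}}q$ for $X=T$. Since ${^{(X)}}q_{\a\b\ga}$ consists of one derivative of ${^{(X)}}\pih$ together with lower order terms that are products of ${^{(X)}}\pih$ with connection coefficients (structurally the same decomposition as in the displayed expressions for ${^{(X)}}p_\la$ above), Proposition \ref{t} together with the decays of the Kerr connection coefficients of Proposition \ref{24} supply all the needed input: since ${^{(T)}}m,{^{(T)}}\un{m}=O(r^{-3})$ with $\nabb,\dddd_4,\dddd_3$ derivatives of order $r^{-4}$, while ${^{(T)}}n,{^{(T)}}\un{n}=O(r^{-2})$ with derivatives of order $r^{-3}$, one obtains that the null components of ${^{(T)}}q$ built only out of $m,\un{m}$ decay like $r^{-4}$, whereas those in which $n$ or $\un{n}$ appears decay only like $r^{-3}$ (this is consistent with, and refines, the $L^p(S)$ bounds on ${^{(T)}}p_\la$ in Proposition \ref{Tp}).

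Next I would combine these $q$-estimates with the asymptotic decays of the null components of $\tW$ — controlled in $L^p(S)$, via the relations of \cite{Kl-Ni1}, Section 3.5.1, and the Sobolev/Poincar\'e inequalities of \cite{Kl-Ni1}, Chapter 5, by the standing bounds on the Weyl field — and read off the claimed weights by matching powers of $r$ in each of the finitely many products: the ``good'' currents $\Xi(T,\tW),\Theta(T,\tW),K(T,\tW)$ and their underlined versions couple the $r^{-4}$ pieces of ${^{(T)}}q$ with the faster-decaying components of $\tW$ and hence obey the $r^{4-2/p}$ bound, while $\Lambda(T,\tW),I(T,\tW)$ (and $\un{\Lambda},\un{I}$) are forced by the null structure of the $D_{434},D_{344},D_{334},D_{343}$ channels to contract against either the $n,\un{n}$-type components of ${^{(T)}}q$ or the slowly decaying Weyl components $\ro(\tW),\si(\tW),\bb(\tW),\aa(\tW)$, and therefore only gain $r^{3-2/p}$. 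For $X=O$ the same scheme applies but is easier: Proposition \ref{omega} and Corollary \ref{omega'} show that \emph{every} null component of ${^{(O)}}\pih$, and hence of ${^{(O)}}q$, decays at the single rate $r^{-3}$, so there is no dichotomy and all currents satisfy the uniform $r^{3-2/p}$ estimate of Proposition \ref{(O)q}.

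The step I expect to be the main obstacle is the last matching: one must check that none of the cross terms entering $\Xi,\Theta,K$ — a priori a product of an $r^{-3}$ piece of ${^{(T)}}q$ with a component of $\tW$ of slow decay — is in fact worse than $r^{-4}$; equivalently, that the contraction pattern imposed by the algebraic structure of $J^3$ always pairs a slow $q$-component with a fast $\tW$-component and conversely. This is a purely algebraic but somewhat delicate inspection of the null-component formulas of \cite{Kl-Ni2}, Chapter 6, carried out exactly as in the non-linear case treated there; the rest is routine weight counting.
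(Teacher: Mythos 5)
Your central computational ingredient is correct and is indeed what the paper relies on: one takes the Kerr decays of the null components of ${^{(T)}}\pih$ from Proposition \ref{t}, namely ${^{(T)}}m,{^{(T)}}\un m = O(r^{-3})$ with $\nabb,\dddd_4,\dddd_3$ derivatives $O(r^{-4})$, and ${^{(T)}}n,{^{(T)}}\un n = O(r^{-2})$ with derivatives $O(r^{-3})$, plugs them into the explicit formula ${^{(T)}}q_{\a\b\ga}=D_\b{^{(T)}}\pih_{\a\ga}-D_\ga{^{(T)}}\pih_{\a\b}-\tfrac13({^{(T)}}p_\ga g_{\a\b}-{^{(T)}}p_\b g_{\a\ga})$, uses the Kerr connection coefficients of Proposition \ref{24} for the lower order terms, and tracks the $r$-weight of each null component of ${^{(T)}}q$. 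This gives $r^{-4}$ for the components built only from $m,\un m$ and $r^{-3}$ for those in which $n$ or $\un n$ enters, exactly accounting for the two weights $r^{4-2/p}$ and $r^{3-2/p}$ in the statement.

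However, there is a genuine misreading that corrupts the second half of your plan. You interpret $\Xi(T,\tW),\Theta(T,\tW),\ldots$ as the null components of the current $J^3(T,\tW)$, which is a contraction of ${^{(T)}}q$ against $\tW$, and you then propose to ``combine these $q$-estimates with the asymptotic decays of the null components of $\tW$ \ldots and read off the claimed weights.'' This step is spurious and, if carried through, yields weights inconsistent with those claimed. If the proposition were really about the products ${^{(T)}}q\cdot\tW$, the slowest such product, with $q$-factor $\sim r^{-3}$ and any genuine $\tW$-factor, would decay strictly faster than $r^{-3}$, so the bound $|r^{3-2/p}\Lambda(T,\tW)|_{p,S}\le c$ would be far from sharp and the split between $r^{4-2/p}$ and $r^{3-2/p}$ would not be explained by the $m$-versus-$n$ dichotomy you (correctly) isolate. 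Moreover such a statement could not hold for an arbitrary Weyl field without further decay hypotheses on $\tW$; invoking ``standing bounds'' amounts to a circular bootstrap on the very $\QQ$-norms one is trying to control. The resolution is that, despite the wording ``null components of $J^3(T,\tW)$'' and the notation $\Xi(T,\tW)$, the quantities being estimated are the null components $\Xi({^{(T)}}q),\Theta({^{(T)}}q),\Lambda({^{(T)}}q),K({^{(T)}}q),I({^{(T)}}q)$ (and underlined versions) of the tensor ${^{(T)}}q$ alone. This is exactly how they are used later in the error estimates, where the paper writes $I({^{(T)}}q),\un I({^{(T)}}q),\Lambda({^{(T)}}q)$ and states that they ``decay at null infinity as $c/r^3$,'' and then multiplies them \emph{separately} against $\a(\tW),\b(\tW),(\ro,\si)(\tW)$ in integrals such as $\int\tau_+^6\ttau^{5+\e}|(I({^{(T)}}q),\un I({^{(T)}}q))|^2|\a(\tW)|^2$. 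Your proof therefore should stop after the first step: derive the pointwise decays of the null components of ${^{(T)}}q$ from Proposition \ref{t} and Proposition \ref{24}, then convert to $L^p(S)$ by the $|S|\sim r^2$ volume factor. No estimate on $\tW$ enters.
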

\begin{prop}
\NI The Lie coefficients of the vector field ${^{(2)}}O$ have the
following asymptotic behavior:
\begin{eqnarray*}
&& {^{^{(2)}(O)}}P_\theta = O(\sin\phi\cos\theta)\\
&& {^{^{(2)}(O)}}P_\phi = O(\cos\phi)\\
&& {^{^{(2)}(O)}}\un{P}_\theta = O(-\sin\phi\cos\theta)\\
&& {^{^{(2)}(O)}}\un{P}_\phi = O(-\cos\phi)\\
&& {^{^{(2)}(O)}}Q_a = O(-\sin\phi\cos\theta)\\
&& {^{^{(2)}(O)}}\un{Q}_a = O(-\sin\phi\cos\theta)\\
&& {^{^{(2)}(O)}}M = O(\frac c {r^2})\\
&& {^{^{(2)}(O)}}\un{M} = O(\frac c {r^2})\\
&& {^{^{(2)}(O)}}N = O(\frac c {r^2})\\
&& {^{^{(2)}(O)}}\un{N} = O(\frac c {r^2}).
\end{eqnarray*}
\NI Moreover, the following relations hold:
\begin{eqnarray*}
&& {^{^{(2)}(O)}}P_\theta+{^{^{(2)}(O)}}Q_\theta = O(\frac 1 r)\\
&& {^{^{(2)}(O)}}P_\phi+{^{^{(2)}(O)}}Q_\phi = O(\frac{\p
P}{\p\theta}\frac 1 r \cos\phi)\  .
\end{eqnarray*}
\NI Then at the highest order the null components of $\lie_{^{(2)}O} \tW$ behave as the projection onto $S(u,\ub)$ of the modified Lie
derivative with respect to ${^{(2)}}O$ of the corresponding null components of $\tW$.
\end{prop}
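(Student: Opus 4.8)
The plan is to read off the claimed decays directly from the definitions of the Lie coefficients ${}^{(X)}P,{}^{(X)}\un P,{}^{(X)}Q,{}^{(X)}\un Q,{}^{(X)}M,{}^{(X)}\un M,{}^{(X)}N,{}^{(X)}\un N$ introduced in \cite{Kl-Ni1} (and recalled in \cite{Kl-Ni2}, Chapter~6): these are the coefficients that appear when one commutes the modified Lie derivative $\lie_X$ with the projection onto $S(u,\ub)$ and with the null derivatives $\dddd_3,\dddd_4$, and each of them is a polynomial expression in the null components $\{{}^{(X)}i,{}^{(X)}j,{}^{(X)}m,{}^{(X)}\un m,{}^{(X)}n,{}^{(X)}\un n\}$ of the deformation tensor ${}^{(X)}\pi$, in the Kerr connection coefficients, and in the first tangential and null derivatives of the former. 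For $X={}^{(2)}O$ I would substitute the asymptotics of Proposition~\ref{omega} together with the derivative bounds of Corollary~\ref{omega'} --- every component of ${}^{(2)(O)}\pih$ is $O(a/r^2)$ times a bounded trigonometric function of $(\theta,\phi)$, and $r^3$ times any of $\nabb,\dddd_3,\dddd_4$ applied to them stays bounded --- and track which terms dominate. Because $\dddd_4$ improves the decay in $r$ by one power and $\nabb$ applied to a purely angular function also yields an extra $1/r$ (visible from the explicit form of $e_\theta$ in (\ref{e_a})), the leading parts of ${}^{(2)(O)}P,\un P,Q,\un Q$ come from the undifferentiated ${}^{(2)(O)}m,\un m$ terms and carry exactly the angular factors $\sin\phi\cos\theta$ and $\cos\phi$ displayed, while ${}^{(2)(O)}M,\un M,N,\un N$ collect the extra $r^{-1}$ and are therefore $O(c/r^2)$.

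The only genuinely delicate point is the pair of ``moreover'' identities, since a priori ${}^{(2)(O)}P_a$ and ${}^{(2)(O)}Q_a$ are each merely $O(1)$. To obtain them I would not bound the two summands separately but combine them at the level of the definitions, so that the $O(1)$ angular pieces cancel algebraically; this cancellation is forced by the internal relations among the ${}^{(2)(O)}\pih$ components listed in Proposition~\ref{omega} (for instance ${}^{(2)(O)}m$ and ${}^{(2)(O)}\un m$ share the same leading behaviour, and ${}^{(2)(O)}i_{\theta\theta}$ is balanced against ${}^{(2)(O)}i_{\phi\phi}$), in exactly the same way as the corresponding identities for the Minkowski rotations are derived in \cite{Kl-Ni1}. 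After the cancellation the remainder is controlled either by a $\dddd_4$ or an angular derivative, hence $O(1/r)$, or by the explicit $\p_\theta P$ contribution reflecting the non-sphericity of the leaves $S(u,\ub)$, which produces the $O(r^{-1}\p_\theta P\cos\phi)$ bound.

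For the last assertion I would feed these estimates into the decomposition (again taken from \cite{Kl-Ni1}, Chapter~6) of the null components $\alpha,\un\alpha,\beta,\un\beta,\ro,\si$ of $\lie_{{}^{(2)}O}\tW$ as the corresponding projected modified Lie derivatives of the null components of $\tW$ plus correction terms of the schematic shape $(\textrm{Lie coefficient})\cdot(\textrm{null component of }\tW)$. Since every Lie coefficient is either $O(r^{-2})$ or a bounded angular function, whereas the genuine $\lieb$ term carries the frame derivatives of the $\tW$ components, these corrections are of strictly lower order in $r$, which gives the stated leading-order identification. I expect the whole argument to be routine up to bookkeeping; the real obstacle is organizing the many terms in the $P,\un P,Q,\un Q$ formulas and, within that, recognizing the cancellations that upgrade the a priori $O(1)$ estimate to the two sharper identities.
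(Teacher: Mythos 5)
There is a genuine gap, and it is a conceptual one rather than mere bookkeeping. You describe the Lie coefficients ${}^{(O)}P,{}^{(O)}\un P,{}^{(O)}Q,{}^{(O)}\un Q,{}^{(O)}M,\dots$ as ``polynomial expressions in the null components of the deformation tensor ${}^{(O)}\pi$, in the Kerr connection coefficients, and in their first derivatives,'' and you then claim the leading $O(1)$ pieces of $P,\un P,Q,\un Q$ come from the undifferentiated ${}^{(2)(O)}m,\un m$. That cannot work arithmetically: by Proposition~\ref{omega} every null component of ${}^{(2)(O)}\hat\pi$ is $O(a/r^2)$, and the Kerr connection coefficients decay at least as $1/r$, so any polynomial built purely out of those ingredients (and their derivatives, which only improve the decay) is at best $O(1/r)$, not $O(\sin\phi\cos\theta)=O(1)$. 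Your own accounting (``$P,\un P,Q,\un Q$ come from undifferentiated $m,\un m$; $M,\un M,N,\un N$ pick up an extra $1/r$'') would therefore predict $P\sim r^{-2}$ and $M\sim r^{-3}$, which is two orders of $r$ away from the stated $O(1)$ and $O(r^{-2})$.

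What is missing is that the Lie coefficients also contain the frame components of the vector field ${}^{(2)}O$ itself (equivalently, pieces of $[{}^{(2)}O,e_\mu]$), and for a rotation these are large: a rotation generator has angular components of size $O(1)$ in the orthonormal frame $e_\theta,e_\phi$ once the $1/r$ normalisation of the frame is taken into account, so contractions like $O^b\chi_{ab}$ or $\nabb_a O_b$ produce genuine $O(1)$ contributions with precisely the angular factors $\sin\phi\cos\theta,\cos\phi$. This is also why the ``moreover'' cancellations $P_\theta+Q_\theta=O(1/r)$ cannot be ``forced by the internal relations among the ${}^{(2)(O)}\hat\pi$ components'': in Schwarzschild (or Minkowski) the rotation ${}^{(2)}O$ is exactly Killing, so ${}^{(2)(O)}\hat\pi\equiv 0$ and yet $P_\theta$ and $Q_\theta$ are still individually $O(1)$. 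The cancellation is a reflection of the fact that a genuine rotation maps the sphere foliation and its adapted frame into themselves, so that the $O(1)$ parts of $P$ and $Q$ are exactly opposite at the Schwarzschild level; the $O(1/r)$ remainder, together with the $\p_\theta P\cos\phi/r$ term, then measures the Kerr/Israel--Pretorius non-sphericity and the deviation of ${}^{(2)}O$ from a true Killing field. Your final paragraph, about feeding the estimates into the decomposition of the null components of $\lie_{{}^{(2)}O}\tW$, is fine once the correct inputs are in place, but as written the proof of the preceding decays would not close: you would need to add the frame-component contributions of ${}^{(2)}O$ (and not only ${}^{(2)(O)}\hat\pi$) to the list of ingredients, extract their Schwarzschild leading part explicitly, and verify the $P+Q$ cancellation against that background before invoking the deformation-tensor decays to control the remainder. (For what it is worth, the paper gives no explicit proof of this proposition, only the surrounding computations it rests on, so there is no reference argument to compare against; the point above is simply that your reduction of the Lie coefficients to deformation-tensor data is not consistent with the stated orders.)
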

\begin{remark}
\NI The decays for the Lie coefficients of ${^{^{(1)}(O)}}\pi$ are
very similar, one has only to change $\sin\phi$ with $\cos\phi$
and $\sin\theta$ with $-\cos\theta$.
\end{remark}
\begin{prop}\label{(T)q}
\NI Let $T=\frac 1 2 (e_3+e_4)$, then the null components of $J^3(T,\tW)$ satisfy the
following estimates for any $S\subset \cal{K}$, with $p\in[2,4]$:
\begin{eqnarray*}
|r^{4-\frac 2 p}\Xi(T,\tW)|_{p,S}&\leq & c\\
|r^{4-\frac 2 p}\Theta(T,\tW)|_{p,S} & \leq & c\\
|r^{3-\frac 2 p}\Lambda(T,\tW)|_{p,S} &\leq & c\\
|r^{4-\frac 2 p}K(T,\tW)|_{p,S} &\leq & c\\
|r^{3-\frac 2 p}I(T,\tW)|_{p,S} &\leq & c,
\end{eqnarray*}
\NI and
\begin{eqnarray*}
|r^{4-\frac 2 p}\un{\Xi}(T,\tW)|_{p,S}&\leq & c\\
|r^{4-\frac 2 p}\un{\Theta}(T,\tW)|_{p,S} & \leq & c\\
|r^{3-\frac 2 p}\un{\Lambda}(T,\tW)|_{p,S} &\leq & c\\
|r^{4-\frac 2 p}\un{K}(T,\tW)|_{p,S} &\leq & c\\
|r^{3-\frac 2 p}\un{I}(T,\tW)|_{p,S} &\leq & c.
\end{eqnarray*}
\end{prop}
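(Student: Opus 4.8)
\NI \textbf{Proof plan.}
The plan is to follow \cite{Kl-Ni2}, Chapter 6, and reduce the whole statement to a product estimate on the leaves $S(u,\ub)$. By definition $J^{3}(T,\tW)$ is the contraction of the auxiliary $3$-tensor ${^{(T)}}q$ with $\tW$, where
$$
{^{(T)}}q_{\a\b\ga}=D^{\b}\,{^{(T)}}\hat\pi_{\a\ga}-D^{\ga}\,{^{(T)}}\hat\pi_{\a\b}-\frac{1}{3}\big({^{(T)}}p_{\ga}g_{\a\b}-{^{(T)}}p_{\b}g_{\a\ga}\big)
$$
is built from the first derivatives of the traceless deformation tensor ${^{(T)}}\hat\pi$ and from the vector ${^{(T)}}p=D\c{^{(T)}}\hat\pi$. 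Decomposing these contractions with respect to the null frame $\{e_{3},e_{4},e_{a}\}$, each of $\Xi(T,\tW)$, $\Theta(T,\tW)$, $\Lambda(T,\tW)$, $K(T,\tW)$, $I(T,\tW)$ and of their barred counterparts becomes a finite sum of terms of the schematic form $\big(\text{null component of }{^{(T)}}q\big)\c\big(\text{null component of }\tW\big)$, where some ``null components of ${^{(T)}}q$'' are a Kerr connection coefficient times a null component of ${^{(T)}}\hat\pi$ coming from the lower order part. So the estimate will be carried out term by term via H\"older on $S(u,\ub)$, with $p\in[2,4]$.

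\NI First I would control the null components of ${^{(T)}}q$ in $L^{p}(S)$. By Proposition \ref{t} one has ${^{(T)}}i={^{(T)}}j=0$, so only ${^{(T)}}m,{^{(T)}}\un{m}\ (\simeq r^{-3})$, ${^{(T)}}n,{^{(T)}}\un{n}\ (\simeq r^{-2})$ and ${^{(T)}}p$ enter; Proposition \ref{t} gives $\|r^{4-\frac{2}{p}}\nabb({^{(T)}}m,{^{(T)}}\un{m})\|_{p,S}\le c$ and $\|r^{3-\frac{2}{p}}\dddd_{4}({^{(T)}}n,{^{(T)}}\un{n})\|_{p,S}\le c$ together with the analogous $\dddd_{3}$-bounds, while Proposition \ref{Tp} gives $\|r^{3-\frac{2}{p}}\,{^{(T)}}p\|_{p,S}\le c$. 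Here I must invoke item (ii) of the proof of the derivative estimates: since the Kerr deformation-tensor components depend on $(r,\theta)$ only and not on $t$, the operator $\dddd_{3}$ acts on them essentially as $\nabb$ and gains a factor $r^{-1}$ rather than merely $|u|^{-1}$, which is precisely what makes the $\dddd_{3}$-contributions to ${^{(T)}}q$ as good as the $\nabb$-ones. Collecting, the null components of ${^{(T)}}q$ that feed the angular currents $\Xi,\Theta,K$ are $O(r^{-4+\frac{2}{p}})$ in $L^{p}(S)$, while those feeding $\Lambda,I$ (and $\un{\Lambda},\un{I}$), which carry the slower pieces ${^{(T)}}n,{^{(T)}}\un{n}$ and ${^{(T)}}p$, are $O(r^{-3+\frac{2}{p}})$.

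\NI Finally I would insert the decay of the null components of $\tW=\lie_{T_{0}}W$. Boundedness of the $\QQ$ norms, together with the relation between $\QQ_{\mathcal{K}}$ and the pointwise size of the null Riemann components (\cite{Kl-Ni1}, Section 3.5.1), yields $\a(\tW)=O(r^{-5})$, $\b(\tW)=O(r^{-4})$, $\ro(\tW),\si(\tW)=O(r^{-3})$, $\bb(\tW)=O(r^{-2})$, $\aa(\tW)=O(r^{-1})$, with additional, here unneeded, $|u|$-weights and an implicit $\QQ_{\mathcal{K}}^{1/2}$. Pairing in each term a null component of ${^{(T)}}q$ in $L^{p}(S)$ with a null component of $\tW$ in $L^{\infty}(S)$ and keeping the worst term in each of $\Xi,\Theta,\Lambda,K,I$ and their barred analogues then reproduces the stated weights $r^{4-\frac{2}{p}}$, resp.\ $r^{3-\frac{2}{p}}$; in fact every individual term comes out strictly better, so these are only the weights convenient for the subsequent error estimate. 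The computation is the exact analogue of the $O$ case already recorded in Proposition \ref{(O)q}. I expect the main obstacle to be purely organisational — writing out the full null decomposition of $J^{3}$ with the correct pairings of the barred Weyl components with the components of ${^{(T)}}q$ — together with the one genuinely delicate check that in Kerr the barred Weyl components ($\aa\simeq r^{-5}$, $\bb\simeq r^{-4}$) decay as fast as the unbarred ones, so that no $r$-weight is lost through the $e_{3}$-contractions.
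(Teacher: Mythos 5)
Your step 3 --- reading off the $L^p(S)$ bounds on the null components of ${^{(T)}}q$ from the decays and derivatives of ${^{(T)}}\hat\pi$ and ${^{(T)}}p$ (Propositions \ref{t} and \ref{Tp}), together with the observation that $\dddd_3$ of a stationary Kerr quantity gains $r^{-1}$ rather than merely $|u|^{-1}$ --- is the actual content of the proposition and matches the paper's computation. The trouble is with steps 4 and 5. Despite the notation $\Xi(T,\tW),\dots,\un I(T,\tW)$, what the paper really estimates here, and what it later \emph{uses} in the error estimate, are the null components of the tensor ${^{(T)}}q$ alone, i.e.\ the pure Kerr factors, not the full contractions $J^3(T,\tW)=q\cdot\tW$. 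You can see this in three ways: (i) the stated weights $r^{3-2/p}$ and $r^{4-2/p}$ match exactly the decays $r^{-3}$, $r^{-4}$ of the $q$-components produced in your step 3 --- if a factor of $\aa(\tW)$ or $\bb(\tW)$ were really included, even the worst products would be $O(r^{-4})$ or faster, which is what you yourself notice when you write ``every individual term comes out strictly better''; (ii) the companion Proposition \ref{(S)q} writes the same objects explicitly as $\Lambda({^{(S)}}q),K({^{(S)}}q),\dots$, with no $\tW$; and (iii) in the error estimate the paper keeps the Weyl factor inside an $L^2$ integral bounded by $\QQ_{\cal K}$ and pulls out only the $q$-factor pointwise, e.g.\ ``the terms with the worst asymptotic behavior are $I({^{(T)}}q),\un I({^{(T)}}q),\Lambda({^{(T)}}q)$, which decay at null infinity as $c/r^3$.''

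This is more than a notational quibble. Your step 4 invokes ``boundedness of the $\QQ$ norms'' to get the pointwise decays of $\a(\tW),\dots,\aa(\tW)$, but this proposition feeds into the estimate of the error term $\mathcal E$, which is precisely what is used to establish boundedness of $\QQ_{\cal K}$ in the first place (see $\mathcal E\le \tfrac{c_0}{r_0}\QQ_{\cal K}$ and the ensuing absorption). Using $\QQ$-boundedness inside the proof of this lemma would make the constant $c$ depend on $\QQ_{\cal K}$, i.e.\ you would no longer have the a priori, Kerr-only bound that the error estimate requires. Remove steps 4--5 and state the estimates directly for $\Xi({^{(T)}}q),\Theta({^{(T)}}q),\Lambda({^{(T)}}q),K({^{(T)}}q),I({^{(T)}}q)$ and their barred analogues, via the $L^p(S)$ bounds on $\nabb,\dddd_3,\dddd_4$ of ${^{(T)}}m,{^{(T)}}\un m,{^{(T)}}n,{^{(T)}}\un n$ and on ${^{(T)}}p_3,{^{(T)}}p_4,{^{(T)}}\pp$; then the proof is correct and complete as you sketched it.
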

\NI The final step to relate the $L_2$ norms of the $\tW$ null
components to the $\QQ$ norms (which are expressed in terms of the
null components of a tensor field $\lie_X \tW$, for some suitable
$X$) is to express the covariant derivatives with respect to the
vector fields  $S$ or $T$ in terms of modified Lie derivatives of
the null components on the tensor $\tW$. These relations are the
following:
\begin{eqnarray}\label{e2}
\lieb_T\a_{ab}&=&
\dddd_T\a_{ab}+\de_{ab}\a\c(\chih+\chibh)+\bigl((\tr\chi+\tr\chib)+(\om+\un{\om})\bigr)\a_{ab}\nn\\
&=&\dddd_T\a_{ab},
\end{eqnarray}
\NI being $\chih+\chibh$, $tr\chi+\tr\chib$ and $\om+\un{\om}$ equal
to zero in the Kerr spacetime.

\NI We can now estimate the various termsof ${\mathcal{E}_1}$, let us estimate for example $Div Q(\hat{\mathcal{L}}_T \tilde{W})_{\beta\gamma\delta}
(\overline{K}^\beta,\overline{K}^\gamma,\overline{K}^\delta)$

\subsubsection{Estimate of $\int_{V(u,v)}\tau_-^{5+\e}\mbox{{ Div}}Q(\hat{\mathcal{L}}_T \tilde{W})_{\beta\gamma\delta}
(\overline{K}^\beta,\overline{K}^\gamma,\overline{K}^\delta)$}\label{section1}
Let's pose $T=X$ in (\ref{D(T,W)1}) and in (\ref{D(T,W)2}) and consider the second product term of $D(T,\tW)_{444}$:
\begin{equation}\label{e1}
4\b(\lie_T \tW)\c\Xi(T,\tW)\		.
\end{equation}
\NI Recalling the decomposition of the null components of Weyl currents, it follows
$$
\Xi(J(T,\tW))=\Xi(J^1(T,\tW))+\Xi(J^2(T,\tW))+\Xi(J^3(T,\tW))
$$
\NI where $\Xi(J^1)$ is a sum of quadratic expressions between
components of ${^{(T)}}\pih$ along the null frame and null
components of Weyl tensor or its first derivatives (plus lower
order terms), explicitly (recalling that ${^{(T)}}i={^{(T)}}j=0$):
\begin{eqnarray}
\Xi(J^1(T,\tW)) &=&
Qr[{^{(T)}}\un{m};\a_4]+Qr[{^{(T)}}m;\a_3]+Qr[{^{(T)}}m;\nabb\b]\nn\\
&+&Qr[{^{(T)}}n;\b_3]+\tr\chi\bigl\{Qr[{^{(T)}}\un{m};\a]+Qr[{^{(T)}}{m};(\ro,\si)]\bigr\}\nn\\
&+& 
\tr\un{\chi}\bigl\{Qr[{^{(T)}}m;\a]+Qr[{^{(T)}}n;\b]\bigr\}+l.o.t.\nn
\end{eqnarray}
\NI For the explicit expressions of
$\a_3,(\ro,\si)_{\{3,4\}},\b_{\{3,4\}}$ see definition (\ref{bianchi}),
while as far as the quantity  denoted  $\a_4$ is concerned,
(recalling that it does not exist an evolution equation for $\a$
along null outgoing hypersurface, as well as there is not the
evolution equation of $\aa$ along the incoming cones) it is obtained
expressing it in terms of $\a_3$ and $\dddd_T\a$:
\begin{equation}\label{alfa4}
\a_4 = 2\dddd_T\a+\a_3+(\frac 5 2 \tr\chi+\frac 1 2
\tr\un{\chi})\a
\end{equation}
\NI and analogously
$$
\aa_3=2\dddd_T\aa-\aa_4+(\frac 5 2 \tr\chib+\frac 1 2
\tr\chi)\aa.
$$
\NI Moreover,
\begin{eqnarray}
\Xi(J^2(T,\tW)) &=&
Qr[{^{(T)}}\pp;\aa]+Qr[{^{(T)}}p_3;\bb]\\
\Xi(J^3(T,\tW)) &=&
Qr[\a;(I,\un{I})({^{(T)}}q)]+Qr[\b;(K,\Lambda,\Th)({^{(T)}}q)]\nn\\
&+& Qr [(\ro,\si);\Xi({^{(T)}}q)]\label{J_3}.
\end{eqnarray}
\NI Then, recalling  (\ref{D(T,W)1})), it follows we have to estimate the following integrals:
\begin{eqnarray*}
&& \int_{V(u,\ub)}\tttau^6\ttau^{5+\e}D(T,\tW)_{444}, \quad
\int_{V(u,\ub)}\tttau^4\ttau^2\ttau^{5+\e}D(T,\tW)_{344}\\
&&\int_{V(u,\ub)}\tttau^2\ttau^4\ttau^{5+\e}D(T,\tW)_{334}, \quad
\int_{V(u,\ub)}\ttau^6\ttau^{5+\e}D(T,\tW)_{333}\ .
\end{eqnarray*}
\NI Let us control only the first integral, which has the highest
weight factor in $\tttau$. From equation (\ref{D(T,W)2}), we have
to control the following integrals:
\begin{eqnarray*}
&& \int_{V(u,\ub)}\tttau^6\ttau^{5+\e}\b(\lie_T \tW)\c\Xi(T,\tW)\\
&& \int_{V(u,\ub)}\tttau^6\ttau^{5+\e}\a(\lie_T \tW)\c\Theta(T,\tW).
\end{eqnarray*}
\NI In fact the following does hold:
\begin{prop}
In Kerr spacetime, the following inequalities hold
\begin{eqnarray*}
\bigl|\int_{V(u,\ub)}\tttau^6\ttau^{5+\e}\b(\lie_T \tW)\c\Xi(T,\tW)\bigr| &\leq &
\frac c {r_0}\QQ_{\cal{K}}\\
\bigl|\int_{V(u,\ub)}\tttau^6\ttau^{5+\e}\a(\lie_T \tW)\c\Theta(T,\tW)\bigr| &\leq
& \frac c {r_0}\QQ_{\cal{K}}.
\end{eqnarray*}
\end{prop}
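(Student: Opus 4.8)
\NI \textit{Proof.} The two inequalities have the same structure, so I shall describe the argument for
$$\Bigl|\int_{V(u,\ub)}\tttau^6\ttau^{5+\e}\,\b(\lie_T \tW)\c\Xi(T,\tW)\Bigr|\le \frac{c}{r_0}\,\QQ_{\cal{K}}\ ,$$
the one involving $\a(\lie_T\tW)\c\Theta(T,\tW)$ being obtained from it by the obvious notational changes ($\b\rightarrow\a$, $\Xi\rightarrow\Theta$, the weight $\tttau^6$ now accompanying the $\a(\lie_T\tW)$ factor, which is precisely the null component that $Q(\lie_T\tW)(\bar{K},\bar{K},\bar{K},e_4)$ controls with the top weight $\tttau^6$). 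The scheme is that of Chapter~6 of \cite{Kl-Ni1}; the only genuinely new inputs are the explicit Kerr decays of the connection coefficients of Proposition~\ref{24} and of the deformation tensors of $T$ of Propositions~\ref{t},~\ref{Tp} and~\ref{(T)q}, which here replace the bootstrap assumptions used there.

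\NI The first step is to insert the decomposition $\Xi(T,\tW)=\Xi(J^1(T,\tW))+\Xi(J^2(T,\tW))+\Xi(J^3(T,\tW))$ written above, together with the identities~\ref{alfa4} for $\a_4$ and $\aa_3$ and the Bianchi system~\ref{bianchi} for $\a_3$, $\b_3$, $\b_4$ and $(\ro,\si)_{3,4}$. This rewrites the integrand as a finite sum of terms of the schematic shape
$$\tttau^6\,\ttau^{5+\e}\cdot(\text{null component of }\lie_T\tW)\cdot\Pi\cdot(\text{null component of }\tW\text{ or }\nabb\tW)\ ,$$
where $\Pi$ denotes a component of ${}^{(T)}\pih$, of ${}^{(T)}p$ or of ${}^{(T)}q$ (the last two entering through $\Xi(J^2)$ and $\Xi(J^3)$), possibly multiplied by one Kerr connection coefficient produced by the Bianchi substitutions. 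The source current $J^0$ does not occur here, since $\tW$ satisfies the Bianchi equations; the contribution of an inhomogeneous right-hand side $h$ behaves, by the ansatz~\ref{p} and the decays~\ref{hdecay}, like the $\Xi(J^2)$ terms and is absorbed in exactly the same way (cf.\ the discussion of the $J^0$ current in the Introduction).

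\NI For each such term I would then apply the Cauchy--Schwarz inequality on $V(u,\ub)$, distributing the weight between the two factors. The first factor is to be $\bigl(\int_{V(u,\ub)}\tttau^{a}\ttau^{b}\,|\text{null component of }\lie_T\tW|^2\bigr)^{1/2}$, with $\tttau^a\ttau^b$ chosen \emph{slightly below} what $Q(\lie_T\tW)(\bar{K},\bar{K},\bar{K},e_4)$ controls — namely $\tttau^6\ttau^{5+\e}|\a(\lie_T\tW)|^2$, $\tttau^4\ttau^{7+\e}|\b(\lie_T\tW)|^2$, $\tttau^2\ttau^{9+\e}|(\ro,\si)(\lie_T\tW)|^2$ and $\ttau^{11+\e}|\bb(\lie_T\tW)|^2$, as in \cite{Kl-Ni1}, Section~3.5.1 — so that, foliating $V(u,\ub)$ by the portions $C(u')\cap V(u,\ub)$ of the outgoing cones and applying the coarea formula, this factor is $\bigl(\int du'\,\ttau^{-1-\e}\QQ(u',\ub)\bigr)^{1/2}\le c\,\QQ_{\cal{K}}^{1/2}$. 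The second factor is $\bigl(\int_{V(u,\ub)}\tttau^{12-a}\ttau^{10+2\e-b}\,|\Pi|^2\,|\text{null component of }\tW|^2\bigr)^{1/2}$; on each sphere $S(u,\ub)$ I would estimate $\Pi$ in $L^p(S)$, $p\in[2,4]$, by Propositions~\ref{t},~\ref{Tp} and~\ref{(T)q} (so $|\Pi|\lesssim r^{-2}$ for ${}^{(T)}n,{}^{(T)}\un{n}$ and $|\Pi|\lesssim r^{-3}$ for ${}^{(T)}m,{}^{(T)}\un{m}$ and for ${}^{(T)}p,{}^{(T)}q$, with matching bounds on $\nabb,\dddd_3,\dddd_4$ of these quantities), absorb any surviving tangential derivative by the Poincaré and $L^p$ elliptic estimates of \cite{Kl-Ni1}, Chapter~5, and bound the remaining weighted $L^2$ norm of the null component of $\tW$ again by $\QQ_{\cal{K}}^{1/2}$ through the same correspondence. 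Since $V(u,\ub)$ lies in the region $r\geq r_0$ and $|\Pi|^2$ contributes at least an $r^{-4}$ surplus, the radial integral in the second factor is $O(r_0^{-1})$, so that this factor is $\le \bigl(\frac{c}{r_0}\QQ_{\cal{K}}\bigr)^{1/2}$; multiplying the two factors yields the asserted bound $\frac{c}{r_0}\QQ_{\cal{K}}$.

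\NI The main obstacle is the bookkeeping of the weights: one has to verify, for each of the dozen or so terms produced by $\Xi(J^1),\Xi(J^2),\Xi(J^3)$ (and likewise for $\Theta$), that the exponents $a,b$ can be picked so that \emph{both} Cauchy--Schwarz factors close — the first against $\QQ_{\cal{K}}$, the second against $r_0^{-1}\QQ_{\cal{K}}$ — with no weight left over. This is exactly where the \emph{improved} Kerr decays, sharper than the purely dimensional rates as noted after Proposition~\ref{24}, are indispensable: with the dimensional rates alone the slowly decaying pairings, in which a null component of $\tW$ is multiplied by $\tr\chi\sim r^{-1}$ or by ${}^{(T)}n,{}^{(T)}\un{n}\sim r^{-2}$, would give non-integrable radial integrals. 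Once this weight count has been carried out, the estimate follows exactly as in \cite{Kl-Ni1}; accordingly I would write out in full only the two or three borderline pairings and relegate the remaining, better-behaved ones to a remark.
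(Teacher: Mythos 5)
Your proposal follows the same scheme as the paper's proof: decompose $\Xi$ into the $J^1,J^2,J^3$ pieces (with $J^0=0$ here since $\tW=\lie_{T_0}W$ satisfies the homogeneous Bianchi equations), apply Cauchy--Schwarz, bound one factor by $\QQ_{\cal K}^{1/2}$, and extract the $r_0^{-1}$ gain from the explicit Kerr decays of ${^{(T)}}\pih$, ${^{(T)}}p$, ${^{(T)}}q$ together with the rotation Poincar\'e inequality $\int_S|f|^2\leq\int_S|\lie_Of|^2$. The only technical deviation is that the paper applies Cauchy--Schwarz slice by slice on each $C(u')$ under the coarea integral — so the first factor is bounded \emph{uniformly} by $\QQ_{\cal K}^{1/2}$ and the slowest deformation component ${^{(T)}}n\sim r^{-2}$ directly yields a $|u'|^{-2}$ per-slice gain whose $u'$-integral gives $c/r_0$ — whereas your Cauchy--Schwarz over all of $V(u,\ub)$ obliges you to shave $\ttau^{-1-\de}$ off the first factor's weight; with the stated choice $\de=\e$ you land on $r_0^{-(1-\e/2)}$ rather than $r_0^{-1}$, though choosing a small fixed $\de$ independent of $\e$ recovers the stated $c/r_0$.
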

\begin{proof}
\NI We discuss into details the first integral, the estimate of the
second one is similar. Using the coarea formula
$$
\int_{V(u,\ub)} F=\int_{u_0}^u du'\int_{C(u')\cap V(u,\ub)}F,
$$
\NI with $u_0(\ub)=u|_{\un{C}(\ub)\cap \Si_0}$, and the Schwartz
inequality, the integral we are considering is bounded by
\begin{eqnarray}
&&\bigl|\int_{V(u,\ub)}\tttau^6\ttau^{5+\e}\b(\lie_T \tW)\c\Xi(T,\tW)\bigr|\nn\\
&&\leq  c\int_{u_0}^u du'\bigl(\int_{C(u';[\ub_0,\ub])}
\ub'^6\ttau^{5+\e}|\b(\lie_T \tW)|^2\bigr)^{\frac 1 2 }\int_{C(u';[\ub_o,\ub])}\bigl(\ub'^6\ttau^{5+\e}|\Xi(T,\tW)^2|\bigr)^{\frac 1 2 } \nn\\
&&\leq  c\QQ_{\cal{K}}^{\frac 1 2 }\biggl[\sum_{i=1}^3 \int_{u_0}^u du'\c\bigl(\int_{C(u';[\ub_0,\ub])}
\ub'^6\ttau^{5+\e}|\Xi^{(i)}(T,\tW)|^2\bigr)^{\frac 1 2 }\biggr] .
\end{eqnarray}
\NI To complete the proof, we have to prove that the following inequalities
hold
\begin{eqnarray}
\bigl(\ub'^6\ttau^{5+\e}|\Xi^{(1)}(T,\tW)^2|\bigr)^{\frac 1 2 }&\leq & \frac 1
{|u'|^ 2}
\QQ_{\cal{K}}^{\frac 1 2 } \nn\\
\bigl(\ub'^6\ttau^{5+\e}|\Xi^{(2)}(T,\tW)^2|\bigr)^{\frac 1 2 }&\leq & \frac 1
{|u'|^2 }
\QQ_{\cal{K}}^{\frac 1 2 } \nn\\
\bigl(\ub'^6\ttau^{5+\e}|\Xi^{(3)}(T,\tW)^2|\bigr)^{\frac 1 2 }&\leq & \frac 1
{|u'|^2 } \QQ_{\cal{K}}^{\frac 1 2 }\ .\label{stimaLieTW}
\end{eqnarray}
\NI As far as the first integral is concerned, we have to estimate various terms (see the expression for $\Xi^{(1)}(T,\tW)$), which are
all estimated in the same way. As $^{^{(T)}}n$ is the ${^{(T)}}\pih$ component with the  slowest decay, let us control only the terms which involve it,
\begin{eqnarray*}
& &\int_{C(u';[\ub_0,\ub])}
\ub'^6\ttau^{5+\e}|{^{(T)}}n|^2|\b_3(\tW)|^2\\
& &\int_{C(u';[\ub_0,\ub])}
\ub'^6\ttau^{5+\e}|\tr\chib|^2|{^{(T)}}n|^2|\b(\tW)|^2.
\end{eqnarray*}
\NI The first integral can be estimated in
the following way:
\begin{eqnarray*}
\int_{C(u';[\ub_0,\ub])} \ub'^6\ttau^{5+\e}|{^{(T)}}n|^2|\b_3(\tW)|^2&\leq &
c \int_{C(u';[\ub_0,\ub])} \ub'^6\ttau^{5+\e}\frac 1 {r^4}\frac 1 {r^2}|\b(\tW)|^2\\
& \leq & c \frac 1 {u'^4}\int_{C(u';[\ub_0,\ub])}
\frac{\ub'^6}{r^2}\ttau^{5+\e}|\b(\mathcal{L}_O \tW)|^2
\end{eqnarray*}
\NI the first inequality following directly from the asymptotic
behavior of ${^{(T)}}n$ and the second inequality being true, due to the fundamental relation 
\[ \int_{S(u,\ub}|f|^2\leq\int_{S(u,\ub}|\lie_Of|^2 \] see \cite{Ch-Kl1} . Then:
\begin{equation}
\bigl(\ub'^6\ttau^{5+\e}|\Xi^{(1)}(T,\tW)|^2\bigr)^{\frac 1 2 }\leq  \frac 1
{|u'|^2 } \QQ_{\cal{K}}^{\frac 1 2 }
\end{equation}
\NI implying
\begin{eqnarray*}
\bigl|\int_{V(u,\ub)}\tttau^6\ttau^{5+\e}\b(\lie_T \tW)\c\Xi(T,\tW)\bigr| \leq
\frac c {u'} \QQ_{\cal{K}} \leq  \frac c {r_0}\QQ_{\cal{K}},
\end{eqnarray*}
\NI for $r_0$ sufficiently great.\\
\NI To control the second integral of (\ref{stimaLieTW}),  recalling
that
$$
\Xi^{(2)}(T,\tW)=Qr[{^{(T)}}\pp;\a]+ Qr[{^{(T)}}p_4;\b]
$$
\NI we have to estimate the integrals
\begin{eqnarray}
& &\int_{C(u';[\ub_0,\ub])} \ub'^6\ttau^{5+\e}|{^{(T)}}\pp|^2|\a(\tW)|^2\nn\\
& & \int_{C(u';[\ub_0,\ub])} \ub'^6\ttau^{5+\e}|{^{(T)}}p_4|^2|\b(\tW)|^2.
\end{eqnarray}
\NI Let us study the second integral as an example. It is controlled in the following way
\begin{eqnarray*}
\biggl(\int_{C(u';[\ub_0,\ub])}
\ub'^6\ttau^{5+\e}|{^{(T)}}p_4|^2|\b(\tW)|^2\biggr)^{\frac 1 2} &\leq &
c\biggl(\int_{C(u';[\ub_0,\ub])} \frac{\ub^6}{r^6}\ttau^{5+\e}|\b(\tW)|^2\biggr)^{\frac 1 2 }\\
 \leq &  \frac c {u'^2}& \biggl( \int_{C(u';[\ub_0,\ub])} \ub'^4\ttau^{5+\e}|\b(\lie_O \tW)|^2\biggr)^{\frac 1 2} \\
  \leq & \frac c {u'^2} & \QQ_{\cal{K}}^{\frac 1 2}.
 \end{eqnarray*}
\NI To control the third integral of (\ref{stimaLieTW}), let us recall the explicit expression for $\Xi^{(3)}(T,\tW)$:
 $$
 \Xi^{(3)}(T,\tW)= Qr[\a;(I,\un{I})({^{(T)}}q)]+Qr[\b;(K,\Lambda,\Theta)({^{(T)}}q)]+Qr[(\ro,\si);\Xi({^{(T)}}q)]
 $$
\NI Then we have to control the following integral terms:
\begin{eqnarray}
& &\int_{C(u';[\ub_0,\ub])}\tau_+^6\ttau^{5+\e}|(I({^{(T)}}q),\un{I}({^{(T)}}q))|^2|\a(\tW)|^2\nn\\
& &\int_{C(u';[\ub_0,\ub])}\tau_+^6\ttau^{5+\e}|(K({^{(T)}}q),\Lambda
({^{(T)}}q),\Theta({^{(T)}}q))|^2|\b(\tW)|^2\\
& &\int_{C(u';[\ub_0,\ub])}\tau_+^6\ttau^{5+\e}|\Xi({^{(T)}}q)|^2
|(\ro(\tW),\si(\tW))|^2\nn.
\end{eqnarray}
\NI The terms with the worst asymptotic behavior are $I({^{(T)}}q),\un{I}({^{(T)}}q),\Lambda({^{(T)}}q)$, which decay at
null infinity as $\frac c {r^3}$, therefore let us estimate the first integral, in particular let us show we can control
$$
\int_{C(u';[\ub_0,\ub])}\tau_+^6\ttau^{5+\e}|(I({^{(T)}}q),\un{I}({^{(T)}}q))|^2|\a(\tW)|^2.
$$
\NI This integral is bounded by:
$$
\frac c{u'^4}\int_{C(u';[\ub_0,\ub])}\frac{\tau_+^6}{r^6}\tau_+^4\ttau^{5+\e}|\a(\lie_O
\tW)|^2
$$
\NI and, therefore,
$$
\biggl(\int_{C(u';[\ub_0,\ub])}\tau_+^6\ttau^{5+\e}|(I({^{(T)}}q),\un{I}({^{(T)}}q))|^2|\a(\tW)|^2\biggr)^{\frac
1 2 }\leq \frac c {u'^2}\QQ_{\cal{K}}^{\frac 1 2 }.
$$
\end{proof}

\subsubsection{Estimate of $\int_{V(u,v)}\tau_-^{5+\e}Q(\hat{\mathcal{L}}_T \tilde{W})_{\alpha\beta\gamma\delta}
({^{(\bar{K})}}\pi^{\alpha\beta}\bar{K}^\gamma\bar{K}^\delta)$}

\begin{prop}
\NI In the Kerr spacetime the following inequality holds:
\begin{equation}
\int_{V(u,\ub)}\ttau^{5+\e}|Q(\lie_T
\tW)_{\a\b\ga\de}({^{(\bar{K})}}\pi^{\a\b}\bar{K}^\ga\bar{K}^\de)|\leq
\frac {c} {r_0}\QQ_{\cal{K}}
\end{equation}
\end{prop}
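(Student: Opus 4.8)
\NI The plan is to follow the scheme of Chapter 6 of \cite{Kl-Ni1}, adapting it to the Kerr background by means of the decays of the null components of ${^{(\bar{K})}}\pi$ computed above. First I would expand the scalar $Q(\lie_T\tW)_{\a\b\ga\de}({^{(\bar{K})}}\pi^{\a\b}\bar{K}^\ga\bar{K}^\de)$ along the null frame $\{e_3,e_4,e_a\}$: writing $\bar{K}=\frac12(\ttau^2 e_3+\tttau^2 e_4)$ and decomposing ${^{(\bar{K})}}\pi$ into ${^{(\bar{K})}}i,{^{(\bar{K})}}j,{^{(\bar{K})}}m,{^{(\bar{K})}}\un{m},{^{(\bar{K})}}n,{^{(\bar{K})}}\un{n}$, one gets a finite sum whose terms are of the schematic form $\tttau^a\ttau^b\,(\text{one null component of }{^{(\bar{K})}}\pi)\c(\text{a quadratic monomial in the null components of }\lie_T\tW)$, the $Q$-monomials being the usual $|\a|^2,|\b|^2,|\ro|^2+|\si|^2,|\bb|^2,|\aa|^2$ together with mixed products such as $\a\c(\ro,\si)$ and $\b\c\bb$. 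One bounds $|Q(\lie_T\tW)_{\a\b\ga\de}({^{(\bar{K})}}\pi^{\a\b}\bar{K}^\ga\bar{K}^\de)|$ by the sum of the moduli of these terms. Since $\bar{K}^\ga\bar{K}^\de$ carries at most the weight $\tttau^4$ and ${^{(\bar{K})}}\pi$ at most $\tttau^2$, the total $\tttau$-power of each term matches the $\tttau^6$ appearing in the densities of $\QQ(u,\ub)$ and $\QQb(u,\ub)$, and the common multiplier $\ttau^{5+\e}$ is the same as well.

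\NI Next, for each term I would insert the asymptotic size of the relevant component of ${^{(\bar{K})}}\pi$ established above: ${^{(\bar{K})}}n=O(M\ttau^2/r^2)$, ${^{(\bar{K})}}\un{n}=O(M\tttau^2/r^2)=O(M)$, ${^{(\bar{K})}}m=O(\ttau^2/r^3)$, ${^{(\bar{K})}}\un{m}=O(\tttau^2/r^3)=O(1/r)$, ${^{(\bar{K})}}i,{^{(\bar{K})}}j=O(Mt\log r/r)$. The components $m,i,j$ decay in $r$ strictly faster than the $\tttau^6$ matching requires, so they directly furnish a spare power $r^{-1-\delta}$; the non-decaying $\un{n},\un{m}$ only multiply $Q$-monomials that enter the $\QQ$-norms with strictly more powers of $\tttau$ than of $\ttau$, so that the ratio of weights is again $O(r^{-2})$. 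Whenever a bare null component of $\lie_T\tW$ appears that is not directly controlled by $\QQ_1$, I would use the Poincaré-type inequality $\int_{S(u,\ub)}|f|^2\leq\int_{S(u,\ub)}|\lie_O f|^2$, exactly as in the estimate of Section \ref{section1}, to trade it for the $\lie_O\lie_T\tW$ or $\lie_S\lie_T\tW$ component occurring in $\QQ_2$.

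\NI Finally I would apply the coarea formula $\int_{V(u,\ub)}F=\int_{u_0}^u du'\int_{C(u')\cap V(u,\ub)}F$ (and, for the $\un{n},\un{m}$ terms, the analogous slicing by the incoming cones $\Cb(\ub')$), use the Schwartz inequality on each leaf to split the quadratic $Q$-monomial into two square roots of $\QQ$-norm densities, and bound one factor by $\QQ_{\cal{K}}^{1/2}$ while the other, thanks to the spare $r^{-1-\delta}$ coming from step two, is $\leq |u'|^{-2}\QQ_{\cal{K}}^{1/2}$; the remaining one-dimensional integral in $u'$ (resp.\ $\ub'$) then converges and produces the factor $r_0^{-1}$, giving $\leq (c/r_0)\QQ_{\cal{K}}$. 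The main obstacle is precisely the weight bookkeeping of the first two steps: one must check, term by term, that the $\ttau,\tttau$ growth of ${^{(\bar{K})}}\pi$ — in particular the non-decaying ${^{(\bar{K})}}\un{n}=O(M)$ and the logarithm in ${^{(\bar{K})}}i,{^{(\bar{K})}}j$ — is always absorbed, i.e.\ that after reconstructing the $\QQ$-norm densities one is genuinely left with an integrable negative power of $r$ or $|u|$ over the region $r\geq r_0$, rather than a merely bounded factor.
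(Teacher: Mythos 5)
The paper does not actually write out a proof of this proposition; it states only that the estimate follows from the asymptotics of the ${^{(\bar{K})}}\pi$ null components. The scheme it does carry out in detail for the neighbouring integral $\int_{V(u,\ub)}\ttau^{5+\e}\,Q(\lie_O\tW)_{\a\b\ga\de}\,{^{(\bar{K})}}\pi^{\a\b}\bar{K}^\ga T^\de$ is exactly the one you outline: decompose $\bar{K}$ and ${^{(\bar{K})}}\pi$ along the null frame, insert the asymptotic sizes of $i,j,m,\un{m},n,\un{n}$, apply the coarea formula and Cauchy--Schwarz on the outgoing leaves, reconstruct $\QQ$-norm densities, and use the Poincaré-type inequality $\int_S|f|^2\leq\int_S|\lie_Of|^2$ to trade a bare component of $\lie_T\tW$ for one appearing in $\QQ_2$. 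Your proposal reproduces this faithfully, and your attention to the $\tttau,\ttau$ bookkeeping for the non-decaying ${^{(\bar{K})}}\un{n}=O(M)$ and ${^{(\bar{K})}}\un{m}=O(r^{-1})$ — that they contract only against $Q$-monomials carrying at most $\tttau^4$ rather than the $\tttau^6$ of the reference density, so the ratio still yields a negative power of $r$ — is precisely the observation that makes the term-by-term comparison close. So your route is correct and is essentially the same as the one the paper intends.
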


\NI The proof can be easily obtained by the estimates for ${^{(\bar{K})}}\pi^{\a\b}$ and we do not report it here.

\subsection{Estimate of $\int_{V(u,v)}\mbox{{\bf Div}} Q(\hat{\mathcal{L}}_O W)_{\beta\gamma\delta}
(\bar{K}^\beta\bar{K}^\gamma\bar{K}^\delta)$}

\NI The estimate for $\int_{V(u,v)}\mbox{{\bf Div}} Q(\hat{\mathcal{L}}_O W)_{\beta\gamma\delta}
(\bar{K}^\beta\bar{K}^\gamma\bar{K}^\delta)$ can be proved in a similar way as the estimate for  $\int_{V(u,v)}\mbox{{\bf Div}}(\hat{\mathcal{L}}_T \tilde{W})_{\beta\gamma\delta}
(\overline{K}^\beta,\overline{K}^\gamma,\overline{K}^\delta)$, by working with the right quantities depending on the rotation vectorfields ${^{(i)}}O$ instead of $T$, it is bounded by $c/r_0\QQ_{K}$ too.

\NI Now we prove the estimate for one of the terms of the other form:
\subsubsection{Estimate of $\int_{V(u,v)}\tau_-^{5+\e}Q(\hat{\mathcal{L}}_O
\tilde{W})_{\alpha\beta\gamma\delta}
({^{(\bar{K})}}\pi^{\alpha\beta}\bar{K}^\gamma T^\delta)$}

\begin{prop}
\NI In the Kerr spacetime the following inequality holds
\begin{equation}
\int_{V(u,\ub)}\ttau^{5+\e}Q(\lie_O
\tW)_{\a\b\ga\de}({^{(\bar{K})}}\pi^{\a\b}\bar{K}^\ga T^\de) \leq
\frac c {r_0^\frac 3 2}\QQ_{\cal{K}}.
\end{equation}
\end{prop}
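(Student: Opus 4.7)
The plan is to follow the strategy already used in this section for the analogous $\bar K$-weighted integrals. First, I would expand the integrand in the null frame. Using
\[
\bar K^\gamma = \frac12(\ttau^2 e_3^\gamma + \tttau^2 e_4^\gamma),\qquad T^\delta = \frac12(e_3^\delta + e_4^\delta),
\]
together with the null decomposition of ${^{(\bar K)}}\pi$ in terms of $\{{^{(\bar K)}}i,{^{(\bar K)}}j,{^{(\bar K)}}m,{^{(\bar K)}}\un m,{^{(\bar K)}}n,{^{(\bar K)}}\un n\}$ whose asymptotic decays are given in the earlier proposition on the $\bar K$ deformation tensor, the full contraction reduces to a finite sum of terms of schematic form
\[
\ttau^a\,\tttau^b\cdot \pi^{(\bar K)}_{\mu\nu}\cdot \mathcal R(\lie_O\tW)\cdot \mathcal R'(\lie_O\tW),
\]
where $\mathcal R, \mathcal R'$ range over the null Riemann components $\{\a,\b,\ro,\si,\bb,\aa\}$ of $\lie_O\tW$ and the pairing is dictated by the index structure of $Q$.

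Next I would apply the coarea formula $\int_{V(u,\ub)} F = \int_{u_0}^u du' \int_{C(u')\cap V(u,\ub)} F$ and the Schwartz inequality on each cone $C(u')$ to split each term into a product of a pointwise-bounded factor involving ${^{(\bar K)}}\pi$ and a factor that fits, after using if necessary the Poincaré-type inequality $\int_S |f|^2 \le \int_S |\lie_O f|^2$ to trade a missing angular derivative, into one of the fluxes defining $\QQ_{\cal K}$ (in particular the $C(u)\cap V(u,\ub)$ integrals of $Q(\lie_O\tW)(\bar K,\bar K,T,e_4)$ and $Q(\lie_O^2\tW)(\bar K,\bar K,T,e_4)$). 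The goal is that each term produces a $u'$-integrand of the form $c|u'|^{-5/2}\QQ_{\cal K}$ on $C(u')$, so that integration in $u'$ from $u_0\simeq -r_0$ delivers the claimed $r_0^{-3/2}\QQ_{\cal K}$.

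The extra half-power $r_0^{-1/2}$ over the $c/r_0$ estimates proved earlier in this section comes from the fact that $T$ carries no $\tttau^2$ factor; the contraction ${^{(\bar K)}}\pi^{\alpha\beta}\bar K^\gamma T^\delta$ produces at most one factor of $\tttau^2$ (from the single $\bar K$), rather than two, leaving one additional power of $\ttau$ available after using the weight $\ttau^{5+\e}$. The main obstacle is controlling the contribution of the component ${^{(\bar K)}}\un n = O(M)$, which does not decay in $r$, paired with the $\tttau^2$ from $\bar K$ and with the null Riemann factors of $\lie_O\tW$ that have the slowest decay at null infinity. One has to verify carefully that the combined weight from $\ttau^{5+\e}$, the $r$-decay of the Weyl factors, and the Poincaré trade from $\lie_O$ to $\lie_O^2$, suffice to bring this worst term into the required form $c|u'|^{-5/2}\QQ_{\cal K}$ on each cone. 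Once this is done, the remainder of the proof is identical in structure to the estimates already performed in this section and in Chapter 6 of \cite{Kl-Ni1}.
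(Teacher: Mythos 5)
Your plan follows the paper's proof essentially line for line: expand $\bar K^\gamma T^\delta$ in the null frame, decompose ${^{(\bar K)}}\pi$ into its null components, apply the coarea formula and Cauchy--Schwarz on each $C(u')$ slice, and use the asymptotic decays of ${^{(\bar K)}}\pi$ from the earlier proposition to pull out a $u'$-decaying factor that, after integration from $u_0\simeq -r_0$, yields the stated power of $r_0$. The only cosmetic difference is that the paper illustrates the mechanism with the term involving ${^{(\bar K)}}i\cdot\ro(\lie_O\tW)\cdot\a(\lie_O\tW)$, while you single out the non-decaying component ${^{(\bar K)}}\un n = O(M)$ as the borderline case; this is a reasonable thing to flag, though note that when $\un n$ pairs with the $\tttau^2 e_4 e_4$ piece and $Q_{4444}\sim|\a|^2$ the straightforward bookkeeping only gives $|u'|^{-2}$ on each slice (hence $r_0^{-1}$ rather than $r_0^{-3/2}$) unless an additional improvement is extracted, so that term genuinely does require the ``careful verification'' you warn about.
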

\begin{proof}
\NI The proof is a straightforward calculation, the result is obtained
starting from the identity
\begin{eqnarray*}
{^{(\bar{K})}}\pi^{\a\b}Q(\lie_O \tW)_{\a\b\ga\de}\bar{K}^\ga T^\de
&=& {^{(\bar{K})}}\pi^{\a\b}\bigl[\tttau^2(Q(\lie_O
\tW)_{\a\b44}+Q(\lie_O \tW)_{\a\b43})\\
&+& \ttau^2(Q(\lie_O \tW)_{\a\b34}+Q(\lie_O \tW)_{\a\b33})\bigr],
\end{eqnarray*}
\NI by writing explicitly the various term of the integrand (see
\cite{Kl-Ni2}, (6.2.40)-(6.2.43)). Let us discuss in detail one of them, in particular let us check the boundedness of
$$
\int_{V(u,\ub)}\tau_+^2\ttau^{5+\e}|\ro(\lie_O \tW)||\a(\lie_O \tW)||{^{(\bar{K})}}i|\ .
$$
\NI Applying the Schwartz inequality, and recalling the decay of ${^{(\bar{K})}}i$, we obtain the following estimate
\begin{eqnarray*}
&&\int_{V(u,\ub)}\tau_+^2\ttau^{5+\e}|\ro(\lie_O \tW)||\a(\lie_O
\tW)||{^{(\bar{K})}}i| \\
&&\leq \biggl(\int_{V(u,\ub)}\ttau^{5+\e}|\ro(\lie_O
\tW)|^2|{^{(\bar{K})}}i|^2\biggr)^{\frac 1
2}\biggl(\int_{V(u,\ub)}\tau_+^4\ttau^{5+\e}|\a(\lie_O
\tW)|^2\biggr)^{\frac 1 2}\\
&\leq & c \biggl(\int_{u_0}^u du' \int_{C(u';[\ub_o,\ub])}|\frac
{\log r}{r}\ttau^{5+\e}\ro(\lie_O
\tW)|^2|\frac{r}{\log r}{^{(\bar{K})}}i|^2\biggr)^{\frac 1 2 }\\
&\c &\biggl(\int_{u_0}^u
du'\int_{C(u';[\ub_o,\ub])}\tau_+^4\ttau^{5+\e}|\a(\lie_O
\tW)|^2\biggr)^{\frac 1 2 }\\
&\leq & c\QQ_{\cal{K}}^{\frac 1 2 }\biggl(\sup_{\cal{K}}|\frac r
{\log
r}{^{(\bar{K})}}i|\biggr)\biggl(\sup_{\cal{K}}\int_{C(u';[\ub_0,\ub])}\tau_+^4\ttau^{5+\e}|\ro(\lie_O
\tW)|^2\biggr)\biggl(\int_{u_0}^u du' \frac {(\log
r)^2}{r^6}\biggr)^{\frac 1 2}\\
&\leq & c\QQ_{\cal{K}}\biggl(\int_{u_0}^{u}du' \frac1
{u'^4}\biggr)^{\frac 1 2}
\leq  \frac c {r_0^{\frac 3 2 }}\QQ_{\cal{K}}\ .
\end{eqnarray*}
\end{proof}
\NI We state the following:
\begin{prop}
\NI In the Kerr spacetime the following estimate holds
\begin{equation}
\int_{V(u,\ub)}\ttau^{5+\e}|Q(\lie_O
\tW)_{\a\b\ga\de}({^{(T)}}\pi^{\a\b}\bar{K}^\ga \bar{K}^\de)|\leq
\frac {c}{r_0}\QQ_{\cal{K}}.
\end{equation}
\end{prop}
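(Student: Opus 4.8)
\NI The plan is to follow the very same scheme used for the $\bar K$--deformation term in the preceding proposition, the point being that, by Proposition \ref{t}, the $T$--deformation tensor is particularly mild: it has no $i$ and no $j$ component in Kerr, and its remaining null components do not grow in $|u|$, namely $\,{^{(T)}}m,{^{(T)}}\un m=O(r^{-3})$ and $\,{^{(T)}}n,{^{(T)}}\un n=O(r^{-2})$ (with the derivative bounds of Proposition \ref{t}). First I would expand $\bar K=\frac12(\ttau^2e_3+\tttau^2e_4)$ in the contraction, so that, up to numerical factors,
\[
Q(\lie_O\tW)_{\a\b\ga\de}\,{^{(T)}}\pi^{\a\b}\bar K^\ga\bar K^\de
= {^{(T)}}\pi^{\a\b}\bigl[\tttau^4 Q(\lie_O\tW)_{\a\b44}
+\tttau^2\ttau^2 Q(\lie_O\tW)_{\a\b34}
+\ttau^4 Q(\lie_O\tW)_{\a\b33}\bigr],
\]
and then contract the remaining pair with ${^{(T)}}\pi^{\a\b}$ using the null decomposition of $Q$ recalled in \cite{Kl-Ni1}, section 3.5.1, so that every surviving term is schematically of the form $\ttau^{5+\e}\,\tttau^{2k}\ttau^{2(2-k)}\,|{^{(T)}}\pi\text{-comp}|\,\cdot\,|\text{null comp of }\lie_O\tW|^2$ with $k=0,1,2$, the factor $|\cdot|^2$ standing for a product of two null components of $\lie_O\tW$.

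\NI Next I would estimate each such term exactly as in the $\bar K$ case: using the coarea formula $\int_{V(u,\ub)}F=\int_{u_0}^u du'\int_{C(u')\cap V(u,\ub)}F$ and the Schwartz inequality, I split the two null components of $\lie_O\tW$; one of them, after inserting the compensating powers of $r$ and $\ttau$, is recognised as part of the integrand of one of the $\QQ$ norms and bounded by $\QQ_{\cal K}^{1/2}$, while the other is first multiplied by the decay of the relevant ${^{(T)}}\pi$ component ($r^{-3}$ for $m,\un m$, $r^{-2}$ for $n,\un n$) and, where necessary, traded for $\lie_O$ of itself by the elementary inequality $\int_{S(u,\ub)}|f|^2\le\int_{S(u,\ub)}|\lie_O f|^2$, so as to be controlled again by $\QQ_{\cal K}^{1/2}$. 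What is left is an integral $\int_{u_0}^u du'\,|u'|^{-s}$ with $s\ge2$, hence $\le c/r_0$ for $r_0$ large, which yields the claimed bound. The worst single term is the one in which the slowest--decaying deformation component ${^{(T)}}n=O(r^{-2})$ is carried by the top weight $\tttau^4$; there one checks that the residual $r$--powers produced by $Q(\lie_O\tW)$ — which, being only quadratic, not cubic, in $\bar K$ contributes at most $\tttau^4$, against the $\tttau^6$ present in the $\QQ$ norms — are still compensated by the extra $r^{-2}$ coming from ${^{(T)}}n$, so that the $u'$--integral keeps exponent $s\ge2$. All the other terms decay strictly faster.

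\NI The only real work is the bookkeeping: one must check that, since the pair $\bar K^\ga\bar K^\de$ supplies one power of $\tttau^2$ less than the $\bar K^\ga\bar K^\de\bar K^\epsilon$ appearing in the $\QQ$ weights, the $T$--deformation tensor exactly makes up the deficit (an $r^{-2}$ factor, and no growth in $|u|$), so that no borderline, non--integrable $\int du'/|u'|$ is ever produced. Because ${^{(T)}}\pi$ neither grows in $|u|$ nor has an $i,j$ part, this is in fact the mildest of all the error terms, and the estimate follows precisely as for the $\bar K$--term already treated; I would therefore display only the worst contribution in detail and remark that the remaining ones are strictly better, the convergence of the $u'$--integral being the only nontrivial ingredient.
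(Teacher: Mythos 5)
Your sketch is essentially correct and follows exactly the scheme the paper carries out in detail for the adjacent error terms (Subsections~3.1.1 and~3.3.1); the paper itself states this particular proposition without proof, so there is no paper argument to compare against, but your reduction of the estimate to the mildness of ${^{(T)}}\pi$ in Kerr --- no $i,j$ components, $O(r^{-3})$ for ${^{(T)}}m,{^{(T)}}\un m$, $O(r^{-2})$ for ${^{(T)}}n,{^{(T)}}\un n$, and no growth in $|u|$ --- plus the $\bar K$-expansion, coarea formula and Schwartz inequality is precisely the intended route, and correctly yields the convergent $\int_{u_0}^u du'\,|u'|^{-s}$, $s\ge 2$, hence the $c/r_0$ bound.

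Two small points of bookkeeping deserve tightening. First, the phrase ``against the $\tttau^6$ present in the $\QQ$ norms'' is misleading: the $\QQ_1$ piece that actually controls $\lie_O\tW$ is built from $Q(\lie_O\tW)(\bar K,\bar K,T,e_4)$ and therefore carries at most $\tttau^4$ (the weight $\tttau^6$ belongs to the $\lie_T\tW$ part contracted with three $\bar K$'s); so the weights in your error integrand, $\tttau^{2k}\ttau^{2(2-k)}$ with $k\le 2$, already match those of the relevant norm term, and there is no deficit to be made up --- the $|u'|^{-2}$ that makes the $du'$-integral converge comes \emph{entirely} from $\sup_{C(u')}|{^{(T)}}n|\lesssim r^{-2}\lesssim |u'|^{-2}$. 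Second, the Poincar\'e-type inequality $\int_S|f|^2\le\int_S|\lie_O f|^2$ is not in fact needed in this term: unlike the $\Xi(J^1)$ estimates of Subsection~3.1.1, which start from null components of $\tW$ and need a $\lie_O$ to reach a norm, here the integrand already carries null components of $\lie_O\tW$ with the correct $\QQ_1$ (or, for $\aa(\lie_O\tW)$, $\QQb_1$) weights, so you can bound them directly without raising the order of $O$-differentiation. Neither point affects the validity of the argument.
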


\subsection{The error term \mbox{${\mathcal E}_2$}}

\NI In oorder to estimate the part of the norms involving two Lie derivatives of
the Weyl field $\tW$, we need some estimates about the behavior of
${^{(S)}}\pi$ (given in the proposition \ref{S}) and about the components of the currents $J(S,W)$. We give them in the
following
\begin{prop}
\NI From the explicit expression of ${^{(S)}}p_3$, ${^{(S)}}p_4$ and
${^{(S)}}\pp$, we obtain the following estimates for any $S\subset
\mathcal{K}$ with $p\in[2,4]$:
\begin{eqnarray*}
&||\frac{r^{2-\frac 2 p}}{\log r}{^{(S)}}p_3||_{p,S}&\leq c \\
&||\frac{r^{2-\frac 2 p}}{\log r}{^{(S)}}p_4||_{p,S}&\leq c \\
&||{r^{3-\frac 2 p}}{^{(S)}}\pp_a||_{p,S}&\leq c.
\end{eqnarray*}
\end{prop}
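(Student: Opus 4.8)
The plan is to start from the explicit expressions for ${^{(X)}}p_3$, ${^{(X)}}p_4$ and ${^{(X)}}\pp$ recalled above, specialise them to $X=S$, and estimate each summand on the sphere $S(u,\ub)$ by the H\"older inequality. Every summand is either the product of a Kerr connection coefficient by a null component of ${^{(S)}}\pi$, or a first derivative ($\nabb$, $\dddd_3$ or $\dddd_4$) of such a component. For the first type I would pair the pointwise decay of the connection coefficient from Proposition \ref{24} (thus $\tr\chi,\tr\un{\chi}\sim r^{-1}$, $\hat\chi,\hat{\un{\chi}}\sim r^{-2}$, $\eta,\un{\eta},\zeta\sim r^{-3}$, $\om,\un{\om},\dd_3\log\Omega,\dd_4\log\Omega\sim Mr^{-2}$) with the $L^p(S)$ bound on the corresponding component of ${^{(S)}}\pi$ from Proposition \ref{S}, converting an $L^\infty$ bound into an $L^p(S)$ one at the price of $|S(u,\ub)|^{1/p}\le c\,r^{2/p}$; for the second type I would invoke directly the derivative estimates of Proposition \ref{S}. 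Throughout I would use that on $\cal K$ one has $\ttau\le c\,r$ (so the $\ttau$-weights occurring in the bounds for ${^{(S)}}m$ and its derivatives cost at most one power of $r$) and that $P\cot\theta$ and $\p_\theta P$ stay bounded, by the Remark following Proposition \ref{24}.

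Carrying this out for ${^{(S)}}p_3$, the borderline contributions should be $\dddd_4{^{(S)}}\un{n}$, $\dddd_3{^{(S)}}j$ and $\tr\un{\chi}\,{^{(S)}}n$, each of size $O(r^{-(2-2/p)})$ on $S$, together with $\tr\chi\,(\tr{^{(S)}}i+{^{(S)}}j)$, which inherits the logarithm of ${^{(S)}}i,{^{(S)}}j\sim(\log r)/r$ and is of size $O\bigl((\log r)\,r^{-(2-2/p)}\bigr)$ --- exactly the weight appearing in the statement; all the remaining terms ($\divv{^{(S)}}\un{m}$, $(2\un{\eta}+\eta-\zeta)\c{^{(S)}}\un{m}$, $\hat\chi\c{^{(S)}}i$, $(\dd_3\log\Omega)\,{^{(S)}}n$) should decay strictly faster, by at least one power of $r$. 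Summing the individual bounds gives the first inequality; the second follows identically after the substitution $3\leftrightarrow4$, $m\leftrightarrow\un{m}$, $n\leftrightarrow\un{n}$, the worst term becoming $\tr\chi\,(\tr{^{(S)}}i+{^{(S)}}j)$. For ${^{(S)}}\pp$ I expect the leading terms to be $\nabb{^{(S)}}i$, $\dddd_4{^{(S)}}\un{m}$, $\dddd_3{^{(S)}}m$, $\tr\chi\,{^{(S)}}\un{m}$ and $\tr\un{\chi}\,{^{(S)}}m$, each of size $O(r^{-(3-2/p)})$ by Proposition \ref{S}; here ${^{(S)}}i$ and ${^{(S)}}j$ enter only through the strictly subleading combinations multiplied by $\eta+\un{\eta}$, so no logarithm survives, in agreement with the claimed weight $r^{3-2/p}$.

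The main obstacle I anticipate is not any single estimate but the bookkeeping: one has to check, term by term, that the $\ttau$-factors carried by the sharp $L^p$ bounds of the $m$-type components never cost more than one power of $r$ (hence are harmless on $\cal K$, where $\ttau\le c\,r$), and that the logarithmic enhancement is genuinely needed only in the two places where $\tr\chi$ (resp. $\tr\un{\chi}$) multiplies ${^{(S)}}i$ or ${^{(S)}}j$ undifferentiated, and nowhere else --- otherwise the weight in the statement would be either too weak or unnecessarily lossy. Once this verification is completed, the three inequalities follow by summing the individual bounds and choosing $c$ appropriately.
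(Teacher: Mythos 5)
Your plan is exactly the calculation the paper intends: the proposition is stated without proof and is meant to follow by specializing the explicit expressions for ${^{(X)}}p_3$, ${^{(X)}}p_4$, ${^{(X)}}\pp$ to $X=S$, inserting the Kerr connection-coefficient decays of Proposition~\ref{24} and the $L^p(S)$ estimates of Proposition~\ref{S}, and keeping track of the $\ttau\lesssim r$ weights and of where the $\log r$ in ${^{(S)}}i,{^{(S)}}j$ can survive. Your identification of the log-carrying term in ${^{(S)}}p_3$ as $\tr\chi(\tr{^{(S)}}i+{^{(S)}}j)$ and the verification that all other terms decay at least as $r^{-2}$ are correct, as is the observation that no logarithm survives in ${^{(S)}}\pp$, yielding the weight $r^{3-2/p}$.

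Three small inaccuracies, none of which affect the conclusion: (i) by Proposition~\ref{24}, $\hat\chi,\hat\chib\sim P\cot\theta\,r^{-2}\sim r^{-3}$ in Kerr (since $P\sim r^{-1}$), not $r^{-2}$; your overestimate only makes your bounds slightly lossy, never wrong. (ii) In ${^{(S)}}p_4$ the relevant log-term is $\tr\chib(\tr{^{(S)}}i+{^{(S)}}j)$, not $\tr\chi(\ldots)$ --- a harmless slip since $|\tr\chib|=|\tr\chi|$. (iii) In ${^{(S)}}\pp$ the components ${^{(S)}}i,{^{(S)}}j$ enter not only via $(\eta+\etab)\cdot{^{(S)}}i$ but also via $\nabb{^{(S)}}i$, which you correctly list among the leading terms; the reason no log appears there is that the $2M\log r/r$ piece of ${^{(S)}}i$ is a pure $r$-function, so $\nabb$ acting on it produces an extra $O(r^{-2})$ rather than $O(r^{-1})$, as already encoded in the bound $\|r^{3-2/p}\nabb\,{^{(S)}}i\|_{p,S}\le c$ of Proposition~\ref{S} that you invoke.
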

\begin{prop}\label{(S)q}
\NI The null components of the $S$ current $J^3$ satisfy the following
estimates for any $S\subset\mathcal{K}$ with $p\in[2,4]$:
\begin{eqnarray*}
&&|\frac {r^{2-\frac 2 p}}{\log r}\Lambda({^{(S)}}q)|_{p,S}\leq c\\
&&|\frac {r^{3-\frac 2 p}}{\log r}K({^{(S)}}q)|_{p,S}\leq c\\
&&|\frac {r^{4-\frac 2 p}}{\ttau}\Xi({^{(S)}}q)|_{p,S}\leq c\\
&&|r^{3-\frac 2 p}I({^{(S)}}q)|_{p,S}\leq c\\
&&|\frac{r^{2-\frac 2 p}}{\log r}\Theta({^{(S)}}q)|_{p,S}\leq c
\end{eqnarray*}
\NI and
\begin{eqnarray*}
&&|\frac {r^{2-\frac 2 p}}{\log r}\un{\Lambda}({^{(S)}}q)|_{p,S}\leq c\\
&&|\frac {r^{3-\frac 2 p} }{\log r}\un{K}({^{(S)}}q)|_{p,S}\leq c\\
&&|\frac {r^{4-\frac 2 p}}{\ttau}\un{\Xi}({^{(S)}}q)|_{p,S}\leq c\\
&&|r^{3-\frac 2 p}\un{I}({^{(S)}}q)|_{p,S}\leq c\\
&&|\frac{r^{2-\frac 2 p}}{\log r}\un{\Theta}({^{(S)}}q)|_{p,S}\leq
c.
\end{eqnarray*}
\end{prop}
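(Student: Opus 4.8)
\noindent The plan is to imitate the derivation of the corresponding bounds for ${}^{(O)}q$ and ${}^{(T)}q$ in Propositions \ref{(O)q} and \ref{(T)q}. First I would start from the definition
\[
{}^{(S)}q_{\alpha\beta\gamma}=D^\beta{}^{(S)}\hat\pi_{\alpha\gamma}-D^\gamma{}^{(S)}\hat\pi_{\alpha\beta}-\frac13\bigl({}^{(S)}p_\gamma\,g_{\alpha\beta}-{}^{(S)}p_\beta\,g_{\alpha\gamma}\bigr),
\]
expand the covariant derivatives in the null frame $\{e_3,e_4,e_a\}$, and read off the ten null components $\Lambda,\un\Lambda,K,\un K,\Xi,\un\Xi,I,\un I,\Theta,\un\Theta$ by the formulas of \cite{Kl-Ni2}, chapter 6: each of them is a linear combination of the frame derivatives $\nabb{}^{(S)}i,\ \dddd_4{}^{(S)}\un m,\ \dddd_3{}^{(S)}m,\ \dddd_{3,4}{}^{(S)}j,\ \dddd_3{}^{(S)}n,\ \dddd_4{}^{(S)}\un n,\dots$ of the components of ${}^{(S)}\hat\pi$, plus schematic products (Kerr connection coefficient)$\cdot$(component of ${}^{(S)}\hat\pi$) and, for the components carrying a metric--diagonal index pair, the components of ${}^{(S)}p$. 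Into this I would insert the asymptotics of ${}^{(S)}\pi$ and of its derivatives from Proposition \ref{S}, the bounds on ${}^{(S)}p$ from the preceding proposition, and the decays of the Kerr connection coefficients from Proposition \ref{24}.

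\noindent The powers of $r$, the $\log r$ and the $\ttau$ occurring in the weights are then forced by this substitution. Because $S$ is only \emph{asymptotically} conformal Killing in Kerr, ${}^{(S)}\hat\pi$ does not vanish and its $i,j$ components grow like $r^{-1}\log r$; since their frame derivatives $\nabb$ and $\dddd_4$ happen to decay without a logarithm (Proposition \ref{S}), the logarithm survives only in the undifferentiated $i,j$ contributions to $q$, i.e.\ products of ${}^{(S)}i,{}^{(S)}j$ with connection coefficients. Tracking the powers of $r$ in these products yields $\Lambda({}^{(S)}q),\un\Lambda({}^{(S)}q),\Theta({}^{(S)}q),\un\Theta({}^{(S)}q)=O(r^{-2}\log r)$ and $K({}^{(S)}q),\un K({}^{(S)}q)=O(r^{-3}\log r)$, hence the normalising factors $1/\log r$; the components $I({}^{(S)}q),\un I({}^{(S)}q)$, built only from the better behaved $m,\un m$ pieces, come out $O(r^{-3})$ with no logarithm; and $\Xi({}^{(S)}q),\un\Xi({}^{(S)}q)$ are dominated by the terms involving ${}^{(S)}m$ and ${}^{(S)}n$, which carry the explicit weight $u\simeq\ttau$, so they are $O(\ttau\,r^{-4})$, whence the factor $1/\ttau$. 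Taking $L_p$ norms on a leaf $S(u,\ub)\subset\mathcal K$, whose area is comparable to $r^2$, the factor $r^{-2/p}$ absorbs the surface measure and the pointwise bounds turn into the stated $|\cdot|_{p,S}\leq c$ estimates, with $c$ depending only on $M$, $a$ and the fixed region $\mathcal K$.

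\noindent I expect the main obstacle to be not any single step but the sheer volume of the computation, together with the care demanded by the borderline logarithmic growth: since $S$ is not exactly conformal Killing one must keep the explicit normalisation $1/\log r$ throughout rather than hiding it in the constant, and then verify that when these $q$--components are fed into $J^3(S,\tW)$ and from there into the error term $\mathcal{E}_2$, the weight $\ttau^{5+\e}$ together with the powers of $\tttau$ and $\ttau$ in the decomposition of $Q(\lie_S\lie_T\tW)(\bar K,\bar K,\bar K)$ still leaves a convergent $\int du'$ after the Schwartz and Poincar\'e inequalities, exactly as in the estimates of $\mathcal{E}_1$ carried out above and as in \cite{Kl-Ni2}. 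Accordingly I would present only the resulting estimates, as stated, and relegate the routine term--by--term check to a brief indication.
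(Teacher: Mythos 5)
Your proposal is correct and, as far as one can tell from the paper itself, follows the same route the authors have in mind: the paper gives no explicit proof of Proposition \ref{(S)q}, merely situating it among the other ``straightforward calculation'' propositions of Section 3.3, and what you describe -- expanding ${}^{(S)}q$ in the null frame, reading off the ten null components via the decomposition of \cite{Kl-Ni2} chapter 6, then substituting the asymptotics of ${}^{(S)}\hat\pi$ and its frame derivatives from Proposition \ref{S}, of ${}^{(S)}p$ from the preceding proposition, and of the Kerr connection coefficients from Proposition \ref{24} -- is exactly the implicit argument. Your bookkeeping of which components inherit a $\log r$ (those containing the undifferentiated ${}^{(S)}i,{}^{(S)}j$), which pick up a $\ttau$ (those dominated by ${}^{(S)}m,{}^{(S)}n$), and which are log-free ($I,\un I$) matches the weights in the statement after normalising the $L_p$ norm on $S(u,\ub)$ by the area factor $r^{2/p}$, so the estimates come out as claimed.
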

\NI Moreover, we also need to know the relations between the modified Lie derivative of $W$ null
components done with respect to the
vector field $S$ and the null components of the tensor field $\lie_S W$,
i.e. we need to apply proposition 2.2.1 of \cite{Ch-Kl1} which relates $R(\lie_X \tW) = \lie_X R(\tW)$, $R$ the generic null Riemann component, when $X=S$. More precisely, we claim the following proposition holds true:
\begin{prop}
\NI Given the vector field $S=\frac 1 2(\ub e_4+ue_3)$ and a Weyl field $\tW$
satisfying the Bianchi equations, the following relations hold at the highest order:
\begin{eqnarray}\label{stimeS}
\a(\lie_S \tW) &=& \lieb_S \a(\tW)+c \a(\tW)\nn\\
\b(\lie_S \tW) &=& \lieb_S \b(\tW)\nn\\\
\ro(\lie_S \tW) &=& \lieb_S\ro(\tW)\nn\\
\si(\lie_S \tW) &=& \lieb_S\si(\tW)\\
\bb(\lie_S \tW) &=& \lieb_S\bb(\tW)\nn\\
\aa(\lie_S \tW) &=& \lieb_S\aa(\tW).\nn
\end{eqnarray}
\end{prop}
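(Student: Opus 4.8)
The plan is to reproduce, with $X=S$, the computation behind Proposition~2.2.1 of \cite{Ch-Kl1} (see also \cite{Kl-Ni2}, Chapter~6) and then to feed into it the explicit data of the Kerr background. First I would start from the very definition of the modified Lie derivative,
\[
\lie_S\tW=\Lie_S\tW-\frac12\,{}^{(S)}[\tW]+\frac38\,\tr{}^{(S)}\pi\;\tW ,
\]
and contract both sides with the legs of the null frame $\{e_3,e_4,e_a\}$ that extract $\a,\b,\ro,\si,\bb,\aa$. The one non-elementary point is to commute $\Lie_S$ with the operation ``read off the $e_3,e_4,e_a$ components'': using
\[
(\Lie_S\phi)(\cdot)=S\big(\phi(\cdot)\big)-\sum\phi(\cdot,\Lie_S(\cdot),\cdot),
\]
I would rewrite each $\Lie_S e_4,\Lie_S e_3,\Lie_S e_a$ through $D_Se_\mu-D_{e_\mu}S$ and hence, via $D_\mu S_\nu+D_\nu S_\mu={}^{(S)}\pi_{\mu\nu}$, entirely in terms of the null components ${}^{(S)}i,{}^{(S)}j,{}^{(S)}m,{}^{(S)}\un{m},{}^{(S)}n,{}^{(S)}\un{n}$ of Proposition~\ref{S}; the $e_3,e_4$ parts of $\Lie_S e_a$ bring in, in addition, the Kerr connection coefficients through $\chi,\chib$. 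After this reorganization, each null component of $\lie_S\tW$ equals $\lieb_S$ of the corresponding null component of $\tW$ plus a finite sum of corrections, each of the schematic form ${}^{(S)}\pi\cdot(\text{null component of }\tW)$, together with the contributions of ${}^{(S)}[\tW]$ and of $\frac38\tr{}^{(S)}\pi\,\tW$.

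Next I would substitute the explicit ${}^{(S)}\pi$ components of Proposition~\ref{S} and the connection coefficients of Proposition~\ref{24}, and use the Kerr identities already exploited in~(\ref{e2}), namely $\chih+\chibh=0$, $\tr\chi+\tr\chib=0$, $\om+\omb=0$ and $\eta+\etab=O(r^{-3})$. These kill, or push below the leading order, essentially every correction produced by the commutator, so that at highest order $\b(\lie_S\tW),\ro(\lie_S\tW),\si(\lie_S\tW),\bb(\lie_S\tW),\aa(\lie_S\tW)$ reduce exactly to $\lieb_S$ of the corresponding component. For $\a$, however, the trace piece $\frac38\tr{}^{(S)}\pi$ together with the $(e_4,e_4)$ channel of ${}^{(S)}[\tW]$ fails to cancel completely against the commutator terms and leaves a genuine constant multiple $c\,\a(\tW)$; this is precisely the asymmetry recorded in (\ref{stimeS}). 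The remaining work is the bookkeeping: group the corrections by which ${}^{(S)}\pi$ component they carry, estimate each with the decays of Proposition~\ref{S}, and discard everything of strictly lower order, exactly as in \cite{Ch-Kl1} and \cite{Kl-Ni2}.

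The step I expect to be the real obstacle is not any single estimate but keeping the normalization of $\lieb_S$ consistent throughout the reduction. Since $S=\frac12(\ub e_4+u e_3)$ carries the weights $\ub$ and $u$, the projected Lie derivative $\lieb_S$ of a null component already contains terms of the type $\frac12(\ub\tr\chi+u\tr\chib)\,(\cdot)$ and $\frac12(\ub\om+u\omb)\,(\cdot)$, and one must check that these are exactly what the $\Lie_S e_\mu$ contractions reproduce — so that they are neither double counted nor dropped — and that only the advertised residual $c\,\a(\tW)$ survives in the $\a$–identity. Once this matching is verified in one channel, the five remaining identities follow by repeating the same manipulation in the other frame channels.
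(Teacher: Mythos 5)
Your proposal is correct and follows the same route the paper takes: the paper itself gives no written proof, merely stating that the result follows by applying Proposition~2.2.1 of \cite{Ch-Kl1} (the general commutation formula between $\lie_X$ and null decomposition) with $X=S$, and then using the explicit Kerr data. Your unpacking of that commutation formula via $\Lie_S e_\mu$, ${}^{(S)}\pi$, and the Kerr cancellations $\chih+\chibh=0$, $\tr\chi+\tr\chib=0$, $\om+\omb=0$, $\eta+\etab=0$ is exactly the calculation the paper is implicitly invoking, and your identification of the residual $c\,\a$ as coming from the $(e_4,e_4)$ channel where the $\ub$-weight in $S$ fails to cancel is the right diagnosis.
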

\NI At last, as far as the covariant
derivative respect to $S$ is concerned, we will need a further relation connecting it  to the $S$
Lie derivative, that is:
\begin{equation}\label{e3}
\dddd_S\a_{ab}=\lieb_S\a_{ab}+l.o.t.
\end{equation}
\NI Then we need some decays for the Lie derivatives of null components of the currents done with respect to $O$; in particular
the following propositions hold:
\begin{prop}\label{norme1suC}
\NI Based on proposition (\ref{omega}) and on corollary
(\ref{omega'}), the following estimates hold:
\begin{eqnarray*}
&&||r^{\frac 3 2 -\e
}\hat{\mathcal{L}}_O{^{(O)}}p_3||_{L_2(\un{C}(\ub)\cap
V(u,\un{u}))}\leq c\\
&&||r^{\frac 3 2-\e
}\hat{\mathcal{L}}_O{^{(O)}}p_4||_{L_2(\un{C}(\ub)\cap
V(u,\un{u}))}\leq c\\
&&||r^{\frac 3 2-\e
}\hat{\mathcal{L}}_O{^{(O)}}\pp_a||_{L_2(\un{C}(\ub)\cap
V(u,\un{u}))}\leq c.\\
\end{eqnarray*}
\end{prop}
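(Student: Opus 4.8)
\NI Following the procedure of \cite{Kl-Ni1}, Chapter~6, adapted here to the explicit Kerr background, the plan is to reduce the statement to the sphere estimates already at our disposal — Proposition~\ref{normesuS}, the $L_p$ bounds for the first derivatives of ${^{(O)}}\pih$, and Propositions~\ref{24} and \ref{omega} — and then to integrate along the incoming cone, the surviving weight $r^{3/2-\e}$ being exactly what makes the radial integral converge. First I would recall that ${^{(O)}}p_3$, ${^{(O)}}p_4$ and ${^{(O)}}\pp_a$ are given by the explicit formulas for ${^{(X)}}p_\la$ displayed above, specialised to $X={^{(i)}}O$: each of them is a finite sum of products of either a Kerr connection coefficient or a $\dddd_4,\dddd_3,\nabb$ derivative of a null component of ${^{(O)}}\pih$ with a null component of ${^{(O)}}\pih$, plus lower order terms. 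I would then apply $\hat{\mathcal{L}}_O$ to these expressions and commute it past $\dddd_4,\dddd_3,\nabb$ and past the products; the outcome is again a sum of the same structure, with $\hat{\mathcal{L}}_O$ moved onto a connection coefficient, onto a component of ${^{(O)}}\pih$, or onto one of its first derivatives, plus commutator terms. Since ${^{(3)}}O$ is Killing its Lie derivative commutes with the covariant derivative, while for ${^{(1)}}O,{^{(2)}}O$ each commutator term carries an explicit factor ${^{(O)}}\pih=O(r^{-2})$ by Proposition~\ref{omega}, so all commutator contributions are of lower order.

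\NI Next I would estimate the main terms on a fixed leaf $S(u',\ub)\subset\mathcal{K}$. By Propositions~\ref{24} and \ref{omega} the Kerr connection coefficients and the null components of ${^{(O)}}\pih$ are, on each such leaf, bounded smooth functions of the angular coordinates times explicit negative powers of $r$, so $\hat{\mathcal{L}}_O$ — acting as a Lie derivative along an $O(1)$, $S$-tangent vector field with bounded coefficients, up to the $O(r^{-2})$ modification — neither improves nor worsens their $r$-decay, and the same holds for their $\nabb,\dddd_3,\dddd_4$ derivatives by Corollary~\ref{omega'} and the $L_p$ bounds for the first derivatives of ${^{(O)}}\pih$. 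Using also that $\dddd_4$, $\nabb$, and $\dddd_3$ (the latter on $r$-dependent quantities) each gain a factor $r^{-1}$, the same computation that produces Proposition~\ref{normesuS} should yield, for every $p\in[2,4]$ and every $S\subset\mathcal{K}$,
\[
\big\|\,r^{3-\frac 2 p}\,\hat{\mathcal{L}}_O\big({^{(O)}}p_3,{^{(O)}}p_4,{^{(O)}}\pp_a\big)\,\big\|_{p,S}\leq c\ ,
\]
and in particular, with $p=2$, $\big\|\hat{\mathcal{L}}_O\big({^{(O)}}p_3,{^{(O)}}p_4,{^{(O)}}\pp_a\big)\big\|_{2,S}\leq c\,r^{-2}$.

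\NI Finally I would pass from this to the cone estimate: foliate $\Cb(\ub)\cap V(u,\ub)$ by the spheres $S(u',\ub)$, apply the coarea formula with a lapse that is uniformly bounded above and below in $\mathcal{K}$, and use that along an incoming cone $|du'|\simeq dr$. Since $r$ is comparable to a constant on each $S(u',\ub)$, this gives
\[
\int_{\Cb(\ub)\cap V(u,\ub)}r^{3-2\e}\,\big|\hat{\mathcal{L}}_O{^{(O)}}p_3\big|^2
\leq c\int_{u'_0}^{u}r^{3-2\e}\,\big\|\hat{\mathcal{L}}_O{^{(O)}}p_3\big\|_{2,S(u',\ub)}^2\,du'
\leq c\int_{r_0}^{\infty}r^{-1-2\e}\,dr\leq\frac{c}{\e}\,r_0^{-2\e}\ ,
\]
uniformly in $u$ and $\ub$, and verbatim the same for ${^{(O)}}p_4$ and ${^{(O)}}\pp_a$; this is the asserted bound.

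\NI The hard part will be the second step: verifying that the modified Lie derivative along the rotation fields does not degrade the $r$-decay of the $p$-currents. On the one hand this needs control of the $\hat{\mathcal{L}}_O$-derivatives of the Kerr connection coefficients and of the components of ${^{(O)}}\pih$, which ultimately rests on the explicit computations behind Propositions~\ref{24} and \ref{omega}; on the other hand it needs control of the commutators of $\hat{\mathcal{L}}_O$ with $\dddd_3,\dddd_4,\nabb$, which close precisely because ${^{(3)}}O$ is Killing and ${^{(1)}}O,{^{(2)}}O$ have deformation tensors decaying like $r^{-2}$. Once this is settled, the positive exponent $\e$ in the weight $r^{3/2-\e}$ is exactly what is needed for the radial integral along the incoming cone to converge — with $\e=0$ one would only obtain a logarithmically divergent bound.
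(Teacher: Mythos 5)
Your proposal is correct and follows the approach the paper implicitly appeals to: the proposition statement itself cites only Proposition~\ref{omega} and Corollary~\ref{omega'} as ingredients and supplies no proof, and your strategy — commute $\hat{\mathcal{L}}_O$ through the explicit expressions for ${^{(O)}}p_\la$ to land on the same $r^{-2}$ sphere decay as in Proposition~\ref{normesuS}, then integrate along $\Cb(\ub)$ using $|du'|\simeq dr$ with the weight $r^{3/2-\e}$ making the radial integral converge — is the natural way to fill that gap. The one imprecision worth flagging: the commutator $[\hat{\mathcal{L}}_O,\dddd_{3,4}]$ and $[\hat{\mathcal{L}}_O,\nabb]$ terms carry not just ${^{(O)}}\pih=O(r^{-2})$ but also $\nabb{^{(O)}}\pih$, $\dddd_{3,4}{^{(O)}}\pih=O(r^{-3})$ (Corollary~\ref{omega'}), and second fundamental form factors $\chi,\chib=O(r^{-1})$ from the projections; since all of these only improve or preserve the $r$-decay, your conclusion that the commutators are lower order stands, but the claim should be attributed to Corollary~\ref{omega'} rather than to Proposition~\ref{omega} alone.
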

\begin{prop}\label{norme2suC}
\NI The modified Lie derivative of the null components of $J^3(O,W)$
made with respect to the rotation vector fields $O$ satisfy the
following asymptotic estimates:
\begin{eqnarray*}
&&||r^{\frac 3 2 -\e }\lie_O\Th({^{(O)}}q)||_{L_2(\un{C}(\ub)\cap
V(u,\un{u}))}\leq c\\
&&||r^{\frac 3 2 -\e
}\lie_O\Lambda({^{(O)}}q)||_{L_2(\un{C}(\ub)\cap
V(u,\un{u}))}\leq c\\
&&||r^{\frac 3 2 -\e }\lie_O K({^{(O)}}q)||_{L_2(\un{C}(\ub)\cap
V(u,\un{u}))}\leq c\\
&&||r^{\frac 3 2 -\e }\lie_O I({^{(O)}}q)||_{L_2(\un{C}(\ub)\cap
V(u,\un{u}))}\leq c\\
&&||r^{\frac 3 2 -\e }\lie_O\Xi({^{(O)}}q)||_{L_2(\un{C}(\ub)\cap
V(u,\un{u}))}\leq c\\.
\end{eqnarray*}
\NI The underlined quantities satisfy the same inequalities as the previous ones.
\end{prop}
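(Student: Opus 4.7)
My approach would be to reduce the $L^2$ bounds over the incoming cone $\un{C}(\ub)\cap V(u,\ub)$ to sphere-level $L^p$ bounds that the previous propositions already provide, and then carry out the remaining one-dimensional integration in $u'$ via the coarea formula. First I would write each of the quantities $\Th({^{(O)}}q),\Lambda({^{(O)}}q),K({^{(O)}}q),I({^{(O)}}q),\Xi({^{(O)}}q)$ explicitly as a linear combination of first covariant derivatives of ${^{(O)}}\hat\pi$ together with its trace corrections, and then apply the modified Lie derivative $\lie_O$. Commuting $\lie_O$ with the tangential derivative $\nabb$ and with $\dddd_4,\dddd_3$ produces a principal term schematically of the form $D(\lie_O{^{(O)}}\hat\pi)$ plus lower-order contributions quadratic in ${^{(O)}}\pi$ and weighted by Kerr connection coefficients.

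Second, I would bound each term pointwise on each leaf $S(u',\ub)$. By Proposition \ref{omega} the null components of ${^{(O)}}\pi$ decay like $r^{-2}$; by Corollary \ref{omega'} and its $L^p$ companion, the first derivatives $\nabb,\dddd_4,\dddd_3$ of ${^{(O)}}\pi$ satisfy $\|r^{3-2/p}\cdot\|_{p,S}\le c$ for $p\in[2,4]$. A further $\lie_O$-derivative is tangential to the spheres and preserves the $r$-asymptotics, because the angular coefficients of ${^{(i)}}O$ remain of order unity in the external region once normalized against the spherical volume; taking $p=2$ one obtains
\[
\|r^{2}\,\lie_O N({^{(O)}}q)\|_{2,S(u',\ub)}\le c,\qquad N\in\{\Th,\Lambda,K,I,\Xi\},
\]
uniformly in $(u',\ub)$ within $\mathcal{K}$.

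Third, I would assemble the global estimate through the coarea identity
\[
\int_{\un{C}(\ub)\cap V(u,\ub)} r^{3-2\e}|\lie_O N({^{(O)}}q)|^2=\int_{u_0}^{u}du'\int_{S(u',\ub)}r^{3-2\e}|\lie_O N({^{(O)}}q)|^2.
\]
Since $r$ is essentially constant on each leaf $S(u',\ub)$ in the external region, the inner integral is controlled by $c\,r^{-1-2\e}$, and since on $\un{C}(\ub)$ one has $r\ge c|u'|$, the remaining one-dimensional integral is dominated by $c\int_{u_0}^{u}|u'|^{-1-2\e}du'$, which converges and is uniformly bounded in $(u,\ub)$. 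The estimates for the conjugate (underlined) components follow from the same scheme, with the roles of $e_3$ and $e_4$ exchanged and with the analogous $\dddd_3$-decays for the components of ${^{(O)}}\pi$.

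The main obstacle I anticipate is the careful bookkeeping of the commutator $[\lie_O, D]$ and of the Lie derivative $\lie_O{^{(i)}(O)}\pi$, which generate terms involving the Lie coefficients of ${^{(i)}}O$. These coefficients must be shown not to degrade the $r^{-2}$ decay of ${^{(O)}}\pi$; that is, the apparent ``angular'' $r$-growth produced by ${^{(i)}}O$ must be compensated by the axial structure of Kerr, using the explicit expressions of Proposition \ref{omega} and the Christoffel-symbol calculations of section \ref{simbchrist}. Once this verification is in place, the remainder is a routine combination of the Schwartz inequality and the Sobolev embeddings on $S(u,\ub)$, entirely parallel to Proposition \ref{norme1suC} but with one additional Lie derivative in the input.
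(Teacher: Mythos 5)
The paper does not give an explicit proof of Proposition \ref{norme2suC}; it is stated, like Proposition \ref{norme1suC}, as a consequence of Proposition \ref{omega} and Corollary \ref{omega'}, to be checked by direct computation. Your reconstruction is a faithful rendering of what that computation must be, and the numerology lines up: from Corollary \ref{omega'} one gets $\|r^{3-2/p}\,N({^{(O)}}q)\|_{p,S}\le c$ for the various null components $N$ of $q$; at $p=2$ this is $\|r^2\,N\|_{2,S}\le c$; since $\lie_O$ is scale-invariant on the spheres (the rotation generators have size $\sim r$ while $\nabb$ contributes $r^{-1}$, and $\nabb O\sim 1$), the same holds for $\lie_O N$; the coarea identity on $\un{C}(\ub)$ then reduces the cone $L^2$ bound with weight $r^{3-2\e}$ to $\int_{u_0}^u r^{-1-2\e}\,du'$, and since $r\gtrsim|u'|$ with $|u'|\ge R_0$ in the external region, the $\e>0$ gives convergence. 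The exponent $\tfrac32-\e$ in the statement is exactly what this chain of inequalities produces.

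Two points of care. First, your invocation of ``the previous propositions'' for the sphere-level input should not be read as Proposition \ref{(O)q}, which estimates the null components of the assembled current $J^3(O,\tW)$ rather than of the 3-tensor $q$ itself; the correct input is Corollary \ref{omega'} (and Proposition \ref{normesuS} for the ${^{(O)}}p$ part of $q$), from which the sphere-level bound on $N({^{(O)}}q)$ follows directly. Second, the claim that the commutator terms from $[\lie_O,D]$ and from $\lie_O{^{(O)}}\hat\pi$ do not degrade the decay is the genuine content and is correctly located; it rests on the explicit behaviour of the Lie coefficients of the ${^{(i)}}O$ recorded in the paper (they are of order unity, with $O(r^{-2})$ corrections) together with the observation that $\nabb$ of a function depending only on $r$ yields an $O(r^{-2})$ gain. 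With those ingredients in hand, your argument closes; it is the proof the paper left implicit.
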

\noindent $\EEbb$ collects the error terms associated to the integrals $\QQ_2$ and $\QQb_2$, in particular it has the following form:
\begin{eqnarray*}
\EEbb(u,\ub)&=&\int_{V(u,\ub)} \ttau^{5+\e}\Div Q(\lie_O\lie_T
\tW)_{\b\ga\de}(\bar{K}^\b\bar{K}^\ga\bar{K}^\de)\\
& &+\int_{V(u,\ub)} \ttau^{5+\e}\Div Q(\lie_O^2
\tW)_{\b\ga\de}(\bar{K}^\b\bar{K}^\ga T^\de)\\
& &+\int_{V(u,\ub)} \ttau^{5+\e}\Div Q(\lie_S\lie_T
\tW)_{\b\ga\de}(\bar{K}^\b\bar{K}^\ga\bar{K}^\de)\\
& &+\frac 3 2 \int_{V(u,\ub)}\ttau^{5+\e}Q(\lie_O\lie_T
\tW)_{\a\b\ga\de}({^{(\bar{K})}}\pi^{\a\b}\bar{K}^\ga\bar{K}^\de)\\
& &+\frac 3 2 \int_{V(u,\ub)}\ttau^{5+\e}Q(\lie_S\lie_T
\tW)_{\a\b\ga\de}({^{(\bar{K})}}\pi^{\a\b}\bar{K}^\ga\bar{K}^\de)\\
& &+ \int_{V(u,\ub)}\ttau^{5+\e}Q(\lie_O^2
\tW)_{\a\b\ga\de}({^{(\bar{K})}}\pi^{\a\b}\bar{K}^\ga T^\de)\\
& &+\frac 1 2 \int_{V(u,\ub)}\ttau^{5+\e}Q(\lie_O^2
\tW)_{\a\b\ga\de}({^{(T)}}\pi^{\a\b}\bar{K}^\ga\bar{K}^\de).
\end{eqnarray*}
\begin{remark}
\NI As far as the study of $\EEbb$ is concerned, we note that it is made by many terms, but most of them can be treated as
the corresponding ones studied in the previous section.
\end{remark}
\NI First of all, given $X,Y$ two vector fields on $T\mathcal{M}$, let us define the following quantity:
\begin{equation}\label{J1derivate}
J(X,Y;\tW)=J^0(X,Y;\tW)+\frac 1
2\bigl(J^1(X,Y;\tW)+J^2(X,Y;\tW)+J^3(X,Y;\tW)\bigr),
\end{equation}
\NI where
\begin{eqnarray}\label{J2derivate}
J^0(X,Y;\tW) &=& \lie_X J(Y;\tW)\nn\\
J^1(X,Y;\tW) &=& J^1(X;\lie_Y \tW)\nn\\
J^2(X,Y;\tW) &=& J^2(X;\lie_Y \tW)\\
J^3(X,Y;\tW) &=& J^3(X;\lie_Y \tW).\nn
\end{eqnarray}
\NI Its null components
$\Th(X,Y;\tW),...,\un{\Xi}(X,Y;\tW)$ have the following structure:
$$
F(X,Y;\tW)=F^0(X,Y;\tW)+\frac 1
2\bigl(F^1(X,Y;\tW)+F^2(X,Y;\tW)+F^3(X,Y;\tW)\bigr)
$$
\NI and
\begin{equation}
F^0(X,Y;\tW)=\frac 1 2 \bigl[\lie_X F^1(Y;\tW)+\lie_X F^2(Y;\tW)+\lie_X
F^3(Y,\tW)\bigr].
\end{equation}
\NI By a straightforward calculation, the following decomposition for $\Div Q(\lie_X,\lie_Y \tW)$ is true (see \cite{Kl-Ni2},
propositions 7.1.1, 7.1.2),
\begin{eqnarray}\label{divQ}
\Div Q(\lie_X\lie_Y \tW)_{\b\ga\de} &=& {{{(\lie_X\lie_Y
\tW)_\b}^\mu}_\de}^\nu J(X,Y;\tW)_{\mu\ga\nu}+{{{(\lie_X\lie_Y
\tW)_\b}^\mu}_\ga}^\nu\nn\\
&\c & J(X,Y;W)_{\mu\de\nu}+{^*}{{{(\lie_X\lie_Y \tW)_\b}^\mu}_\de}^\nu
{J(X,Y;\tW)^*}_{\mu\ga\nu}\nn\\
&+&{^*}{{{(\lie_X\lie_Y)_\b}^\mu}_\ga}^\nu {J(X,Y;\tW)^*}_{\mu\de\nu},
\end{eqnarray}
\NI where $J(X,Y;\tW)$ is defined by (\ref{J1derivate}),
(\ref{J2derivate}). These new quantities will appear in the estimate of the terms involving the
divergence, choosing $X,Y$ between $\{O,T,S\}$ suitably.\\
We show only the boundness of the term that involves $Q(\lie_S\lie_T W)$ and we state the others.

\subsubsection{Estimate of $\int_{V(u,v)} \tau_-^{5+\e}\mbox{{\bf
Div}} Q(\hat{\mathcal{L}}_S\hat{\mathcal{L}}_T
\tilde{W})_{\beta\gamma\delta}(\bar{K}^\beta\bar{K}^\gamma\bar{K}^\delta)$}

\NI This time we have to study the terms involving
$$
J(T,S;\tW)=J^0(T,S;\tW)+\frac 1 2
\bigl(J^1(T,S;\tW)+J^2(T,S;\tW)+J^3(T,S;\tW)\bigr).
$$
\NI Proceeding as in subsection \ref{section1}, we have to control
\begin{eqnarray*}
&& \int_{V(u,\ub)}\tttau^6\ttau^{5+\e} D(T,S;\tW)_{444},\quad
\int_{V(u,\ub)}\tttau^4\ttau^2\ttau^{5+\e} D(T,S;\tW)_{344}\\
&& \int_{V(u,\ub)}\tttau^2\ttau^4\ttau^{5+\e} D(T,S;\tW)_{334},\quad
\int_{V(u,\ub)}\ttau^6\ttau^{5+\e} D(T,S;\tW)_{333}.
\end{eqnarray*}
\NI Let us examine only the first one. Since the following expression holds:
\begin{eqnarray}\label{eq}
D(T,S;\tW)_{444}=4\a(\lie_S\lie_T \tW)\c\Theta(T,S;\tW)-8\b(\lie_S\lie_T \tW)\c\Xi(T,S;\tW)\ ,\nn
\end{eqnarray}
\NI where
$$
\Theta(T,S;\tW)=\Theta^0(T,S;\tW)+\frac 1
2\bigl(\Theta^1(T,S;\tW)+\Theta^2(T,S;\tW)+\Theta^3(T,S;\tW)\bigr)
$$
\NI (and analogously for $\Xi(T,S;\tW)$), we consider first the terms with $i$ = 1, 2, 3 of the first term of (\ref{eq})
$$
\int_{V(u,\ub)}\tttau^6\ttau^{5+\e}\a(\lie_S\lie_T \tW)\c \Th^i(T,S;\tW)\	.
$$
\NI Proceeding as in the case of the subsection \ref{section1}, we
have
\begin{eqnarray*}
&&\int_{V(u,\ub)}\tttau^6\ttau^{5+\e}\a(\lie_S\lie_T \tW)\c \Th^i(T,S;\tW)\\
&& \leq c
\int_{u_0}^udu'\biggl(\int_{C(u';[\ub_0,\ub])}\ub'^6\ttau^{5+\e}|\a(\lie_S\lie_T
\tW)|^2\biggr)^{\frac 1
2}\biggl(\int_{C(u';[\ub_0,\ub])}\ub'^6\ttau^{5+\e}|\Th^i(T,S;\tW)|^2\biggr)^{\frac
1 2}.
\end{eqnarray*}
\NI The first integral is estimated by
$$
\sup_{\cal{K}}\biggl(\int_{C(u';[\ub_0,\ub])}\ub'^6\ttau^{5+\e}|\a(\lie_S\lie_T
\tW)|^2\biggr)^{\frac 1 2}\leq c \QQ_2^{\frac 1 2 }.
$$
As far as the second integral is concerned, we compare it with the
integral
$$
\int_{u_0}^u
du'\biggl(\int_{C(u';[\ub_0,\ub])}\ub'^6\ttau^{5+\e}|\Th^i(T,\tW)|^2\biggr)^{\frac
1 2}\	,
$$ 
\NI which has been estimated in Subsection \ref{section1}, noting that the following analogies and differences hold:\\
\NI i) In the present case the deformation tensors and their derivatives refer to the vector field $S$ and not $T$ and the null components and their
derivatives are relative to $\lie_T \tW$ instead of $\tW$.\\
\NI ii) The estimates of null components on $\lie_T \tW$ are better than
those relative to the null components of $\tW$ by a factor $r$,
while the asymptotic behavior of ${^{(S)}}\un{m},{^{(S)}}\un{n}$ is worst of a factor $r$, and that one of ${^{(S)}}n$ is the same
(both respect to the corresponding ${^{(T)}}\pi$ null components),
it follows that these terms are under control.\\
\NI iii) We have only to check the boundedness of the terms involving
${^{(S)}}i$ and ${^{(S)}}j$, because of ${^{(T)}}i,{^{(T)}}j=0$.\\
\NI As far as we consider the part $\Th^1(S;\lie_T \tW)$, they are the following:
\begin{eqnarray*}
&& Qr[{^{(S)}}i;\nabb\b(\lie_T \tW)], \quad
Qr[{^{(S)}}j;(\a_3,\ro_4,\si_4)(\lie_T \tW)]\\
&& \tr\chi Qr[{^{(S)}}j;(\ro,\si)(\lie_T \tW)],\quad \tr\chib
Qr[{^{(S)}}j;\a(\lie_T \tW)]\ .
\end{eqnarray*}
\NI They behave all in the same way asymptotically, so, let us consider
only one of them, in particular the first:
\begin{eqnarray*}
&& \int_{u_0}^u
du'\biggl(\int_{C(u';[\ub_0,\ub])}\ub'^6\ttau^{5+\e}|{^{(S)}}i|^2|\nabb\b(\lie_T
\tW)|^2\biggr)^{\frac 1 2}\\
&& \leq c \int_{u_0}^u
du'\biggl(\int_{C(u';[\ub_0,\ub])}\tttau^4|\b(\lie_O\lie_T
\tW)|^2\frac {(\log r)^2}{r^2}\biggr)^{\frac 1 2}\\
&& \leq c\int_{u_0}^u
du'\biggl(\int_{C(u';[\ub_0,\ub])}\tttau^6|\b(\lie_O\lie_T
\tW)|^2\frac {(\log r)^2}{r^4}\frac {r^2}{(\log r)^2}\biggr)^{\frac
1 2}\\
&& \leq \frac{c}{r_0} \QQ_2^{\frac 1 2 }.
\end{eqnarray*}
As far as $\Th^2$ is concerned, let's recall its expression:
$$
\Th^2(S,T;\tW)= Qr[{^{(S)}}p_3;\a(\lie_T
\tW)]+Qr[{^{(S)}}\pp;\b(\lie_T \tW)]+Qr[{^{(S)}}p_4;(\ro,\si)(\lie_T
\tW)].
$$
\NI We note that the term ${^{(S)}}\pp$ behave as the corresponding
one relative to $T$, so the part related to the second term in the
above sum is controlled. As far as the first and the third terms
are concerned, since the asymptotic behavior of ${^{(S)}}p_4$ is
the same as ${^{(S)}}p_3$ (which is $O(\frac{\log r} {r^2}))$, we
consider as an example only the third term. Then the following estimate holds:
\begin{eqnarray*}
&&
\int_{u_0}^udu'\biggl(\int_{C(u';[\ub_0,\ub])}\ub'^6\ttau^{5+\e}|\a(\lie_S\lie_T
\tW)|^2\biggr)^{\frac 1 2}
\biggl(\int_{C(u';[\ub_0,\ub])}\ttau^{5+\e}|{^{(S)}}p_4|^2|\ro(\lie_T
\tW)|^2\biggr)^{\frac 1 2}\\
&& \leq c\QQ_{\cal{K}}^{\frac 1
2}\biggl(\int_{u_0}^udu'\int_{C(u';[\ub_0,\ub])}\frac{(\log
r)^2}{r^4}\frac {r^2}{(\log r)^2}\ttau^{5+\e}|\ro(\lie_T \tW)|^2\biggr)^{\frac 1
2} \leq c \QQ_{\cal{K}}.
\end{eqnarray*}
\NI Finally, to control the part of the current $J^3$, recaliing the asymptotic behavior of the null components of ${^{(S)}}q$ (see
proposition \ref{(S)q}), it follows that it is estimated in the same way as the previous one. This concludes the estimate of
$\int_{V(u,\underline{u})} \tau_-^{5+\e}\Div
Q(\hat{\mathcal{L}}_S\hat{\mathcal{L}}_T \tW)_{\beta\gamma\delta}
(\bar{K}^\beta\bar{K}^\gamma\bar{K}^\delta)$.\\

\NI As far as the other terms involving  $\Div(Q)$, it is shown they are controlled by the $\QQ_2$ norms too.

\subsubsection{Estimate of the remaining terms}

\NI As far as the other terms of $\EEbb$ are concerned,
it is enough to observe that they are treated as the corresponding terms present in $\EEb$, with
the obvious substitutions.
In fact:
$$
\int_{V(u,\ub)}\ttau^{5+\e}Q(\lie_O^2
\tW)_{\a\b\ga\de}({^{(\bar{K})}}\pi^{\a\b}\bar{K}^\ga T^\de)
$$
\NI is treated as
$$
\int_{V(u,\ub)}\ttau^{5+\e}Q(\lie_O
\tW)_{\a\b\ga\de}({^{(\bar{K})}}\pi^{\a\b}\bar{K}^\ga T^\de)
$$
\NI and it is estimated by $\QQ_{2}$ instead of $\QQ_1$.\\
The term
$$
\int_{V(u,\ub)}\ttau^{5+\e}Q(\lie_O^2
\tW)_{\a\b\ga\de}({^{(T)}}\pi^{\a\b}\bar{K}^\ga\bar{K}^\de)
$$
\NI is of the same form of
$$
\int_{V(u,\ub)}\ttau^{5+\e}Q(\lie_O
\tW)_{\a\b\ga\de}({^{(T)}}\pi^{\a\b}\bar{K}^\ga\bar{K}^\de)
$$
\NI by substituting $\QQ_1$ with $\QQ_2$.\\
The term
$$
\int_{V(u,\ub)}\ttau^{5+\e}Q(\lie_O\lie_T
\tW)_{\a\b\ga\de}({^{(\bar{K})}}\pi^{\a\b}\bar{K}^\ga\bar{K}^\de)
$$
\NI is estimated in the same way as
$$
\int_{V(u,\ub)}\ttau^{5+\e}Q(\lie_T
\tW)_{\a\b\ga\de}({^{(\bar{K})}}\pi^{\a\b}\bar{K}^\ga\bar{K}^\de).
$$
\NI The final result is the same with the obvious substitutions of the
quantities $\QQ_1$ with $\QQ_2$.\\
The estimate of the integral
$$
\int_{V(u,\ub)}\ttau^{5+\e}Q(\lie_S\lie_T
\tW)_{\a\b\ga\de}({^{(\bar{K})}}\pi^{\a\b}\bar{K}^\ga\bar{K}^\de)
$$
\NI is made exactly in the same way as the estimate of
$$
\int_{V(u,\ub)}\ttau^{5+\e}Q(\lie_T
\tW)_{\a\b\ga\de}({^{(\bar{K})}}\pi^{\a\b}\bar{K}^\ga\bar{K}^\de)
$$
\NI with the substitutions of $\QQ_1$ with $\QQ_2$.

\section{Conclusions and developments}
\NI With the estimates of the error term we can consider proved the estimates \ref{intes}. As remarked in the introduction they are not yet in accordance with the Peeling theorem.
the last step is to exploit the Bianchi equations to gain extra decays in $r$ from the decays in $u$.
As already said we will not perform this calculation as it is identical to the ones in \cite{Kl-Ni1} chapter 6.

\medskip

\NI As we said in the introduction the \cite{Ch-Kl:book}  approach consists in a decoupling between the structure equations for the connection coefficients $\cal {O}$,  where the null Riemann components $\cal{R}$ have to be considered assigned, and, viceversa, the Bianchi equations for the $\cal{R}$ components, where the  $\cal{O}$ coefficients have to be considered assigned. As already said, we have considered in this paper the second part 
of this decoupling. The next step is to consider the first part, namely, we have to calculate the decays of the $\cal{O}$ components by the structure equations where the $\cal{R}$ have the decays obtained in theorem
\ref{maintheorem}, let us cal them $\cal{R_1}$, and the the connection coefficients $\cal{O_1}$. In this way we obtain the map:
\[F:({\cal O}_0,{\cal R}_0)\rightarrow ({\cal O}_1,{\cal R}_1).\] 
Where $\cal{O_0,R_0}$ are the connection coefficients and the null Riemann components associated to the Kerr spacetime. Repeating this procedure we obtain a sequence of solutions:
\[F_i:({\cal O}_i,{\cal R}_i)\rightarrow( {\cal O}_{i+1},{\cal R}_{i+1}).\]
The crucial step to obtain a perturbation of the kerr spacetime is to demonstrate that  the map $F$ is a contraction that do exists a ball $B$, defined for suitable Banach spaces such that
\[F(B)\subset B\] to exploit the fixed point theorem.

\section{Appendix}

\subsection{Estimate of $\chi_{\theta\theta}$}

Let us calculate explicitly the connection coefficient
$\chi_{\theta\theta}=g(D_{e_\theta}e_4,e_\theta)$:
\begin{eqnarray*}
&& \chi_{\theta\theta}=\frac Q {\Si R}\bigl[\p_\theta\bigl(\frac
{\sqrt{\De}P}{\Si
R}\bigr)g(\p_\theta,e_\theta)+\p_\theta\bigl(\frac{\sqrt{\De
}Q}{\Si R}g(\p_r,e_\theta)\bigr]\\
&&+\Ga^{\theta}_{\ro\si}e_4^\ro e_\theta^\si
g(\p_\theta,e_\theta)-\frac {\De P}{\Si
R}\bigl[\p_r\bigl(\frac{\sqrt{\De}P}{\Si
R}\bigr)g(\p_\theta,e_\theta)\\
&& +\p_r\bigl(\frac{\sqrt{\De}Q}{\Si
R}\bigr)g(\p_r,e_\theta)\bigr]+\Ga^{r}_{\ro\si}e_4^\ro
e_\theta^\si g(\p_r,e_\theta),
\end{eqnarray*}
\NI being the only terms different from 0, as $\p_\theta$ and $\p_r$
the only coordinate vector fields not orthogonal to $e_\theta$.
Then:
\begin{eqnarray*}
&& \chi_{\theta\theta}=\frac {Q^2}{\Si R}O\bigl(\frac{\p_\theta
P}{r^2}\bigr)-\frac{\sqrt{\De}Q P}{\Si
R^3}\bigl[2a^2\sin\theta\cos\theta-\frac 1 2\p_\theta\lambda\bigr]
\frac 1 {r^2}\\
&& + \bigl[\frac r \Si \bigl(\frac{\sqrt{\De}Q^2}{\Si^2
R^2}-\frac{\De^{\frac 3 2 }P^2}{\Si^2
R^2}\bigr)-\frac{2a^2\sin\theta\cos\theta}{\Si}\frac{\sqrt{\De}P
Q}{\Si^2 R^2}\bigr]\frac Q R \\
&& \frac {\De P}{\Si R} O\bigl(\frac P {r^3}\bigr)\frac Q R +\frac
{M\De P^2 Q}{\Si^2 R^2
r^2}+\bigl[-\frac{a^2\sin\theta\cos\theta}{\Si}\\
&& \bigl(\frac{\sqrt{\De}Q^2}{\Si^2 R^2}-\frac{\De^{\frac 3 2
}P^2}{\Si^2 R^2}\bigr)\frac \De \Si r
\frac{\sqrt{\De}PQ}{R^2\Si^2}\\
&& +\frac {Mr^2}{\Si\De}\frac{\De^{\frac 3 2} Q P}{\Si^2
R^2}\bigr]\bigl(-\frac P R\bigr)\\
&& = \frac{\p_\theta
P}{r^2}+O\bigl(\frac{2P(a^2\sin\theta\cos\theta-\frac
{\p_\theta\lambda}{2})}{r^4}\bigr)+O\bigl(\frac 1 r
\bigr)+O\bigl(\frac{P^2}{r^3}\bigr)\\
&& +O\bigl(\frac{-2P
a^2\sin\theta\cos\theta}{r^4}\bigr)+O\bigl(-\frac{P^2}{r^3}
\bigr)+O\bigl(\frac{MP^2}{r^4}\bigr)\\
&&
+O\bigl(\frac{Pa^2\sin\theta\cos\theta}{r^4}\bigr)+O\bigl(\frac{a^2
P^3}{r^6}\bigr)+O\bigl(\frac{P^2}{r^3}\bigr)+O\bigl(-\frac{P^2
M}{r^4} \bigr),
\end{eqnarray*}
\NI where we have used the expressions in a power series centered
at the point $\frac 1 r $ when $r \rightarrow\infty$ for the
following quantities:
\begin{eqnarray*}
\frac {\sqrt{\De}} R &=& 1-\frac M r +\frac{a^2 M}{r^3}(\frac 1 2
-\sin^2\theta)\\
\frac Q \Si &=& 1+\frac{a^2(\sin^2\theta-\frac \la 2 )}{r^2}\\
\frac \De \Si &=& 1-\frac M r +\frac{a^2\sin^2\theta}{2r^2}\\
\frac Q {R^2} &=& 1-\frac{a^2\la}{2r^2}\\
\end{eqnarray*}
The term $O\bigl(\frac 1 r \bigr)$ derives from the highest order
of $\chi_{\theta\theta}$, that is $\frac {\sqrt{\De}Q^3 r}{\Si^3
R^3}$. Developing it as a power series, its first terms result
to be:
$$
\frac 1 r -\frac M {r^2}+\frac {3a^2}{r^3}\bigl(\sin^2\theta-\frac
\la 2 \bigr),
$$
\NI and so at the higher orders, the component $\chi_{\theta\theta}$
assumes the following form:
\begin{equation}
\chi_{\theta\theta}=\frac 1 r-\frac M {r^2}+\frac {P^2}{r^3}.
\end{equation}
\newpage
\subsection{Estimate of ${^{^{(2)}(O)}}n$}

\begin{eqnarray*}
{^{^{(2)}O}}n &=& 2g(D_{e_4}{^{(2)}O},e_4)= \frac
{4Mra}{\sqrt{\De}\Si
R}\p_\phi(-\cos\phi)g(\p_{\theta},e_4)+2\Ga^t_{\ro\si}{^{(2)}O}^\ro
e_4^\si g(\p_t,e_4)\\
&& +2\Ga^r_{\ro\si}{^{(2)}O}^\ro e_4^\si
g(\p_r,e_4)+2\Ga^\theta_{\ro\si}{^{(2)}O}^\ro e_4^\si
g(\p_\theta,e_4)\\
&& = \frac {M P a r\sin\phi}{\Si
R^2}-2\bigl[O\bigl(\frac{2a^3M\sin^3\theta\cos\theta}{r^3}\bigr)\bigl(-\frac{2Mra\cos\phi}{\sqrt{\De}{\Si
R}}\\
&& +\frac{\sqrt{\De}P\sin\phi\cos\theta}{\Si R\sin\theta}\bigr)
+O\bigl(\frac{2M
a^2\sin\theta\cos\theta}{r^3}\bigl)\cos\phi\frac{R}{\sqrt{\De}}
\bigr]\\
&&
+\frac{2Q}{\sqrt{\De}R}\bigl[O\bigl(-\frac{Ma\sin^2\theta}{r^2}\bigr)\sin\phi\cot\theta\frac
R
{\sqrt{\De}}+\frac{a^2\sin^2\theta\cos\theta}{\Si}\cos\phi\frac{\sqrt{\De}Q}{\Si
R}\\
&& +\frac {\De}{\Si}r\cos\phi\frac{\sqrt{\De}P}{\Si
R}+O\bigl(-r\sin^2\theta\bigr)\sin\phi\cot\theta\frac {2M r a
}{\sqrt{\De}\Si R}\bigr]+\frac {2\sqrt{\De}P} R\\
&& \c\bigl[\frac{2M
a\sin\theta\cos\theta\sin\phi}{r^3}\cot\theta\frac{R}{\sqrt{\De}}-\frac
r \Si \cos\phi\frac Q \Si \frac {\sqrt{\De}}
R\\
&&
+\frac{a^2\sin\theta\cos\theta}{\Si}\cos\phi\frac{\sqrt{\De}P}{\Si
R} - \frac {2M r a \sin\phi\cos^2\theta}{\sqrt{\De}\Si R}\bigr].\\
\end{eqnarray*}
The highest order terms come from:
$$
g(\p_r,e_4)\bigl(\Ga^r_{t\phi}e_4^t{^{(2)}O}^\phi+\Ga^r_{\theta
r}{^{(2)}O}^\theta e_4^r\bigr).
$$
\NI Therefore, controlling the explicit expression of $\Ga^r_{t\phi}$ and
$\Ga^r_{\theta r}$ at the highest order,it follows that
$$
{^{^{(2)}O}}n = O\bigl(\frac{2a\sin\theta\cos\theta}{r^2}(a\cos\phi-M\sin\phi)\bigr)\ .
$$


\end{document}